\documentclass[english, hidelinks, 11pt]{article}
\usepackage{geometry}
\usepackage{setspace}
\usepackage{bm}
\usepackage{graphicx}
\graphicspath{ {.} }
\usepackage{hyperref}
\usepackage{xr}

\usepackage{amssymb, amsmath, amsthm, amsfonts, mathtools, bbm}
\usepackage{float}
\usepackage{color, xcolor}
\usepackage{url}
\usepackage{enumitem}
\usepackage{csquotes}
\usepackage{multirow}
\usepackage{siunitx}
\usepackage{placeins}
\usepackage{authblk}
\usepackage{appendix}
\usepackage{lscape}
\usepackage{booktabs}
\usepackage{longtable}
\usepackage{stackrel}

\allowdisplaybreaks[1]

\usepackage{tikz}
\usetikzlibrary{arrows.meta, decorations.pathreplacing}

\usepackage{algorithm}
\usepackage{algpseudocode}

\usepackage{xr-hyper}
\usepackage{hyperref}
\hypersetup{
    colorlinks=true,
    linkcolor=blue,
    citecolor=blue,
    filecolor=magenta,
    urlcolor=blue
}
\usepackage{cleveref}

\usepackage[round]{natbib}

\newtheorem{theorem}{Theorem}
\newtheorem{lemma}{Lemma}

\newtheorem{remark}{Remark}
\newtheorem{assumption}{Assumption}

\newtheorem{example}{Example}

\numberwithin{equation}{section}

\usepackage{soul}       
\usepackage{ulem}   
\DeclareMathOperator*{\argmax}{arg\,max}

\def\tr{\textnormal{\text{tr}}}
\def\wt#1{\widetilde{#1}}
\def\wh#1{\widehat{#1}}

\def\E{\mathbb{E}} 
\def\L{\mathbb{L}} 
\def\K{\mathbb{K}}

\def\P{\mathbb{P}}
\def\R{\mathbb{R}}

\def\Z{\mathbb{Z}} 

\def\cF{\mathcal{F}}
\def\cG{\mathcal{G}} 
\def\cH{\mathcal{H}}
\def\cI{\mathcal{I}} 
\def\cII{\mathcal{II}} 
\def\cIII{\mathcal{III}} 
\def\cIV{\mathcal{IV}}

\def\cM{\mathcal{M}} 
\def\cN{\mathcal{N}} 
\def\cO{\mathcal{O}}

\def\cS{\mathcal{S}}
\def\cT{\mathcal{T}}

\def\zero{\mathbf{0}} 
 
\def\bI{\mathbf{I}} 
\def\bd{\mathbf{d}} 
\def\be{\mathbf{e}} 
\def\bu{\mathbf{u}} 
\def\bj{\mathbf{j}} 
\def\bx{\mathbf{x}} 
\def\by{\mathbf{y}}
\def\bX{\mathbf{X}} 
\def\bY{\mathbf{Y}} 
\def\bE{\mathbf{E}} 
\def\bU{\mathbf{U}} 
\def\bL{\mathbf{L}} 
\def\bF{\mathbf{F}} 
\def\bA{\mathbf{A}} 
\def\bB{\mathbf{B}} 
\def\bC{\mathbf{C}} 
\def\bD{\mathbf{D}} 
\def\bG{\mathbf{G}} 
\def\bH{\mathbf{H}}
\def\bJ{\mathbf{J}}
\def\bK{\mathbf{K}}
\def\bM{\mathbf{M}} 
 
\def\bR{\mathbf{R}} 
 
\def\bQ{\mathbf{Q}} 
 
\def\bV{\mathbf{V}} 
\def\bW{\mathbf{W}}
\def\bv{\mathbf{v}} 
\def\bg{\mathbf{g}} 
\def\bf{\mathbf{f}}

\def\bmu{\boldsymbol{\mu}} \def\bLambda{\boldsymbol{\Lambda}} \def\bGamma{\boldsymbol{\Gamma}} \def\bgamma{\boldsymbol{\gamma}} \def\bSigma{\boldsymbol{\Sigma}} 
\def\bXi{\boldsymbol{\Xi}}
 
\def\balpha{\boldsymbol{\alpha}}
\def\bbeta{\boldsymbol{\beta}}

\def\1{\mathbf{1}}
\def\0{\mathbf{0}}

\def\trans{\intercal}

\def\var{\mathbb{V}\mathrm{ar}}

\def\vec{\mathrm{vec}} 
\def\Vech{\mathrm{Vech}}

\usepackage[makeroom]{cancel}

\begin{document}
\def\spacingset#1{\renewcommand{\baselinestretch}%
{#1}\small\normalsize} \spacingset{1}

	\begin{titlepage}
		
		\title{Multiple Change Point Detection in the Mean and Covariance of High-Dimensional Matrix Time Series}
                \author{Zetai Cen\thanks{School of Mathematics, University of Bristol.
                Email: \url{zetai.cen@bristol.ac.uk}.
                Supported by Engineering and Physical Sciences Research Council (EP/Z531327/1).}\! } 
                \author{Haeran Cho\thanks{School of Mathematics, University of Bristol.
                Email: \url{haeran.cho@bristol.ac.uk}.
                Supported by Engineering and Physical Sciences Research Council (EP/Z531327/1).}}
		
		\affil{School of Mathematics, University of Bristol}
		
		\date{}
		
		\maketitle

\begin{abstract}
This paper studies simultaneous change point detection in the first- and second-order structures of high-dimensional matrix-valued time series. Specifically, we extend the framework of main effect factor models to accommodate multiple change points so that shifts in the mean and covariance structures are characterized by changes in the trend-stationary main effects main effects and in the factor-driven common components, respectively. Under this framework, the two types of changes are not necessarily aligned and can be arbitrarily large without masking the effects of each other, and each of the detected change points is identified either with the row or column categorizations which improves their interpretability. We propose a unified procedure based on moving-sum statistics to detect and localize both types of changes. Under general regularity conditions that allow for temporal and cross-sectional dependence, we derive detection and localization guarantees for the proposed method. The finite-sample performance of the method is demonstrated through extensive experiments, and its practical usefulness is illustrated by a real data application to New York City Yellow Taxi trip records.
\end{abstract}
		
		\bigskip
		\bigskip

		\noindent
		{\sl Keywords and phrases:}
        Matrix time series;
        Simultaneous change point detection;
        Temporal and cross-sectional dependence.

\noindent

	\end{titlepage}
	
	\setcounter{page}{2}

\maketitle

\tableofcontents

\newpage

\section{Introduction}\label{sec: introduction}

Matrix-valued, high-dimensional time series data are widely observed where it is of interest to learn the row- and column-wise structures as well as interactions therein, with examples ranging from cross-country macroeconomic panels \citep{Wangetal2019}, to international trade flow \citep{chen_chen_2022} and the records of air pollutants across multiple cities \citep{Yuanetal2023}.
One of the key characteristics of modern datasets is the heterogeneity and, when observations are gathered in a temporal order within non-stationary environments, detecting and locating abrupt shifts, a.k.a.\ change points, in stochastic properties enhances flexibility and interpretability of statistical models.
In this paper, we address the problem of detecting (possibly) multiple change points in matrix-valued time series, where the changes may be present in either first- or second-order structures of the data. 

While the literature on change point detection is vast, joint detection of changes in the first and second moments has received relatively less attention, and little work exists in the context of matrix time series modeling.
In univariate settings, it has been studied using both parametric \citep{csorgo1997} and nonparametric \citep{messer2022bivariate} approaches.
There also exist studies on detecting changes in the mean while treating the time-varying second-order structure as a nuisance parameter \citep{pein2017heterogeneous, mcgonigle2021detecting}, and vice versa \citep{dette2019change}.
In high-dimensional panel data, most existing works focus on the detection of changes in the mean \citep{cho2015multiple, cho2016change, pilliat2023optimal} or covariance \citep{avanesov2018change, WaYuRi21}, while assuming that other characteristics of the data are time-invariant. 
One exception is \cite{cui2025simultaneous} where a joint detector statistic for the different types of change is proposed to test for the presence of any change as well as estimating its location.
Beyond the vector time series settings, few papers address the problems of detecting changes in the mean \citep{pevsta2025tensor} or the second moments \citep{ZhangCenCho2026} in matrix or higher-order tensor time series. 

We propose to perform the segmentation (a.k.a.\ multiple change point detection) of matrix-valued time series by adopting a versatile framework in matrix factor models.
Our setup enjoys flexibility, embedding multiple shifts in the first and second moments of the matrix time series in low-dimensional structures, without requiring them to be aligned. 
Further, by associating each detected change with either or both of the row and column categorizations, it lends interpretability to the change points. 

Since the first work by \cite{Wangetal2019}, matrix-valued factor models have drawn increasing attention over the last decade \citep{Yuetal2022, ChenFan2023, Heetal2023, Heetal2024}. The main ingredient is the decomposition of each matrix observation $\bX_t$ as $\bX_t = \bA \bF_t \bB^\trans + \bE_t$, with $\bF_t$ being the low-rank factor matrix, $\bA$ and $\bB$ the row and column loading matrices, and $\bE_t$ the noise.
While vectorizing $\bX_t$ would also admit a form of factor model \citep{BaiNg2002, PanYao2008, Lametal2011}, such an approach may suffer from ignoring the inherent matrix structure 
and lose the interpretation of the row and column categorizations, which has motivated approaches for preserving and benefiting from the matrix structures.
There also exists a line of works that focus on capturing separable row and column effects \citep{GaoTsay2021, Heetal2023, Yuanetal2023}.

More recently, \cite{LamCen2025_MEFM} introduced a main effect matrix factor model (MEFM). 
Under MEFM, each observed matrix $\bX_t \in \R^{p \times q}$ admits the following decomposition
\[
\bX_t = \mu_t\1_p \1_q^\trans + \balpha_t \1_q^\trans + \1_p \bbeta_t^\trans + \bA \bF_t \bB^\trans + \bE_t,
\]
where $\mu_t \in \R$ is the grand mean
which can be either deterministic or stochastic, and
$\balpha_t \in \R^p$ and $\bbeta_t \in \R^q$ are the row and column main effects, respectively, with $\1_m \in \R^m$ denoting the vector of ones. 
Compared to the existing matrix time series factor models, which typically assume that the mean of $\bX_t$ is constant, MEFM is equipped to model (potentially) time-varying mean through $\mu_t$, $\balpha_t$ and $\bbeta_t$, while imposing an interpretable low-rank structure therein. 

We propose to generalize the MEFM framework by allowing for changes in both the mean and the covariance of $\bX_t$, which are characterized as abrupt jumps in otherwise smoothly varying means of $\balpha_t$ and $\bbeta_t$, and changes in the loading matrices of the factor-driven component,respectively. 
Under a condition for the identification of the latent components, our proposed data segmentation method detects and locates the changes in the first- and second-order structures separately. 
In addition to achieving asymptotic consistency in multiple change point detection, it identifies whether a detected change is associated with the row or column main effects, or the factor-driven component, thereby lending more interpretability to the estimators. 

Methodologically, our proposal bears similarities to the procedures proposed by \cite{Chenetal2022} for detecting changes in the mean of vector time series, and by \cite{Zhangetal2025} for segmenting tensor-valued time series under a factor model. 
At the same time, the former requires that the covariance of the vector time series remains constant, while the latter supposes the mean of the tensor time series is time-invariant.
In contrast, our method handles the changes in both the first and second moments and moreover, explicitly attributes the changes to the row or column categorizations. 
The simultaneous detector of \cite{cui2025simultaneous}, while being able to jointly test for changes in the mean and covariance of vector time series, potentially suffers from the complex interaction between different types of changes whereby a large shift in the mean can obscure a change in the covariance.
The MEFM framework, as well as lending interpretability to the detected changes, allows the separation of the estimation and detection of the first- and second-order structures such that our procedure avoids complications arising from such interference between mean and covariance changes. 


The rest of this paper is organized as follows.
Section~\ref{sec: model_est} introduces the MEFM with change points and discusses the identification conditions.
Section~\ref{sec: method} develops the preliminary estimators of the latent structures under MEFM, and the detection procedures for changes in the first- and second-order structures along with guidance on the selection of tuning parameters. 
Section~\ref{sec: theorem} presents the asymptotic consistency of the proposed change point detection methods. 
We report in Sections~\ref{sec: numerical_results} and~\ref{subsec: real_data_taxi} the results of simulation studies and a real data application to New York City Yellow Taxi trip records. 
All proofs and additional numerical results are relegated to the supplementary material.
\bigskip

\textbf{Notations.} 
Unless specified otherwise, we denote scalars, vectors, and matrices by letters, lower-case bold letters, and capital bold letters, i.e.,\ $x$, $\bx$ and $\bX$, respectively. We also use $x_i, X_{ij}, X_{i\cdot}, X_{\cdot i}$ to denote, respectively, the $i$th element of $\bx$, the $(i,j)$th element of $\bX$, the $i$th row vector (as a column vector) and the $i$th column vector of $\bX$. We use $\otimes$ to represent the Kronecker product, and $\circ$ the Hadamard product. By convention, the total Kronecker product for an index set is computed in descending index. We use $a\asymp b$ to denote $a=O(b)$ and $b=O(a)$. Hereafter, given a positive integer $m$, define $[m]=\{1,2,\dots,m\}$ and $\bM_m:= \bI_m - m^{-1} \1_m \1_m^\trans$,
where $\bI_m \in \R^{m\times m}$ denotes the identity matrix.
The $i$th largest eigenvalue of a matrix $\bX$ is denoted by $\lambda_i(\bX)$.
The vectorization of a matrix $\bX$ is denoted as $\vec(\bX)$.
For a symmetric matrix $\bX \in\R^{a\times a}$, the half-vectorization $\Vech(\bX): \R^a \mapsto\R^{a(a+1)/2}$ is defined as the vector by stacking the elements on and below the diagonal of $\bX$.
The first-order derivative of a function $f$ is denoted by $f'$. The Frobenius norm, spectral norm, and (element-wise) max norm of a matrix $\bX$ (or a vector, without loss of generality) is denoted by $\|\bX\|_F$, $\|\bX\|$, and $\|\bX\|_\infty$, respectively.


\section{Model}
\label{sec: model_est}

Consider a series of matrix observations $\bX_t\in \R^{p\times q}$ for $t\in[T]$, where each $\bX_t$ admits a main effect matrix factor model (MEFM) representation as
\begin{align}
    \bX_t &= 
    \mu_t\1_p \1_q^\trans + \balpha_t \1_q^\trans + \1_p \bbeta_t^\trans + \bC_t + \bE_t ,
    \label{eqn: MEFM_structure} 
\end{align}
where $\mu_t$ is the (possibly stochastic) grand mean, $\balpha_t$ and $\bbeta_t$ are vector processes representing the row and column main effects, respectively, $\bC_t$ is the factor-driven common component whose structure is further specified later, and $\bE_t$ is the idiosyncratic noise. 
We allow for both the mean and second moments of $\{\bX_t\}_{t = 1}^T$ to undergo multiple change points that are not necessarily aligned, and the aim of this paper lies in consistent detection of these change points.
We may also permit change points in the stochastic properties of $\mu_t$ such as its mean, for which a plethora of data segmentation methods exist;
see the review in \cite{cho2024data}.
Since under an identifiability condition (Assumption~\ref{ass: identification}), the stochastic properties of $\mu_t$ do not affect the remaining components in~\eqref{eqn: MEFM_structure} and the change points therein, we do not impose any structural constraint on $\mu_t$.

The changes in the mean of $\{\bX_t\}_{t = 1}^T$ enter the model as changes in the row and column effects, which are trend stationary with multiple mean shifts, i.e.,\ 
\begin{equation}
\label{eqn: mean_change}
\begin{split}
    \balpha_t &= \bmu_{\alpha,t} + \sum_{i=1}^{Q_\alpha} \bgamma_{\alpha,i} \cdot \mathbb{I}\{t \geq \theta_{\alpha,i} \} +\be_{\alpha,t} , \\
    \bbeta_t &= \bmu_{\beta,t} + \sum_{j=1}^{Q_\beta} \bgamma_{\beta,j} \cdot \mathbb{I}\{t \geq \theta_{\beta,j} \} +\be_{\beta,t} ,
\end{split}
\end{equation}
where $\bmu_{\alpha,t}:= \bmu_{\alpha}(t/T)$ with $\bmu_{\alpha}(\cdot):[0,1] \mapsto \R^p$ denotes smoothly varying trend function, and $\bgamma_{\alpha,i} \in\R^p$ the jump vector at the $i$th change point with the collection of change points denoted by $\Theta_\alpha = \{\theta_{\alpha, i}, \, i \in [Q_\alpha] : \, 0 < \theta_{\alpha, 1} < \ldots < \theta_{\alpha, Q_\alpha} < T \}$; by convention, we write $\theta_{\alpha, 0} = 0$ and $\theta_{\alpha, Q_\alpha + 1} = T$.
For the column main effect, $\bmu_{\beta,t}$, $\bgamma_{\beta,j}$, and $\Theta_\beta$ are analogously defined. 
Both $\balpha_t$ and $\bbeta_t$ are stochastic with the inclusion of the mean-zero, stationary processes $\{\be_{\alpha,t}\}_{t = 1}^T$ and $\{\be_{\beta,t}\}_{t = 1}^T$.

Changes in the second-order structure of $\{\bX_t\}_{t = 1}^T$ are introduced to the model~\eqref{eqn: MEFM_structure}  as changes in the factor structure of $\{\bC_t\}_{t = 1}^T$, where
\begin{equation}
\label{eqn: cov_change}
    \bC_t = \sum_{w=0}^{Q_f} \bA_{w} \bF_t \bB_{w}^\trans \cdot \mathbb{I}\{\theta_{f,w} <t\le \theta_{f,w + 1}\}.
\end{equation}
Here, $\bF_t \in \R^{k_r \times k_c}$ is the core factor matrix with $k_r$ and $k_c$ as the factor numbers which are fixed for all $p$ and $q$, $\bA_w \in \R^{p\times k_r}$ and $\bB_w\in \R^{q\times k_c}$ are the row and column loading matrices on the $w$th segment, respectively, with the collection of the change points denoted by $\Theta_f=\{\theta_{f,w}, \, w \in [Q_f]: \, \theta_{f,0} \equiv 0 < \theta_{f, 1} < \ldots < \theta_{f, Q_f} < T \equiv \theta_{Q_f+1}\}$. 
We remark that neither $\bA_w$ nor $\bB_w$ are required to be of full column rank, which allows for changes in the row or column factor numbers with $k_r$ (resp.\ $k_c$) imposing an upper bound on the segment-wise row (resp.\ column) factor number.

In fact, one may re-write the model in~\eqref{eqn: cov_change} by identifying global factor loading matrices $\bQ \in\R^{p\times k_r}$ and $\bU \in\R^{q\times k_c}$ with full column ranks, and the transformation matrices $\bR_w \in\R^{k_r\times k_r}$ and $\bV_w \in\R^{k_c\times k_c}$, $w=0,\dots, Q_f$, not necessarily of full rank, which satisfy $\bA_w = \bQ \bR_w$ and $\bB_w = \bU \bV_w$ such that
\begin{align}
    \bC_t 
    &= \bQ \Bigg( \sum_{w=0}^{Q_f} \bR_{w} \bF_t \bV_{w}^\trans \cdot \mathbb{I}\{\theta_{f,w} <t\le \theta_{f,w + 1}\} \Bigg) \bU^\trans
    =: \bQ \bG_t \bU^\trans.
    \label{eqn: Gt_define}
\end{align}
Referred to as the pseudo-factor, the matrix time series $\{\bG_t\}_{t = 1}^T$ with $\bG_t \in \R^{k_r \times k_c}$, absorbs the changes in the loading matrices as the changes in its second moments; see \cite{ZhangCenCho2026} for further discussions on the equivalence between~\eqref{eqn: cov_change} and~\eqref{eqn: Gt_define} in a tensor setting.
Then, denoting the segment-wise covariance of the vectorized pseudo-factor by
\[
\bGamma_{G,w} := \E\Big[ \vec(\bG_t) \vec(\bG_t)^\trans \cdot\mathbb{I}{\{\theta_{f,w} < t \le\theta_{f,w+1}\}} \Big]
\in \R^{k_rk_c \times k_rk_c}
\]
for each $w=0,\dots, Q_f$, we measure the size of the change at $\theta_{f,w}$ with
\begin{equation}
\label{eqn: def_jump_factor}
\omega_{f,w} := \| \Delta_{G,w} \| \text{ \ where \ }
\Delta_{G,w} := \bGamma_{G,w} - \bGamma_{G,w-1}.
\end{equation}

The MEFM model with the changes allowed as in~\eqref{eqn: mean_change} and~\eqref{eqn: cov_change} enjoys considerable flexibility.
In the absence of any change point (i.e.,\ $Q_\alpha = Q_\beta = Q_f = 0$), it embeds the classical matrix factor model while allowing for time-varying mean, see Remark~1 of \cite{LamCen2025_MEFM}. 
The model serves as a tool for separately handling the mean and second moment change point problems.
This enables us to avoid the interference between the changes in different structures which is particularly beneficial in high dimensions, whereas existing methods for joint detection of mean and covariance change tend to suffer, e.g., from large changes in the mean potentially masking any change in the covariance, see Theorem~3 of \cite{cui2025simultaneous}.
Also, any change detected from the main effects is identifiable with either the row or column categorization which, complementing the mode-identification procedure of \cite{ZhangCenCho2026} for the changes detected from the factor-driven component, enhances the interpretability of the changes. 
We provide additional discussions on the model in Section~\ref{app: model}.

We formally present the condition for the identifiability of the latent components under the model~\eqref{eqn: MEFM_structure}.
\begin{assumption}[Identification]
\label{ass: identification}
For any $t\in[T]$, $i\in[Q_\alpha]$, $j\in[Q_\beta]$, $w=0,\dots, Q_f$, we have
\[
\1_p^\trans \bmu_{\alpha,t} =0, \quad
\1_q^\trans \bmu_{\beta,t} =0, \quad
\1_p^\trans \bgamma_{\alpha,i} =0, \quad
\1_q^\trans \bgamma_{\beta,j} =0, \quad
\1_p^\trans \bA_w =\0, \quad
\1_q^\trans \bB_w =\0 .
\]
\end{assumption}
The conditions on the loading matrices $\bA_w$ and $\bB_w$ are comparable to Assumption~(IC1) in \cite{LamCen2025_MEFM}, while those on the trend functions and jump vectors can be seen as requiring $\1_p^\trans \E(\balpha_t)=0$ and $\1_q^\trans \E(\bbeta_t)=0$.
By Theorem~1 of \cite{LamCen2025_MEFM}, the (time-varying) means of $\mu_t$, $\balpha_t$ and $\bbeta_t$ are identifiable, and so are $\mu_t$, $\balpha_t$ and $\bbeta_t$ from the common component, under Assumption~\ref{ass: identification}.
Separation between the trend function and jump vectors relies on the smoothness of the trend and the jump size, and we defer their characterization to Section~\ref{sec: theorem}.

\section{Methodology}\label{sec: method}

\subsection{Preliminary estimation}
\label{subsec: estimation}

We present the estimators of the latent components in~\eqref{eqn: MEFM_structure} which serve as the main building blocks for detecting change points under~\eqref{eqn: mean_change} and~\eqref{eqn: cov_change}.

Notice that for each $t\in[T]$, left-multiplying $\1_p^\trans$ and right-multiply $\1_q$ to $\bX_t$ gives
\begin{align*}
    \1_p^\trans \bX_t \1_q &= pq\mu_t + q\1_p^\trans \balpha_t + p\1_q^\trans \bbeta_t + \1_p^\trans \bE_t \1_q 
    = pq\mu_t + q\1_p^\trans \be_{\alpha,t} + p\1_q^\trans \be_{\beta,t} + \1_p^\trans \bE_t \1_q.
\end{align*}
by \Cref{ass: identification}.
This suggests an estimator for the grand mean as
\begin{equation}
\label{eqn: mu_hat}
\wh\mu_t := \frac{1}{pq} \1_p^\trans \bX_t \1_q .
\end{equation}
Similarly, for each $t\in[T]$, by right-multiplying $\1_q$ (resp.\ $\1_p$) on $\bX_t$ (resp.\ $\bX_t^\trans$), we arrive at $\bX_t \1_q = q \mu_t \1_p + q \balpha_t + \bE_t \1_q$ (resp.\ $\bX_t^\trans \1_p = p \mu_t \1_q + p \bbeta_t + \bE_t^\trans \1_p$), which motivates us to approximate the main effect processes by
\begin{align}
    \wh\balpha_t &:= \frac{1}{q} \bX_t \1_q - \wh\mu_t \1_p \text{ \ and \ }
    \wh\bbeta_t := \frac{1}{p} \bX_t^\trans \1_p - \wh\mu_t \1_q.
    \label{eqn: mu_alpha_hat} 
\end{align}
These estimators form the basis for the method for detecting change points in the mean of $\bX_t$ under~\eqref{eqn: mean_change}, see \Cref{subsec: detection_ME}.
With the above estimators, we define
\begin{equation}
\label{eqn: Lt_hat}
\wh\bL_t := \bX_t - \wh\mu_t\1_p \1_q^\trans - \wh\balpha_t \1_q^\trans - \1_p \wh\bbeta_t^\trans .
\end{equation}

Next, recall that under the re-formulation in~\eqref{eqn: Gt_define}, any change in the factor loading matrices are detectable as the changes in the second moments of the pseudo factor time series $\{\bG_t\}_{t = 1}^n$ of the reduced dimensions. 
Hence, we propose to detect the change points under~\eqref{eqn: cov_change} from the estimated $\{\bG_t\}_{t = 1}^n$, for which we adapt a commonly adopted two-step procedure in the matrix factor modeling literature \citep{Yuetal2022, Heetal2024}, with a key difference that we apply the procedure to $\{\wh\bL_t\}_{t = 1}^T$ rather than the observed time series. 
For this, we first obtain the initial estimators of the global loading matrix $\bQ$, say $\wt\bQ$, as $\sqrt{p}$ times the leading $k_r$ eigenvectors of $(Tpq)^{-1} \sum_{t=1}^T \wh\bL_t \wh\bL_t^\trans$; we similarly estimate $\bU$ by $\wt\bU$ containing the leading $k_c$  eigenvectors of $(Tpq)^{-1} \sum_{t=1}^T \wh\bL_t^\trans \wh\bL_t$ multiplied by $\sqrt q$. 
Then, the final estimator $\wh\bQ$ is given by $\sqrt{p}$ times the leading $k_r$ eigenvectors of $(Tpq^2)^{-1} \sum_{t=1}^T \wh\bL_t \wt\bU \wt\bU^\trans \wh\bL_t^\trans$, where the projection onto the column space of $\wt\bU$ enhances the signal; we analogously obtain $\wh\bU$. 
Lastly, the pseudo-factor time series is estimated by
\begin{equation}
\label{eqn: Gt_hat}
\wh\bG_t := \frac{1}{pq} \wh\bQ^\trans \wh\bL_t \wh\bU, \quad t \in [T].
\end{equation}

\subsection{Generic algorithm}
\label{sec:alg}

In Sections~\ref{subsec: detection_ME} and~\ref{subsec: detection_fac}, we present detector statistics for detecting multiple change points in the main effects and the factor structure, respectively. 
They are constructed on the principle of locally contrasting the stochastic property of interest over the adjacent intervals $[\tau - \gamma, \tau]$ and $(\tau, \tau + \gamma]$ for some bandwidth $\gamma$ and varying $\tau \in \{\gamma + 1, \ldots, T - \gamma\}$, whereby large values of the detector statistics indicate the presence of significant changes in the neighborhood.

\begin{algorithm}[h!t!]
\caption{Generic MOSUM algorithm for change point detection under~\eqref{eqn: MEFM_structure}.}\label{alg: detect}
\begin{algorithmic}[1]
\setstretch{1.25}
\State \textbf{Input:} Scan statistic $\cT_\tau^*$ with bandwidth $\gamma$, threshold $\pi^*$
\State \textbf{Initialize:} Set $\wh\Theta \gets \emptyset$ and $t \gets \gamma + 1$
\While{$t \leq T- \gamma$}
\If{$t =\argmax_{\tau \in[t- \gamma, t+ \gamma]} \cT^*_{\tau}$} 
\If{$\cT^*_{t} > \pi^*$}
\State $\wh\Theta \gets \wh\Theta \cup \{t\}$
\EndIf
\State $t \gets t + \gamma +1$
\Else
\State $t \gets t +1$
\EndIf
\EndWhile
\State \textbf{Output:} Set of change point estimators $\wh\Theta$
\end{algorithmic}
\end{algorithm}

We present a generic algorithm which performs multiple change point detection in combination in combination with any such detector statistic, which is denoted by $\cT^*_\tau$, see Algorithm~\ref{alg: detect}.
It iteratively identifies significant local maximizers of $\cT^*_\tau$, in the sense that (i)~any $t$ added to the set of change point estimators $\wh\Theta$ maximizes $\cT^*_\tau$ over the interval $[t - \gamma, t + \gamma]$, and (ii)~$\cT^*_t$ exceeds some threshold $\pi^* > 0$. 
If such $t$ is identified, the updating rule in line~8 ensures that the same change point is not detected again and, provided that the next change point is sufficiently distanced away from the one previously detected by~$t$, the algorithm continues to search for the next change point until it reaches the end of the data sequence.
It is possible for Algorithm~\ref{alg: detect} to return $\wh\Theta=\emptyset$ which is equivalent to $\max_{\gamma < \tau \leq T-\gamma} \cT^*_{\tau} \leq \pi^*$, suggesting that there is little evidence for large shifts in the stochastic property of interest.
We discuss the choice of problem-specific tuning parameters in Section~\ref{sec: theorem}. 

\subsection{Change point detection in the first-order structure}
\label{subsec: detection_ME}

We describe how to detect the multiple change points in the main effects, which drive the shifts in the mean of $\bX_t$ under~\eqref{eqn: mean_change}. It suffices to focus on the row main effects, since the method for the column main effects is constructed analogously.

Following \cite{Chenetal2022}, who address the problem of detecting abrupt changes in (otherwise) smoothly varying mean of high-dimensional time series, we characterize the changes in $\{\balpha_t\}_{t = 1}^T$ by the sequence of vectors  
$\bj_{\alpha}(\tau) := \balpha_\tau^{+} - \balpha_\tau^{-} := \lim_{t \,\downarrow\, \tau} \balpha_t - \lim_{t \,\uparrow\, \tau} \balpha_t$,
$\tau\in[T]$.
Under certain smoothness conditions, we have $\bj_{\alpha}(\theta_{\alpha, i}) = \bgamma_{\alpha,i}$ for all $i \in [Q_\alpha]$, while $\bj_{\alpha}(\tau) = \mathbf{0}$ for $\tau \in [T] \setminus \Theta_\alpha$. 
Based on these observations, we construct the detector statistic by estimating $\bj_{\alpha}(\tau)$ via local linear estimators of $\balpha_\tau^{+}$ and $\balpha_\tau^{-}$, constructed with $\{\wh{\balpha}_t\}_{t = 1}^T$ in~\eqref{eqn: mu_alpha_hat}.
Let $b\in (0,1)$ be some bandwidth parameter
and suppose that $\gamma := bT$ is an integer for simplicity.
Then we obtain
\begin{equation}
\label{eqn: def_hat_alpha_pm}
\wh\balpha_{\tau}^{+} := \sum_{t= \tau+1}^{\tau+\gamma} w_{t-\tau} \wh\balpha_t ,\quad
\wh\balpha_{\tau}^{-} := \sum_{t= \tau-\gamma}^{\tau-1} w_{\tau-t} \wh\balpha_t, \text{ \ and \ }
\wh\bj_{\alpha}(\tau) := \wh\balpha_{\tau}^{+} - \wh\balpha_{\tau}^{-}
\end{equation}
where, for some kernel function $K(\cdot)$, the weights $w_t$ are defined as
\begin{align}
\label{eq:kernel}
    w_{t} &:= \frac{K(t/\gamma) (S_2 +S_1 t/T)}{S_2 S_0 - S_1^2} \text{ \ with \ } 
    S_\ell := \sum_{i=1}^T \Big(-\frac{i}{T} \Big)^{\ell} K\Big(\frac{i}{\gamma}\Big) ,\; \ell=0,1,2.
\end{align}
Note that $\wh\bj_{\alpha}(\tau)$ in \eqref{eqn: def_hat_alpha_pm} involves symmetric construction of $\wh\balpha_{\tau}^{+}$ and $\wh\balpha_{\tau}^{-}$, to facilitate technical arguments.
We expect $\wh\bj_{\alpha}(\tau)$ to approximate 
$\sum_{t= \tau +1}^{\tau+\gamma} w_{t-\tau} \bmu_{\alpha}(t/T) - \sum_{t= \tau-\gamma}^{\tau-1} w_{\tau-t} \bmu_{\alpha}(t/T)$ which, for $\tau$ satisfying $\Theta_\alpha \cap [\tau-\gamma, \tau+\gamma] =\emptyset$, is made sufficiently small in any vector norm, for sufficiently smooth $\bm\mu_\alpha(\cdot)$ and the kernel; 
see Assumptions~\ref{ass: trend_fun} and \ref{ass: kernel} below.
On the other hand, if 
$\min\{ \theta_{\alpha, i} - \theta_{\alpha, i - 1}, \theta_{\alpha, i + 1} - \theta_{\alpha, i} \} \ge \gamma$ for some $i \in [Q_\alpha]$, 
we expect any norm of $\wh\bj_{\alpha}(\tau)$ to take its local maximum near $\theta_{\alpha,i}$ with $\wh\bj_{\alpha}(\theta_{\alpha, i}) \approx \bm\gamma_{\alpha, i}$.
Combining these observations with that $\1_p^\trans \bm\gamma_{\alpha, i} = 0$ for all $i$ under Assumption~\ref{ass: identification}, we propose to utilize $\bM_p \wh\bj_{\alpha}(\tau)$ in constructing the detector statistic with $\bM_p = \bI_p - p^{-1} \1_p\1_p^\trans$, which further helps to mitigate the noise. 

To this end, we propose the following detector statistics to scan for the mean shifts in the row and column main effects as
\begin{equation}
\label{eqn: mosum_main_effect}
\cT^{(\alpha)}_{b, \tau} := \left\| 
\bW_{\alpha, \tau}^{-1/2} 
\bM_p\, \wh\bj_{\alpha}(\tau) \right\|_\infty 
\text{ \ and \ }
\cT^{(\beta)}_{b, \tau} := \left\| 
\bW_{\beta, \tau}^{-1/2} 
\bM_q\, \wh\bj_{\beta}(\tau) \right\|_\infty ,
\end{equation}
respectively,
where $\wh\bj_{\beta}(\tau)$ is defined similarly as \eqref{eqn: def_hat_alpha_pm},
and $\bW_{\alpha, \tau}$ and $\bW_{\beta, \tau}$ are (possibly) location-dependent
diagonal matrices with bounded entries, accounting for the heterogeneity across the row and column cross-sections. 
For detecting changes in the row (resp.\ column) main effect, Algorithm~\ref{alg: detect} takes $\cT^{(\alpha)}_{b, \tau}$ (resp.\ $\cT^{(\beta)}_{b, \tau}$) as its detector statistic with the bandwidth $\gamma$.

\begin{remark}
\label{rem:sparse}
As seen later in Theorem~\ref{thm: consistency_change_ME}, the choice of the element-wise max norm for measuring the size of $\wh\bj_{\alpha}(\tau)$ after projection and standardization, implies that the detector statistic $\cT^{(\alpha)}_{b, \tau}$ has better power against the situations where the change vectors $\bgamma_{\alpha, i}$ are sparse, i.e.,\ they contain prominent jumps at a few coordinates. 
There exists change point detection procedures which are shown to be adaptively optimal with respect to the unknown sparsity of the change in high dimensions \citep{pilliat2023optimal}.
However, such studies focus on the cross-sectionally independent situations and it is largely unknown how the interplay between the change and the cross-sectional dependence influences the phase transition with respect to the sparsity with few exceptions \citep{cho2025covariance}. 
We adopt the element-wise max norm which facilitates uniform control of the partial sums of the errors arising from estimating the latent $\balpha_t$ by $\wh{\balpha}_t$, while fully permitting temporal and cross-sectional dependence. 
In Section~\ref{subsec: real_data_taxi}, we observe that this element-wise max norm-based approach is successful in detecting large shifts in the taxi usage patterns associated with the beginning and end of daylight saving time, which are noticeable at some hours but not so prominent at others (Figure~\ref{fig: NYC_hour_plot_sel2}).
\end{remark}


\subsection{Change point detection in the second-order structure}
\label{subsec: detection_fac}

We consider the detection of multiple change points in the factor structure, which drive the changes in the second moments of $\{\bX_t\}_{t = 1}^T$ under~\eqref{eqn: cov_change}. 
To motivate the proposed detector statistic, recall the definition of the jump magnitude at $\theta_{f, w}$ in~\eqref{eqn: def_jump_factor}.
Then, with $\wh\bG_t$ in \eqref{eqn: Gt_hat} estimating the pseudo factor $\bG_t$, and some bandwidth $\gamma\geq 1$, 
we scan the following series of moving sum (MOSUM) statistics
\begin{align}
\label{eqn: test_stat_aggreg}
    \cT^{(f)}_{\gamma, \tau} &:= \frac{1}{\sqrt{2\gamma}} \left\Vert
    \bW_{G, \tau}^{-1/2} \Vech\big( \wh{\bGamma}_{G ,\gamma ,\tau} \big) \right\Vert  ,      \quad \gamma \leq \tau \leq T-\gamma,
\text{ \ where} 
\\
    \wh{\bGamma}_{G ,\gamma ,\tau} &:= \sum_{t=\tau+1}^{\tau+\gamma} \vec(\wh\bG_t) \vec(\wh\bG_t)^\trans - \sum_{t=\tau-\gamma+1}^{\tau} \vec(\wh\bG_t) \vec(\wh\bG_t)^\trans,
    \notag
\end{align}
and $\bW_{G, \tau}$ is a location-dependent, positive definite matrix, which is introduced to account for scale-heterogeneity in the elements of $\wh{\bGamma}_{G ,\gamma ,\tau}$.
Unlike in~\eqref{eqn: def_hat_alpha_pm}, we adopt the above MOSUM statistic which amounts to setting $w_t$ to be piecewise constant. 
This is justifiable as under~\eqref{eqn: cov_change}, the second-order structure of $\{\bC_t\}_{t = 1}^T$ and thus $\{\bG_t\}_{t = 1}^T$ is exactly piecewise constant.
Then, considering the population version of $\wh{\bGamma}_{G ,\gamma ,\tau}$ denoted by ${\bGamma}_{G ,\gamma ,\tau}$, it behaves as
\begin{align*}
\left\Vert
\Vech\big( {\bGamma}_{G ,\gamma ,\tau} \big) \right\Vert
&=
\Bigg\Vert
\Vech\Bigg\{ \sum_{t=\tau+1}^{\tau+\gamma} \E[\vec(\bG_t) \vec(\bG_t)^\trans] - \sum_{t=\tau-\gamma+1}^{\tau} \E[\vec(\bG_t) \vec(\bG_t)^\trans] \Bigg\}  \Bigg\Vert \\
&=
\left\{
    \begin{array}{ll}
	(\gamma - \vert \tau - \theta_{f,w} \vert ) \, \omega_{f,w}  & \hbox{if $\vert \tau - \theta_{f, w} \vert \le \gamma$ for some $w \in [Q_f]$,} \\
    0 & \text{otherwise.}
    \end{array}
\right. 
\end{align*}
This suggests that $\cT^{(f)}_{\gamma, \tau}$ is close to zero if $\Theta_f \cap (\tau-\gamma, \tau+\gamma] =\emptyset$,
while it takes its local maximum near $\tau = \theta_{f,w}$ if the interval $(\tau-\gamma, \tau + \gamma]$ contains exactly one change point from $\Theta_f$.
We combine the detector statistic $\cT^{(f)}_{\gamma, \tau}$ with Algorithm~\ref{alg: detect}, for detecting the change points in $\Theta_f$ under~\eqref{eqn: cov_change}.

\begin{remark}
\label{rem:zcc}
While the detector statistic $\cT^{(f)}_{\gamma, \tau}$ is effectively formed by contrasting local behavior of vector processes of dimension $k_rk_c(k_rk_c + 1)/2$, \cite{Zhangetal2025} propose a detector statistic which contrasts the core factor covariance along each mode separately, and thus involves vector processes of dimension $k_r(k_r + 1)/2 + k_c(k_c + 1)/2$ in the current context. 
The latter approach was shown to enjoy numerical benefits from the reduced dimension for higher-order tensor time series, but it also requires a stronger identification condition on the global factor loading matrices, namely that $p^{-1} \bQ^{\trans} \bQ = \bI_p$ and $q^{-1} \bU^{\trans} \bU = \bI_q$, for handling the bilinear forms of the errors from scanning the data for changes. 
In contrast, thanks to the vectorization of the pseudo-factor in constructing $\cT^{(f)}_{\gamma, \tau}$, we make a considerably weaker condition which is comparable with those found in the classic factor model literature \citep{BaiNg2002}, see Assumption~\ref{ass: loadings}~(ii) for further detail.
\end{remark}



\section{Theoretical properties}\label{sec: theorem}

\subsection{Consistency of the preliminary loading estimators}\label{subsec: thm_prelim}

We first define a class of linear processes which plays a fundamental role in our theoretical investigation.
Given a matrix time series $\{\bY_t\}_{t \in \Z}$ with $\bY_t \in \R^{m \times n}$,
we denote by $\{\bY_t\}_{t \in \Z} \in \cG(\nu, \eta)$ for some $\nu \geq 4$ and $\eta\geq 0$, if and only if it holds that $\bY_t = \sum_{h\geq 0} a_{y,h} \bX_{y,t-h}$ where 
(i)~the innovation process $\{\bX_{y, t}\}_{t \in \Z}$ with $\bX_{y,t} \in \R^{m \times n}$, has independent elements 
both temporally and cross-sectionally, 
with zero mean, unit variance, and uniformly bounded $\nu$th order moments;
(ii)~the scalar coefficients $a_{y,h}$ satisfy $\sum_{h\geq 0} a_{y,h}^2=1$ and $\sum_{h\geq 0} |a_{y,h}| \leq c$ for some constant $c > 0$; 
and (iii)~there exists some constant $c_\eta > 0$ depending on $\eta$, such that $\sup_{m \ge 0} (m + 1)^\eta \sum_{s = m}^\infty \vert a_{y, s} \vert \le c_{\eta}$.

To establish the consistency of the preliminary estimators of the main effects and pseudo-factor introduced in Section~\ref{subsec: estimation}, we introduce the following assumptions on model~\eqref{eqn: MEFM_structure}. 

\begin{assumption}[Global factor loadings]\label{ass: loadings}
There exists some constant $c$ such that: 
\begin{itemize}
    \item [(i)] Both $\bQ$ and $\bU$ are deterministic with $\|\bQ\|_{\max}, \|\bU\|_{\max} \leq c <\infty$;
    \item [(ii)] 
    As $p,q \to\infty$, we have $p^{-1} \bQ^\trans \bQ \to \bI_{k_r}$ and $q^{-1} \bU^\trans \bU \to \bI_{k_c}$.
\end{itemize}
\end{assumption}

\begin{assumption}[Core factors]\label{ass: core_factor}
There exist some constants $\nu_f \ge 4$ and $\eta_f \ge 0$ such that the factor time series satisfies $\{\bF_t\}_{t=1}^T \in \cG(\nu_f, \eta_f)$.
\end{assumption}


\begin{assumption}[Idiosyncratic noise]\label{ass: noise}
For any $t\in[T]$, it holds that $\bE_t = \bA_{e,r} \bF_{e,t} \bA_{e,c}^\trans + \bSigma_{e} \circ \bE_{e,t}$,
where there exist some constants $\nu_e \ge 4$ and $\eta_e \ge 0$ such that $\{\bF_{e,t}\}_{t=1}^T, \, \{\bE_{e,t}\}_{t=1}^T \in \cG(\nu_e, \eta_e)$.
Moreover, $\{\bF_{e,t}\}_{t = 1}^T$ and $\{\bE_{e,t}\}_{t = 1}^T$ are independent of $\{\bF_t\}_{t = 1}^T$, as well as being independent of each other. 
The deterministic matrices $\bA_{e,r}$ and $\bA_{e,c}$ consist of uniformly bounded elements and satisfy $\|\bA_{e,r}\|_1, \|\bA_{e,c}\|_1=\cO(1)$,
and $\bSigma_{e}$ contains 
the standard deviations of the corresponding elements in $\bE_{e,t}$, with elements uniformly bounded.
\end{assumption}

\begin{assumption}[Local factor loadings]
\label{ass: change_loc}
We assume:
\begin{itemize}
    \item [(i)] The change points in $\Theta_f$ are linearly spaced, i,e.,\ there exists $\rho_{f, w}$, $w\in[Q_f]$, such that $0< \rho_{f, 1} < \dots < \rho_{f, Q_f} < 1$ and $\theta_{f,w}=\lfloor \rho_{f, w} T \rfloor$.
    \item [(ii)] 
    Both $[\bR_{0}, \bR_{1}, \dots, \bR_{Q_f}]$ and $[\bV_{0}, \bV_{1}, \dots, \bV_{Q_f}]$ are of full row rank.
    \item [(iii)] For each $w\in\{0,\dots,Q_f\}$, we have $\|\bR_{w}\|_F$ and $\|\bV_{w}\|_F$ bounded
    away from zero and infinity.
    \item [(iv)] As $\min\{T,p,q\} \to\infty$, the following limiting matrices exist:
    \begin{align*}
        \bSigma_{R} &:= \lim_{T\to\infty} \frac{1}{T} \sum_{w=1}^{Q_f +1} \big\| \bV_{w-1} \big\|_F^2 \cdot (\theta_{f,w} - \theta_{f,w-1}) \cdot \bR_{w-1} \bR_{w-1}^\trans ,\\
        \bSigma_{V} &:= \lim_{T\to\infty} \frac{1}{T} \sum_{w=1}^{Q_f +1} \big\| \bR_{w-1} \big\|_F^2 \cdot (\theta_{f,w} - \theta_{f,w-1}) \cdot \bV_{w-1} \bV_{w-1}^\trans .
    \end{align*}
\end{itemize}
\end{assumption}

Assumption~\ref{ass: loadings} is commonly found in the literature on factor modeling.
Markedly, part~(ii) only imposes asymptotic orthogonality, which is weaker than many conditions seen in the similar context.
For instance, Assumption~2 of \cite{barigozzi2024moving} restricts how fast the scaled global loading matrix converges to the identity matrices, whereas \cite{ZhangCenCho2026} requires exact orthogonality.
Our relaxed condition is thanks to the use of vectorized $\wh\bG_t$ in the construction of $\cT^{(f)}_{\gamma, \tau}$ in~\eqref{eqn: test_stat_aggreg}, 
see also Remark~\ref{rem:zcc}.

Assumption~\ref{ass: core_factor} concerns the time series in the core factors, 
where the bounded fourth moment is similarly seen in 
Assumption~2.1 of \cite{ChenFan2023} and
Assumption~(F1) of \cite{LamCen2025_MEFM}, among others.
Assumption~\ref{ass: noise} allows both serial and cross-sectional dependence in $\bE_t$.
Assumption~\ref{ass: change_loc}~(i) is commonly found in the literature on change point analysis under time series factor models. 
As remarked earlier, the pseudo-factor contains factors attributed to changes in the loading space as well as (dis)appearance of factors and, jointly with Assumption~\ref{ass: loadings}, such a condition ensures that $\{\bG_t\}_{t = 1}^T$ is recoverable. 
Assumption~\ref{ass: change_loc}~(ii)--(iv) regulate the transformation matrices of changes in the factor structure.
It is noteworthy that we do not need to impose any restriction on the dependence between the main effects and $\{\bF_t\}_{t = 1}^T$ or $\{\bE_t\}_{t = 1}^T$, thanks to the model identification condition in Assumption~\ref{ass: identification}.

By construction of the scan statistics in Sections~\ref{subsec: detection_ME} and \ref{subsec: detection_fac}, it suffices to consider loading estimation consistency to underpin the investigation into the subsequent change point detection methods.
Hence in the following,
we present the consistency of the two-step estimators $\wh\bQ$ and $\wh\bU$, which
can also be of independent interest.

\begin{theorem}\label{thm: consistency_proj}
Let Assumptions~\ref{ass: identification}, \ref{ass: loadings}, \ref{ass: core_factor}, \ref{ass: noise} and \ref{ass: change_loc} hold, with $\nu = \min\{\nu_f, \nu_e\} \ge 4$ and $\eta = \min\{\eta_f, \eta_e\} \ge 0$.
Then the estimators $\wh\bQ$ and $\wh\bU$ satisfy that
\begin{align*}
    & \frac{1}{p} \left\| \wh\bQ - \bQ \wh\bH_r \right\|_F^2 = \cO_P\left( \frac{1}{Tq} + \frac{1}{T^2p} + \frac{1}{Tp^2} + \frac{1}{p^2q^2} \right), \\
    & \frac{1}{q} \left\| \wh\bU - \bU \wh\bH_c \right\|_F^2 = \cO_P\left( \frac{1}{Tp} + \frac{1}{T^2q} + \frac{1}{Tq^2} + \frac{1}{p^2q^2} \right),
\end{align*}
where $\wh\bH_r$ and $\wh\bH_c$ are asymptotically invertible such that as $\min\{T,p,q\} \to \infty$, we have $\wh\bH_r \wh\bH_r^\trans = \bI_{k_r} + o_P(1)$ and $\wh\bH_c \wh\bH_c^\trans = \bI_{k_c} + o_P(1)$.
\end{theorem}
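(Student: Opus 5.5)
The plan is to reduce the estimation to a standard two-step projected matrix factor problem on $\wh\bL_t$. First I would compute $\wh\bL_t$ explicitly. Using Assumption~\ref{ass: identification} and $\bX_t\1_q=q\mu_t\1_p+q\balpha_t+\bE_t\1_q$, one gets exactly
\[
\wh\bL_t=\bM_p\bX_t\bM_q=\bM_p(\bC_t+\bE_t)\bM_q=\bQ\bG_t\bU^\trans+\bM_p\bE_t\bM_q ,
\]
because $\1_p^\trans\bA_w=\0$ and $\1_q^\trans\bB_w=\0$ imply $\bM_p\bA_w=\bA_w$ and $\bB_w^\trans\bM_q=\bB_w^\trans$. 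Hence $\bM_p\bQ\bR_w=\bQ\bR_w$ for all $w$, and by the full row rank in Assumption~\ref{ass: change_loc}~(ii) the column space of $\bQ$ is spanned by the $\bA_w$. This gives $\bM_p\bQ=\bQ$, and likewise $\bM_q\bU=\bU$. The main effects and $\mu_t$ therefore drop out entirely, so no conditions on them are needed. The noise is $\wt\bE_t:=\bM_p\bE_t\bM_q$. Since $\|\bM_p\|=1$, and the rank-one corrections in $\bM_p$ only average rows, $\wt\bE_t$ inherits the weak factor-plus-idiosyncratic structure of Assumption~\ref{ass: noise} (with $\bM_p\bA_{e,r}$ in place of $\bA_{e,r}$, etc.). The relevant spectral bounds, such as $\|\sum_t\wt\bE_t\wt\bE_t^\trans\|$ and $\|\sum_t\bG_t\bU^\trans\wt\bE_t^\trans\|_F$, are then the same order as those for $\bE_t$ up to constants.

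Second, I would analyze the pseudo-factor second moments. Let $\bSigma_{G,r}:=T^{-1}\sum_t\E(\bG_t\bG_t^\trans)$. Since $\bF_t\in\cG(\nu_f,\eta_f)$ has $\E\vec(\bF_t)\vec(\bF_t)^\trans=\bI$, we have $\E\bR_w\bF_t\bV_w^\trans\bV_w\bF_t^\trans\bR_w^\trans=\|\bV_w\|_F^2\bR_w\bR_w^\trans$. So $\bSigma_{G,r}\to\bSigma_R$, which is positive definite by Assumption~\ref{ass: change_loc}~(ii)--(iv); the same holds for $\bSigma_V$. Linear-process moment bounds with $\nu\ge4$ give $T^{-1}\sum_t\bG_t\bG_t^\trans-\bSigma_{G,r}=\cO_P(T^{-1/2})$. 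Combined with $p^{-1}\bQ^\trans\bQ\to\bI$, this shows the leading $k_r$ eigenvalues of $(Tpq)^{-1}\sum_t\wh\bL_t\wh\bL_t^\trans$ are bounded away from zero, while the rest are $o_P(1)$. Define $\wh\bH_r$ in the usual way (e.g.\ $\wh\bH_r=(Tpq^2)^{-1}\sum_t\bG_t\bU^\trans\wt\bU\wt\bU^\trans\bU\bG_t^\trans\bQ^\trans\wh\bQ\,\wh\bLambda^{-1}$). The property $\wh\bH_r\wh\bH_r^\trans=\bI+o_P(1)$ then follows from $p^{-1}\wh\bQ^\trans\wh\bQ=\bI$ and $p^{-1}\bQ^\trans\bQ\to\bI$ via the standard argument of \cite{Yuetal2022, Heetal2024}.

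Third, I would carry out the two-step rate. The initial step gives $p^{-1}\|\wt\bQ-\bQ\wt\bH\|_F^2=\cO_P(1/T+1/p^2)$-type rates, and similarly for $\wt\bU$. For the projected step, I would decompose
\[
(Tpq^2)^{-1}\sum_t\wh\bL_t\wt\bU\wt\bU^\trans\wh\bL_t^\trans-\bQ\,(\cdot)\,\bQ^\trans
\]
into the usual three terms: signal--noise cross terms, noise--noise terms, and terms from $\wt\bU-\bU\wt\bH_c$. Then I would bound $p^{-1/2}\|(\wh\bQ-\bQ\wh\bH_r)\|_F\le\|\wh\bLambda^{-1}\|\cdot p^{-1/2}\|\cdot\|_F$. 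The cross term $\sum_t\wt\bE_t\bU\bG_t^\trans$ has order $(Tpq)^{1/2}$ in Frobenius norm, which yields $1/(Tq)$. The noise term projected onto $\bU$ contributes $1/(Tp^2)+1/(p^2q^2)$, the latter from the mean $\E\wt\bE_t\bU\bU^\trans\wt\bE_t^\trans$ off the diagonal and the weak factor in $\bE_t$. Plugging in the initial error of $\wt\bU$ gives the $1/(T^2p)$ term. The $\wh\bU$ case is symmetric.

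The main obstacle is step three: showing that the higher-order interactions, especially the noise--noise term interacting with the initial error $\wt\bU-\bU\wt\bH_c$, do not exceed the stated rate. This must be done under only asymptotic orthogonality (so that $\wh\bH_r$ is not exactly orthogonal) and with the weak noise factors $\bA_{e,r}\bF_{e,t}\bA_{e,c}^\trans$. For this I would first try the iterative bounding in \cite{Heetal2024}, using $\|\bA_{e,r}\|_1=\cO(1)$ to control those noise factors as bounded-spectrum perturbations.
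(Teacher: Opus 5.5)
\textbf{The gap is the step you defer in step three.} The surrounding skeleton matches the paper's: the exact identity $\wh\bL_t=\bM_p(\bC_t+\bE_t)\bM_q$, the eigenvalue limit $\bSigma_R$, the four-term split of $\wh\bL_t\wt\bU\wt\bU^\trans\wh\bL_t^\trans$, and the same $\wh\bH_r$. Your full-row-rank argument for $\bM_p\bQ=\bQ$ is also fine. The problem is that the natural way of doing the deferred step fails.

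You propose to carry the initial error $\wt\bU-\bU\wt\bH_c$ into the noise--noise term. That error satisfies $q^{-1}\|\wt\bU-\bU\wt\bH_c\|_F^2=\cO_P(1/(Tp)+1/q^2)$. The $1/q^2$ piece is a genuine bias of $\wt\bU$, caused by the non-spherical (heteroskedastic) idiosyncratic covariance $\E(\bE_t^\trans\bM_p\bE_t)$. It does not come from the weak noise factors, so treating $\bA_{e,r}\bF_{e,t}\bA_{e,c}^\trans$ as a bounded-spectrum perturbation does not touch it.

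Set $\boldsymbol{\Delta}:=\wt\bU\wt\bU^\trans-\bU\wt\bH_c\wt\bH_c^\trans\bU^\trans$, so that $\|\boldsymbol{\Delta}\|^2\lesssim_P q^2\{1/(Tp)+1/q^2\}$. Lemmas~\ref{lemma: inequality_sandwich}~(i) and~\ref{lemma: rate_corollary_lemma2_LamCen} then bound the contribution of $\boldsymbol{\Delta}$ to $p^{-1}\|\wh\bQ-\bQ\wh\bH_r\|_F^2$ only by
\[
\frac{\|\boldsymbol{\Delta}\|^2\,\big\|\sum_{t=1}^T\bE_t\bE_t^\trans\big\|_F^2\,\|\bQ\|^2}{T^2p^3q^4}\lesssim_P\Big(\frac{1}{Tp}+\frac{1}{q^2}\Big)\Big(\frac{1}{Tq}+\frac{1}{p}\Big).
\]
This contains a term $1/(pq^2)$, which the claimed rate does not cover. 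For example, with $p=q=n$ and $T=n^3$ the claimed rate is of order $n^{-4}$, while $1/(pq^2)=n^{-3}$. So step three as written does not deliver the theorem. Any ``iterative bounding'' would have to reproduce the refinement below.

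\textbf{The missing idea is to change the reference point so the bias sits in a noise-independent object.} Let $\Ddot\bU$ be $\sqrt q$ times the leading $k_c$ eigenvectors of
$\Ddot\bGamma_c=(Tpq)^{-1}\sum_t\bC_t^\trans\bC_t+(Tpq)^{-1}\sum_t\bM_q\E(\bE_t^\trans\bM_p\bE_t)\bM_q$.
This matrix depends only on $\{\bF_t\}$, so it is independent of $\{\bE_t\}$. The paper then proceeds in three steps.
\begin{enumerate}
\item An eigenvector-perturbation argument (Lemma~\ref{lemma: population_cov_PCA_consistency}) gives $q^{-1}\|\wt\bU-\Ddot\bU\bJ_c\|_F^2=\cO_P(1/(Tp)+1/(p^2q^2))$ with $\bJ_c$ exactly orthogonal. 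Hence $\wt\bU\wt\bU^\trans-\Ddot\bU\Ddot\bU^\trans$ no longer carries a $1/q^2$ term.
\item The same crude bound applied to this remainder now yields only terms of order $1/(T^2pq)+1/(Tp^2)+1/(p^3q^2)$.
\item The leading piece $\sum_t\bE_t\bM_q\Ddot\bU\Ddot\bU^\trans\bM_q\bE_t^\trans\bQ$ is handled by a mean--variance computation that uses this independence (Lemma~\ref{lemma: rate_ELEK_general}). It gives $\cO_P((T^2p+Tp^2)q^2)$, i.e.\ $1/(p^2q^2)+1/(Tpq^2)$ after scaling.
\end{enumerate}

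\textbf{Two smaller corrections.}
\begin{enumerate}
\item The initial rate you quote, $1/T+1/p^2$, is too weak. In the signal--noise cross term the initial error enters as (error)$/T$. You therefore need the rate $1/(Tp)+1/q^2$ of Lemma~\ref{lemma: consistency_PCA} to obtain $1/(T^2p)$ rather than $1/T^2$.
\item $\bM_p(\bSigma_e\circ\bE_{e,t})\bM_q$ no longer has independent entries, so $\wt\bE_t$ does not literally inherit Assumption~\ref{ass: noise}. Instead, strip the centring matrices via $\|\bM_p\|=1$ and $\|\bM_p\|_1\le 2$, or absorb $\bM_q$ into the noise-independent middle matrix, as the paper does.
\end{enumerate}
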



Theorem~\ref{thm: consistency_proj} parallels Proposition~1 of \cite{Zhangetal2025} derived for tensor factor models with multiple change points.
It showcases that compared to the first-step estimator $\wt{\bQ}$, which attains the rate $\cO_P(1/(Tq) + 1/p^2)$ (see Theorem~2 of \citeauthor{LamCen2025_MEFM}, \citeyear{LamCen2025_MEFM}, where it is supposed that $Q_f = Q_\alpha = Q_\beta = 0$ and $\balpha_t$ and $\bbeta_t$ are deterministic), the second-step projection improves the rate of estimation as long as $(p+q)/T^2 =o(1)$,
which is non-trivial extension in the context of MEFM due to the interplay among different components in \eqref{eqn: MEFM_structure}.

\subsection{Consistency in first-order change point detection}

We establish the consistency of the change point detection method described in Section~\ref{subsec: detection_ME} which, through detecting the changes in the main effects under~\eqref{eqn: mean_change}, identify the shifts in the mean of $\{\bX_t\}_{t = 1}^T$.
For this, we make the following assumptions.

\begin{assumption}[Noise in the main effects]\label{ass: ME_noise}
For any $t\in[T]$, we have $\be_{\alpha,t} = \bA_{\alpha} \bf_{\alpha,t} + \bSigma_{\alpha} \circ \be_{\alpha,t}$,
with $\{\bf_{\alpha,t}\}_{t=1}^T, \, \{\be_{\alpha,t}\}_{t=1}^T \in \cG(\nu_\alpha, \eta_\alpha)$.
The deterministic matrices $\bA_{\alpha}$ and $\bSigma_{\alpha}$ contain uniformly bounded elements, and $\|\bA_{\alpha}\|_1 = \cO(1)$.
Analogous conditions are placed on $\{\be_{\beta,t} \}_{t=1}^T$.
\end{assumption}


\begin{assumption}[Trend functions]\label{ass: trend_fun}
Denote the $i$th component of the function $\bmu_{\alpha}$ by $(\bmu_{\alpha})_i$ and the $j$th component of the function $\bmu_{\beta}$ by $(\bmu_{\beta})_j$. For any $i\in[p]$ and $j\in[q]$, we assume that $(\bmu_{\alpha})_i, (\bmu_{\beta})_j \in \mathcal{C}^2[0,1]$, and that there is some constant $c>0$ satisfying
\[
\max\left( \max_{i\in[p]} \left\{ |(\bmu_{\alpha})_i^{'}|_\infty, |(\bmu_{\alpha})_i^{''}|_\infty\right\}, \, 
\max_{j\in[q]} \left\{ |(\bmu_{\beta})_j^{'}|_\infty, |(\bmu_{\beta})_j^{''}|_\infty\right\} \right) \leq c .
\]
\end{assumption}

\begin{assumption}[Kernel function and bandwidth]\label{ass: kernel}
The kernel function $K(\cdot)$ in~\eqref{eq:kernel} is symmetric and non-negative with support $[-1,1]$ and first-order derivative. Further, we assume that $\vert K\vert_{\infty}, \vert K'\vert_{\infty} <\infty$, $\int_{-1}^1 K(x)dx =1$, and $\kappa_1^2 \neq \kappa_0 \kappa_2$ where $\kappa_i:= \int_0^1 x^i K(x) dx$.
Also, the bandwidth satisfies $b\to 0$ and $bT \to\infty$.
\end{assumption}

\begin{assumption}[Changes in the first-order structure]
\label{assum:me:change:spacing}
We assume that the change points in $\Theta_\alpha$ are linearly spaced, i,e.,\ there exists $\rho_{\alpha, i}, \, i \in[Q_\alpha]$, such that $0< \rho_{\alpha, 1} < \dots < \rho_{\alpha, Q_\alpha} < 1$ and $\theta_{\alpha, i}=\lfloor \rho_{\alpha, i} T \rfloor$. Analogously, the change points in $\Theta_\beta$ are linearly spaced.
\end{assumption}

Similarly to Assumption~\ref{ass: noise}, Assumption~\ref{ass: ME_noise} allows for temporal and cross-sectional dependence in the noise time series $\{\be_{\alpha, t}\}_{t = 1}^T$ and $\{\be_{\beta, t}\}_{t = 1}^T$. 
Assumption~\ref{ass: trend_fun} ensures the smoothness of the trend functions and thus identifiability of the jump vectors in~\eqref{eqn: mean_change}
Assumption~\ref{ass: kernel} characterizes the kernel used in the construction of the detector statistics in~\eqref{eqn: mosum_main_effect}, and the requirement on the bandwidth $b$; we suggest a default choice for the kernel in Section~\ref{subsec: tuning}.
Assumption~\ref{assum:me:change:spacing} may be relaxed to permit sub-linear spacing in the change points with their total number increasing with $T$, at the price of more stringent condition on $\min_{i \in [Q_\alpha]} \Vert \bm\gamma_{\alpha, i} \Vert_\infty$ for detection consistency (see~\eqref{cond:thm: consistency_change_ME} below) and a worse rate of estimation;
we further discuss this point below the following theorem.

\begin{theorem}[Asymptotic consistency in main effects change point detection]
\label{thm: consistency_change_ME}
Suppose that Assumptions~\ref{ass: identification}, \ref{ass: noise}, \ref{ass: ME_noise}, \ref{ass: trend_fun}, \ref{ass: kernel} and~\ref{assum:me:change:spacing} hold, with $\nu = \min\{\nu_e, \nu_\alpha, \nu_\beta\} \ge 4$ and $\eta = \min\{\eta_e, \eta_\alpha, \eta_\beta\} > 1/2 - 1/\nu$. 
Suppose that $\cT^{(\alpha)}_{b, \tau}$ in \eqref{eqn: mosum_main_effect} is constructed with 
$\bW_{\alpha, \tau} = \bW_\alpha$ for all $\tau$,
where $\bW_\alpha$ is diagonal with all of its elements bounded away from zero and infinity. 
Let $\wh\Theta_\alpha = \big\{ \wh\theta_{\alpha, i}, \, i \in [\wh Q_\alpha]: \, \wh\theta_{\alpha, i} < \ldots < \wh\theta_{\alpha, \wh Q_\alpha} \big\}$ denote the output of Algorithm~\ref{alg: detect} with the scan statistic $\cT^{(\alpha)}_{b, \tau}$, bandwidth $bT$, and threshold $\pi^*= c_{T,p,q}^{(\alpha)} \big[ b^2 + (pT)^{1/\nu} / (Tb)^{1 - 1/\nu} + \sqrt{\log(pT)/(Tb)} \big]$ with $c_{T,p,q}^{(\alpha)} \to \infty$ arbitrarily slowly.
We further assume that 
\begin{align}
\label{cond:thm: consistency_change_ME}
\frac{c_{T,p,q}^{(\alpha)} \big[ b^2 + (pT)^{1/\nu} / (Tb)^{1 - 1/\nu} + \sqrt{\log(pT)/(Tb)} \big] }{\min_{i\in[Q_\alpha]} \big\| \bgamma_{\alpha,i} \big\|_\infty } \leq \frac{1}{2}
\text{ \ and \ }
b = o\left( \min_{i\in[Q_\alpha]} \big\| \bgamma_{\alpha,i} \big\|_\infty \right).
\end{align}
Then, as $\min\{T, p\} \to\infty$,
for any sequence $\kappa_{T,p}^{(\alpha)}$ diverging arbitrarily slowly,
we have
\[
\P\left( \wh{Q}_{\alpha} = Q_{\alpha}; \max_{i\in[\wh{Q}_{\alpha}]}
\big\| \bgamma_{\alpha,i} \big\|_\infty^2 |\wh\theta_{\alpha,i} - \theta_{\alpha,i}| 
\leq 
\Big\{ (pTb)^{2/\nu}  + \log(pTb) \Big\}
\, \kappa_{T,p}^{(\alpha)}
\right) \to 1 .
\]
Analogous results hold for $\wh\Theta_\beta$ output from Algorithm~\ref{alg: detect} with $\cT^{(\beta)}_{b, \tau}$ as its scan statistic. 
\end{theorem}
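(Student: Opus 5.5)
Under Assumption~\ref{ass: identification}, $\bM_p \wh\balpha_t = \bM_p\balpha_t + q^{-1}\bM_p\bE_t\1_q$. The term $\bM_p\bC_t\1_q$ vanishes because $\bB_w^\trans\1_q=\0$, and the column effect and grand mean are annihilated either by $\bM_p$ or by the centering $-\wh\mu_t\1_p$. Since $\bM_p\bmu_{\alpha,t}=\bmu_{\alpha,t}$ and $\bM_p\bgamma_{\alpha,i}=\bgamma_{\alpha,i}$, this gives the decomposition
\[
\bM_p\wh\bj_\alpha(\tau) = \bd(\tau) + \bJ(\tau) + \bM_p\bv(\tau).
\]
Here $\bd(\tau)$ is the local linear contrast of $\bmu_\alpha(t/T)$, and $\bJ(\tau)$ is the same contrast applied to the step function $\sum_i\bgamma_{\alpha,i}\mathbb{I}\{t\ge\theta_{\alpha,i}\}$. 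The term $\bv(\tau)=\sum_{t>\tau}w_{t-\tau}\bu_t-\sum_{t<\tau}w_{\tau-t}\bu_t$ is built from the noise $\bu_t:=\be_{\alpha,t}+q^{-1}\bE_t\1_q$. Since $\bW_\alpha$ has diagonal entries bounded away from zero and infinity, it only changes constants, so I work with $\|\cdot\|_\infty$ directly.

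\textbf{Step 1 (deterministic part).} Using Assumptions~\ref{ass: trend_fun} and~\ref{ass: kernel} and the moment identities of the local linear weights ($\sum_t w_t\approx 1$, $\sum_t w_t t/T\approx 0$, $\sum_t|w_t|=O(1)$, $w_t=O(1/(Tb))$), a second-order Taylor expansion gives $\max_\tau\|\bd(\tau)\|_\infty=O(b^2)$. If $[\tau-\gamma,\tau+\gamma]$ contains no point of $\Theta_\alpha$, then $\bJ(\tau)=\0$. Near an isolated $\theta_{\alpha,i}$ (isolation follows from the linear spacing of Assumption~\ref{assum:me:change:spacing} and $b\to0$), $\bJ(\tau)=g(\tau-\theta_{\alpha,i})\bgamma_{\alpha,i}$, where $g(0)=1$ and $g(h)=\sum_{t=h}^{\gamma}w_t$ for $h\ge0$ (symmetrically for $h<0$). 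Because $\kappa_1^2\neq\kappa_0\kappa_2$, $1-g(h)\ge c|h|/(Tb)$ for $|h|\le\gamma$, and $g$ vanishes outside.

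\textbf{Step 2 (uniform noise bound).} Next I show $\max_\tau\|\bv(\tau)\|_\infty=O_P\big((pT)^{1/\nu}/(Tb)^{1-1/\nu}+\sqrt{\log(pT)/(Tb)}\big)$. Under Assumptions~\ref{ass: noise} and~\ref{ass: ME_noise}, each coordinate of $\bu_t$ is a finite combination of $\cG(\nu,\eta)$ linear processes. These are $\bA_\alpha\bf_{\alpha,t}$ (with $\|\bA_\alpha\|_1=O(1)$), $q^{-1}\bA_{e,r}\bF_{e,t}\bA_{e,c}^\trans\1_q$ and the averaged Hadamard term, all with coefficient sums bounded. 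Summation by parts reduces the kernel-weighted sums to partial sums, since $K'$ is bounded so the total variation of $w$ is $O(1/(Tb))$. The bound then follows from a Gaussian approximation or Fuk--Nagaev type maximal inequality for partial sums of linear processes with $\nu$ moments and dependence decay $\eta>1/2-1/\nu$, applied with a union bound over $p$ coordinates and $T$ locations. The heavy-tail term $(pT)^{1/\nu}/(Tb)^{1-1/\nu}$ and the sub-Gaussian term $\sqrt{\log(pT)/(Tb)}$ come from the two regimes of that inequality. I expect this to be the main obstacle: the bound must hold uniformly while the noise is cross-sectionally dependent through the factor parts, and the independence of the innovations has to be exploited carefully.

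\textbf{Step 3 (detection and localization).} Let $\Delta:=b^2+(pT)^{1/\nu}/(Tb)^{1-1/\nu}+\sqrt{\log(pT)/(Tb)}$. On the event from Step 2, away from the changes the statistic is $O_P(\Delta)<\pi^*$, so there are no false positives. At $\tau=\theta_{\alpha,i}$ it is at least $\|\bgamma_{\alpha,i}\|_\infty-O_P(\Delta)>\pi^*$ by~\eqref{cond:thm: consistency_change_ME}. Within distance $\gamma$ of each change the local maximum is unique by the shape of $g$, so Algorithm~\ref{alg: detect} returns exactly one estimator per change and $\wh Q_\alpha=Q_\alpha$.

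For the rate, at $\wh\theta=\wh\theta_{\alpha,i}$ I compare the statistic at $\wh\theta$ and at $\theta_{\alpha,i}$ in the coordinate $k$ attaining $\|\bgamma_{\alpha,i}\|_\infty$, with $h=\wh\theta-\theta_{\alpha,i}$. This gives
\[
(1-g(h))\|\bgamma_{\alpha,i}\|_\infty \lesssim |\bd_k(\wh\theta)-\bd_k(\theta_{\alpha,i})| + \max_k|v_k(\wh\theta)-v_k(\theta_{\alpha,i})|.
\]
The trend difference is $O(b|h|/T)$, which is negligible relative to $|h|\|\bgamma\|_\infty/(Tb)$ by the condition $b=o(\min_i\|\bgamma_{\alpha,i}\|_\infty)$. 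The noise difference is an increment over $|h|$ time points with weights $O(1/(Tb))$. A refined maximal inequality for such increments, uniform over $p$ coordinates and $|h|\le\gamma$, bounds it by $O_P\big((Tb)^{-1}\{(pTb)^{1/\nu}+\sqrt{\log(pTb)}\}\sqrt{|h|}\,\kappa\big)$, to be run through a peeling argument over dyadic shells of $|h|$. Combining this with the lower bound $c|h|\|\bgamma\|_\infty/(Tb)$ from Step 1 yields $\|\bgamma_{\alpha,i}\|_\infty^2|h|\lesssim\{(pTb)^{2/\nu}+\log(pTb)\}\kappa_{T,p}^{(\alpha)}$. The argument for $\Theta_\beta$ is identical with the roles of $p$ and $q$ swapped.
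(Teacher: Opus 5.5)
You follow the paper's route: the same decomposition, with the paper's $\cIII_{\alpha,t}$ and $\cIV_{\alpha,t}$ merged into $\bM_p\bv$. You then use the $O(b^2)$ bias bound, a Fuk--Nagaev inequality with a union bound over coordinates and locations, the threshold argument for $\wh Q_\alpha=Q_\alpha$, and $\sqrt{|h|}$-normalised noise increments for the rate. Your dyadic peeling replaces the paper's plain union bound over all $pTb$ pairs of coordinate and $h$. That union bound is where the factor $Tb$ in $(pTb)^{2/\nu}+\log(pTb)$ comes from, and both routes give the stated rate.

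\textbf{The gap is the choice of coordinate in the localisation step.} You compare the two statistics in the \emph{fixed} coordinate $k$ maximising $|(\bgamma_{\alpha,i})_k|$. But all you know is $\cT^{(\alpha)}_{b,\wh\theta}\ge\cT^{(\alpha)}_{b,\theta_{\alpha,i}}$, an inequality between max norms.
\begin{itemize}
\item Coordinate $k$ gives a lower bound on $\cT^{(\alpha)}_{b,\theta_{\alpha,i}}$. To get an \emph{increment} in a single coordinate, you also need $\cT^{(\alpha)}_{b,\wh\theta}$ to be attained at that same coordinate.
\item Nothing forces the maximiser at $\wh\theta$ to be $k$. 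With ties such as $(\bgamma_{\alpha,1})_5=-(\bgamma_{\alpha,1})_{10}$ in the simulations, the noise decides.
\item Bounding $\cT^{(\alpha)}_{b,\wh\theta}$ naively by $|g(h)|\,\|\bgamma_{\alpha,i}\|_\infty$ plus noise puts noise \emph{levels} $|v_j(\wh\theta)|+|v_k(\theta_{\alpha,i})|$ on the right, not increments. This yields only $|h|\lesssim Tb\,\Delta/\|\bgamma_{\alpha,i}\|_\infty$.
\end{itemize}

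The missing step, which is the paper's, is to let $j$ be the coordinate attaining $\|\bM_p\wh\bj_\alpha(\wh\theta)\|_\infty$ and show two things.
\begin{itemize}
\item $|(\bgamma_{\alpha,i})_j|\ge\|\bgamma_{\alpha,i}\|_\infty/2$, since otherwise $\cT^{(\alpha)}_{b,\wh\theta}\le\|\bgamma_{\alpha,i}\|_\infty/2+O_P(\Delta)<\cT^{(\alpha)}_{b,\theta_{\alpha,i}}$.
\item $(\bM_p\wh\bj_\alpha(\wh\theta))_j$ has the sign of $(\bgamma_{\alpha,i})_j$, using $g(h)\ge 1/2$ after coarse localisation.
\end{itemize}
Taking that sign positive, you get $0\le\cT^{(\alpha)}_{b,\wh\theta}-\cT^{(\alpha)}_{b,\theta_{\alpha,i}}\le(\bM_p\wh\bj_\alpha(\wh\theta))_j-(\bM_p\wh\bj_\alpha(\theta_{\alpha,i}))_j$. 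This gives your display with $\|\bgamma_{\alpha,i}\|_\infty/2$ on the left. Because $j$ is random, the trend and noise increments must be bounded uniformly over coordinates. Your $\max_k$ noise bound and the uniform constants in Assumption~\ref{ass: trend_fun} supply this.

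Two smaller remarks.
\begin{itemize}
\item The noise increment is not only a sum over $|h|$ points. The overlap of the two windows also contributes $\sum_u(w_{u-\wh\theta}-w_{u-\theta_{\alpha,i}})\bu_u$ over about $Tb$ points, with weight differences $O(|h|/(Tb)^2)$. This term is of order $|h|/(Tb)^{3/2}\le\sqrt{|h|}/(Tb)$, so the rate survives, but it needs its own bound, as in the paper.
\item Your trend-increment bound $O(b|h|/T)$ exploits cancellation between the two one-sided fits and is sharper than the paper's $O(|h|/T)$. With it, the trend term is negligible once $b^2=o(\|\bgamma_{\alpha,i}\|_\infty)$, which already follows from the first part of~\eqref{cond:thm: consistency_change_ME}. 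So the separate condition $b=o(\min_i\|\bgamma_{\alpha,i}\|_\infty)$ is not needed at that point.
\end{itemize}
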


Theorem~\ref{thm: consistency_change_ME} establishes that the proposed method for detecting change points in the row main effect, consistently estimates both the total number and the locations of the change points, with the rate of estimation explicitly characterized by $\big\| \bgamma_{\alpha,i} \big\|_\infty$ and the tail behavior of the data through~$\nu$. 
We suppose that $\bW_{\alpha, \tau} = \bW_\alpha$ for all $\tau$, which facilitates the derivation of the rate of estimation; a similar approach has also been taken by \cite{EichingerKirch2018} in the context of univariate time series segmentation, and by \cite{barigozzi2024moving} for time series segmentation under a factor model.

The condition~\eqref{cond:thm: consistency_change_ME} stems from controlling the bias and random fluctuation in estimating the jump vectors $\bj_\alpha(\tau)$ uniformly over $\tau$ in the max norm. 
This and the rate of estimation simultaneously place restrictions on the growth of $T$ and $p$ (and $q$ for detecting changes in the column main effect) as well as the rate at which $b \to 0$.
Provided that $\Vert \bgamma_{\alpha,i} \Vert_\infty = \cO(1)$, it translates to
\begin{align}
\label{eq:bound:p}
\max\{p, \, q \} \ll \min\left\{ (Tb)^{\nu - 1}/T, \, T^{\nu/2 - 1}/b  \right\},
\end{align}
which shows that larger dimensions are permitted as the number of moments $\nu$ increases.
We remark that the term $b^2$ in both $\pi^*$ and the first condition in~\eqref{cond:thm: consistency_change_ME}, is due to the bias in estimating the smooth varying trend of $\{\bmu_{\alpha,t}\}_{t = 1}^T$, and it can be removed if the trend is exactly constant. 
Further, if such knowledge is available, we can simply scan the difference in local averages (cf.\ \citeauthor{EichingerKirch2018}, \citeyear{EichingerKirch2018}, for the univariate time series segmentation), lifting the second condition in~\eqref{cond:thm: consistency_change_ME}.

\subsection{Consistency in second-order change point detection}

We derive the consistency of the method introduced in Section~\ref{subsec: detection_fac} for detecting the change points in the second-order structure of $\{\bX_t\}_{t = 1}^T$ through detecting multiple shifts in the factor-driven component under~\eqref{eqn: cov_change}.

\begin{theorem}[Asymptotic consistency in factor structure change point detection]\label{thm: consistency_change_fac}
Suppose that all assumptions in Theorem~\ref{thm: consistency_proj} hold with $\nu = \min\{\nu_f, \nu_e\} \ge 8$ and $\eta = \min\{\eta_f, \eta_e\} > 1$. 
Suppose that $\cT^{(f)}_{\gamma, \tau}$ in~\eqref{eqn: test_stat_aggreg} is constructed with $\bW_{G, \tau} = \bW_W$ for all $\tau$, where $\bW_W$ is positive definite with all of its eigenvalues bounded away from zero and infinity.
Let $\wh\Theta_f = \{ \wh\theta_{f,w}, \, w\in [\wh{Q}_f] : \, \wh\theta_{f,1} < \ldots < \wh\theta_{f,\wh{Q}_f} \}$ denote the output of Algorithm~\ref{alg: detect} with the scan statistic $\cT^{(f)}_{\gamma, \tau}$, bandwidth $\gamma$,
and threshold $\pi^*= c_{T,p,q}^{(f)} \sqrt{\log(T)}$ with $c_{T,p,q}^{(f)} \to \infty$ arbitrarily slowly. 
We further assume that
\begin{align}
2\gamma & \leq \min_{w\in[Q_f +1]}\{\theta_{f,w} - \theta_{f,w-1}\}, 
\label{cond:thm: consistency_change_fac:one}
\\
\frac{T}{\log(T)^{\nu/4} \gamma^{\nu/4 - 1}} & = \cO(1) 
\text{ \ and \ }
\frac{c_{T,p,q}^{(f)} \sqrt{\log(T)}}{\min_{w\in[Q_f]} \omega_{f,w} \sqrt{\gamma}} \leq \frac{1}{4}, 
\label{cond:thm: consistency_change_fac:two}
\end{align}
where $\omega_{f,w}$ is the jump magnitude defined in \eqref{eqn: def_jump_factor}.
Then, as $\min\{T, p, q\} \to\infty$, for any sequence $\kappa_{T,p,q}^{(f)}$ diverging arbitrarily slowly, we have
\[
\P\left( \wh{Q}_f = Q_f; \max_{w\in[\wh{Q}_f]} \omega_{f,w}^2 |\wh\theta_{f,w} - \theta_{f,w}| \leq \kappa_{T,p,q}^{(f)} \right) \to 1 .
\]
\end{theorem}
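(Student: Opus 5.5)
We follow the standard MOSUM route, as in \cite{EichingerKirch2018} and \cite{barigozzi2024moving}. The first step reduces $\cT^{(f)}_{\gamma,\tau}$ to a statistic built from the latent pseudo-factor. Write $\wh\bL_t = \bC_t + \bE_t + \bR_t$, where $\bR_t$ collects the errors from estimating $\mu_t,\balpha_t,\bbeta_t$. By Assumption~\ref{ass: identification}, $\bM_p\bC_t\bM_q=\bC_t$. A direct computation then gives $\wh\bL_t = \bM_p(\bC_t+\bE_t)\bM_q$ exactly, so the main effects drop out and $\bR_t$ reduces to centring terms of $\bE_t$.

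Hence
\[
\vec(\wh\bG_t) = \wh\bH_c^\trans\otimes\wh\bH_r^\trans\,\vec(\bG_t)\,(1+o_P(1)) + \bm\epsilon_t ,
\]
where $\bm\epsilon_t$ comes from three sources: $(pq)^{-1}\wh\bQ^\trans\bM_p\bE_t\bM_q\wh\bU$, from $p^{-1}\bQ^\trans\bQ-\bI$, and from $\wh\bQ-\bQ\wh\bH_r$. Theorem~\ref{thm: consistency_proj} controls the last of these. Setting $\bH:=\wh\bH_c\otimes\wh\bH_r$, we decompose
\[
\wh\bGamma_{G,\gamma,\tau} = \bH^\trans\bGamma_{G,\gamma,\tau}\bH + \bH^\trans\Big(\textstyle\sum_{t=\tau+1}^{\tau+\gamma}-\sum_{t=\tau-\gamma+1}^{\tau}\Big)\big[\vec(\bG_t)\vec(\bG_t)^\trans-\E\,\cdot\,\big]\bH + \mathcal{R}_\tau .
\]
The remainder $\mathcal{R}_\tau$ contains the cross terms $\vec(\bG_t)\bm\epsilon_t^\trans$ and the quadratic terms $\bm\epsilon_t\bm\epsilon_t^\trans$.

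The second step is a uniform bound on the stochastic part:
\[
\max_{\gamma\le\tau\le T-\gamma}\gamma^{-1/2}\Big\|\sum_{t=\tau+1}^{\tau+\gamma}\big[\vec(\bG_t)\vec(\bG_t)^\trans-\E\,\cdot\,\big]\Big\| = \cO_P(\sqrt{\log T}).
\]
Since $\bG_t$ is a segment-wise fixed linear transform of $\bF_t\in\cG(\nu_f,\eta_f)$, the product $\vec(\bF_t)\vec(\bF_t)^\trans$ is a quadratic form of a linear process with $\nu/2\ge4$ moments. For such processes we invoke a Nagaev/Fuk-type inequality for dependent sums, valid under $\eta>1$, together with a union bound over $\tau$. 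The exceedance probability is of order $T\cdot(\gamma\log T)^{-\nu/4}\gamma^{\nu/4}\cdot\gamma^{1-\nu/4}$ plus a sub-Gaussian term $T\exp(-c\log T)$. Both are $o(1)$ exactly under the first part of \eqref{cond:thm: consistency_change_fac:two}, which explains its form. An analogous argument bounds the cross and quadratic terms in $\mathcal{R}_\tau$. These terms are $o_P(\sqrt{\gamma\log T})$ uniformly, because each carries an extra factor $(pq)^{-1/2}$ or the rate in Theorem~\ref{thm: consistency_proj}. For the idiosyncratic part we use the independence of $\bF_{e,t},\bE_{e,t}$ from $\bF_t$ in Assumption~\ref{ass: noise}. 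The key observation is that $(pq)^{-1}\bQ^\trans\bE_t\bU$ has variance $\cO((pq)^{-1})$ via $\|\bA_{e,r}\|_1,\|\bA_{e,c}\|_1=\cO(1)$. I expect this uniform control of $\mathcal{R}_\tau$ to be the main obstacle. The difficulty is that $\wh\bQ$ and $\wh\bU$ are data-dependent, so one needs to split $\wh\bQ=\bQ\wh\bH_r+(\wh\bQ-\bQ\wh\bH_r)$ and bound the second piece crudely by Cauchy--Schwarz using $\max_t$ moment bounds. I would first try to show that this crude bound is $o(\sqrt{\gamma\log T})$ under $\min\{T,p,q\}\to\infty$.

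Given these bounds, and since $\bH$ and $\bW_W$ have eigenvalues bounded away from zero and infinity, we get three facts with probability tending to one. \emph{No spurious detections:} if $\tau$ is at distance at least $\gamma$ from $\Theta_f$, then $\cT^{(f)}_{\gamma,\tau}=\cO_P(\sqrt{\log T})<\pi^*$. \emph{Detection:} at $\tau=\theta_{f,w}$ we have $\cT^{(f)}_{\gamma,\tau}\gtrsim\sqrt{\gamma}\,\omega_{f,w}-\cO_P(\sqrt{\log T})>\pi^*$, using the second part of \eqref{cond:thm: consistency_change_fac:two}. \emph{Single detection:} by \eqref{cond:thm: consistency_change_fac:one}, each window contains at most one change, so the local-maximum rule of Algorithm~\ref{alg: detect} returns exactly one estimator within $\gamma$ of each $\theta_{f,w}$. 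This gives $\wh Q_f=Q_f$.

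The third step is localisation. Write $\wh\theta=\theta_{f,w}+k$ with $|k|\le\gamma$. We compare $\|\cdot\|^2$ of the vech statistic at $\theta_{f,w}+k$ and at $\theta_{f,w}$. The signal difference is $\gtrsim\gamma|k|\omega_{f,w}^2$. The noise difference is $\cO_P(\sqrt{\gamma}\,\omega_{f,w}\sqrt{|k|})$ times a slowly diverging factor, obtained through a Hájek--Rényi-type maximal inequality over $k$. The remainder increments are negligible on this scale. Since $\wh\theta$ maximises the statistic, the signal difference must be dominated by the noise difference. This gives $\omega_{f,w}^2|k|\le\kappa^{(f)}_{T,p,q}$ with probability tending to one, uniformly over the finitely many $w$.
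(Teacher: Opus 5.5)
Your architecture matches the paper's: the exact identity $\wh\bL_t=\bM_p(\bC_t+\bE_t)\bM_q$, a uniform $\sqrt{\log T}$ noise bound, detection, then Eichinger--Kirch localisation. However, the step you flag as the main obstacle is a genuine gap, and the route you propose for it does not close under the stated hypotheses.

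Suppose you split $\wh\bQ=\bQ\wh\bH_r+(\wh\bQ-\bQ\wh\bH_r)$ and use Cauchy--Schwarz with Theorem~\ref{thm: consistency_proj}. The best available rate, $p^{-1/2}\|\wh\bQ-\bQ\wh\bH_r\|_F=\cO_P\big((Tq)^{-1/2}+\cdots+(pq)^{-1}\big)$, contains the bias term $(pq)^{-1}$, which does not improve with $T$. Consider a cross term such as $\sum_{t=\tau+1}^{\tau+\gamma}\vec(\bG_t)\,\vec\{(pq)^{-1}(\wh\bQ-\bQ\wh\bH_r)^\trans\bM_p\bE_t\bM_q\wh\bU\}^\trans$. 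It is then controlled at best at order $\gamma(pq)^{-1}\min\{p,q\}^{-1/2}$, i.e.\ $\sqrt{\gamma}(pq)^{-1}\min\{p,q\}^{-1/2}$ after normalisation. That is $o(\sqrt{\log T})$ only if $\gamma$ is small relative to a power of $pq$. The theorem imposes nothing of the kind: only $\min\{T,p,q\}\to\infty$, so $T$ and $\gamma$ may grow arbitrarily fast relative to $pq$. The same problem affects the quadratic terms, and the remainder increments in your localisation step, which you assert are negligible without argument.

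Separately, putting the $p^{-1}\bQ^\trans\bQ-\bI_{k_r}$ contribution into $\bm\epsilon_t$ is wrong. Assumption~\ref{ass: loadings}(ii) gives no rate for it, so its cross term with $\vec(\bG_t)$ has a segment-dependent mean of size $o_P(1)\cdot\gamma\,\omega_{f,w}$. This is not $o_P(\sqrt{\gamma\log T})$. Keep the exact matrix $q^{-1}\wh\bU^\trans\bU\otimes p^{-1}\wh\bQ^\trans\bQ$ in the signal and bound its smallest singular value away from zero, as in Lemma~\ref{lemma: U_kron_Q_converge_I}.

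The missing idea, which is what the paper does, is to avoid splitting $\wh\bQ,\wh\bU$ altogether. Write $\Vech\{\vec(\wh\bG_t)\vec(\wh\bG_t)^\trans\}=\wh\bXi\,\mathcal{M}\,\vec(\bd_t\bd_t^\trans)$. Here $\bd_t=\vec(\bC_t+\bE_t)$, $\mathcal{M}$ is a fixed Kronecker product of the centring matrices, and $\wh\bXi=(pq)^{-2}\L_{k_rk_c}\{(\wh\bU\otimes\wh\bQ)\otimes(\wh\bU\otimes\wh\bQ)\}^\trans$ is common to all $t$. Then split $\bd_t\bd_t^\trans$ into its mean and centred part.
\begin{itemize}
\item The noise mean $\E[\vec(\bE_t)\vec(\bE_t)^\trans]$ is time-invariant, so it cancels exactly between the two windows, however $\wh\bXi$ depends on the data.
\item The factor mean gives the signal.
\item The centred part is bounded by $\|\wh\bXi\|\lesssim(pq)^{-1}$ times $\max_\tau\|\sum_{t=\tau+1}^{\tau+\gamma}(\bd_t\bd_t^\trans-\E\bd_t\bd_t^\trans)\|_F=\cO_P(pq\sqrt{\gamma\log T})$. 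This bound comes from entrywise Nagaev-type inequalities and a union bound over $\tau$ (Lemmas~\ref{lemma: max_ZhangWu} and~\ref{lemma: rate_vech_GG}).
\end{itemize}
This gives $\cO_P(\sqrt{\log T})$ for the normalised noise. That is not $o_P$, but it is enough since $\pi^*=c^{(f)}_{T,p,q}\sqrt{\log T}$ with $c^{(f)}_{T,p,q}\to\infty$. No rate for $\wh\bQ-\bQ\wh\bH_r$ is ever needed; Theorem~\ref{thm: consistency_proj} enters only through the signal.

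You may keep your low-dimensional treatment of $\vec(\bG_t)\vec(\bG_t)^\trans$. The cross and noise--noise terms, however, must be handled by this operator-norm device, and they are then $\cO_P(\sqrt{\gamma\log T})$, of the same order rather than smaller. Likewise, the localisation needs a H\'ajek--R\'enyi-type bound for the projected full sums (cf.\ Lemma~\ref{lemma: max_ZZ_2+epsilon}), not only for the $\bG_t$-part.
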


The requirement that $\nu \geq 8$ is necessary for the bandwidth $\gamma$ to simultaneously satisfy \eqref{cond:thm: consistency_change_fac:one}--\eqref{cond:thm: consistency_change_fac:two}. 
This tail condition is comparable to those found in the literature on change point detection under factor models for vector \citep{Duanetal2023}, matrix \citep{peng2025detection} or higher-order tensor \citep{ZhangCenCho2026} time series. 
Compared to \cite{barigozzi2024moving}, also adopting a MOSUM procedure in this context, we make a somewhat stronger requirement on the rate at which $\gamma$ diverges with $T$.
This may be attributed to that their theoretical analysis is based on deriving the asymptotic null distribution of the maximum scan statistic over $\tau$, whereas we characterize the deviation of the detector statistic from its population counterpart by means of Hanson--Wright-type inequality of \cite{ZhangWu2021}.

\section{Numerical experiments}\label{sec: numerical_results}

\subsection{Selection of tuning parameters}
\label{subsec: tuning}

We provide a detailed guidance on the selection of tuning parameters; the default recommendations are adopted in all our numerical experiments reported later in this section.

\paragraph{Factor numbers.}
Inspired by \cite{LamCen2025_MEFM} and \cite{barigozzi2026statistical},
we select $k_r$ and $k_c$ by utilizing the ratios of the eigenvalues of $\sum_{t=1}^T \wh\bL_t \wh\bL_t^\trans$ and $\sum_{t=1}^T \wh\bL_t^\trans \wh\bL_t$, respectively.
Specifically, we set $\wh{k}_r := \argmax_{1 \le \ell \le \overline{k}_r} \lambda_\ell\big( \sum_{t=1}^T \wh\bL_t \wh\bL_t^\trans \big)/ \lambda_{\ell+1}\big( \sum_{t=1}^T \wh\bL_t \wh\bL_t^\trans \big)$, where $\overline{k}_r = \lceil p/3 \rceil$ is an upper bound on the row factor number;
$\wh{k}_c$ is computed similarly.
These estimators then are input to estimate the pseudo-factor time series as in~\eqref{eqn: Gt_hat}.

\paragraph{Kernel functions.}
As in \cite{Chenetal2022}, we use the Epanechnikov kernel defined as $K(x)=3 (1-x^2)/4$ with support $[-1,1]$ in constructing the detector statistics for the row and column main effects, which meets Assumption~\ref{ass: kernel}. Alternative choices include the kernels e.g.,\ from the symmetric Beta family.

\paragraph{Bandwidth for change point detection.}
We set $\gamma= \lfloor C_\gamma T/\log(T) \rfloor$ in Algorithm~\ref{alg: detect}, where $C_\gamma >0$ is some constant to be specified for flexibility,
and $C_\gamma = 1.5$ is used in our experiments.
This choice allows us to detect all changes that are linearly spaced as assumed in Assumptions~\ref{ass: change_loc}~(i) and~\ref{assum:me:change:spacing}.
For practitioners, we recommend a similar bandwidth when no prior information is available.

\paragraph{Threshold for change point detection.}
While Theorems~\ref{thm: consistency_change_ME} and~\ref{thm: consistency_change_fac} specify the theoretical requirement on the threshold $\pi^*$ for detection consistency, these specifications involve unknown quantities and cannot directly be utilized. 
Following a popularly adopted approach in the change point literature, we propose to train a rule for selecting the threshold which depends only on $T$, $p$ and $q$. 
For this, we simulate the data under MEFMs without any change point,
record the maximal detector statistics computed on the simulated data, take the empirical $0.9$-quantiles of these maxima over $100$ such realizations for different combinations of dimensions,
and regress them on known functions of $T$, $p$, and~$q$.
The fitted models are then used to form the thresholds.
We provide further detail in Section~\ref{subsec: thres_pi_numerical}.

\paragraph{Weight matrices.}
For ease of deriving the rate of estimation,
both Theorems~\ref{thm: consistency_change_ME} and \ref{thm: consistency_change_fac} suppose that the detector statistics in~\eqref{eqn: mosum_main_effect} and~\eqref{eqn: test_stat_aggreg} are constructed with \textit{global} weight matrices (i.e.,\ time-invariant) with bounded entries and eigenvalues. 
In practice, it is recommended that \textit{local} weight matrices (i.e.,\ time-dependent) are adopted to account for potential heteroskedasticity in the data.
We adopt local weight matrices in the numerical experiments reported in the main text, and report additional results with the global weights in the supplementary material.

For change point detection in the row main effect,
we set as $\bW_{\alpha, \tau}$ the diagonal matrix consisting of the local sample variances $\wh{w}_{\alpha,i,\tau}, \, i \in [p]$. 
where 
\begin{align*}
    \wh{w}_{\alpha,i,\tau} :=
    \frac{1}{2\gamma -2}
    \left\{
    \sum_{t=\tau+1}^{\tau+ \gamma} \left( \wh\alpha_{t,i} - \frac{\sum_{\ell=\tau+1}^{\tau+ \gamma} \wh\alpha_{\ell,i}}{\gamma } \right)^2
    + \sum_{t=\tau- \gamma}^{\tau-1} \left( \wh\alpha_{t,i} - \frac{\sum_{\ell=\tau- \gamma}^{\tau-1} \wh\alpha_{\ell,i}}{\gamma} \right)^2
    \right\} ,
\end{align*}
whose construction accounts for the possible presence of a change in the mean.
The weights for detecting changes in the column main effects can be specified analogously.
For change point detection in the second-order structure,
we use the diagonal version of the HAC-type estimator for the long-run covariance matrix of $\{\vec(\wh\bG_t)\}_{t = 1}^T$. 
Specifically,
given any interval $[s,e]$ and bandwidth $w_0$,
the HAC-type estimator $\bW_{s,e}^{\mathsf{HAC}}$ is computed as
\begin{align*}
\bW_{s,e}^{\mathsf{HAC}} &:= \wh\bGamma_{s,e}(0) + \sum_{\ell = 1}^{w_0} \left(1-\frac{\ell}{w_0+1} \right) \left\{\wh\bGamma_{s,e}(\ell)+ \wh\bGamma_{s,e}(\ell)^\trans \right\} , \text{\ where \ } \\
\wh\bGamma_{s,e}(\ell) &:= \frac{1}{e-s+1} \sum_{t = s + \ell}^{e}
\wh\bg_{s,e}(t) \, \wh\bg_{s,e}(t-\ell)^\trans \text{ \ with} \\
\wh\bg_{s,e}(t) &:=
\Vech\left\{ \vec(\wh\bG_t) \vec(\wh\bG_t)^\trans - \frac{1}{e-s+1}\sum_{h=s}^{e} \vec(\wh\bG_h) \vec(\wh\bG_h)^\trans \right\} .
\end{align*}
In our numerical experiments, we set $\bW_{G, \tau} = \bW_{\tau- \gamma +1, \tau+ \gamma}^{\mathsf{HAC}}$ with $w_0=\lfloor (e-s+1)^{1/8}\rfloor$.

\subsection{Simulation studies}\label{subsec: simulation}

In this subsection, we demonstrate the numerical performance of our methods by Monte Carlo experiments.

\subsubsection{Settings}
\label{sec:setting}

We first describe the data generation processes under the model~\eqref{eqn: MEFM_structure}, apart from the specifications of the change points which are given in the following subsections.

\paragraph{Data generating mechanism.}
The global row loading matrix is generated with $\bQ=\bM_p \bQ^*$, where each entry of $\bQ^* \in\R^{p\times k_r}$ is i.i.d.\ $\cN(0,1)$, and pre-multiplication by $\bM_p$ ensures the identification condition in Assumption~\ref{ass: identification} is fulfilled.
The global column loading matrix $\bU$ is formed in an analogous manner.
We fix $k_r=k_c=2$ throughout this section,
and set any transformation matrix for the first segment as an identity matrix, i.e.,\ $\bR_0 = \bI_{k_r}$ and $\bV_0 = \bI_{k_c}$.
Entries of $\bA_{e,r}$ and $\bA_{e,c}$ in Assumption~\ref{ass: noise} are randomly set to zero with probability $0.95$, and the non-zero ones are i.i.d.\ $\cN(0,1)$.
We fix $k_{e,r} = k_{e,c} = 2$.
The elements of $\mu_t$, $\be_{\alpha,t}$, $\be_{\beta,t}$ and $\bF_t$ are independently generated as AR(1) processes with unit variance with AR coefficient~0.5. 
We generate the components of $\bE_t$ analogously, where the elements in $\bF_{e,t}$ and $\bE_{e,t}$ are also independent, AR(1) processes with unit variance, except that their AR coefficients are $-0.5$ and $0.4$, respectively.
Each entry in $\bSigma_{e}$ is generated as i.i.d.\ from the distribution $|\cN(0,1)|$. 
To illustrate the sensitivity of our method to heavy-tailed data, 
the innovation processes in all AR processes above are sampled from i.i.d.\ $t_6/\sqrt{5}$ distribution.
For the trend function in the row main effects, we adapt the generating process in \cite{Chenetal2022} and compute $\bmu_{\alpha}(x) = \bM_p \bmu_{\alpha}^*(x)$, where $\bmu_{\alpha}^*$ is sinusoidal such that $(\bmu_{\alpha}^*(x))_i = \sin(2\pi x + i/p)$. 
The trend for the column main effects is similarly generated, with $p$ replaced by $q$. We specify the changes in the factor structure (characterized by the transformation matrices) and in the main effects later. Every experiment in this section is repeated $n = 100$ times unless specified otherwise.

\paragraph{Competing methods.}
We refer to each component of our method detecting changes in the row and column main effects, and in the factor structure as: Our-row, Our-col, and Our-fac, respectively.
To the best of our knowledge, there is no method designed to simultaneously detect (possibly) multiple change points in the mean and covariance of matrix time series.
Therefore, we compare with the following change point detection methods:
MMEAN \citep{pevsta2025tensor} is designed to detect mean shifts in matrix time series,
and TFMSEG \citep{ZhangCenCho2026} detects changes in the covariance characterized by changes in a tensor factor model.
We also consider methods developed for handling vector time series, with the vectorization of the matrix time series as an input:
SIMUL \citep{cui2025simultaneous} jointly tests for a mean and/or covariance shift in vector time series,
VMEAN \citep{Chenetal2022} detects multiple mean shifts in the mean of a vector time series,
FTS \citep{barigozzi2024moving} performs time series segmentation under a vector factor model,
and
COVRK \citep{RyanKillick2023} detects covariance shifts in a vector time series.
Any tuning parameters for the above methods are either determined by the default values recommended in the corresponding papers, or we set them to be the same as ours if such information is not given.


\subsubsection{Single change point scenarios} 
\label{subsec:single}

\paragraph{When the changes in the mean and covariance are aligned.}

First, we consider a scenario with a single change point at the same location appearing in both mean and covariance. 

\begin{itemize}
    \item [(S1)] Both row main effects and factor structure have one change at $\theta_{\alpha,1} = \theta_{f,1} = \lfloor 0.4T \rfloor$, and there is no change in the column main effects.
    For the change in row main effects, entries in $\bgamma_{\alpha,1}$ are such that
    $(\bgamma_{\alpha,1})_j = 8/\log(pT)$ for all $j\in[p/2]$ and $-8/\log(pT)$ otherwise.
    The change in the factor structure occurs in column factor loading matrix, characterized by the transformation matrix $\bV_1=[1,0;0,0]$, i.e.,\ one factor disappears after the change point. We consider $T \in \{600, 900, 1200\}$ and $p, q \in \{10, 15, 20\}$ with $p = q$.
\end{itemize}
The dense shift in the mean is designed to be well-detected by SIMUL or MMEAN where $\ell_2$-based aggregation is adopted. 
The covariance change is characterized by the disappearance of a factor, which is motivated by the simulation design used in \cite{Duanetal2023} and \cite{barigozzi2024moving}.
Under this scenario, we compare the proposed method with those dedicated to detecting changes in the mean (MMEAN, VMEAN), in the covariance (TFMSEG, FTS, COVRK), or both with the change points perfectly aligned (SIMUL),
and we apply the competing methods with the true number of change points in the structure of interest for each, to investigate their ``oracle'' performance in change point estimation.
For this, we report the scaled RMSE of the estimated change (as a portion of $T$), which is computed as $\text{scaled RMSE} := \sqrt{ n^{-1} \sum_{i=1}^n \{ (\wh\theta_i - \theta)/T\}^2 } ,$
for any true change location $\theta$ and the estimated changes $\{\wh\theta_i\}$ over $n$ realizations.
On the other hand, our method is applied without such knowledge, and we also report its detection consistency as the percentage of realizations where exactly one change point is detected for Our-row and Our-fac, or no change is detected for Our-col; see Table~\ref{tab: sim_S1}. 
Since there is no change in the column main effects, we do not report the scaled RMSE for Our-col.

\begin{table}[!t]
\centering
\resizebox{\ifdim\width>\linewidth\linewidth\else\width\fi}{!}{
\begin{tabular}[t]{cc cccccccccccccccccccccccc}
\toprule
$T$ & $(p,q)$  & Our-row & Our-col & Our-fac & TFMSEG & SIMUL & MMEAN & VMEAN & FTS & COVRK \\
\cmidrule(lr){1-2} \cmidrule(lr){3-5} \cmidrule(lr){6-11}
600 & (10,10) & 2.32 ${(83)}$ & $\texttt{NA}$${(69)}$ & 7.53 ${(72)}$ & 9.94 & 5.35 & 6.75 & 15.79 & 4.94 & 48.96 \\
900 & (10,10) & 1.59 ${(81)}$ & $\texttt{NA}$${(63)}$ & 11.20 ${(65)}$ & 10.15 & 3.04 & 3.04 & 19.23 & 10.53 & 49.62 \\
1200 & (10,10)  & 1.71 ${(87)}$ & $\texttt{NA}$${(49)}$ & 7.44 ${(69)}$ & 10.90 & 4.06 & 4.17 & 16.86 & 12.84 & 0.62 \\
\addlinespace
600 & (15,15)  & 2.05 ${(94)}$ & $\texttt{NA}$${(85)}$  & 7.45 ${(80)}$ & 2.06 & 4.71 & 6.23 & 18.71 & 7.72 & 41.16 \\
900 & (15,15)  & 1.34 ${(89)}$ & $\texttt{NA}$${(76)}$  & 5.64 ${(68)}$ & 7.34 & 2.86 & 3.51 & 17.25 & 12.54 & 45.92 \\
1200 & (15,15)  & 1.70 ${(91)}$ & $\texttt{NA}$${(74)}$  & 6.22 ${(64)}$ & 5.69 & 2.92 & 2.93 & 16.80 & 16.56 & 48.35 \\
\addlinespace
600 & (20,20)  & 1.93 ${(95)}$ & $\texttt{NA}$${(91)}$ & 7.61 ${(75)}$ & 2.06 & 5.49 & 6.18 & 17.42 & 6.90 & 44.49 \\
900 & (20,20)  & 1.60 ${(99)}$ & $\texttt{NA}$${(85)}$ & 6.29 ${(73)}$ & 0.47 & 3.49 & 3.49 & 18.69 & 14.35 & 39.48 \\
1200 & (20,20)  & 2.35 ${(96)}$ & $\texttt{NA}$${(66)}$ & 4.42 ${(78)}$ & 0.21 & 3.22 & 3.22 & 18.47 & 21.30 & 43.93 \\
\bottomrule
\end{tabular}}
\caption{Scaled RMSE of the estimated change point for all methods under Setting~(S1), with all values multiplied by $10^2$.
Each competing method is applied with the knowledge that the true number of the change point in the structure it is designed to detect, is one;
for our methods, the proportion of exact detection over all realizations are also reported in brackets in percentage. 
There is no change on the column main effects and hence no scaled RMSE is reported.
}
\label{tab: sim_S1}
\end{table}

Our-row exhibits competitive and stable performance across all combinations of dimensions, which is comparable to that of SIMUL and MMEAN, 
and markedly superior to that of VMEAN.
This good performance is attributed to the explicit estimation step extracting the row-specific effects, which contributes to enhancing the signal-to-noise ratio.
From the reported proportion of exact detection, Our-col correctly identifies the absence of a change in the majority of cases, and the detection performance improves with increasing dimensions. 
Our-fac also shows competitive performance against those methods dedicated to detecting changes in the covariance, such as COVRK and FTS.
TFMSEG consistently performs well in all settings, and tends to benefit from the presence of the coincident mean change by detecting its presence as a factor (see our discussion of Setting~(S2) below).
COVRK tends to perform poorly for most settings except for when $(T,p,q)=(1200,10,10)$, indicating that its good performance relies on large sample size relative to the dimension.

\paragraph{When the changes in the mean and covariance are not aligned.}
We also consider the situation when the changes in the first- and second-order structure are not aligned. 
\begin{itemize}
    \item [(S2)] Same as (S1), except that the change in the factor structure is at $\theta_{f,1} = \lfloor 0.6T \rfloor$;
    we focus on dimension $(T,p,q)=(1200,20,20)$ and report the results with varying $T$ and $(p, q)$ in Section~\ref{appendix: numerical}.
\end{itemize}

\begin{figure}[!h!t!p!]
\centering
\includegraphics[width=0.8\linewidth]{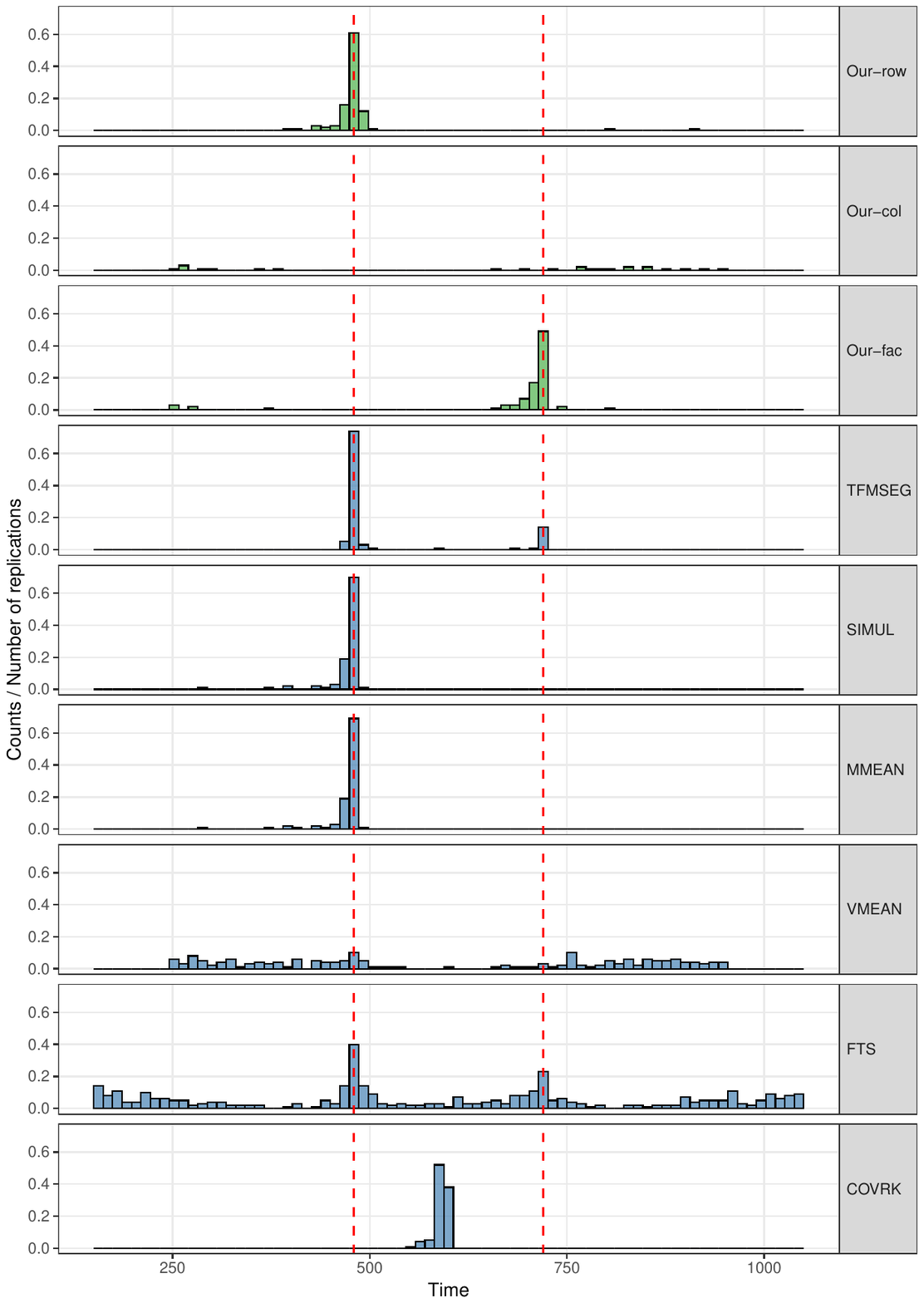}
\caption{
Histograms of detected change points for each method under Setting~(S2), where the top three panels contain the performance of our proposed method. The locations of the true change points in the mean ($\theta_{\alpha, 1} = \lfloor 0.4T \rfloor$) and covariance ($\theta_{f, 1} = \lfloor 0.6T \rfloor$) are marked by the vertical dashed lines.
}
\label{fig: sim_S2}
\end{figure}

For this setting, we do not supply the true number of the change points to any methods.
The results are summarized in Figure~\ref{fig: sim_S2}, which shows that our proposed method works satisfactorily: Our-row detects the single change point at $\theta_{\alpha, 1} = \lfloor 0.4T \rfloor$, performing as well as or better than those methods designed for mean change detection (MMEAN, VMEAN); Our-col performs consistently by detecting no change point; Our-fac also shows desirable performance by detecting the single change point at $\theta_{f, 1} = \lfloor 0.6T \rfloor$, unlike other methods that are specialized in detecting changes in the covariance (TFMSEG, FTS, COVRK).
SIMUL, tailored to detect at most one change in either mean or covariance, invariably prefers to detect the change in the mean. 
It is interesting to note that TFMSEG and FTS, while proposed to detect a change in the covariance structure under factor models, also detects the first change point in the mean, possibly as such a change affecting all row cross-sections of the data manifests itself as a ``factor''; 
this highlights the benefit of the proposed method which is able to attribute each detected change to either a shift in the mean in the row or column categorizations, or that in the second-order structure.



\subsubsection{Multiple change point scenarios}
\label{subsec:multi}

To assess the performance of our method in detecting multiple change points, we consider the following four types of scenarios:
(i) changes in the row main effects only, indexed by ``R'';
(ii) changes in the column main effects only, indexed by ``C'';
(iii) changes in the factor structure only, indexed by ``F'';
and (iv) mixtures of the three types, indexed by ``M''.
Types~(i)--(iii) are experimented over dimensions $T \in \{600, 800, 1000\}$ and $(p,q) \in \{(20,20), (30,30), (40,40)\}$, whereas we fix $(T,p,q) = (800, 30, 30)$ for Type~(iv). We repeat $n = 100$ realizations per setting, which is detailed in sequel below.
\begin{itemize}
    \item [(R1)] Two changes at $(\theta_{\alpha,1}, \theta_{\alpha,2})=(\lfloor0.3T\rfloor, \lfloor0.6T\rfloor)$.
    The jump vectors $\bgamma_{\alpha,1}$ and $\bgamma_{\alpha,2}$ have all entries at zero except for $(\bgamma_{\alpha,1})_{5}=(\bgamma_{\alpha,2})_{5}=5$ and $(\bgamma_{\alpha,1})_{10}=(\bgamma_{\alpha,2})_{10}=-5$.
    \item [(R2)] Three changes at $(\theta_{\alpha,1}, \theta_{\alpha,2}, \theta_{\alpha,3})=(\lfloor0.25T\rfloor, \lfloor0.5T\rfloor, \lfloor0.75T\rfloor)$, with jump vectors constructed as in (R1).
    \item [(C1)] Same as (R1) but for the column main effects, with $\bgamma_{\beta,i}$ playing the role of $\bgamma_{\alpha,i}$.
    \item [(C2)] Same as (R2) but for the column main effects, with $\bgamma_{\beta,i}$ playing the role of $\bgamma_{\alpha,i}$.
    \item [(F1)] One change in the column factor loading matrix at $\theta_{f,1}=\lfloor0.4T\rfloor$, characterized by $\bV_1=[1,0;0,0]$, i.e.,\ one factor disappears after the change point.
    \item [(F2)] Two changes in the factor structure: a change in the column loading at $\theta_{f,1}= \lfloor0.3T\rfloor$ with $\bV_1=[1,0; 0,0]$, followed by a change in the row loading at $\theta_{f,2}= \lfloor0.7T\rfloor$ with $\bR_2=[0.5,0; 0.5,0.5]$.
    \item [(M1)] One row change at $\theta_{\alpha,1}=\lfloor0.4T\rfloor$, one column change at $\theta_{\beta,1}=\lfloor0.6T\rfloor$, and one factor change in the column factor loading matrix at $\theta_{f,1}=\lfloor0.5T\rfloor$,
    with the changes introduced as in (R1), (C1), and (F1), respectively.
    \item [(M2)] Two row changes at $(\theta_{\alpha,1}, \theta_{\alpha,2}) = (\lfloor0.3T\rfloor, \lfloor0.6T\rfloor)$ as in (R1); and two factor changes at $(\theta_{f,1}, \theta_{f,2}) = (\lfloor0.4T\rfloor, \lfloor0.7T\rfloor)$ as in (F2). 
    \item [(M3)] Two row changes at $(\theta_{\alpha,1}, \theta_{\alpha,2}) = (\lfloor0.3T\rfloor, \lfloor0.6T\rfloor)$, two column changes at $(\theta_{\beta,1}, \theta_{\beta,2}) =(\lfloor0.4T\rfloor, \lfloor0.7T\rfloor)$, and one factor change in the column factor loading matrix at $\theta_{f,1}=\lfloor0.5T\rfloor$, with the changes introduced as in (R1), (C1), and (F1), respectively.
\end{itemize}

As our settings involve several true change points, we evaluate each component of our procedure separately.
Let $\Theta=\{\theta_1 < \dots < \theta_Q\}$ denote the true change points of a given component, and $\wh\Theta= \{\wh\theta_1 <\dots <\wh\theta_{\wh{Q}}\}$ the detected changes
in a single realization.
The detected change points are matched to the true ones by minimizing the total absolute distance such that, when $\wh{Q}\geq Q$, the $Q$ true change points are paired with the $Q$ closest detected change points, and vice versa. 
This yields a set of $\min(Q,\wh{Q})$ matched pairs per realization, denoted as $\wh\Theta_{\mathsf{paired}} = \{ (\theta(\iota), \wh\theta(\iota)) \}_{\iota=1}^{\min(Q,\wh{Q})}$, where $(\theta(\iota), \wh\theta(\iota)) \in \Theta\times \wh\Theta$ denotes the $\iota$th matched pair.
We report three quantities:
(a) the scaled RMSE (reducing to the definition used in Table~\ref{tab: sim_S1} when $Q=\wh{Q}=1$), defined as
\[
\text{scaled RMSE} := \sqrt{ \sum_{i=1}^{n}
\sum_{\iota=1}^{\min(Q,\wh{Q}_i)} \left( \frac{\wh\theta(\iota)_i - \theta(\iota)_i}{T}\right) ^2  
\Bigg/
\left( \sum_{i=1}^{n} \min(Q,\wh{Q}_i) \right)
};
\]
(b) the exact detection rate, i.e.,\ the percentage of realizations in which $\wh{Q} = Q$;
and (c) the Rand Index \citep{Rand1971} of the estimated time segmentation against the truth. Apart from (M1)--(M3), scaled RMSE is reported only for the component undergoing any change.
See Table~\ref{tab:sim_multi_all} for the complete results.

For the detection of row and column main effects, across all combinations of $T$ and $(p,q)$, the scaled RMSE is at most $0.13\%$ of the sample size, the proportion of exact detection is $100\%$ in every setting, and the Rand Index exceeds $99.7\%$.
This 
confirms that the presence of multiple change points does not degrade the accuracy of the procedure.
The corresponding false positive rates for the components that undergo no change, e.g., the column and factor components under Settings~(R1)--(R2), are low, ranging between $5\%$ and $15\%$ in the smallest dimension and shrinking as $T$ and $(p,q)$ increase.
In comparison, the detection of the factor structure is markedly harder than the main effects, with scaled RMSE ranging from $4\%$ to $8\%$, exact detection rate between $52\%$ and $96\%$, and Rand Index between $80\%$ and $92\%$.
Such gap is expected, as these changes in the second-order structure bring changes to the variability of the detector statistic $\cT^{(f)}_{\gamma, \tau}$ in~\eqref{eqn: test_stat_aggreg}, which makes the detection of the change points more challenging.
This is more pronounced at smaller $T$ and $(p,q)$, where the pseudo-factor itself is less accurately recovered (see Theorem~\ref{thm: consistency_proj}).
The mixed settings demonstrate that our procedure correctly attributes changes to the corresponding components. 
Under (M1)--(M3), even when different types of change coexist, the performance of Our-row and Our-col remains outstanding, and Our-fac is hardly affected by the presence of the mean changes.
We also note that adding more change points incurs only a marginal loss in accuracy, as observed e.g.,\ under (R2) relative to (R1).

\begin{table}[!h!t!p]
\centering
\caption{\label{tab:sim_multi_all}
Multi-change performance over different settings. For each component, RMSE, Exact, and RI denote the scaled RMSE, proportion of exact detection, and Rand Index, respectively.}
\centering
\resizebox{\ifdim\width>\linewidth\linewidth\else\width\fi}{!}{
\fontsize{7}{9}\selectfont
\begin{tabular}[t]{lrllllllllll}
\toprule
Setting & $T$ & $(p,q)$ & RMSE$^{\text{row}}$ & RMSE$^{\text{col}}$ & RMSE$^{\text{fac}}$ & Exact$^{\text{row}}$ & Exact$^{\text{col}}$ & Exact$^{\text{fac}}$ & RI$^{\text{row}}$ & RI$^{\text{col}}$ & RI$^{\text{fac}}$\\
\cmidrule(lr){1-3} \cmidrule(lr){4-6} \cmidrule(lr){7-9} \cmidrule(lr){10-12}
R1 & 600 & (20,20) & 0.13 & --- & --- & 100.00 & 91.00 & 91.00 & 99.80 & 96.05 & 95.75\\
R1 & 600 & (30,30) & 0.12 & --- & --- & 100.00 & 96.00 & 91.00 & 99.79 & 98.39 & 95.89\\
R1 & 600 & (40,40) & 0.12 & --- & --- & 100.00 & 97.00 & 88.00 & 99.81 & 98.67 & 94.22\\
R1 & 800 & (20,20) & 0.09 & --- & --- & 100.00 & 85.00 & 80.00 & 99.83 & 93.20 & 90.84\\
R1 & 800 & (30,30) & 0.09 & --- & --- & 100.00 & 95.00 & 85.00 & 99.84 & 97.99 & 93.34\\
R1 & 800 & (40,40) & 0.09 & --- & --- & 100.00 & 98.00 & 85.00 & 99.83 & 99.13 & 93.01\\
R1 & 1000 & (20,20) & 0.08 & --- & --- & 100.00 & 89.00 & 85.00 & 99.85 & 94.83 & 93.02\\
R1 & 1000 & (30,30) & 0.08 & --- & --- & 100.00 & 93.00 & 84.00 & 99.87 & 97.12 & 92.34\\
R1 & 1000 & (40,40) & 0.08 & --- & --- & 100.00 & 95.00 & 72.00 & 99.88 & 97.98 & 87.07\\
R2 & 600 & (20,20) & 0.12 & --- & --- & 100.00 & 92.00 & 79.00 & 99.75 & 96.56 & 89.82\\
R2 & 600 & (30,30) & 0.12 & --- & --- & 100.00 & 95.00 & 84.00 & 99.75 & 97.77 & 92.70\\
R2 & 600 & (40,40) & 0.12 & --- & --- & 100.00 & 100.00 & 95.00 & 99.73 & 100.00 & 97.72\\
R2 & 800 & (20,20) & 0.09 & --- & --- & 100.00 & 90.00 & 88.00 & 99.80 & 95.28 & 94.90\\
R2 & 800 & (30,30) & 0.09 & --- & --- & 100.00 & 98.00 & 83.00 & 99.81 & 99.06 & 92.26\\
R2 & 800 & (40,40) & 0.09 & --- & --- & 100.00 & 98.00 & 84.00 & 99.81 & 99.20 & 92.45\\
R2 & 1000 & (20,20) & 0.07 & --- & --- & 100.00 & 87.00 & 75.00 & 99.85 & 94.21 & 89.16\\
R2 & 1000 & (30,30) & 0.07 & --- & --- & 100.00 & 95.00 & 80.00 & 99.84 & 97.85 & 90.89\\
R2 & 1000 & (40,40) & 0.08 & --- & --- & 100.00 & 97.00 & 76.00 & 99.84 & 98.64 & 88.98\\
C1 & 600 & (20,20) & --- & 0.12 & --- & 95.00 & 100.00 & 92.00 & 97.75 & 99.76 & 96.47\\
C1 & 600 & (30,30) & --- & 0.12 & --- & 100.00 & 100.00 & 91.00 & 100.00 & 99.79 & 96.02\\
C1 & 600 & (40,40) & --- & 0.12 & --- & 100.00 & 100.00 & 87.00 & 100.00 & 99.78 & 94.32\\
C1 & 800 & (20,20) & --- & 0.09 & --- & 93.00 & 100.00 & 86.00 & 96.92 & 99.83 & 93.57\\
C1 & 800 & (30,30) & --- & 0.10 & --- & 99.00 & 100.00 & 86.00 & 99.54 & 99.81 & 93.51\\
C1 & 800 & (40,40) & --- & 0.09 & --- & 100.00 & 100.00 & 84.00 & 100.00 & 99.84 & 92.82\\
C1 & 1000 & (20,20) & --- & 0.08 & --- & 90.00 & 100.00 & 80.00 & 95.59 & 99.86 & 91.21\\
C1 & 1000 & (30,30) & --- & 0.08 & --- & 98.00 & 100.00 & 82.00 & 99.08 & 99.86 & 91.67\\
C1 & 1000 & (40,40) & --- & 0.08 & --- & 99.00 & 100.00 & 70.00 & 99.53 & 99.85 & 85.49\\
C2 & 600 & (20,20) & --- & 0.13 & --- & 97.00 & 100.00 & 79.00 & 98.54 & 99.73 & 90.57\\
C2 & 600 & (30,30) & --- & 0.13 & --- & 99.00 & 100.00 & 89.00 & 99.54 & 99.74 & 94.70\\
C2 & 600 & (40,40) & --- & 0.13 & --- & 100.00 & 100.00 & 86.00 & 100.00 & 99.74 & 93.49\\
C2 & 800 & (20,20) & --- & 0.10 & --- & 94.00 & 100.00 & 85.00 & 97.55 & 99.79 & 93.10\\
C2 & 800 & (30,30) & --- & 0.10 & --- & 99.00 & 100.00 & 85.00 & 99.53 & 99.81 & 92.92\\
C2 & 800 & (40,40) & --- & 0.09 & --- & 100.00 & 100.00 & 79.00 & 100.00 & 99.80 & 90.20\\
C2 & 1000 & (20,20) & --- & 0.08 & --- & 94.00 & 100.00 & 81.00 & 97.49 & 99.82 & 91.14\\
C2 & 1000 & (30,30) & --- & 0.07 & --- & 99.00 & 100.00 & 77.00 & 99.59 & 99.85 & 89.18\\
C2 & 1000 & (40,40) & --- & 0.07 & --- & 99.00 & 100.00 & 78.00 & 99.53 & 99.86 & 90.01\\
F1 & 600 & (20,20) & --- & --- & 8.13 & 100.00 & 97.00 & 69.00 & 100.00 & 98.75 & 79.68\\
F1 & 600 & (30,30) & --- & --- & 8.26 & 99.00 & 97.00 & 73.00 & 99.51 & 98.65 & 80.25\\
F1 & 600 & (40,40) & --- & --- & 6.51 & 100.00 & 98.00 & 72.00 & 100.00 & 99.19 & 80.81\\
F1 & 800 & (20,20) & --- & --- & 5.66 & 96.00 & 91.00 & 88.00 & 98.24 & 96.21 & 90.68\\
F1 & 800 & (30,30) & --- & --- & 5.76 & 96.00 & 95.00 & 90.00 & 98.14 & 97.68 & 89.92\\
F1 & 800 & (40,40) & --- & --- & 6.52 & 99.00 & 99.00 & 93.00 & 99.54 & 99.54 & 89.10\\
F1 & 1000 & (20,20) & --- & --- & 6.14 & 97.00 & 90.00 & 94.00 & 98.64 & 95.27 & 91.25\\
F1 & 1000 & (30,30) & --- & --- & 5.27 & 100.00 & 96.00 & 96.00 & 100.00 & 98.25 & 92.38\\
F1 & 1000 & (40,40) & --- & --- & 5.85 & 98.00 & 97.00 & 94.00 & 99.15 & 98.70 & 91.85\\
F2 & 600 & (20,20) & --- & --- & 4.80 & 94.00 & 91.00 & 64.00 & 97.45 & 95.99 & 84.24\\
F2 & 600 & (30,30) & --- & --- & 4.77 & 98.00 & 97.00 & 52.00 & 99.17 & 98.77 & 81.94\\
F2 & 600 & (40,40) & --- & --- & 5.03 & 99.00 & 96.00 & 59.00 & 99.55 & 98.24 & 82.56\\
F2 & 800 & (20,20) & --- & --- & 4.77 & 95.00 & 90.00 & 85.00 & 97.81 & 95.57 & 88.01\\
F2 & 800 & (30,30) & --- & --- & 4.22 & 100.00 & 95.00 & 86.00 & 100.00 & 97.96 & 88.95\\
F2 & 800 & (40,40) & --- & --- & 4.91 & 100.00 & 99.00 & 85.00 & 100.00 & 99.64 & 88.41\\
F2 & 1000 & (20,20) & --- & --- & 4.92 & 89.00 & 80.00 & 92.00 & 95.26 & 91.15 & 89.77\\
F2 & 1000 & (30,30) & --- & --- & 4.33 & 97.00 & 93.00 & 92.00 & 98.77 & 96.92 & 90.32\\
F2 & 1000 & (40,40) & --- & --- & 4.50 & 99.00 & 98.00 & 84.00 & 99.53 & 99.06 & 88.84\\
M1 & 800 & (30,30) & 0.10 & 0.09 & 5.10 & 100.00 & 99.00 & 92.00 & 99.85 & 99.70 & 90.79\\
M2 & 800 & (30,30) & 0.10 & --- & 4.28 & 100.00 & 92.00 & 71.00 & 99.83 & 96.55 & 86.83\\
M3 & 800 & (30,30) & 0.10 & 0.10 & 4.71 & 100.00 & 100.00 & 96.00 & 99.82 & 99.82 & 92.24\\
\bottomrule
\end{tabular}}
\end{table}

\subsubsection{Sensitivity to model misspecification}

We examine the sensitivity of our detection procedure when the data generating process does not conform to the MEFM model in~\eqref{eqn: MEFM_structure}, by considering the following additional scenarios:
\begin{itemize}
    \item[(i)] Matrix factor models with change points. This scenarios parallels the tensor factor model with change points considered by \cite{ZhangCenCho2026}. Each observed matrix has the form $\bX_t = \sum_{w = 0}^{Q_f} \bA_w \bF_t \bB_w^\trans \cdot \mathbb{I}\{\theta_{f, w} < t \le \theta_{f, w + 1}\} +\bE_t$ with $Q_f = 1$.
    All the relevant model parameters (apart from those specifying the change) are generated as described in Section~\ref{sec:setting} except that we simply set $\bQ = \bQ^*$ and $\bU = \bU^*$, such that the loading matrices are not restricted by the identifications condition in Assumption~\ref{ass: identification}. 
    We introduce one change to the column loading matrix at $\theta_{f, 1} = \lfloor 0.4T \rfloor$ similarly as in (S1) in Section~\ref{subsec:single}. This scenario is referred to as ``MFM''.
    
    \item[(ii)] Changes in the mean or/and covariance in vector time series, inspired by the experiments in \cite{cui2025simultaneous}.
    Each $\bX_t$ is obtained by folding a vector-valued observation $\bx_t \in \R^{pq}$ into a matrix, where $\bx_t = \bmu_t + \bSigma_t^{1/2} \be_t$, with $\be_t \stackrel{\text{i.i.d.}}{\sim} \cN(\zero, \bI_{pq})$, and the mean and/or the covariance undergo a single change at $\lfloor 0.4T \rfloor$ such that:
    \begin{itemize}
        \item[(a)] $\bmu_t = \bmu_0 \cdot \mathbb{I}\{t \geq \lfloor 0.4T \rfloor\}$ where $\bmu_0 \in \R^{pq}$ is a vector of zeros except for the randomly chosen 10\% of its elements which are set to be one.
        \item[(b)] For each $i,j\in[pq]$, $(\bSigma_t)_{ij} = (0.2^{|i-j|}) \cdot \mathbb{I}\{t <\lfloor 0.4T \rfloor\} + (0.8^{|i-j|}) \cdot \mathbb{I}\{t \geq \lfloor 0.4T \rfloor\}$.
    \end{itemize}
    We consider the scenarios where a single change point is present in ``mean only'' according to~(a), or ``covariance only'' according to~(b), as well as when both are present.
\end{itemize}
In summary, we consider four scenarios deviating from the MEFM framework, fixing $(T,p,q)=(200,10,10)$.
Figure~\ref{fig: sim_misspec} summarizes the results where we also include the competitors considered in Section~\ref{subsec:single}.

\begin{figure}[h!t!p!]
\centering
\includegraphics[width=1\linewidth]{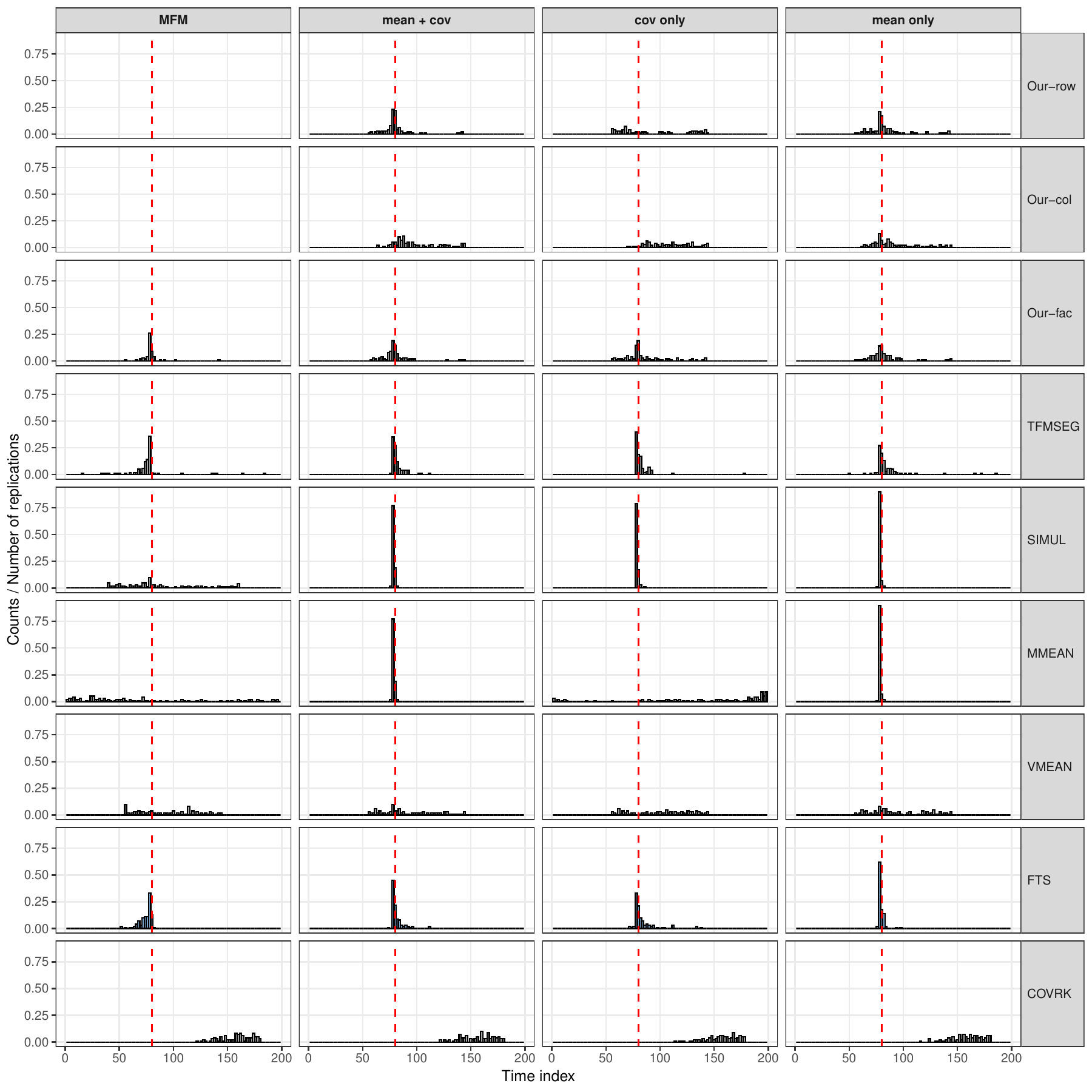}
\caption{
Histograms of detected change points for each method under four scenarios described in (i)--(ii): MFM, mean and covariance, covariance-only, and mean-only.
Top three panels contain the performance of our proposed method. The locations of the true change point is marked by the vertical dashed line. 
}
\label{fig: sim_misspec}
\end{figure}

Overall, our method simultaneously handles changes in the mean and covariance and correctly attributes the detected changes to the mean or covariance, even when the data do not conform to the MEFM framework, unlike the majority of the competitors.
Our-fac performs comparably to TFMSEG and FTS in the MFM scenario in~(i), noting that they are specifically designed to handle this type of change. 
Interestingly, as observed under (S2) in Section~\ref{subsec:single}, these methods tend to detect the changes in the mean even when the covariance is time-invariant. 
Comparing to the performance of MMEAN or SIMUL, Our-row and Our-col appear to suffer from the model mis-specification, they still perform more reliably than VMEAN dedicated to detecting mean shifts in vector time series. 
SIMUL performs well for the three scenarios in~(ii) but it cannot tell whether the change is due to a shift in the mean or covariance, or both, and does not appear to have power against the covariance change of type~(i).


\section{Application to New York City Yellow Taxi trip records}
\label{subsec: real_data_taxi}

We illustrate the proposed method using the New York City (NYC) Yellow Taxi Trip records.\footnote{Retrieved from \url{https://www.nyc.gov/site/tlc/about/tlc-trip-record-data.page}.}
Each raw trip record includes pick-up/drop-off timestamps and location codes. 
The Taxi and Limousine Commission's map partitions NYC into 265 zones, and we focus on the 69 zones on Manhattan Island which accounts for the vast majority of trip activity.
Let $\bX_t\in \R^{69\times 24}$ denote the matrix for calendar day $t$. Its $(i,j)$-th entry counts all trips with drop-off zone~$i$ and pick-up time in the $j$th hourly slot. We analyze the period from 1 January 2018 to 31 December 2022 and focus on the business-day data, which contains $T=1260$ days in total.

We apply the proposed method to detect any change in the first- and second-order structures of $\{ \bX_t \}_{t = 1}^T$, with the tuning parameters chosen according to the guide given in Section~\ref{subsec: tuning}, apart from the bandwidth. 
Many studies pointing out the association between road accidents and changes in sunrise and sunset times brought by daylight saving time (DST, \citealp{bunnings2021spring}).
This motivates us to set the bandwidth $\gamma = 80 \le  52 \times 5 \times 4 / 12 \approx 86.67$, such that changes in the taxi usage patterns related to the beginning and the end of DST, see also~\eqref{cond:thm: consistency_change_fac:two}.
On $\{\wh{\bL}_t\}_{t=1}^T$ computed as in~\eqref{eqn: Lt_hat}, the factor number estimator described in Section~\ref{subsec: tuning} gives $(\wh{k}_r, \wh{k}_c) = (2,2)$, which is the same under an alternative factor number estimator from \cite{ChenLam2024}.
The summary of the change point estimators detected by our method as well as the competing methods considered in the simulation studies (see Section~\ref{sec:setting}) is provided in Table~\ref{tab: NYC_competitors} and Figure~\ref{fig: NYC_log_vol2}. 

\begin{figure}[!h!t!p]
\centering
\includegraphics[width=1\linewidth]{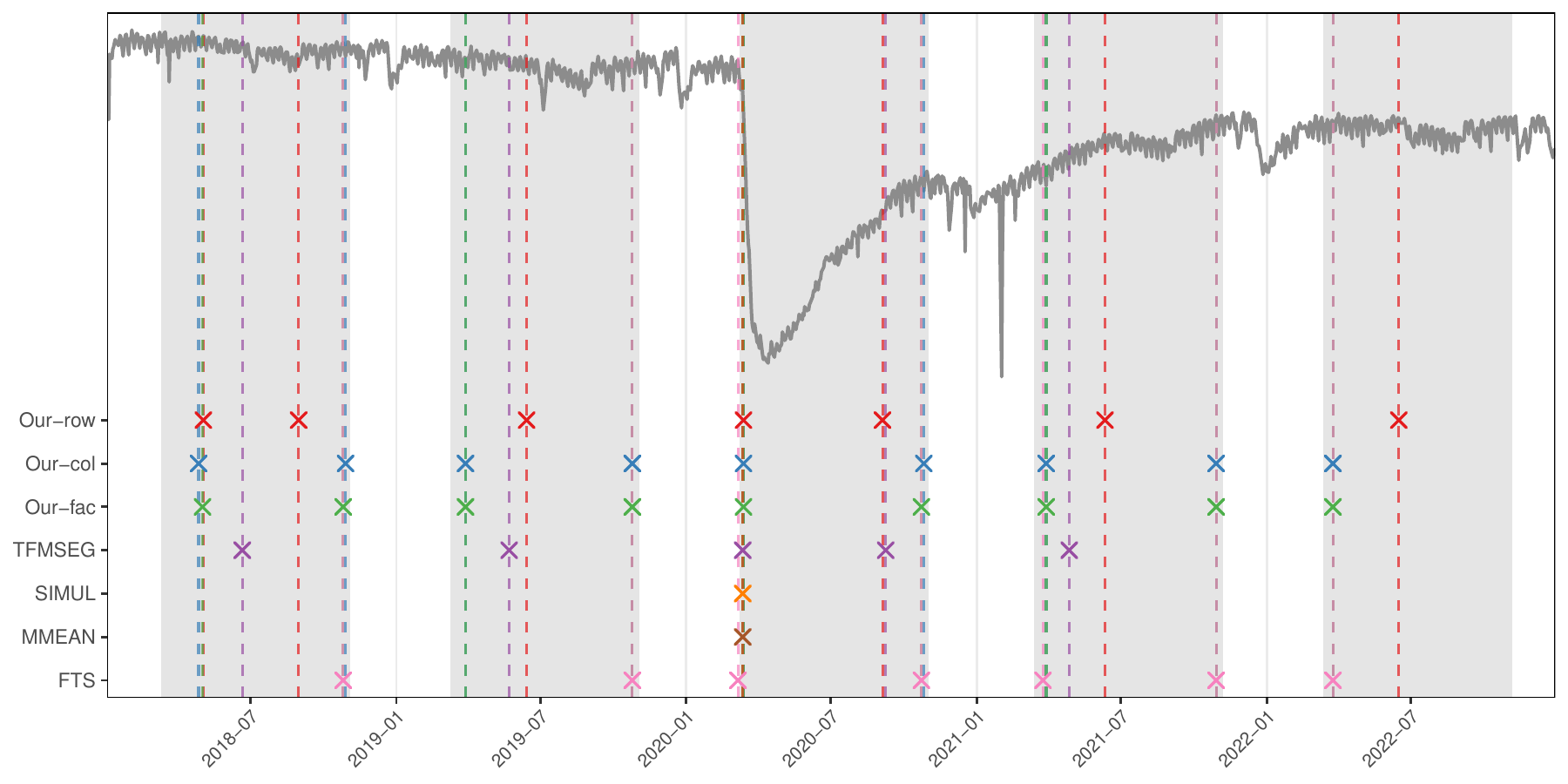}
\caption{
Log of the daily total number of trips in the NYC taxi data, i.e.,\ $\log(\sum_{i,j} (\bX_t)_{ij})$, $t=1,\dots,T$.
Vertical dashed lines with the crosses mark the change points detected by various methods. The shaded bands highlight when the daylight saving time is active in the US.
}
\label{fig: NYC_log_vol2}
\end{figure}


\begin{table}[!h!t]
\caption{Change points detected by various methods from the New York City Yellow Taxi data.}
\label{tab: NYC_competitors}
\centering
\begin{tabular}{lc}
\toprule
Method & Detected change points \\
\cmidrule(lr){1-1} \cmidrule(lr){2-2}
Our-row (zone) & 2018-05-03, 2018-08-31, 2019-06-14, 2020-03-13, 2020-09-04, \\
& 2021-06-11, 2022-06-16 \\
Our-col (hour) & 2018-04-27, 2018-10-29, 2019-03-29, 2019-10-25, 2020-03-13, \\
&  2020-10-26, 2021-03-29, 2021-10-29, 2022-03-25 \\
Our-fac & 2018-05-02, 2018-10-26, 2019-03-29, 2019-10-25, 2020-03-13, \\
& 2020-10-23, 2021-03-29, 2021-10-29, 2022-03-25 \\
\cmidrule(lr){1-1} \cmidrule(lr){2-2}
Our-fac (de-mean) & 2019-06-13, 2020-03-11, 2020-07-23, 2022-06-16 \\
\cmidrule(lr){1-1} \cmidrule(lr){2-2}
TFMSEG & 2018-06-21, 2019-05-23, 2020-03-12, 2020-09-08, 2021-04-27 \\
SIMUL  & 2020-03-12 \\
MMEAN  & 2020-03-12 \\
VMEAN  & --- \\
FTS    & 2018-10-26, 2019-10-25, 2020-03-06, 2020-10-23, 2021-03-25, \\
& 2021-10-29, 2022-03-25 \\
COVRK  & --- \\
\bottomrule
\end{tabular}
\end{table}

Some changes, such as those detected in March 2020 by all Our-row, Our-col and Our-fac as well as TFMSEG, SIMUL, MMEAN and FTS are closely related to the Covid-19 lockdown, which brought changes both to the volume of the trips as well as variability. Another commonly detected change around October 2020, March/April and later in 2021 may be related to the phased re-opening of activities as the restrictions eased off, and vaccine rollout. 
We notice that SIMUL is designed to test for a change and as such, it only detects a single, most prominent change.
We notice an interesting alignment between the change points detected by Our-col (from the hour effects) and DST, which is also striking from Figure~\ref{fig: NYC_hour_plot_sel2} visualizing the estimated column main effects at selected hours. 
In the analysis of a similar dataset, \cite{ZhangCenCho2026} pre-process the data by de-meaning the raw trip records via rolling window. The outputs returned by Our-row and Our-col reveal that such an approach may omit to mark large shifts amid gradual changes in the mean, which offer an interesting interpretation of the relationship between the operation of DST and taxi usage patterns.

At the same time, we note the potential limitation of the MEFM framework, as the strong agreement between the changes detected by Our-fac and Our-col may be due to the model mis-specification by which some mean shifts may remain even after the step in~\eqref{eqn: Lt_hat}. 
We thus analyze the residuals from de-meaning the individual series $(\bX_t)_{ij}$ over each segment determined by the successive elements of $\wh\Theta_\alpha \cup \wh\Theta_\beta$, for the change points in the factor structure via Our-fac; see Table~\ref{tab: NYC_competitors} for the results. 
The reduced number of change points detected from the residual time series suggests that an additional de-meaning step can act as a safeguard against the mode mis-specification, which we leave for future research.

\begin{figure}[!h!t!p]
\centering
\includegraphics[width=0.9\linewidth]{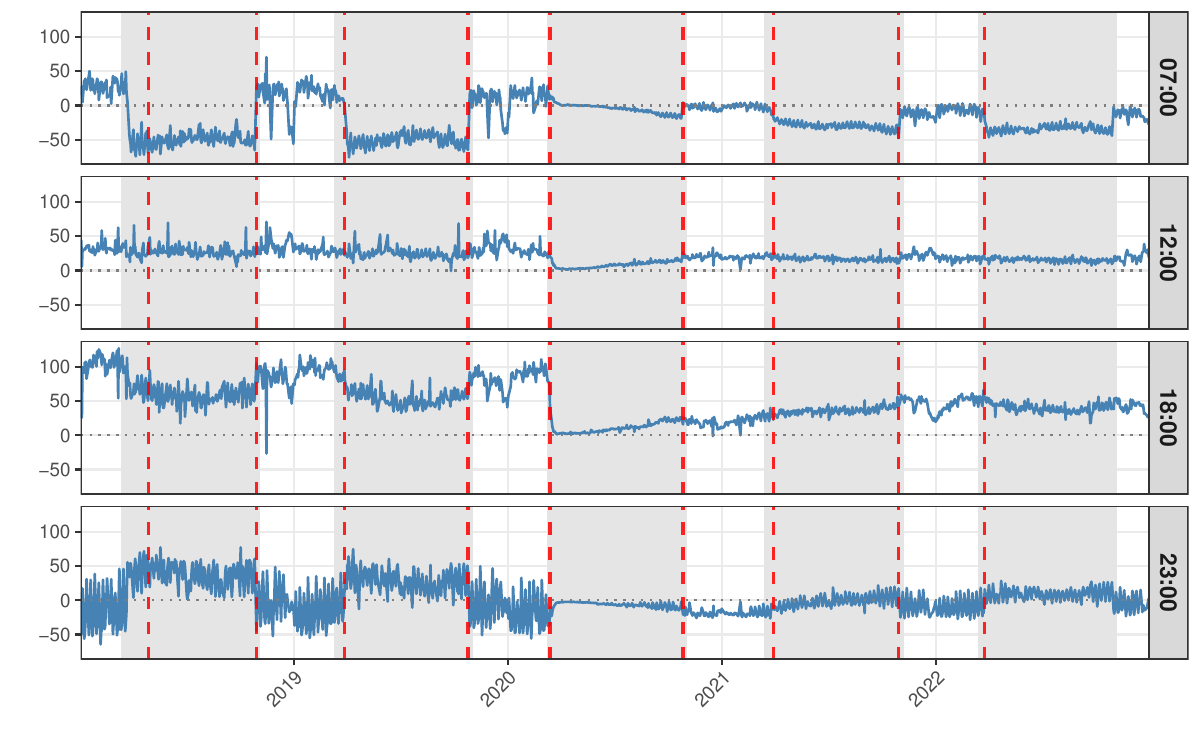}
\caption{
Selected plots of estimated column (hour) main effects $\wh\bbeta_t, \, 1 \le t \le T$, for the NYC taxi data, with vertical dashed red lines marking the change points returned by Our-col. Shaded bands highlight when the daylight saving time is active in the US.
Each hour represents morning rush hour (07:00), business hour (12:00), evening rush hour (18:00), and night life (23:00), respectively, according to \cite{chen2022factor}.
}
\label{fig: NYC_hour_plot_sel2}
\end{figure}

\bibliographystyle{apalike}
\bibliography{reference.bib}	


\clearpage
\appendix
\renewcommand{\thesection}{\Alph{section}}
\renewcommand{\theHsection}{\Alph{section}} 
\renewcommand{\thesubsection}{\Alph{section}.\arabic{subsection}}
\renewcommand{\theHsubsection}{\Alph{section}.\arabic{subsection}} 

\setcounter{section}{0}
\setcounter{equation}{0}
\def\theequation{\Alph{section}.\arabic{equation}}
\def\thetheorem{\Alph{section}.\arabic{theorem}}
\def\thelemma{\Alph{section}.\arabic{lemma}}
\def\thefigure{\Alph{section}.\arabic{figure}}
\def\thedefinition{\Alph{section}.\arabic{definition}}
\def\theexample{\Alph{section}.\arabic{example}}
\def\theremark{\Alph{section}.\arabic{remark}}
\def\thetable{\Alph{section}.\arabic{table}}
\def\thecorollary{\Alph{section}.\arabic{corollary}}

\section{Proof of theorems and auxiliary results}

For a symmetric matrix $\bX \in\R^{a\times a}$, given the half-vectorization $\Vech(\bX): \R^a \mapsto\R^{a(a+1)/2}$,
the matrix $\L_a \in \R^{a(a+1)/2 \times a^2}$ is such that $\Vech(\bX) = \L_a \vec(\bX)$, and the matrix $\K_a \in \R^{a^2\times a(a+1)/2}$ is such that $\vec(\bX) = \K_a \Vech(\bX)$.

\subsection{Auxiliary results for the proof of Theorem~\ref{thm: consistency_proj}}


We will use Lemma~B.3 and Lemma~B.4 in \cite{ZhangCenCho2026}.
For completeness, we directly quote (the adapted versions of) them respectively as Lemmas~\ref{lemma: FMF_expectation} and \ref{lemma: inequality_sandwich}. Additionally, we construct some new results in Lemma~\ref{lemma: inequality_sandwich}~(i) as an extension, with the proof attached right after the lemma.

\begin{lemma}\label{lemma: FMF_expectation}
Let Assumption~\ref{ass: core_factor} hold. Then for any $t\in[T]$ and constant matrices $\bW_1 \in\R^{k_c\times k_c}$, $\bW_2\in\R^{k_r\times k_r}$, we have $\E\left( \bF_t \bW_1 \bF_t^\trans \right) = \tr(\bW_1) \, \bI_{k_r}$ and $\E\left( \bF_t^\trans \bW_2 \bF_t \right) = \tr(\bW_2) \, \bI_{k_c}$.
For any interval of timestamps $\cS$, if it further holds that $\|\bM_1\|_F^2, \|\bM_2\|_F^2= o(|\cS|)$, then
\[
\frac{1}{|\cS|} \sum_{t\in\cS} \bF_t \bM_1 \bF_t^\trans \xrightarrow{P} \tr(\bM_1) \, \bI_{k_r}, \quad
\frac{1}{|\cS|} \sum_{t\in\cS} \bF_t^\trans \bM_2 \bF_t \xrightarrow{P} \tr(\bM_2) \, \bI_{k_c} .
\]
\end{lemma}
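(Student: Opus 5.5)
The plan is to prove the expectation identities by a direct entrywise computation, and the law of large numbers by a second-moment (Chebyshev) bound on each entry.

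\textbf{Expectation identities.} Under Assumption~\ref{ass: core_factor}, write $\bF_t=\sum_{h\ge0}a_{f,h}\bX_{f,t-h}$. The innovation entries are independent across time and across entries, with zero mean and unit variance. Hence
\[
\E\big[(F_t)_{ij}(F_t)_{kl}\big]=\sum_{h\ge0}a_{f,h}^2\,\mathbb{I}\{i=k,j=l\}=\mathbb{I}\{i=k,j=l\},
\]
using $\sum_h a_{f,h}^2=1$. The $(i,k)$ entry of $\E(\bF_t\bW_1\bF_t^\trans)$ is $\sum_{j,l}(\bW_1)_{jl}\E[(F_t)_{ij}(F_t)_{kl}]=\mathbb{I}\{i=k\}\tr(\bW_1)$, which gives $\tr(\bW_1)\bI_{k_r}$. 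The column version follows by applying the same argument to $\bF_t^\trans$.

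\textbf{Reduction for the convergence statement.} By the first part, $\E\big(|\cS|^{-1}\sum_{t\in\cS}\bF_t\bM_1\bF_t^\trans\big)=\tr(\bM_1)\bI_{k_r}$. Since $k_r,k_c$ are fixed, it suffices to show that each entry of
\[
|\cS|^{-1}\sum_{t\in\cS}\{\bF_t\bM_1\bF_t^\trans-\tr(\bM_1)\bI_{k_r}\}
\]
has variance $o(1)$; Chebyshev's inequality then gives convergence in probability. Fix an entry $(i,k)$ and set
\[
Z_t:=\sum_{j,l}(\bM_1)_{jl}\{(F_t)_{ij}(F_t)_{kl}-\mathbb{I}\{i=k,j=l\}\}.
\]
We need $\var(\sum_{t\in\cS}Z_t)=o(|\cS|^2)$.

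\textbf{Covariance bound (the main step).} Expand
\[
\mathrm{Cov}(Z_t,Z_s)=\sum_{j,l,j',l'}(\bM_1)_{jl}(\bM_1)_{j'l'}\,\mathrm{Cov}\big((F_t)_{ij}(F_t)_{kl},(F_s)_{ij'}(F_s)_{kl'}\big).
\]
Substitute the linear-process representation and use independence of the innovations. A fourth-order moment of innovations is nonzero only if the indices pair up, or all four coincide, in which case the bounded fourth moment ($\nu_f\ge4$) enters. For the pairing terms, the covariance is bounded by
\[
\Big(\sum_h|a_{f,h}a_{f,h+|t-s|}|\Big)^2\cdot\text{(a quadratic form in the entries of }\bM_1\text{)},
\]
and by Cauchy--Schwarz the quadratic form is at most $C\|\bM_1\|_F^2$. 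The all-equal (fourth-cumulant) terms are bounded by $C\sum_h a_{f,h}^2a_{f,h+|t-s|}^2\,\|\bM_1\|_F^2$. Since $\sum_h|a_{f,h}|\le c$, the autocovariance-type sums $\sum_{r\ge0}\sum_h|a_{f,h}a_{f,h+r}|\le c^2$ are summable over lags $r$. Summing over $t,s\in\cS$ therefore gives
\[
\var\Big(\sum_{t\in\cS}Z_t\Big)\le C|\cS|\,\|\bM_1\|_F^2=o(|\cS|^2)
\]
by the hypothesis $\|\bM_1\|_F^2=o(|\cS|)$. The same argument with $\bM_2$ handles the second average. The obstacle is bookkeeping the index pairings in the fourth-moment expansion. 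The essential point is that every nonvanishing term is controlled by $\|\bM\|_F^2$ times a summable lag factor, and the uniform bound on the innovations' fourth moments together with $\sum_h|a_{f,h}|\le c$ suffices for this.
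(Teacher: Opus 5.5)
Your proof is correct. It is also not a variant of anything in the paper: the paper gives no argument for this lemma and instead imports it, in adapted form, from Lemma~B.3 of \cite{ZhangCenCho2026}.

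Your route has two steps. First, the exact identity $\E[(F_t)_{ij}(F_t)_{kl}]=\mathbb{I}\{i=k,j=l\}$ from the linear-process representation. Second, an entrywise Chebyshev bound using the pairing/fourth-cumulant decomposition for independent non-Gaussian innovations. This buys three things over the citation:
\begin{itemize}
\item it is self-contained;
\item it shows that only $\sum_h a_{f,h}^2=1$, $\sum_h|a_{f,h}|\le c$ and uniformly bounded fourth moments are used, so the tail condition (iii) in the definition of $\cG(\nu,\eta)$ is not needed;
\item it yields the explicit rate $O_P(\|\bM_1\|_F|\cS|^{-1/2})$ rather than bare consistency.
\end{itemize}

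One remark on scope. You pull the entries of $\bM_1$ outside the covariances, so as written your argument is for nonrandom $\bM_1$. That matches the deterministic hypothesis $\|\bM_1\|_F^2=o(|\cS|)$. The paper, however, also applies the lemma in \eqref{eqn: CUUC_limit} with the data-dependent $\bM_1=\bV_{w-1}^\trans(\bU^\trans\wt\bU/q)(\wt\bU^\trans\bU/q)\bV_{w-1}$.

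Because $k_c$ is fixed, your computation covers that case too. Apply it to each of the $k_c^2$ matrix units to get $|\cS|^{-1}\sum_{t\in\cS}(F_t)_{ij}(F_t)_{kl}=\mathbb{I}\{i=k,j=l\}+O_P(|\cS|^{-1/2})$. Linearity in $\bM_1$ then bounds every entry of the error by $k_c\|\bM_1\|_F\,O_P(|\cS|^{-1/2})$. This is $o_P(1)$ whenever $\|\bM_1\|_F^2=o_P(|\cS|)$, however $\bM_1$ depends on $\{\bF_t\}$.
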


\begin{lemma}\label{lemma: inequality_sandwich}
\begin{enumerate}[itemsep=0pt, label = (\roman*), left = 0pt]
    \item \footnote{Note that all inequalities in part~(i) are tight, which can be seen by taking $\bM$ as the identity matrix.} For any square matrix $\bM$ and sequence of matrices $\bA_t, \, t \in [T]$, of compatible dimensions, we have
    \[
    \left\| \sum_{t=1}^T \bA_t \bM \bA_t^\trans \right\|_F \leq \|\bM\| \cdot \left\| \sum_{t=1}^T \bA_t \bA_t^\trans \right\|_F .
    \]
    In addition, we also have
    \[
    \left\| \sum_{t=1}^T \bA_t \bM \bA_t^\trans \right\| \leq \|\bM\| \cdot \left\| \sum_{t=1}^T \bA_t \bA_t^\trans \right\| 
    \text{\ and \ }
    \left\| \sum_{t=1}^T \E(\bA_t \bM \bA_t^\trans) \right\| \leq \E\|\bM\| \cdot \left\| \sum_{t=1}^T \E(\bA_t \bA_t^\trans) \right\| ,
    \]
    where $\bM$ in the second inequality is independent of $\{\bA_t\}$.
    \item (Lemma~B.16 in \cite{Barigozzietal2025_robust}) For any sequence of matrices $\bA_t=[A_{t,ij}]$, $\bB_t$ for $t\in[T]$, and any matrix $\bM$ of compatible dimensions, we have
    \[
    \left\| \sum_{t=1}^T \bA_t \bM \bB_t^\trans \right\|_F^2 \leq \|\bM\|_F^2 \cdot \sum_{i,j} \left\| \sum_{t=1}^T A_{t,ij} \bB_t \right\|_F^2 .
    \]
\end{enumerate}
\end{lemma}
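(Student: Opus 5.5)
We need to prove Lemma~\ref{lemma: inequality_sandwich}(i); part (ii) is quoted from \cite{Barigozzietal2025_robust}. The plan is to reduce everything to a Kronecker representation. Write $\bA_t\in\R^{a\times m}$ and $\bM\in\R^{m\times m}$, and stack $\mathcal{A}:=[\bA_1,\dots,\bA_T]\in\R^{a\times mT}$. Then
\[
\sum_{t=1}^T \bA_t \bM \bA_t^\trans = \mathcal{A}\,(\bI_T\otimes \bM)\,\mathcal{A}^\trans
\quad\text{and}\quad
\sum_{t=1}^T \bA_t\bA_t^\trans=\mathcal{A}\mathcal{A}^\trans .
\]
Moreover, $\|\bI_T\otimes\bM\|=\|\bM\|$. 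The whole proof reduces to bounding norms of $\mathcal{A}\bD\mathcal{A}^\trans$, with $\bD:=\bI_T\otimes\bM$, by $\|\bD\|$ times the corresponding norm of $\mathcal{A}\mathcal{A}^\trans$.

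\textbf{Spectral norm.} Write $\mathcal{A}=\bU\bS\bV^\trans$ (thin SVD), so that $\mathcal{A}\bD\mathcal{A}^\trans=\bU\bS(\bV^\trans\bD\bV)\bS\bU^\trans$. Then
\[
\|\mathcal{A}\bD\mathcal{A}^\trans\|\le\|\bS\|^2\|\bV^\trans\bD\bV\|\le\|\bD\|\,\|\mathcal{A}\mathcal{A}^\trans\|,
\]
since $\bV$ has orthonormal columns. Alternatively, note $|x^\trans \mathcal{A}\bD\mathcal{A}^\trans y|\le\|\bD\|\,\|\mathcal{A}^\trans x\|\|\mathcal{A}^\trans y\|$ and take suprema.

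\textbf{Frobenius norm.} This is the main step. Using the same SVD,
\[
\|\mathcal{A}\bD\mathcal{A}^\trans\|_F=\|\bS\bN\bS\|_F, \qquad \bN:=\bV^\trans\bD\bV,\quad \|\bN\|\le\|\bD\|.
\]
We need $\|\bS\bN\bS\|_F\le\|\bN\|\,\|\bS^2\|_F$. Entrywise, $(\bS\bN\bS)_{ij}=s_is_jN_{ij}$. I would use $s_is_j\le (s_i^2+s_j^2)/2$ cautiously, but a cleaner route is the inequality $\|\bS\bN\bS\|_F\le\|\bS^2\|_F^{1/2}\|\bN\bS^2\bN^\trans\|^{1/2}\cdots$. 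Simplest of all: with $\bP:=\bS^2$ diagonal and PSD,
\[
\|\bS\bN\bS\|_F^2=\tr(\bP\bN\bP\bN^\trans)\le\|\bP^{1/2}\bN\bP\bN^\trans\bP^{1/2}\|_*,
\]
and $\bP^{1/2}\bN\bP\bN^\trans\bP^{1/2}\preceq\|\bN\|^2\,\bP^{1/2}\,\|\bP\|\,\bP^{1/2}$ is too crude. So instead I would argue via $\tr(\bP\bN\bP\bN^\trans)=\|\bP^{1/2}\bN\bP^{1/2}\|_F^2$ and the von Neumann trace inequality: the singular values of $\bP^{1/2}\bN\bP^{1/2}$ are dominated (in the weak-majorization sense) by $\|\bN\|$ times those of $\bP$. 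This is the multiplicative singular value inequality $\sigma(XYZ)\prec_w\sigma(X)\sigma(Y)\sigma(Z)$ in product form, which gives $\|\bP^{1/2}\bN\bP^{1/2}\|_F\le\|\bN\|\,\|\bP\|_F$, i.e.\ the claim. This majorization step is where care is needed; if preferred, one can instead invoke the known Schur-multiplier fact that $[s_is_j/(s_i^2+s_j^2)\cdot 2]$ is PSD with unit diagonal.

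\textbf{Expectation version.} With $\bM$ independent of $\{\bA_t\}$, condition on $\bM$:
\[
\E\Big(\sum_t\bA_t\bM\bA_t^\trans\,\Big|\,\bM\Big)=\sum_t \E\big[\bA_t (\cdot) \bA_t^\trans\big](\bM)
\]
is linear in $\bM$, say $\Phi(\bM)$, with $\Phi$ a completely positive map $\Phi(\bM)=\E[\mathcal{A}(\bI_T\otimes\bM)\mathcal{A}^\trans]$. By the Russo--Dye bound for positive maps (or by the quadratic-form argument under the expectation), $\|\Phi(\bM)\|\le\|\bM\|\,\|\Phi(\bI)\|=\|\bM\|\,\|\sum_t\E\bA_t\bA_t^\trans\|$; for a non-symmetric $\bM$ one applies it to the dilation $\begin{pmatrix}0&\bM\\ \bM^\trans&0\end{pmatrix}$. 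Finally, take the outer expectation and use Jensen's inequality, $\|\E\,\Phi(\bM)\|\le\E\|\Phi(\bM)\|$, to obtain $\E\|\bM\|$.

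Tightness at $\bM=\bI$ is immediate, since both sides then coincide.
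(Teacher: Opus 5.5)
Your proposal is correct, but it is organized differently from the paper. The paper proves only the spectral-norm and expectation inequalities of part~(i); the Frobenius inequality and part~(ii) are quoted from \cite{ZhangCenCho2026} and \cite{Barigozzietal2025_robust}.

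\textbf{Spectral norm.} Your second argument, $|\bx^\trans\mathcal{A}\bD\mathcal{A}^\trans\by|\le\|\bD\|\,\|\mathcal{A}^\trans\bx\|\,\|\mathcal{A}^\trans\by\|$, is exactly the paper's Cauchy--Schwarz argument written in stacked form.

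\textbf{Expectation bound.} Here your route differs. You condition on $\bM$ and view $X\mapsto\sum_t\E(\bA_tX\bA_t^\trans)$ as a completely positive map $\Phi$. You then use a Russo--Dye-type bound $\|\Phi(X)\|\le\|X\|\,\|\Phi(\bI)\|$, with a $2\times 2$ dilation (i.e.\ $2$-positivity) for non-symmetric $X$, and finish with Jensen. The paper instead bounds $\bu^\trans(\cdot)\bv$ directly. It uses independence only to factor $\E(\|\bM\|\,\|\bx_t\|\,\|\by_t\|)=\E\|\bM\|\,\E(\|\bx_t\|\,\|\by_t\|)$, and closes with Cauchy--Schwarz in $L^2$. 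Your route explains the inequality structurally: a positive map attains its norm at the identity, so the random case is the deterministic case plus Jensen. The paper's route is fully elementary and handles non-symmetric real $\bM$ without any operator-algebra input. Your own fallback, ``the quadratic-form argument under the expectation'', is precisely the paper's proof.

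\textbf{Frobenius norm.} Your SVD reduction and weak-majorization step are valid. Horn's inequality gives $\sigma(\mathbf{P}^{1/2}\mathbf{N}\mathbf{P}^{1/2})\prec_w\|\mathbf{N}\|\,\sigma(\mathbf{P})$, and applying the convex increasing map $x\mapsto x^2$ yields the claim. However, you should remove the abandoned attempts. Once you wrote $\|\mathbf{S}\mathbf{N}\mathbf{S}\|_F^2=\tr(\mathbf{P}\mathbf{N}\mathbf{P}\mathbf{N}^\trans)$, you were one line from the end. This trace equals $\langle\mathbf{N}\mathbf{P},\mathbf{P}\mathbf{N}\rangle_F$, so it is at most $\|\mathbf{N}\mathbf{P}\|_F\,\|\mathbf{P}\mathbf{N}\|_F\le\|\mathbf{N}\|^2\|\mathbf{P}\|_F^2$. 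Equivalently, the AM--GM idea you set aside does work once applied to squares: use $s_i^2s_j^2\le(s_i^4+s_j^4)/2$, together with the fact that every row and column of $\mathbf{N}$ has Euclidean norm at most $\|\mathbf{N}\|$.
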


\begin{proof}[Proof of \Cref{lemma: inequality_sandwich}]
As mentioned previously, all the results are from \cite{ZhangCenCho2026} except for the second display in part~(i). 
For clarity, all the defined notations in the proof of this lemma is used only within this proof.
To show the additional results, let $\bB := \sum_{t=1}^T \bA_t \bA_t^\trans$.  Then for any unit vectors $\bu$ and $\bv$ (with compatible dimensions), by the Cauchy--Schwarz inequality we have
\begin{align*}
    &\;\quad
    \left| \bu^\top \left( \sum_{t=1}^T \bA_t \bM \bA_t^\trans \right) \bv \right| = \left| \sum_{t=1}^T (\bA_t^\trans \bu)^\trans \bM (\bA_t^\trans \bv) \right|
    =: \left| \sum_{t=1}^T \bx_t^\trans \bM \by_t \right| \\
    &\leq
    \sum_{t=1}^T \left| \bx_t^\trans \bM \by_t \right|
    \leq
    \|\bM\| \sum_{t=1}^T \|\bx_t\| \|\by_t\| \\ 
    &\leq
    \|\bM\| \sqrt{\left( \sum_{t=1}^T \|\bx_t\|^2 \right) \left( \sum_{t=1}^T \|\by_t\|^2 \right)}
    \leq \|\bM\| \sqrt{\|\bB\| \|\bB\|}
    = \|\bM\| \|\bB\|,
\end{align*}
which gives the first additional result by taking supremum over all unit vectors. For the second result, we use a similar argument. i,e.,\ for any unit vectors $\bu$ and $\bv$ (say the same as above, and hence the same definitions of $\bx_t$ and $\by_t$), by the Jensen's inequality and the Cauchy--Schwarz inequality, it holds that
\begin{align*}
    &\;\quad
    \left| \bu^\top \E\left( \sum_{t=1}^T \bA_t \bM \bA_t^\trans \right) \bv \right| \leq \E\left( \left| \sum_{t=1}^T (\bA_t^\trans \bu)^\trans \bM (\bA_t^\trans \bv) \right| \right) 
    = \E\left( \left| \sum_{t=1}^T \bx_t^\trans \bM \by_t \right| \right) \\
    &\leq
    \sum_{t=1}^T \E\left( \left| \bx_t^\trans \bM \by_t \right| \right)
    \leq
    \E( \|\bM\| ) \cdot \E\left( \sum_{t=1}^T  \|\bx_t\| \|\by_t\| \right) \\ 
    &\leq
    \E( \|\bM\| ) \cdot \E\left( \sqrt{\sum_{t=1}^T \|\bx_t\|^2} \sqrt{\sum_{t=1}^T \|\by_t\|^2} \right)
    \leq
    \E( \|\bM\| ) \cdot \sqrt{ \E\left( \sum_{t=1}^T \|\bx_t\|^2 \right) \cdot \E\left( \sum_{t=1}^T \|\by_t\|^2 \right) } \\
    &\leq 
    \E( \|\bM\| ) \cdot \sqrt{ \|\E(\bB)\| \cdot \|\E(\bB)\|}
    = \E( \|\bM\| ) \cdot \|\E(\bB)\|,
\end{align*}
where the last line used the fact that
\[
\E\left( \sum_{t=1}^T \|\bx_t\|^2 \right) = \bu^\trans \sum_{t=1}^T \E(\bA_t \bA_t^\trans) \bu \leq \|\E(\bB)\| ,
\]
and similarly for $\E\left( \sum_{t=1}^T \|\by_t\|^2 \right)$. With supremum taken over all unit vectors $\bu$ and $\bv$, we conclude the second additional result and therefore the proof of this lemma.
\end{proof}

We also state Lemma~\ref{lemma: rate_corollary_lemma2_LamCen} as follows, which is direct from the proof of Lemma~2 in \cite{LamCen2025_MEFM}, with all the required conditions included in our Assumption~\ref{ass: noise}.

\begin{lemma}\label{lemma: rate_corollary_lemma2_LamCen}
Under Assumption~\ref{ass: noise}, it holds that
\begin{align*}
    & \left\| \sum_{t=1}^T \bE_t\bE_t^\trans \right\|_F^2 = \cO_P(T p^2 q + T^2 p q^2) ,\\
    & \left\| \sum_{t=1}^T \{ \bE_t\bE_t^\trans - \E(\bE_t\bE_t^\trans) \} \right\|_F^2 = \cO_P(T p^2 q), \\
    & \left\| \sum_{t=1}^T \{ \1_q^\trans \bE_t^\trans \1_p \bE_t \1_q \1_p^\trans - \E(\1_q^\trans \bE_t^\trans \1_p \bE_t \1_q \1_p^\trans) \} \right\|_F^2 =
    \cO_P(T p^3 q^2), \\
    & \left\| \sum_{t=1}^T \{ \bE_t \bE_t^\trans \1_p \1_p^\trans - \E(\bE_t \bE_t^\trans \1_p \1_p^\trans) \} \right\|_F^2 = \cO_P(Tp^4q), \\
    & \left\| \sum_{t=1}^T \{ \bE_t\1_q\1_q^\trans \bE_t^\trans - \E(\bE_t\1_q\1_q^\trans \bE_t^\trans) \} \right\|_F^2 = \cO_P(T p^2 q^2), \\
    & \left\| \sum_{t=1}^T \{(\1_q^\trans \bE_t^\trans \1_p)^2 - \E((\1_q^\trans \bE_t^\trans \1_p)^2)\} \cdot \1_p \1_p^\trans \right\|_F^2 = \cO_P( T^2 p^4 q^2 ), \\
    & \left\| \sum_{t=1}^T \{ \1_p^\trans \bE_t \bE_t^\trans \1_p - \E(\1_p^\trans \bE_t \bE_t^\trans \1_p) \} \cdot \1_p \1_p^\trans \right\|_F^2 = \cO_P(T p^6 q) .
\end{align*}
\end{lemma}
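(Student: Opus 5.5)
The plan is to prove each bound through a second-moment computation followed by Markov's inequality. Write $\bE_t = \bA_{e,r}\bF_{e,t}\bA_{e,c}^\trans + \bSigma_e\circ\bE_{e,t}$ as in Assumption~\ref{ass: noise}. Because the two parts are independent and have mean zero, every quadratic quantity splits into a "factor--factor" part, an "idiosyncratic--idiosyncratic" part and a cross part. Each part can then be bounded separately, using $(a+b+c)^2\le 3(a^2+b^2+c^2)$.

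For a matrix statistic $\bS=\sum_{t=1}^T \bZ_t$, one has $\E\|\bS\|_F^2=\sum_{i,j}\{\var(S_{ij})+(\E S_{ij})^2\}$. So I only need to bound, entrywise, the mean and the variance, and then count entries.

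First, the cross-sectional structure. Every entry of $\bE_t$ is a linear combination of innovations. The condition $\|\bA_{e,r}\|_1,\|\bA_{e,c}\|_1=\cO(1)$, together with bounded entries, makes the covariance matrices of the rows and columns of $\bE_t$ have bounded $\ell_1$ row sums. Concretely, $\sum_{j}|\mathrm{Cov}(E_{t,ik},E_{t,jl})|=\cO(1)$, uniformly in the other indices, and similarly when summing over column indices. The diagonal idiosyncratic part contributes only at $i=j,k=l$.

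Second, the temporal structure. Each scalar series is a linear process with $\sum_h|a_h|\le c$, driven by independent innovations with bounded fourth moments (since $\nu\ge4$). Hence products such as $E_{t,ik}E_{t,jl}$, centred, have autocovariances that are summable over the lag. Expanding into innovations, the covariance of two such products at times $t,s$ is a sum of terms of the form $\sum_h a_h a_{h+|t-s|}$, squared or multiplied by a bounded fourth cumulant. Summing over $t,s$ gives a factor $T$ rather than $T^2$.

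With these two facts each line is bookkeeping.

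\emph{First bound.} The mean matrix $\sum_t\E(\bE_t\bE_t^\trans)$ has entries $\cO(Tq)$ times an $\ell_1$-summable profile in $j$. Its squared Frobenius norm is therefore $\cO(T^2pq^2)$. The centred part gives the second bound: each of the $p^2$ entries $\sum_t\sum_k\{E_{t,ik}E_{t,jk}-\E(\cdot)\}$ has variance $\cO(Tq)$, for a total of $\cO(Tp^2q)$.

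\emph{Remaining centred bounds.} These follow in the same way, keeping track of how many ones-vectors appear:
\begin{itemize}
\item $\1_p^\trans\bE_t\1_q$ is a sum of $pq$ weakly dependent terms with variance $\cO(pq)$, and $\bE_t\1_q$ has entries with variance $\cO(q)$.
\item Each centred product such as $(\1_q^\trans\bE_t^\trans\1_p)E_{t,ij}$, summed over $t$, therefore has variance $\cO(Tpq\cdot q)$ or similar. Multiplying by the number of entries and by $\|\1_p\1_p^\trans\|_F^2=p^2$ (or $\|\1_p\|^2=p$) yields the stated orders.
\item For example, for the third line the matrix $\sum_t\{\bE_t\1_q\,(\1_p^\trans\bE_t\1_q)\}\1_p^\trans$ has $p$ distinct row-vectors, each with variance $\cO(Tpq^2)$, repeated $p$ times. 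This gives $\cO(Tp^3q^2)$.
\item The sixth line is stated with $T^2$, so there I may even bound crudely, using Cauchy--Schwarz over $t$ and fourth moments.
\end{itemize}

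The main obstacle is the fourth-order calculation behind the variance bounds. Specifically, one must show that for linear processes mixing many innovations across both time and cross-section, $\var\big(\sum_t\sum_{k,l}c_{kl}E_{t,ik}E_{t,jl}\big)\lesssim T\sum_{k,l}|c_{kl}|\cdot\max$-type bounds without losing dimension factors. I would handle it by expanding every entry in the innovations $X_{t-h,\cdot}$. Then I would use the identity $\mathrm{Cov}(X_aX_b,X_cX_d)=\delta_{ac}\delta_{bd}+\delta_{ad}\delta_{bc}+\kappa_4\delta_{abcd}$ for independent unit-variance innovations. Finally I would sum the coefficient products, using $\sum_h|a_h|\le c$ in time and the $\ell_1$ bounds on $\bA_{e,r},\bA_{e,c}$ in space. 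This is exactly the computation carried out in the proof of Lemma~2 of \cite{LamCen2025_MEFM}, whose conditions are covered by Assumption~\ref{ass: noise}, so I would follow it line by line.
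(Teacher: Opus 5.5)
Your proposal is correct and matches the paper. The paper gives no separate argument; it simply states that the lemma can be read off from the proof of Lemma~2 in \cite{LamCen2025_MEFM}, which is exactly the second-moment computation followed by Markov's inequality that you describe. As your bookkeeping suggests, the fourth, sixth and seventh stated orders are not sharp (for instance, the fourth is in fact $\cO_P(Tp^3q)$), but this is harmless since only upper bounds are claimed.
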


\begin{lemma}\label{lemma: eigenvalue_PCA_consistency}
\begin{enumerate}[itemsep=0pt, label = (\roman*), left = 0pt]
    \item Under Assumption~\ref{ass: change_loc}, the matrices $\bSigma_{R}$ and $\bSigma_{V}$ are positive definite.
    \item Let Assumptions~\ref{ass: loadings}, \ref{ass: core_factor}, \ref{ass: noise}, and \ref{ass: change_loc} hold.
    Denote by $\wt{\bD}_r := \wt\bQ^\trans \left\{ (Tpq)^{-1} \sum_{t=1}^T \wh\bL_t \wh\bL_t^\trans \right\} \wt\bQ /p$ the diagonal matrix consisting of the leading $k_r$ eigenvalues of $(Tpq)^{-1} \sum_{t=1}^T \wh\bL_t \wh\bL_t^\trans$,
    and $\wt{\bD}_c := \wt\bU^\trans \left\{(Tpq)^{-1} \sum_{t=1}^T \wh\bL_t^\trans \wh\bL_t \right\} \wt\bU /q$ the diagonal matrix consisting of the leading $k_c$ eigenvalues of $(Tpq)^{-1} \sum_{t=1}^T \wh\bL_t^\trans \wh\bL_t$.
    Then as $\min\{T,p,q\} \to \infty$, the eigenvalues of $\wt{\bD}_r$ satisfy $\lambda_i(\wt{\bD}_r) = \lambda_i(\bSigma_{R}) + o_P(1)$ for any $i\leq k_r$,
    and similarly, $\lambda_j(\wt{\bD}_c) = \lambda_j(\bSigma_{V}) + o_P(1)$ for any $j\leq k_c$.
\end{enumerate}
\end{lemma}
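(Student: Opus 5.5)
Part~(i) is a rank argument. For part~(ii), the plan is to decompose $\wh\bL_t$, show that the normalized sample second-moment matrix is close in operator norm to $p^{-1}\bQ\bSigma_R\bQ^\trans$, and then apply Weyl's inequality.

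For part~(i), write $\bSigma_R$ as a limit of $T^{-1}\sum_{w}\|\bV_{w-1}\|_F^2(\theta_{f,w}-\theta_{f,w-1})\bR_{w-1}\bR_{w-1}^\trans$. By Assumption~\ref{ass: change_loc}~(i), the weights $(\theta_{f,w}-\theta_{f,w-1})/T\to\rho_{f,w}-\rho_{f,w-1}>0$, and by (iii) the factors $\|\bV_{w-1}\|_F^2$ are bounded below by some $c>0$. Hence $\bSigma_R\succeq c\min_w(\rho_{f,w}-\rho_{f,w-1})\sum_{w}\bR_{w-1}\bR_{w-1}^\trans$. The right-hand side equals $c'[\bR_0,\dots,\bR_{Q_f}][\bR_0,\dots,\bR_{Q_f}]^\trans$, which is positive definite by the full row rank in (ii). The same argument applies to $\bSigma_V$.

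For part~(ii), first substitute \eqref{eqn: mu_hat}--\eqref{eqn: mu_alpha_hat} into \eqref{eqn: Lt_hat}. A direct computation gives $\wh\bL_t=\bM_p\bX_t\bM_q$. Under Assumption~\ref{ass: identification} this equals $\bM_p(\bC_t+\bE_t)\bM_q=\bC_t+\bM_p\bE_t\bM_q$, because $\bM_p\bQ=\bQ$ and $\bU^\trans\bM_q=\bU^\trans$ (the conditions $\1_p^\trans\bA_w=\0$ for all $w$ together with (ii) give $\1_p^\trans\bQ=\0$). The main effects and $\mu_t$ therefore drop out entirely, which is why no condition linking them to $\bF_t$ or $\bE_t$ is needed. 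Then expand
\[
\frac{1}{Tpq}\sum_t\wh\bL_t\wh\bL_t^\trans=\frac{1}{Tpq}\sum_t\bQ\bG_t\bU^\trans\bU\bG_t^\trans\bQ^\trans+\text{cross terms}+\frac{1}{Tpq}\sum_t\bM_p\bE_t\bM_q\bE_t^\trans\bM_p.
\]

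For the leading term, use $q^{-1}\bU^\trans\bU=\bI_{k_c}+o(1)$ (Assumption~\ref{ass: loadings}) and write $\bG_t=\bR_w\bF_t\bV_w^\trans$ on each segment. Apply Lemma~\ref{lemma: FMF_expectation} with $\bM_1=\bV_w^\trans\bV_w$ on each segment, whose length is of order $T$. This gives $T^{-1}\sum_t\bG_t\bG_t^\trans\to\bSigma_R$ in probability, so the leading term is $p^{-1}\bQ(\bSigma_R+o_P(1))\bQ^\trans$. Since $p^{-1}\bQ^\trans\bQ\to\bI_{k_r}$, its nonzero eigenvalues are $\lambda_i(\bSigma_R)+o_P(1)$. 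The perturbation from $\bU^\trans\bU/q-\bI$ is controlled by Lemma~\ref{lemma: inequality_sandwich}~(i).

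For the noise term, note that $\|\bM_p\|=\|\bM_q\|=1$. Lemma~\ref{lemma: inequality_sandwich}~(i) and Lemma~\ref{lemma: rate_corollary_lemma2_LamCen} then bound its spectral norm by $(Tpq)^{-1}\|\sum_t\bE_t\bE_t^\trans\|_F=\cO_P((Tpq)^{-1}(T^{1/2}pq^{1/2}+Tp^{1/2}q))=o_P(1)$. For the cross terms, use Cauchy--Schwarz in the form $\|\sum_t\bC_t\bE_t^\trans\|\le\|\sum_t\bC_t\bC_t^\trans\|^{1/2}\|\sum_t\bE_t\bE_t^\trans\|^{1/2}$. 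This yields $\cO_P(1)^{1/2}\cdot o_P(1)^{1/2}=o_P(1)$ after normalization. Finally, Weyl's inequality gives $\lambda_i(\wt\bD_r)=\lambda_i(\bSigma_R)+o_P(1)$ for $i\le k_r$, because $\wt\bD_r$ consists exactly of the top $k_r$ eigenvalues. The column case is symmetric, with $\bSigma_V$ in place of $\bSigma_R$.

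I expect the main obstacle to be the leading-term convergence. Lemma~\ref{lemma: FMF_expectation} has to be applied segment-wise while accounting for the temporal dependence built into $\cG(\nu,\eta)$, and only asymptotic (not exact) orthogonality of $\bU$ is available. The norm bounds on the noise and cross terms are routine given Lemma~\ref{lemma: rate_corollary_lemma2_LamCen}.
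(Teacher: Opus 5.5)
Your proof is correct and has the same skeleton as the paper's. You reduce to $\wh\bL_t=\bM_p\bX_t\bM_q$, obtain $p^{-1}\bQ\bSigma_R\bQ^\trans$ as the limit of the $\bC_t\bC_t^\trans$ part via Lemma~\ref{lemma: FMF_expectation} segment by segment, show the remainder is $o_P(1)$, and conclude with Weyl's inequality. For part~(i) the paper only cites Lemma~B.6~(i) of \cite{ZhangCenCho2026}, and your rank argument supplies the content of that citation.

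The real difference is in how the remainder is handled.
\begin{itemize}
\item The paper expands $\bM_p\bE_t\bM_q$ into its individual pieces and matches each pure-noise piece with a line of Lemma~\ref{lemma: rate_corollary_lemma2_LamCen}. It bounds the cross term segment-wise through Lemma~1 of \cite{LamCen2025_MEFM}, which uses the independence of $\{\bF_t\}$ and $\{\bE_t\}$ in Assumption~\ref{ass: noise}. This gives $\|\sum_t\bC_t\bE_t^\trans\|_F^2=\cO_P(Tp^2q)$, i.e.\ $\cO_P\{(Tq)^{-1/2}\}$ after dividing by $Tpq$.
\item You keep $\bM_p\bE_t\bM_q$ intact, using $\|\bM_p\|=\|\bM_q\|=1$ and Lemma~\ref{lemma: inequality_sandwich}~(i). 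You control the cross term by the Cauchy--Schwarz bound $\|\sum_t\bC_t\bE_t^\trans\|\le\|\sum_t\bC_t\bC_t^\trans\|^{1/2}\|\sum_t\bE_t\bE_t^\trans\|^{1/2}$.
\end{itemize}
Your route is shorter and needs no independence between factors and noise. However, with $\|\sum_t\bE_t\bE_t^\trans\|$ bounded through its Frobenius norm as you do, it only yields $\cO_P\{(Tq)^{-1/4}+p^{-1/4}\}$ for the normalized cross term.

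That rate is ample for the eigenvalue statement here. The catch is that the paper's sharper cross-term bound feeds into \eqref{eqn: rate_cR_t}, which is reused in the proof of Lemma~\ref{lemma: consistency_PCA}. With your bound, the $1/(Tq)$ rate for $\wt\bQ$ there would not follow without redoing the cross term.
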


\begin{proof}[Proof of \Cref{lemma: eigenvalue_PCA_consistency}]
Part~(i) is proved by following exactly the same arguments as those given in the proof of Lemma~B.6~(i) in \cite{ZhangCenCho2026}.

For part~(ii) of the lemma, it suffices to show the result for $\wt\bD_r$ as that for $\wt\bD_c$ is similar. We first find the eigenvalues of $(Tpq)^{-1} \sum_{t=1}^T \bC_t\bC_t^\trans$. By \eqref{eqn: Gt_define},
\begin{align}
    &\;\quad
    \frac{1}{Tpq} \sum_{t=1}^T \bC_t\bC_t^\trans
    =
    \frac{1}{Tpq} \sum_{t=1}^T \sum_{w=1}^{Q_f +1} \bQ \bR_{w-1} \bF_t \bV_{w-1}^\trans \bU^\trans \bU \bV_{w-1} \bF_t^\trans \bR_{w-1}^\trans \bQ^\trans \cdot \mathbb{I}\{\theta_{f,w-1} <t\le \theta_{f,w}\} \notag \\
    &=
    \frac{1}{Tp} \sum_{w=1}^{Q_f +1} \sum_{t=\theta_{f,w-1} +1}^{\theta_{f,w}} \bQ \bR_{w-1} \bF_t \bV_{w-1}^\trans (\bU^\trans \bU/q) \bV_{w-1} \bF_t^\trans \bR_{w-1}^\trans \bQ^\trans \notag \\
    &\xrightarrow{P}
    \frac{1}{p} \bQ \left\{ \frac{1}{T} \sum_{w=1}^{Q_f +1} \|\bV_{w-1} \|_F^2 \cdot (\theta_{f,w} - \theta_{f,w-1}) \cdot \bR_{w-1} \bR_{w-1}^\trans \right\} \bQ^\trans
    \to
    \frac{1}{p} \bQ \bSigma_{R} \bQ^\trans ,
    \label{eqn: CC_limit}
\end{align}
where the last line used Assumptions~\ref{ass: loadings}, \ref{ass: core_factor}, \ref{ass: change_loc}~(iii), and Lemma~\ref{lemma: FMF_expectation}. Then with Assumption~\ref{ass: loadings}, the leading $k_r$ eigenvalues of $(Tpq)^{-1} \sum_{t=1}^T \bC_t\bC_t^\trans$ are asymptotically equal to those of $\bSigma_{R}$, and hence are all bounded away from zero and infinity by part~(i) of this lemma.

Next, recall the definition of $\wh\bL_t$ from \eqref{eqn: Lt_hat} and the definition of matrix $\bM_m$ in Section~\ref{sec: introduction}, we can write
\begin{align*}
    \wh\bL_t &= \bX_t - \wh\mu_t\1_p \1_q^\trans - \wh\balpha_t \1_q^\trans - \1_p \wh\bbeta_t^\trans \\
    &=
    \bM_p \bX_t \bM_q + \frac{1}{p} \1_p \1_p^\trans \bX_t + \frac{1}{pq} \1_p \1_p^\trans \bX_t \1_q \1_q^\trans
    - \wh\mu_t\1_p \1_q^\trans - \wh\balpha_t \1_q^\trans - \1_p \wh\bbeta_t^\trans \\
    &=
    \bM_p \bX_t \bM_q + \frac{1}{p} \1_p \1_p^\trans \bX_t + \frac{1}{pq} \1_p \1_p^\trans \bX_t \1_q \1_q^\trans \\
    &\;\quad
    - \wh\mu_t\1_p \1_q^\trans - \left(\frac{1}{q} \bX_t \1_q - \wh\mu_t \1_p\right) \1_q^\trans - \1_p \left(\frac{1}{p} \bX_t^\trans \1_p - \wh\mu_t \1_q\right)^\trans
    = \bM_p \bX_t \bM_q .
\end{align*}
Together with the fact that $\bM_m \1_m =\0$, by the model~\eqref{eqn: MEFM_structure} we have
\begin{align}
    \wh\bL_t \wh\bL_t^\trans &= \bM_p \bX_t \bM_q \bX_t^\trans \bM_p \notag 
    = \bM_p (\bC_t + \bE_t) \bM_q (\bC_t^\trans + \bE_t^\trans) \bM_p \\
    &=
    \bC_t\bC_t^\trans + \bC_t\bE_t^\trans + \bE_t\bC_t^\trans + \bE_t\bE_t^\trans + (pq)^{-1}\bE_t\1_q\1_q^\trans \bE_t^\trans\1_p\1_p^\trans + (pq)^{-1}\1_p\1_p^\trans\bE_t\1_q\1_q^\trans \bE_t^\trans \notag \\ 
    &\;\quad
    - p^{-1}\bC_t\bE_t^\trans \1_p\1_p^\trans - p^{-1}\1_p \1_p^\trans \bE_t\bC_t^\trans - p^{-1}\bE_t \bE_t^\trans \1_p\1_p^\trans - p^{-1}\1_p \1_p^\trans \bE_t\bE_t^\trans - q^{-1}\bE_t \1_q\1_q^\trans \bE_t^\trans \notag \\
    &\;\quad
    + p^{-2}\1_p\1_p^\trans \bE_t\bE_t^\trans \1_p\1_p^\trans - (pq)^{-1}p^{-1} \1_p\1_p^\trans\bE_t\1_q\1_q^\trans\bE_t^\trans\1_p\1_p^\trans
    =: 
    \bC_t\bC_t^\trans + \mathcal{R}_{r,t} .
    \label{eqn: sample_covariance_row}
\end{align}

Next, we consider $\sum_{t=1}^T \mathcal{R}_{r,t}$.
Lemma~\ref{lemma: rate_corollary_lemma2_LamCen} handles all the terms only involving $\bE_t$.
All the remaining terms, i.e.,\ $\sum_{t=1}^T \bC_t\bE_t^\trans$, $\sum_{t=1}^T \bE_t\bC_t^\trans$, $\sum_{t=1}^T p^{-1}\bC_t\bE_t^\trans \1_p\1_p^\trans$, and $\sum_{t=1}^T p^{-1}\1_p \1_p^\trans \bE_t\bC_t^\trans$  have their Frobenius/spectral norm be upper bounded by that of $\sum_{t=1}^T \bC_t\bE_t^\trans$ by sub-multiplicativity of those norms.

For $\sum_{t=1}^T \bC_t \bE_t^\trans$, note that by \eqref{eqn: cov_change} we have
\[
\bC_t \bE_t^\trans = \sum_{w=1}^{Q_f +1} \left( \bA_{w-1}\bF_t\bB_{w-1}^\trans \bE_t^\trans \right) \cdot \mathbb{I}\{\theta_{f,w-1} <t\le \theta_{f,w}\} ,
\]
thus it is possible to consider $\bA_{w-1}\bF_t\bB_{w-1}^\trans \bE_t^\trans$ for each segment $(\theta_{f,w-1}, \theta_{f,w}]$, $w\in[Q_f +1]$. Then by Assumption~\ref{ass: core_factor} and similarly applying Lemma~1 in \cite{LamCen2025_MEFM} as in their proof of Lemma~2, we conclude that for any $w\in[Q_f +1]$,
\begin{align}
    &\;\quad
    \left\| \sum_{t=\theta_{f,w-1}+1}^{\theta_{f,w}} \bA_{w-1} \bF_t \bB_{w-1}^\trans \bE_t^\trans \right\|_F^2 
    \lesssim_P (\theta_{f,w} - \theta_{f,w-1})pq \cdot \|\bA_{w-1}\|_F^2 \notag \\
    &\leq
    (\theta_{f,w} - \theta_{f,w-1})pq \cdot \|\bQ\|_F^2 \cdot \|\bR_{w-1}\|_F^2 = \cO_P\left\{ (\theta_{f,w} - \theta_{f,w-1}) p^2q \right\},
    \label{eqn: AFBE_segment}
\end{align}
where the last equality used Assumption~\ref{ass: loadings} and Assumption~\ref{ass: change_loc}~(iii). Thus,
\begin{equation}
\label{eqn: CE_bound}
\left\| \sum_{t=1}^T \bC_t \bE_t^\trans \right\|_F^2 \leq \sum_{w=1}^{Q_f +1} \left\| \sum_{t=\theta_{f,w-1}+1}^{\theta_{f,w}} \bA_{w-1} \bF_t \bB_{w-1}^\trans \bE_t^\trans \right\|_F^2 = \cO_P( T p^2 q ).
\end{equation}

Combining all the previous arguments, we arrive at
\begin{equation}
\label{eqn: rate_cR_t}
\left\| \sum_{t=1}^T \mathcal{R}_{r,t} \right\|_F^2 = \cO_P\left( Tp^2q + T^2pq^2 \right).
\end{equation}
Then with \eqref{eqn: sample_covariance_row}, it holds that
\begin{align}
    \frac{1}{Tpq} \sum_{t=1}^T \wh\bL_t \wh\bL_t^\trans = \frac{1}{Tpq} \sum_{t=1}^T \bC_t\bC_t^\trans + \frac{1}{Tpq} \sum_{t=1}^T \mathcal{R}_{r,t} ,
    \label{eqn: sample_covariance_row_sum}
\end{align}
which, together with \eqref{eqn: CC_limit} and the argument below it and applying Weyl's inequality iteratively, implies that the first $k_r$ eigenvalues of $(Tpq)^{-1} \sum_{t=1}^T \wh\bL_t \wh\bL_t^\trans$ are led by those of $(Tpq)^{-1} \sum_{t=1}^T \bC_t\bC_t^\trans$. This completes the proof of the lemma.
\end{proof}

\begin{lemma}\label{lemma: consistency_PCA}
Let all the assumptions in Lemma~\ref{lemma: eigenvalue_PCA_consistency}~(ii) hold. 
With $\wt\bD_r$ and $\wt\bD_c$ from \Cref{lemma: eigenvalue_PCA_consistency}~(ii), define
\[
\wt\bH_r =\frac{1}{Tpq} \left(\sum_{t=1}^T \bG_t \bU^\trans \bU \bG_t^\trans \right) \bQ^\trans \wt\bQ \wt\bD_r^{-1} 
\text{ \ and \ }
\wt\bH_c =\frac{1}{Tpq} \left(\sum_{t=1}^T \bG_t^\trans \bQ^\trans \bQ \bG_t \right) \bU^\trans \wt\bU \wt\bD_c^{-1} .
\]
Then $\wt\bH_r$ and $\wt\bH_c$ are asymptotically invertible as $\min\{T,p,q\} \to \infty$ and, moreover, respectively satisfies $\wt\bH_r \wt\bH_r^\trans = \bI_{k_r} + o_P(1)$ and $\wt\bH_c \wt\bH_c^\trans = \bI_{k_c} + o_P(1)$. The initial estimators $\wt\bQ$ and $\wt\bU$ satisfy
\[
\frac{1}{p}\left\| \wt\bQ - \bQ \wt\bH_r \right\|_F^2 = \cO_P\Big( \frac{1}{Tq} + \frac{1}{p^2} \Big)  
\text{ \ and \ }
\frac{1}{q}\left\| \wt\bU - \bU \wt\bH_c \right\|_F^2 = \cO_P\Big( \frac{1}{Tp} + \frac{1}{q^2} \Big) .
\]
\end{lemma}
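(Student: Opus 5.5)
We follow the standard PCA argument of Bai--Ng type, adapted as in the proof of Lemma~B.6 of \cite{ZhangCenCho2026}, and write everything for $\wt\bQ$; the case of $\wt\bU$ is symmetric. By definition of $\wt\bQ$ as $\sqrt p$ times the leading eigenvectors, we have the eigen-equation
\[
\frac{1}{Tpq}\sum_{t=1}^T \wh\bL_t\wh\bL_t^\trans \wt\bQ = \wt\bQ\wt\bD_r .
\]
By \eqref{eqn: sample_covariance_row}, $\wh\bL_t\wh\bL_t^\trans=\bC_t\bC_t^\trans+\mathcal{R}_{r,t}$ with $\bC_t=\bQ\bG_t\bU^\trans$. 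Right-multiplying by $\wt\bD_r^{-1}$, which is legitimate since Lemma~\ref{lemma: eigenvalue_PCA_consistency} shows its eigenvalues converge to those of $\bSigma_R$ and these are bounded away from zero, gives the exact decomposition
\[
\wt\bQ-\bQ\wt\bH_r=\frac{1}{Tpq}\sum_{t=1}^T\mathcal{R}_{r,t}\,\wt\bQ\wt\bD_r^{-1}.
\]

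For the rate, we bound $\|\wt\bQ\|_F^2=pk_r$ and use \eqref{eqn: rate_cR_t} to obtain
\[
\frac1p\|\wt\bQ-\bQ\wt\bH_r\|_F^2\le \frac{1}{p}\cdot\frac{\|\sum_t\mathcal{R}_{r,t}\|_F^2}{T^2p^2q^2}\,pk_r\,\|\wt\bD_r^{-1}\|^2=\cO_P\Big(\frac{Tp^2q+T^2pq^2}{T^2p^2q^2}\Big)=\cO_P\Big(\frac1{Tq}+\frac1p\Big).
\]
This crude bound gives only $1/p$, not $1/p^2$. The main obstacle is therefore to sharpen the dominant term. The $T^2pq^2$ contribution comes from the non-centered part $\sum_t\E(\bE_t\bE_t^\trans)$, together with the mean parts of the $\1_p$-projection terms in $\mathcal{R}_{r,t}$.

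We would first split $\mathcal{R}_{r,t}$ into centered terms and their expectations. The centered pieces are handled by Lemma~\ref{lemma: rate_corollary_lemma2_LamCen} and \eqref{eqn: CE_bound}, all of which contribute $\cO_P(1/(Tq))$ after the scaling above. For example, $Tp^4q/(T^2p^2q^2\cdot p^2)$ is of smaller order once the $p^{-1}$ and $p^{-2}$ prefactors are included, and the same holds for the others.

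For the expectation pieces we would use their \emph{spectral} norm rather than the Frobenius norm. By Assumption~\ref{ass: noise}, $\|\E(\bE_t\bE_t^\trans)\|\lesssim q$, because $\|\bA_{e,r}\|_1,\|\bA_{e,c}\|_1=\cO(1)$ and $\bSigma_e$ is bounded; similarly, the terms involving $\1_q\1_q^\trans$ have spectral norm $\cO(q)$ after the $q^{-1}$ scaling, and likewise for the others. We then combine this with the inequality $\|\bM\wt\bQ\|_F\le\|\bM\|\|\wt\bQ\|_F$. This makes the deterministic part of $(Tpq)^{-1}\sum_t\mathcal{R}_{r,t}\wt\bQ$ of Frobenius order $\sqrt p\cdot Tq/(Tpq)=p^{-1/2}$, which yields the $1/p^2$ term. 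This step is exactly where the tail rates of Lemma~\ref{lemma: rate_corollary_lemma2_LamCen} enter.

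Finally, for $\wt\bH_r\wt\bH_r^\trans\to\bI_{k_r}$ we argue as follows. First, $\wt\bH_r$ is bounded, by Lemma~\ref{lemma: FMF_expectation} and the computation leading to \eqref{eqn: CC_limit}. The rate just derived gives $p^{-1}\wt\bQ^\trans\wt\bQ=\bI=\wt\bH_r^\trans(p^{-1}\bQ^\trans\bQ)\wt\bH_r+o_P(1)$. Combined with Assumption~\ref{ass: loadings}~(ii), this gives $\wt\bH_r^\trans\wt\bH_r=\bI+o_P(1)$. Since $\wt\bH_r$ is square, this implies invertibility and $\wt\bH_r\wt\bH_r^\trans=\bI+o_P(1)$.
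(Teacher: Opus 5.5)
Your proposal is correct and follows the paper's route: you use the same exact identity $\wt\bQ-\bQ\wt\bH_r=(Tpq)^{-1}\sum_t\mathcal{R}_{r,t}\wt\bQ\wt\bD_r^{-1}$, and you supply the remaining bounds, which the paper delegates to Lemma~B.7 of \cite{ZhangCenCho2026}, by splitting off $\E(\mathcal{R}_{r,t})=\bM_p\E(\bE_t\bM_q\bE_t^\trans)\bM_p$ and bounding it in spectral norm. You are right that \eqref{eqn: rate_cR_t} alone only gives $1/p$, and the one slip is harmless: the centered $(\1_p^\trans\bE_t\1_q)^2$ term contributes $\cO_P(1/(p^2q^2))$ rather than $\cO_P(1/(Tq))$, which is absorbed into $1/p^2$.
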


\begin{proof}[Proof of \Cref{lemma: consistency_PCA}]
We will show the results on $\wt\bH_r$ and $\wt\bQ$, and those of $\wt\bH_c$ and $\wt\bU$ follow the similar arguments. As $\wt\bD_r$ is asymptotically invertible from Lemma~\ref{lemma: eigenvalue_PCA_consistency}~(ii), by definitions we can write $\wt\bQ =(Tpq)^{-1} \sum_{t=1}^T \wh\bL_t \wh\bL_t^\trans \wt\bQ \wt\bD_r^{-1}$. Then using the decomposition in \eqref{eqn: sample_covariance_row_sum},
\[
\wt\bQ - \bQ \wt\bH_r = \frac{1}{Tpq} \sum_{t=1}^T \mathcal{R}_{r,t} \wt\bQ \wt\bD_r^{-1} .
\]
Given the result in \eqref{eqn: rate_cR_t} and the fact that Assumption~\ref{ass: noise} being the same as Assumption~3 in \cite{ZhangCenCho2026}, all the remaining arguments follow the same (in the case of order-2 tensor time series, i.e.,\ matrix time series) as those given in the proof of Lemma~B.7 of \cite{ZhangCenCho2026}. This thus concludes the proof of this lemma.
\end{proof}

\begin{lemma}\label{lemma: population_cov_PCA_consistency}
Let all the assumptions in Lemma~\ref{lemma: eigenvalue_PCA_consistency}~(ii) hold.
Define
\begin{equation*}
\begin{split}
    \Ddot{\bGamma}_r &:= \frac{1}{Tpq} \sum_{t=1}^T \bC_t\bC_t^\trans + \frac{1}{Tpq} \sum_{t=1}^T \bM_p \E\left( \bE_t \bM_q \bE_t^\trans \right) \bM_p
    =: \frac{1}{Tpq} \sum_{t=1}^T \bC_t\bC_t^\trans + {\bGamma}_{E,r} , \\
    \Ddot{\bGamma}_c &:= \frac{1}{Tpq} \sum_{t=1}^T \bC_t^\trans \bC_t + \frac{1}{Tpq} \sum_{t=1}^T \bM_q \E\left( \bE_t^\trans \bM_p \bE_t \right) \bM_q
    =: \frac{1}{Tpq} \sum_{t=1}^T \bC_t^\trans \bC_t + {\bGamma}_{E,c} .
\end{split}
\end{equation*}
Let $\Ddot\bQ$ be $\sqrt{p}$ times the $k_r$ leading eigenvectors of $\Ddot{\bGamma}_r$, and $\Ddot\bU$ be $\sqrt{q}$ times the $k_c$ leading eigenvectors of $\Ddot{\bGamma}_c$. As $\min\{T,p,q\} \to \infty$, there exists orthogonal matrices $\bJ_r\in \R^{k_r\times k_r}$ $\bJ_c\in \R^{k_c\times k_c}$ such that
\[
\frac{1}{p}\left\| \wt\bQ - \Ddot\bQ \bJ_r \right\|_F^2 = \cO_P\Big( \frac{1}{Tq} + \frac{1}{p^2q^2} \Big)  
\text{ \ and \ }
\frac{1}{q}\left\| \wt\bU - \Ddot\bU \bJ_c \right\|_F^2 = \cO_P\Big( \frac{1}{Tp} + \frac{1}{p^2q^2} \Big) .
\]
\end{lemma}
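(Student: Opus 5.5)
This is an eigenvector perturbation argument: compare the sample matrix $(Tpq)^{-1}\sum_t \wh\bL_t\wh\bL_t^\trans$ with its centred version $\Ddot\bGamma_r$ via the Davis--Kahan $\sin\Theta$ theorem. I give the argument for $\wt\bQ$ only; the one for $\wt\bU$ is symmetric. Recall from the proof of Lemma~\ref{lemma: eigenvalue_PCA_consistency} that $\wh\bL_t = \bM_p\bX_t\bM_q = \bC_t + \bM_p\bE_t\bM_q$. This holds because $\bM_p\bC_t\bM_q=\bC_t$ under Assumption~\ref{ass: identification}. Hence
\[
\frac{1}{Tpq}\sum_{t=1}^T \wh\bL_t\wh\bL_t^\trans - \Ddot\bGamma_r = \frac{1}{Tpq}\sum_{t=1}^T\big(\bC_t\bE_t^\trans\bM_p + \bM_p\bE_t\bC_t^\trans\big) + \frac{1}{Tpq}\sum_{t=1}^T \bM_p\big\{\bE_t\bM_q\bE_t^\trans - \E(\bE_t\bM_q\bE_t^\trans)\big\}\bM_p =: \bDelta_1+\bDelta_2 .
\]
The point of $\Ddot\bGamma_r$ is that the expectation of the noise Gram matrix has been moved into the target. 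This removes the $T^2pq^2$ term in \eqref{eqn: rate_cR_t}, which was responsible for the $1/p^2$ rate in Lemma~\ref{lemma: consistency_PCA}.

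The first step is to bound the perturbation. By \eqref{eqn: CE_bound} and $\|\bM_p\|=1$, we get $\|\bDelta_1\|_F^2=\cO_P(Tp^2q/(Tpq)^2)=\cO_P(1/(Tq))$. For $\bDelta_2$, expand $\bM_q=\bI_q-q^{-1}\1_q\1_q^\trans$ and $\bM_p$ as in \eqref{eqn: sample_covariance_row}. Each resulting term is a centred sum controlled by the second, fifth, third, fourth, sixth and seventh bounds of Lemma~\ref{lemma: rate_corollary_lemma2_LamCen}, after dividing by the appropriate powers of $p$ and $q$. For example, $\|\sum_t\{\bE_t\bE_t^\trans-\E(\cdot)\}\|_F^2/(Tpq)^2=\cO_P(1/(Tq))$, and the $q^{-1}\bE_t\1_q\1_q^\trans\bE_t^\trans$ term gives $\cO_P(1/(Tq^2))$. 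Term by term, I expect every contribution to be $\cO_P(1/(Tq))$ or smaller. Together this gives $\|\bDelta_1+\bDelta_2\|_F^2=\cO_P(1/(Tq)+1/(p^2q^2))$. I keep the $1/(p^2q^2)$ allowance in the bound because of the $\1_p\1_p^\trans$-type remainders, which produce terms of order $(pq)^{-2}$ after centring.

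The second step is to establish an eigengap for $\Ddot\bGamma_r$. By \eqref{eqn: CC_limit} and Lemma~\ref{lemma: eigenvalue_PCA_consistency}~(i), the matrix $(Tpq)^{-1}\sum_t\bC_t\bC_t^\trans$ has rank at most $k_r$ and its $k_r$ nonzero eigenvalues converge to those of $\bSigma_R$, which are bounded away from zero. Its remaining eigenvalues are zero. For the noise part, Assumption~\ref{ass: noise} gives $\|\E(\bE_t\bM_q\bE_t^\trans)\|\lesssim \|\E(\bE_t\bE_t^\trans)\| = \cO(q)$, since $\|\bA_{e,r}\|_1,\|\bA_{e,c}\|_1=\cO(1)$ and the entries of $\bSigma_e$ are bounded. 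The third inequality of Lemma~\ref{lemma: inequality_sandwich}~(i) can be used to handle the $\bM_q$ sandwich here. Hence $\|\bGamma_{E,r}\|=\cO(1/p)$. By Weyl's inequality, $\lambda_{k_r}(\Ddot\bGamma_r)-\lambda_{k_r+1}(\Ddot\bGamma_r)$ is bounded away from zero with probability tending to one.

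The final step applies the Davis--Kahan $\sin\Theta$ theorem to the leading $k_r$-dimensional eigenspaces, with $\wt\bQ/\sqrt p$ and $\Ddot\bQ/\sqrt p$ as orthonormal bases. It yields an orthogonal $\bJ_r$, obtained from the SVD of $\Ddot\bQ^\trans\wt\bQ/p$, such that $p^{-1}\|\wt\bQ-\Ddot\bQ\bJ_r\|_F^2\lesssim\|\bDelta_1+\bDelta_2\|_F^2/\mathrm{gap}^2$. This gives the claimed rate. The main obstacle is the bookkeeping in bounding $\bDelta_2$: each centred cross term produced by $\bM_p$ and $\bM_q$ must be shown to be no larger than $1/(Tq)+1/(p^2q^2)$. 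I would handle this by matching each term to the corresponding bound of Lemma~\ref{lemma: rate_corollary_lemma2_LamCen} and checking the normalising powers of $p$ and $q$ one at a time.
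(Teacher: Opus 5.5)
Your proposal is correct and is essentially the paper's argument. The paper applies Lemma~B.4(i) of \cite{Barigozzietal2025_robust}, a Davis--Kahan-type eigenvector perturbation result, and its conditions (C1)--(C3) are exactly your three ingredients: the $k_r$ nonzero eigenvalues of $(Tpq)^{-1}\sum_t \bC_t\bC_t^\trans$ bounded away from zero and infinity, the Frobenius bound $\cO_P(1/(Tq)+1/(p^2q^2))$ on the perturbation obtained from the same expansion and Lemma~\ref{lemma: rate_corollary_lemma2_LamCen}, and $\|\bGamma_{E,r}\|=\cO(1/p)$ via Lemma~\ref{lemma: inequality_sandwich}~(i). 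As a small simplification, you need not expand $\bM_p$ in $\Delta_2$: since $\|\bM_p\|=1$, the second and fifth bounds of Lemma~\ref{lemma: rate_corollary_lemma2_LamCen} already give $\|\Delta_2\|_F^2=\cO_P(1/(Tq))$.
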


\begin{proof}[Proof of Lemma~\ref{lemma: population_cov_PCA_consistency}]
We show the results for $\wt\bQ$ and those for $\wt\bU$ can be proved by analogous steps. In particular, we adapt Lemma~B.4~(i) of \cite{Barigozzietal2025_robust}, by setting our $(Tq)^{-1} \sum_{t=1}^T \wh\bL_t \wh\bL_t^\trans$, $p \Ddot{\bGamma}_r$, and $(Tq)^{-1} \sum_{t=1}^T \bC_t\bC_t^\trans$ respectively as their $\wh{\mathbf{S}}$, $\wt{\mathbf{S}}$, and $\mathbf{S}$. Note that the eigenvalues required in condition~(C1), Lemma~B.4~(i) of \cite{Barigozzietal2025_robust}, need not be necessarily distinct, in which case the diagonal matrix $\wt{\mathbf{J}}$ therein is replaced with some orthogonal matrix, say matrix $\bJ_r$. Hence it suffices to show the first $k_r$ eigenvalues of $(Tq)^{-1} \sum_{t=1}^T \bC_t\bC_t^\trans$ multiplied by $p^{-1}$ are bounded away from zero and infinity, which is true from the argument below \eqref{eqn: CC_limit}.

To verify condition (C2) in Lemma~B.4 of \cite{Barigozzietal2025_robust}, by the decomposition \eqref{eqn: sample_covariance_row} and triangle inequality we have
\begin{align*}
    &\;\quad
    \frac{1}{p^2} \left\| \frac{1}{Tq} \sum_{t=1}^T \wh\bL_t \wh\bL_t^\trans - p \Ddot{\bGamma}_r \right\|_F^2 \\
    &\leq
    \left\| \frac{1}{Tpq} \sum_{t=1}^T \bC_t\bE_t^\trans \right\|_F^2 
    + \left\| \frac{1}{Tpq} \sum_{t=1}^T \bE_t\bC_t^\trans \right\|_F^2 
    + \left\| \frac{1}{Tp^2q} \sum_{t=1}^T \bC_t\bE_t^\trans \1_p\1_p^\trans \right\|_F^2
    + \left\| \frac{1}{Tp^2q} \sum_{t=1}^T \1_p \1_p^\trans \bE_t\bC_t^\trans \right\|_F^2 \\
    &\;\quad
    + \frac{1}{T^2 p^2 q^2} \left\| \sum_{t=1}^T \{ \bE_t\bE_t^\trans - \E(\bE_t\bE_t^\trans) \} \right\|_F^2 
    + \frac{1}{T^2 p^4 q^4} \left\| \sum_{t=1}^T \{ \1_q^\trans \bE_t^\trans \1_p \bE_t \1_q \1_p^\trans - \E(\1_q^\trans \bE_t^\trans \1_p \bE_t \1_q \1_p^\trans) \} \right\|_F^2 \\
    &\;\quad
    + \frac{1}{T^2 p^4 q^2} \left\| \sum_{t=1}^T \{ \bE_t \bE_t^\trans \1_p \1_p^\trans - \E(\bE_t \bE_t^\trans \1_p \1_p^\trans) \} \right\|_F^2
    + \frac{1}{T^2 p^2 q^4} \left\| \sum_{t=1}^T \{ \bE_t\1_q\1_q^\trans \bE_t^\trans - \E(\bE_t\1_q\1_q^\trans \bE_t^\trans) \} \right\|_F^2 \\
    &\;\quad
    + \frac{1}{T^2 p^6 q^2} \left\| \sum_{t=1}^T \{ \1_p^\trans \bE_t \bE_t^\trans \1_p - \E(\1_p^\trans \bE_t \bE_t^\trans \1_p) \} \cdot \1_p \1_p^\trans \right\|_F^2 \\
    &\;\quad
    + \frac{1}{T^2 p^6 q^4} \left\| \sum_{t=1}^T \{(\1_q^\trans \bE_t^\trans \1_p)^2 - \E((\1_q^\trans \bE_t^\trans \1_p)^2)\} \cdot \1_p \1_p^\trans \right\|_F^2 
    = \cO_P\left( \frac{1}{Tq} + \frac{1}{p^2q^2} \right) ,
\end{align*}
where the last equality used Lemma~\ref{lemma: rate_corollary_lemma2_LamCen} and the fact that all the terms in the second line are of order $\cO_P(1/(Tq))$ by the same argument below \eqref{eqn: sample_covariance_row}.
Thus, the term $\zeta_{n,p}$ in condition (C2) in Lemma~B.4~(i) of \cite{Barigozzietal2025_robust} is $1/\sqrt{Tq} + 1/(pq)$ here. Lastly, for their condition (C3), note that $\|\bM_p\|, \|\bM_q\| =1$ (see \eqref{eqn: M_norm}) and hence there exists some constant $c>0$ such that
\begin{align*}
    &\;\quad
    \left\| p \left( \Ddot{\bGamma}_r - \frac{1}{Tpq} \sum_{t=1}^T \bC_t\bC_t^\trans \right) \right\| 
    = \left\| \frac{1}{Tq} \sum_{t=1}^T \bM_p \E\left( \bE_t \bM_q \bE_t^\trans \right) \bM_p \right\| \\
    &\leq
    \frac{1}{Tq} \left\| \E\left( \sum_{t=1}^T \bE_t \bM_q \bE_t^\trans \right) \right\|
    \leq
    \frac{1}{Tq} \left\| \E\left( \sum_{t=1}^T \bE_t \bE_t^\trans \right) \right\| \leq c,
\end{align*}
where the second line used Lemma~\ref{lemma: inequality_sandwich}~(i) and the result on $\left\| \E\left( \sum_{t=1}^T \bE_t \bE_t^\trans \right) \right\|$ in Lemma~B.5~(ii) of \cite{ZhangCenCho2026}.

With all the above verified, we may apply Lemma~B.4 of \cite{Barigozzietal2025_robust}, and by the fact that $\wt\bQ$ and $\Ddot\bQ$ are respectively $\sqrt{p}$ times the first $k_r$ eigenvectors of $(Tpq)^{-1} \sum_{t=1}^T \wh\bL_t \wh\bL_t^\trans$ and $\Ddot{\bGamma}_r$, or equivalently, the first $k_r$ eigenvectors of $(Tq)^{-1} \sum_{t=1}^T \wh\bL_t \wh\bL_t^\trans$ and $p \Ddot{\bGamma}_r$, we conclude that there exists some orthogonal matrix $\bJ_r\in \R^{k_r\times k_r}$ such that
\[
\frac{1}{\sqrt{p}} \left\| \wt\bQ - \Ddot\bQ \bJ_r \right\|_F = \cO_P\Big( \frac{1}{\sqrt{Tq}} + \frac{1}{pq} \Big) ,
\]
as desired. This completes the proof of the lemma.
\end{proof}

\begin{lemma}\label{lemma: rate_ELEK_general}
Let Assumption~\ref{ass: noise} hold.
Let $\bLambda \in \R^{q\times q}$ be any matrix independent of $\{\bE_t\}$, and $\bK \in \R^{p\times k}$ be any deterministic matrix with fixed $k$ and $\max_{i\in[p],j\in[k]} |K_{i,j}| =\cO(1)$. Then we have
\[
\left\| \sum_{t=1}^T \bE_t \bLambda \bE_t^\trans \bK \right\|_F^2 = \cO_P\Big( T^2 p \cdot \{ \E[\tr(\bLambda)] \}^2 + T p^2 \cdot \E(\|\bLambda\|_F^2) \Big) .
\]
In particular, when $\bLambda = \bLambda_0 \bLambda_0^\trans$ for some $\bLambda_0$, we have
\[
\left\| \sum_{t=1}^T \bE_t \bLambda \bE_t^\trans \bK \right\|_F^2 = \cO_P\Big( (T^2 p + T p^2) \cdot \E(\|\bLambda_0\|_F^4) \Big) .
\]
\end{lemma}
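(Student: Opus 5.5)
We bound the second moment $\E\|\sum_{t}\bE_t\bLambda\bE_t^\trans\bK\|_F^2$ conditionally on $\bLambda$, then take expectations over $\bLambda$ and apply Markov's inequality. Since $\bLambda$ is independent of $\{\bE_t\}$, we may treat it as deterministic throughout and integrate at the end.

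The first step is a mean/fluctuation split:
\[
\sum_{t=1}^T \bE_t\bLambda\bE_t^\trans\bK = \sum_{t=1}^T \E(\bE_t\bLambda\bE_t^\trans\mid\bLambda)\bK + \sum_{t=1}^T\{\bE_t\bLambda\bE_t^\trans - \E(\bE_t\bLambda\bE_t^\trans\mid\bLambda)\}\bK .
\]
For the mean term, Assumption~\ref{ass: noise} gives $\E(\bE_t\bLambda\bE_t^\trans)$ explicitly in terms of $\bA_{e,r}$, $\bA_{e,c}$ and $\bSigma_e$. The factor part contributes $\bA_{e,r}\bA_{e,r}^\trans\,\tr(\bA_{e,c}^\trans\bLambda\bA_{e,c})$, by Lemma~\ref{lemma: FMF_expectation} applied to $\bF_{e,t}$. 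The idiosyncratic part contributes a diagonal matrix with entries $\sum_j\Sigma_{e,ij}^2\Lambda_{jj}$. Both are at most $c\,|\tr(\bLambda)|$ times a matrix of bounded spectral norm, using $\|\bA_{e,r}\|_1,\|\bA_{e,c}\|_1=\cO(1)$ and the uniformly bounded entries. The one technicality is that the diagonal $\Lambda_{jj}$ need not be sign-definite; we handle it as in Lemma~B.5 of \cite{ZhangCenCho2026}, bounding by $\|\bLambda\|$-type quantities when needed. We then use $\|\bK\|_F^2=\cO(p)$, since $k$ is fixed and the entries are bounded. This gives a mean term of order $T^2p\{\E\tr(\bLambda)\}^2$.

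The fluctuation term is the main obstacle. Write $\bK=[\bK_{\cdot 1},\dots,\bK_{\cdot k}]$ and treat each column separately. Each entry $\sum_t(\be_{t,i}^\trans\bLambda\be_{t}^\trans\bK_{\cdot l}-\E)$ is a sum over $t$ of centered quadratic forms in the linear processes $\bF_{e,t}$ and $\bE_{e,t}$. Expanding through the innovations and using independence across time and cross-section, its variance is bounded by $\sum_{h,h'}|a_h a_{h'}|$ times fourth-moment terms. Summability of the $a_h$ together with $\nu_e\ge4$ then give a variance of order $T\|\bLambda\|_F^2$ times row-sum quantities of $\bK_{\cdot l}$.

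The delicate part is the cross-sectional summation. The product $\be_{t}^\trans\bK_{\cdot l}$ aggregates $p$ rows, so we need $\sum_i\mathrm{Var}=\cO(Tp^2\|\bLambda\|_F^2)$ rather than $Tp^3$. This follows as in the proof of Lemma~2 of \cite{LamCen2025_MEFM}: the summed fourth-order cumulant and pairing terms contribute at most $p\cdot\|\bK_{\cdot l}\|^2=\cO(p^2)$, because the factor part of $\bE_t$ is low rank with $\ell_1$-bounded loadings. Summing over the fixed $k$ columns and taking expectation over $\bLambda$ yields $Tp^2\E\|\bLambda\|_F^2$, and Markov's inequality gives the first display.

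For the special case $\bLambda=\bLambda_0\bLambda_0^\trans$, we have $\tr(\bLambda)=\|\bLambda_0\|_F^2$, so $\{\E\tr\bLambda\}^2\le\E\|\bLambda_0\|_F^4$ by Jensen's inequality. Also $\|\bLambda\|_F\le\|\bLambda_0\|_F^2$, so $\E\|\bLambda\|_F^2\le\E\|\bLambda_0\|_F^4$. Substituting these into the first display gives the second.
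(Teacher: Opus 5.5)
Your route differs from the paper's in how $\bK$ is handled, and the difference is in your favour. The paper first removes $\bK$: it bounds the left side by $\sum_{i,j}M_{ij}^2$ with $M_{ij}=\sum_t\bE_{t,i\cdot}^\trans\bLambda\bE_{t,j\cdot}$, then controls $\sum_{i,j}(\E M_{ij})^2$ and $\sum_{i,j}\var(M_{ij})$ entrywise. That reduction only holds up to a factor $\|\bK\|^2$, which for entrywise-bounded $\bK$ (e.g.\ $\bK=\bQ$ or $\bK=\1_p$) is $\cO(p)$, not $\cO(1)$.

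You instead bound the mean part through $\|\sum_t\E(\bE_t\bLambda\bE_t^\trans)\|\,\|\bK\|_F$, and compute the variance of the projected rows $\sum_t\bE_{t,i\cdot}^\trans\bLambda\bE_t^\trans\bK_{\cdot l}$ directly. This uses two facts: $\mathrm{Cov}(M_{ij},M_{ij'})=0$ for $j\neq j'$ in the idiosyncratic part, and the factor part enters only through $\bA_{e,r}^\trans\bK_{\cdot l}$, whose entries are $\cO(1)$ because $\|\bA_{e,r}\|_1=\cO(1)$. Your claim $\sum_i\var=\cO(Tp^2\|\bLambda\|_F^2)$ is correct, and this is exactly the cancellation the stated rate needs.

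The gap is in the mean term, and the paper's proof shares it. Conditionally on $\bLambda$ you obtain $T^2p\{\tr(\bLambda)\}^2$. Applying Markov to the second moment then gives $T^2p\,\E[\{\tr(\bLambda)\}^2]$, not $T^2p\{\E\tr(\bLambda)\}^2$. Moreover, even the conditional bound needs $\bLambda\succeq 0$. Only then do you have $\tr(\bA_{e,c}^\trans\bLambda\bA_{e,c})\le\|\bA_{e,c}\|^2\tr(\bLambda)$ and $\sum_j\Sigma_{e,ij}^2\Lambda_{jj}\le C\tr(\bLambda)$. No bound in terms of $\|\bLambda\|$ can replace $|\tr(\bLambda)|$.

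In fact the first display is false in that generality. Take $\bA_{e,r}=\0$, let $\Sigma_{e,ij}^2$ equal $1$ for odd $j$ and $1/4$ for even $j$, and set $\bLambda=\mathrm{diag}(1,-1,1,-1,\dots)$. Then $\tr(\bLambda)=0$ and $\|\bLambda\|_F^2=q$, but the conditional mean is $(3Tq/8)\bI_p$. For $\bK=\1_p$ the left side is therefore of order $T^2pq^2\gg Tp^2q$ whenever $Tq\gg p$. Similarly, $\bLambda=\epsilon\bI_q$ with an independent Rademacher $\epsilon$ has $\E\tr(\bLambda)=0$ but the same large left side.

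The repair is to assume $\bLambda$ positive semidefinite and conclude at the $\cO_P$ level. The mean part is at most $CT^2p\{\tr(\bLambda)\}^2$ almost surely, and $\tr(\bLambda)=\cO_P(\E\tr(\bLambda))$ by Markov since $\tr(\bLambda)\ge 0$. Your second-moment bound handles the fluctuation part, giving $\cO_P(Tp^2\E\|\bLambda\|_F^2)$.

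The second display, which is the one used in Lemma~\ref{lemma: consistency_proj_decomp_rate}, is unaffected. There $\bLambda=\bLambda_0\bLambda_0^\trans\succeq 0$ and $\E[\{\tr(\bLambda)\}^2]=\E\|\bLambda_0\|_F^4$ exactly, so your argument yields it directly, without needing Jensen.
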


\begin{proof}[Proof of Lemma~\ref{lemma: rate_ELEK_general}]

Note that by the condition of $\bK$, we have
\begin{equation}
\label{eqn: ELambdaEK_decomp}
\begin{split}
    \left\| \sum_{t=1}^T \bE_t \bLambda \bE_t^\trans \bK \right\|_F^2
    &=
    \sum_{i=1}^{p} \left\| \sum_{j=1}^{p} \bK_{j\cdot} \sum_{t=1}^T \bE_{t,i\cdot}^\trans \bLambda \bE_{t,j\cdot} \right\|^2 
    \lesssim
    \sum_{i=1}^{p} \sum_{j=1}^{p} \left( \sum_{t=1}^T \bE_{t,i\cdot}^\trans \bLambda \bE_{t,j\cdot} \right)^2 .
\end{split}
\end{equation}


Consider $\bE_t$. From Assumption~\ref{ass: noise}, it holds that
\begin{align*}
    \bE_t = \sum_{j\geq 0} a_{e,j} \bA_{e,r} \bX_{e,t-j} \bA_{e,c}^\trans + \sum_{j\geq 0} a_{\epsilon,j} \left( \bSigma_{e} \circ \bX_{\epsilon,t-j} \right) .
\end{align*}
Then for any $i,j\in[p]$, we have
\begin{align*}
    \E\left( \sum_{t=1}^T \bE_{t,i\cdot}^\trans \bLambda \bE_{t,j\cdot} \right)
    &=
    \E\Bigg\{ \sum_{t=1}^T \Big( \sum_{w\geq 0} a_{e,w} \bA_{e,r,i\cdot}^\trans \bX_{e,t-w} \bA_{e,c}^\trans + \sum_{h\geq 0} a_{\epsilon,h} \bSigma_{e,i\cdot}^\trans \circ \bX_{\epsilon,t-h,i\cdot}^\trans \Big) \bLambda \\
    &\;\quad
    \cdot \Big( \sum_{w\geq 0} a_{e,w} \bA_{e,c} \bX_{e,t-w}^\trans \bA_{e,r,j\cdot} + \sum_{h\geq 0} a_{\epsilon,h} \bSigma_{e,j\cdot} \circ \bX_{\epsilon,t-h,j\cdot} \Big) \Bigg\} \\
    &=
    \E\Bigg( \sum_{t=1}^T \sum_{w\geq 0} \sum_{m\geq 0} a_{e,w} a_{e,m} \bA_{e,r,i\cdot}^\trans \bX_{e,t-w} \bA_{e,c}^\trans \bLambda \bA_{e,c} \bX_{e,t-m}^\trans \bA_{e,r,j\cdot} \Bigg) \\
    &\quad
    + \E\Bigg( \sum_{t=1}^T \sum_{h\geq 0} \sum_{\ell\geq 0} a_{\epsilon,h} a_{\epsilon,\ell} (\bSigma_{e,i\cdot}^\trans \circ \bX_{\epsilon,t-h,i\cdot}^\trans) \bLambda (\bSigma_{e,j\cdot} \circ \bX_{\epsilon,t-\ell,j\cdot}) \Bigg) \\
    &\lesssim
    \E[\tr(\bA_{e,c}^\trans \bLambda \bA_{e,c})] \cdot T \cdot \bA_{e,r,i\cdot}^\trans \bA_{e,r,j\cdot}
    + \E[\tr(\bLambda)] \cdot T \cdot \mathbbm{I}_{\{i=j\}} ,
\end{align*}
so that together with the sparsity of $\bA_{e,r}$ and $\bA_{e,c}$ from \Cref{ass: noise}, it holds that
\begin{equation}
\label{eqn: ELambdaEK_mean}
\sum_{i=1}^{p} \sum_{j=1}^{p} \left\{ \E\left( \sum_{t=1}^T \bE_{t,i\cdot}^\trans \bLambda \bE_{t,j\cdot} \right) \right\}^2 \lesssim T^2 \cdot \{ \E[\tr(\bLambda)] \}^2 \cdot p .
\end{equation}
Next, for any $i,j\in[p]$, consider
\begin{align*}
    &\;\quad
    \var\left( \sum_{t=1}^T \bE_{t,i\cdot}^\trans \bLambda \bE_{t,j\cdot} \right)
    = \var\left( \sum_{t=1}^T \sum_{\ell=1}^{q} \sum_{h=1}^{q} E_{t,i\ell} \Lambda_{\ell h} E_{t,j h} \right) \\
    &=
    \var\Bigg\{ \sum_{t=1}^T \sum_{\ell=1}^{q} \sum_{h=1}^{q} \Lambda_{\ell h} \Big( \sum_{w\geq 0} a_{e,w} \bA_{e,r,i\cdot}^\trans \bX_{e,t-w} \bA_{e,c,\ell\cdot} + \sum_{m\geq 0} a_{\epsilon,m} \Sigma_{e,i\ell} X_{\epsilon,t-m,i\ell} \Big) \\
    &\;\quad
    \cdot \Big( \sum_{c\geq 0} a_{e,c} \bA_{e,r,j\cdot}^\trans \bX_{e,t-c} \bA_{e,c,h\cdot} + \sum_{z\geq 0} a_{\epsilon,z} \Sigma_{e,jh} X_{\epsilon,t-z,jh} \Big) \Bigg\} \\
    &\lesssim
    T \cdot \sum_{\ell=1}^{q} \sum_{h=1}^{q} \E(\Lambda_{\ell h}^2) \left( \sum_{w\geq 0} a_{e,w}^2 \left\Vert \bA_{e,r,i\cdot} \right\Vert^{2} \left\Vert \bA_{e,c,\ell\cdot} \right\Vert^{2} + \sum_{m\geq 0} a_{\epsilon,m}^2 \Sigma_{e,i\ell}^{2} \right) \\
    &\;\quad
    \cdot \left( \sum_{c\geq 0} a_{e,c}^2 \left\Vert \bA_{e,r,j\cdot} \right\Vert^{2} \left\Vert \bA_{e,c,h\cdot} \right\Vert^{2} + \sum_{z\geq 0} a_{\epsilon,z}^2 \Sigma_{e,jh}^{2} \right)
    \lesssim T \cdot \E(\|\bLambda\|_F^2) .
\end{align*}
Hence, we have
\begin{equation}
\label{eqn: ELambdaEK_var}
\begin{split}
    &\quad
    \sum_{i=1}^{p} \sum_{j=1}^{p} \var\left( \sum_{t=1}^T \bE_{t,i\cdot}^\trans \bLambda \bE_{t,j\cdot} \right)
    \lesssim T p^2 \cdot \E(\|\bLambda\|_F^2) .
\end{split}
\end{equation}
Combining \eqref{eqn: ELambdaEK_decomp}, \eqref{eqn: ELambdaEK_mean}, and \eqref{eqn: ELambdaEK_var}, the proof of this lemma is complete.
\end{proof}

\begin{lemma}\label{lemma: consistency_proj_decomp_rate}
Let all the assumptions in Lemma~\ref{lemma: consistency_PCA} hold.
Then we have
\begin{align}
     &\left\| \sum_{t=1}^T \bM_p \bE_t \bM_q \wt\bU \wt\bU^\trans \bC_t^\trans \bQ \right\|_F^2 = \cO_P\left( Tp^3q^3 + p^2 q^4  \right) ,
     \label{eqn: rate_cII_Q} \\
     &\left\| \sum_{t=1}^T \bC_t \wt\bU \wt\bU^\trans \bM_q \bE_t^\trans \bM_p \bQ \right\|_F^2 = \cO_P\left( Tp^3q^3 + p^2 q^4  \right) ,
     \label{eqn: rate_cIII_Q} \\
     &\left\| \sum_{t=1}^T \bM_p \bE_t \bM_q \wt\bU \wt\bU^\trans \bM_q \bE_t^\trans \bM_p \bQ \right\|_F^2 = \cO_P\left(
     p^2 q^3 + T p q^4
     + T^2 p q^2 + T p^2 q^2 \right) .
     \label{eqn: rate_cIV_Q} 
\end{align}
\end{lemma}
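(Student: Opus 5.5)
The plan is to treat the three bounds separately. In each case the random projection $\wt\bU\wt\bU^\trans$ is split into a deterministic leading part and an estimation error, using Lemma~\ref{lemma: consistency_PCA}: write $\wt\bU = \bU\wt\bH_c + (\wt\bU - \bU\wt\bH_c)$, so that
\[
\wt\bU\wt\bU^\trans = \bU\wt\bH_c\wt\bH_c^\trans\bU^\trans + \bU\wt\bH_c\bDelta_c^\trans + \bDelta_c\wt\bH_c^\trans\bU^\trans + \bDelta_c\bDelta_c^\trans ,
\]
with $\bDelta_c := \wt\bU - \bU\wt\bH_c$ and $\|\bDelta_c\|_F^2 = \cO_P(1/p + q^{-1})$. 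Along the way I would use $\bM_q\bU = \bU$ and $\bM_p\bQ = \bQ$ (Assumption~\ref{ass: identification}), $\|\bM_m\|=1$, $\|\wt\bH_c\| = \cO_P(1)$, and $\|\bQ\|_F^2 \asymp p$, $\|\bU\|_F^2\asymp q$.

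For \eqref{eqn: rate_cII_Q}, write $\bC_t^\trans\bQ = \bU\bG_t^\trans\bQ^\trans\bQ$. For the leading term, the sum $\sum_t \bM_p\bE_t\bU\bH\bU^\trans\bU\bG_t^\trans\bQ^\trans\bQ$, with $\bH := \wt\bH_c\wt\bH_c^\trans$, factors as $\bM_p(\sum_t \bE_t\bU\bH'\bG_t^\trans)(\bQ^\trans\bQ)$. I would handle it segment by segment as in \eqref{eqn: AFBE_segment}: independence of $\bE_t$ and $\bF_t$ gives $\|\sum_t \bE_t\bU\bG_t^\trans\|_F^2 = \cO_P(Tpq)$, and multiplying by $\|\bU^\trans\bU\|^2\|\bQ^\trans\bQ\|^2 \asymp q^2p^2$ produces $Tp^3q^3$. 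The random $\bH$ is pulled out via Lemma~\ref{lemma: inequality_sandwich}(ii). For the cross terms involving $\bDelta_c$, the dependence of $\bDelta_c$ on $\bE_t$ prevents a direct factorisation. I would bound them by Cauchy--Schwarz, $\|\sum_t\bE_t\bDelta_c\bM\bG_t^\trans\|_F \le \|\bDelta_c\|_F\,\|\sum_t \bG_t^\trans\otimes\bE_t\|$-type quantities, or substitute the explicit expansion $\bDelta_c = (Tpq)^{-1}\sum_s\mathcal R_{c,s}\wt\bU\wt\bD_c^{-1}$ obtained from the proof of Lemma~\ref{lemma: consistency_PCA}. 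The second route turns these terms into double sums of noise, which are handled by second-moment computations like those in Lemma~\ref{lemma: rate_ELEK_general}. I expect this to produce the $p^2q^4$ term: it comes from the $q^{-2}$ part of $\|\bDelta_c\|_F^2$ times $Tpq\cdot T$ times $p^2$-scalings, with the factor $T^2$ cancelling against the $T^{-2}$ in $\bDelta_c$. Bound \eqref{eqn: rate_cIII_Q} follows by the identical argument applied to the transposed roles, since $\bQ^\trans\bM_p\bE_t$ has the same structure as before.

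For \eqref{eqn: rate_cIV_Q}, I would again expand $\wt\bU\wt\bU^\trans$. The leading term is $\sum_t\bM_p\bE_t\bU\bH\bU^\trans\bE_t^\trans\bQ$, where $\bM_p$ is dropped by norm submultiplicativity. For this I apply Lemma~\ref{lemma: rate_ELEK_general} with $\bLambda = \bU\bH\bU^\trans$ and $\bK=\bQ$, conditioning on $\bH$ (or bounding it by $\|\bH\|\,\bU\bU^\trans$ in the Loewner order). Since $\tr(\bLambda)\asymp q$ and $\|\bLambda\|_F^2\asymp q^2$, this gives $T^2pq^2 + Tp^2q^2$. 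The remainder terms with $\bDelta_c$ or $\bDelta_c\bDelta_c^\trans$ are handled by taking $\bLambda_0 = \bDelta_c$ in the second display of Lemma~\ref{lemma: rate_ELEK_general}, or by Cauchy--Schwarz with $\|\sum_t\bE_t\bE_t^\trans\|$ and Lemma~\ref{lemma: rate_corollary_lemma2_LamCen}, and are expected to yield $p^2q^3 + Tpq^4$.

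The main obstacle throughout is that $\wt\bU$ is not independent of $\{\bE_t\}$, so Lemma~\ref{lemma: rate_ELEK_general}, which requires $\bLambda$ independent of the noise, cannot be applied directly to the remainder terms. I would first try replacing $\wt\bU$ with the population proxy $\Ddot\bU\bJ_c$ from Lemma~\ref{lemma: population_cov_PCA_consistency}. That proxy is deterministic given $\{\bF_t\}$, hence independent of $\{\bE_t\}$. The resulting error $\|\wt\bU-\Ddot\bU\bJ_c\|_F^2 = \cO_P(q/(Tp) + 1/(p^2q))$ would then be absorbed using the crude bound $\|\sum_t\bE_t\bE_t^\trans\|_F^2$ from Lemma~\ref{lemma: rate_corollary_lemma2_LamCen}.
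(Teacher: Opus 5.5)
The $\Ddot\bU\bJ_c$ centering in your last paragraph is exactly what the paper does, but you treat it as a fallback for the independence problem. It is in fact what makes the rate in \eqref{eqn: rate_cIV_Q} work. The main line you describe, centred at $\bU\wt\bH_c$, does not prove \eqref{eqn: rate_cIV_Q}.

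Write $\Delta_c:=\wt\bU-\bU\wt\bH_c$. Lemma~\ref{lemma: consistency_PCA} gives $\|\Delta_c\|_F^2=\cO_P(q/(Tp)+1/q)$, not $\cO_P(1/p+1/q)$. The $1/q$ part is the first-step eigenvector bias caused by the noise covariance $\bGamma_{E,c}$. Pushing the remainder terms of \eqref{eqn: rate_cIV_Q} through Lemma~\ref{lemma: inequality_sandwich}(i) and Lemma~\ref{lemma: rate_corollary_lemma2_LamCen}, as you propose, gives
\[
q\|\Delta_c\|_F^2\,\Big\|\sum_{t=1}^T\bE_t\bE_t^\trans\Big\|_F^2\,\|\bQ\|^2
=\cO_P\Big(\Big(\frac{q^2}{Tp}+1\Big)\big(Tp^2q+T^2pq^2\big)p\Big)
=\cO_P\big(p^2q^3+Tpq^4+Tp^3q+T^2p^2q^2\big),
\]
and $Tp^3q+T^2p^2q^2$ is not dominated by the target rate. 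Your other options do not work either:
\begin{itemize}
\item Lemma~\ref{lemma: rate_ELEK_general} with $\bLambda_0=\Delta_c$ requires $\bLambda$ to be independent of $\{\bE_t\}$, which $\Delta_c$ is not.
\item ``Conditioning on $\bH$'' (with $\bH=\wt\bH_c\wt\bH_c^\trans$) fails for the same reason, since $\wt\bH_c$ is built from $\wt\bU$ and $\wt\bD_c$.
\item The Loewner bound $\bU\bH\bU^\trans\preceq\|\bH\|\bU\bU^\trans$ does not transfer the Frobenius bound of Lemma~\ref{lemma: rate_ELEK_general}, because $\bA\preceq\bB$ does not imply $\|\bA\bQ\|_F\le\|\bB\bQ\|_F$.
\end{itemize}

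Centring at $\Ddot\bU\bJ_c$ from the outset repairs both issues. Since $\Ddot\bGamma_c$ already contains $\bGamma_{E,c}$, Lemma~\ref{lemma: population_cov_PCA_consistency} has $1/(p^2q^2)$ where Lemma~\ref{lemma: consistency_PCA} has $1/q^2$. The same crude bound then gives $p^2q^3+Tpq^4+Tpq+T^2q^2$, which is within the target. The leading term has $\bLambda_0=\bM_q\Ddot\bU$, a function of $\{\bF_t\}$ alone, so Lemma~\ref{lemma: rate_ELEK_general} applies directly and gives $T^2pq^2+Tp^2q^2$.

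For \eqref{eqn: rate_cII_Q}--\eqref{eqn: rate_cIII_Q}, your centring does suffice, and no double-sum expansion of $\Delta_c$ is needed. Lemma~\ref{lemma: inequality_sandwich}(ii) is deterministic, and $\|\wt\bU\wt\bU^\trans-\bU\bH\bU^\trans\|_F^2\lesssim_P q\|\Delta_c\|_F^2$. Together these bound the remainder in \eqref{eqn: rate_cII_Q} by $\cO_P\big((q^2/(Tp)+1)\,Tp^3q^2\big)=\cO_P(p^2q^4+Tp^3q^2)$. So the $p^2q^4$ term comes from the $q/(Tp)$ part of $\|\Delta_c\|_F^2$, not from the $1/q$ part as you suggested. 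The paper nonetheless uses the $\Ddot\bU$ centring there as well.
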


\begin{proof}[Proof of Lemma~\ref{lemma: consistency_proj_decomp_rate}]

First, by Lemma~\ref{lemma: consistency_PCA} and Assumption~\ref{ass: loadings},
\begin{align}
    &\quad
    \frac{1}{q} \wt\bU \wt\bU^\trans = \frac{1}{q} (\wt\bU - \bU \wt\bH_c) \wt\bU^\trans + \frac{1}{q} \bU \wt\bH_c \wt\bU^\trans \notag \\
    &=
    \frac{1}{q} (\wt\bU - \bU \wt\bH_c) \wt\bU^\trans
    + \frac{1}{q} \bU \wt\bH_c (\wt\bU - \bU \wt\bH_c)^\trans
    + \frac{1}{q} \bU \wt\bH_c \wt\bH_c^\trans \bU^\trans
    = \frac{1}{q} \bU \bU^\trans + o_P(1) .
    \label{eqn: wt_UU}
\end{align}
Similarly, using Lemma~\ref{lemma: population_cov_PCA_consistency}, we also have
\begin{align}
    &\quad
    \frac{1}{q} \wt\bU \wt\bU^\trans =
    \frac{1}{q} (\wt\bU - \Ddot\bU \bJ_c) \wt\bU^\trans
    + \frac{1}{q} \Ddot\bU \bJ_c (\wt\bU - \Ddot\bU \bJ_c)^\trans
    + \frac{1}{q} \Ddot\bU \Ddot\bU^\trans 
    =
    \frac{1}{q} \Ddot\bU \Ddot\bU^\trans + \cO_P\Big( \frac{1}{\sqrt{Tp}} + \frac{1}{pq} \Big) .
    \label{eqn: wt_UU_Ddot}
\end{align}

Consider \eqref{eqn: rate_cII_Q}. By the definition of $\bM_p$ we have
\begin{align}
    &\;\quad
    \left\| \sum_{t=1}^T \bM_p \bE_t \bM_q \wt\bU \wt\bU^\trans \bC_t^\trans \bQ \right\|_F^2 
    = \left\| \sum_{t=1}^T (\bI_p - p^{-1} \1_p\1_p^\trans) \bE_t (\bI_q - q^{-1} \1_q\1_q^\trans) \wt\bU \wt\bU^\trans \bC_t^\trans \bQ \right\|_F^2 \notag \\
    &\leq
    \left\| \sum_{t=1}^T \bE_t \wt\bU \wt\bU^\trans \bC_t^\trans \bQ \right\|_F^2
    + \frac{1}{p^2 q^2} \left\| \sum_{t=1}^T \1_p\1_p^\trans \bE_t \1_q\1_q^\trans \wt\bU \wt\bU^\trans \bC_t^\trans \bQ \right\|_F^2 \notag \\
    &\;\quad
    + \frac{1}{p^2} \left\| \sum_{t=1}^T \1_p\1_p^\trans \bE_t \wt\bU \wt\bU^\trans \bC_t^\trans \bQ \right\|_F^2
    + \frac{1}{q^2} \left\| \sum_{t=1}^T \bE_t \1_q\1_q^\trans \wt\bU \wt\bU^\trans \bC_t^\trans \bQ \right\|_F^2 \notag \\
    &=: \cI_{EC} + \cII_{EC} + \cIII_{EC} + \cIV_{EC} .
    \label{eqn: decomp_EUUC}
\end{align}
For $\cI_{EC}$, we may apply Lemma~\ref{lemma: inequality_sandwich}~(ii) and obtain
\begin{align*}
    &\;\quad
    \cI_{EC} = q^2 \left\| \sum_{t=1}^T \bE_t \frac{\wt\bU \wt\bU^\trans}{q} \bC_t^\trans \bQ \right\|_F^2 \\
    &\leq
    q^2 \left\| \sum_{t=1}^T \bE_t \frac{\Ddot\bU \Ddot\bU^\trans}{q} \bC_t^\trans \bQ \right\|_F^2
    + q^2\cdot \cO_P\Big( \frac{1}{Tp} + \frac{1}{p^2q^2} \Big) \cdot \sum_{i=1}^p \sum_{j=1}^q \left\| \sum_{t=1}^T E_{t,ij} \bC_t \right\|_F^2 \cdot \|\bQ \|_F^2 \\
    &\lesssim_P
    q^2 \cdot p \cdot T^2p^2q^2 \cdot \frac{1}{Tq} + \Big( \frac{q^2}{Tp} + \frac{1}{p^2}\Big) \cdot p \cdot Tp^2q^2
    = Tp^3q^3 + p^2 q^4 + Tpq^2 
    \lesssim Tp^3q^3 + p^2 q^4 ,
\end{align*}
where the last line used Lemma~\ref{ass: loadings}, and similar arguments as in \eqref{eqn: AFBE_segment} for each segment based on Lemma~B.5~(v) in \cite{Zhangetal2025} and the proof of Proposition~1.2 in the supplement of \cite{CenLam2025}.
Consider $\cII_{EC}$. Note that $\|\1_p\1_p^\trans\| = p$, thus
\begin{align*}
    &\;\quad
    \cII_{EC} \leq
    \frac{\|\1_p\1_p^\trans\|^2}{p^2} \cdot \left\| \sum_{t=1}^T \bE_t \left( \frac{\1_q\1_q^\trans}{q} \right) \wt\bU \wt\bU^\trans \bC_t^\trans \bQ \right\|_F^2
    \lesssim_P
    Tp^3q^3 + p^2 q^4  ,
\end{align*}
where the analogous arguments for $\cI_{EC}$ hold. Similar to the above, it is also direct that $\cIII_{EC}$ and $\cIV_{EC}$ have the same rate. Hence with the decomposition in \eqref{eqn: decomp_EUUC}, we conclude \eqref{eqn: rate_cII_Q}. Note that in showing \eqref{eqn: rate_cII_Q}, the rate on $\bQ$ is all controlled by sub-multiplicativity of Frobenius norm, and hence \eqref{eqn: rate_cIII_Q} can be shown of the same rate directly.

It remains to consider \eqref{eqn: rate_cIV_Q}, for which we can write
\begin{align}
    &\;\quad
    \left\| \sum_{t=1}^T \bM_p \bE_t \bM_q \wt\bU \wt\bU^\trans \bM_q \bE_t^\trans \bM_p \bQ \right\|_F^2 \notag \\
    &\leq
    \left\| \sum_{t=1}^T \bE_t \bM_q \wt\bU \wt\bU^\trans \bM_q \bE_t^\trans \bQ \right\|_F^2
    + \left\| \sum_{t=1}^T \bE_t \bM_q \wt\bU \wt\bU^\trans \bM_q \bE_t^\trans \frac{\1_p\1_p^\trans}{p} \bQ \right\|_F^2 \notag \\
    &\leq
    \left\| \sum_{t=1}^T \bE_t \bM_q (\wt\bU -\Ddot\bU \bJ_c) \wt\bU^\trans \bM_q \bE_t^\trans \bQ \right\|_F^2
    + \left\| \sum_{t=1}^T \bE_t \bM_q \Ddot\bU \bJ_c (\wt\bU - \Ddot\bU \bJ_c)^\trans \bM_q \bE_t^\trans \bQ \right\|_F^2 \notag \\
    &\;\quad
    + \left\| \sum_{t=1}^T \bE_t \bM_q \Ddot\bU \Ddot\bU^\trans \bM_q \bE_t^\trans \bQ \right\|_F^2
    + \left\| \sum_{t=1}^T \bE_t \bM_q (\wt\bU -\Ddot\bU \bJ_c) \wt\bU^\trans \bM_q \bE_t^\trans \frac{\1_p\1_p^\trans}{p} \bQ \right\|_F^2 \notag \\
    &\;\quad
    + \left\| \sum_{t=1}^T \bE_t \bM_q \Ddot\bU \bJ_c (\wt\bU - \Ddot\bU \bJ_c)^\trans \bM_q \bE_t^\trans \frac{\1_p\1_p^\trans}{p} \bQ \right\|_F^2
    + \left\| \sum_{t=1}^T \bE_t \bM_q \Ddot\bU \Ddot\bU^\trans \bM_q \bE_t^\trans \frac{\1_p\1_p^\trans}{p} \bQ \right\|_F^2 \notag \\
    &
    =: \cI_{EE,1} + \cII_{EE,1} + \cIII_{EE,1}
    + \cI_{EE,2} + \cII_{EE,2} + \cIII_{EE,2} .
    \label{eqn: decomp_EUUE}
\end{align}

For $\cI_{EE,1}$, by Lemma~\ref{lemma: inequality_sandwich}~(i) and $\|\bM_q\| =1$ (see \eqref{eqn: M_norm}), we have
\begin{align}
    \cI_{EE,1} &\leq
    \left\| \bM_q (\wt\bU -\Ddot\bU \bJ_c) \wt\bU^\trans \bM_q \right\|^2 \cdot \left\| \sum_{t=1}^T \bE_t \bE_t^\trans  \right\|_F^2 \cdot \|\bQ\|^2 \notag \\
    &\lesssim_P
    q \cdot \left( \frac{1}{Tp} + \frac{1}{p^2q^2} \right) \cdot q \cdot (T p^2 q + T^2 p q^2) \cdot p
    =
    p^2 q^3 
    + T p q^4 + T^2 q^2 ,
    \label{eqn: rate_cI_EE1}
\end{align}
where the second line used Assumption~\ref{ass: loadings}~(ii), Lemma~\ref{lemma: rate_corollary_lemma2_LamCen} and Lemma~\ref{lemma: population_cov_PCA_consistency}. From \eqref{eqn: rate_cI_EE1}, it follows the similar arguments that
\begin{equation}
\label{eqn: rate_cI_EE_other}
\cII_{EE,1} + \cI_{EE,2} + \cII_{EE,2} =\cO_P( p^2 q^3 + T p q^4 + T^2 q^2 ) .
\end{equation}
It remains to consider $\cIII_{EE,1}$ and $\cIII_{EE,2}$. For both of them, we can apply Lemma~\ref{lemma: rate_ELEK_general} by Assumption~\ref{ass: loadings}~(i) and with $\bLambda_0$ set as $\bM_q \Ddot\bU$, thus
\begin{equation}
\label{eqn: rate_cIII_EE}
\cIII_{EE,1} + \cIII_{EE,2} =\cO_P\left( (T^2 p + T p^2) \cdot \E(\|\bM_q \Ddot\bU\|_F^4) \right)
= \cO_P( T^2 p q^2 + T p^2 q^2 ) .
\end{equation}
Then \eqref{eqn: rate_cIV_Q} follows by combining \eqref{eqn: decomp_EUUE}, \eqref{eqn: rate_cI_EE1}, \eqref{eqn: rate_cI_EE_other}, and \eqref{eqn: rate_cIII_EE}. This hence completes the proof of this lemma.
\end{proof}





\subsection{Auxiliary results for the proof of Theorem~\ref{thm: consistency_change_ME}}

\begin{lemma}\label{lemma: kernel_moment_approx}
Let Assumption~\ref{ass: kernel} hold, and recall the notations $\kappa_i$, $i=0,1,2$, therein.
Then for $S_i$, $i=0,1,2$, in \eqref{eqn: def_hat_alpha_pm}, we have the following.
\begin{align}
    S_0 &= Tb \cdot \kappa_0 + \cO(1) , \label{eqn: S_0_kappa} \\
    S_1 &= - Tb^2 \kappa_1 + \cO(b) , \label{eqn: S_1_kappa}\\
    S_2 &= Tb^3 \kappa_2 + \cO(b^2) \label{eqn: S_2_kappa} .
\end{align}
where the hidden constants in all $\cO$-terms only depend on the kernel $K$.
\end{lemma}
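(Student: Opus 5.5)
The plan is to reduce each $S_\ell$ to a Riemann sum of the function $f_\ell(x) := x^\ell K(x)$ on $[0,1]$ and to control the discretisation error using the bounded derivative of $K$ from Assumption~\ref{ass: kernel}.

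First I will truncate the sum. Since $K$ is supported on $[-1,1]$, every term with $i > \gamma = bT$ vanishes. For $i \le \gamma \le T$ we have $(-i/T)^\ell = (-1)^\ell T^{-\ell} \gamma^\ell (i/\gamma)^\ell$. This gives
\[
S_\ell = (-1)^\ell \, T^{-\ell} \gamma^{\ell+1} \cdot \frac{1}{\gamma}\sum_{i=1}^{\gamma} f_\ell(i/\gamma), \qquad \ell = 0,1,2 .
\]
The term $i=\gamma$ is $f_\ell(1)$. It is harmless because it is exactly the right endpoint of the Riemann sum.

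Next I will bound the Riemann-sum error. For $\ell \le 2$ the function $f_\ell$ is differentiable on $[0,1]$, with
\[
|f_\ell'| \le \ell\, |K|_\infty + |K'|_\infty =: C_K < \infty .
\]
Hence
\[
\Big| \frac{1}{\gamma}\sum_{i=1}^{\gamma} f_\ell(i/\gamma) - \int_0^1 f_\ell(x)\,dx \Big| \le \sum_{i=1}^{\gamma} \int_{(i-1)/\gamma}^{i/\gamma} |f_\ell(i/\gamma) - f_\ell(x)|\,dx \le \gamma \cdot \frac{C_K}{2\gamma^2} = \frac{C_K}{2\gamma}.
\]
Since $\int_0^1 f_\ell = \kappa_\ell$, this yields $\sum_{i=1}^\gamma f_\ell(i/\gamma) = \gamma \kappa_\ell + \cO(1)$, where the constant depends only on $K$.

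Finally I will substitute back with $\gamma = bT$:
\[
S_\ell = (-1)^\ell T^{-\ell}\big(\gamma^{\ell+1}\kappa_\ell + \cO(\gamma^\ell)\big) = (-1)^\ell\big( T b^{\ell+1}\kappa_\ell + \cO(b^\ell)\big).
\]
Taking $\ell=0,1,2$ gives \eqref{eqn: S_0_kappa}--\eqref{eqn: S_2_kappa}, including the sign of $S_1$.

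There is no real obstacle. The only points needing care are the endpoint term at $i=\gamma$ and the fact that the error constant is uniform in $b$ and $T$; both are handled by the derivative bound above.
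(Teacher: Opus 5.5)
Your proposal is correct and follows essentially the same route as the paper: truncate the sum to $i \le bT$ using the support of $K$, rewrite each $S_\ell$ as a scaled Riemann sum of $x^\ell K(x)$ over $[0,1]$, and bound the discretisation error by $L/(2Tb)$. Here $L$ is a Lipschitz constant depending only on $|K|_\infty$ and $|K'|_\infty$. Your unified treatment through $f_\ell$ and the factorisation $(-1)^\ell T^{-\ell}\gamma^{\ell+1}$ is just a compact version of the paper's three separate computations for $\ell = 0,1,2$.
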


\begin{proof}[Proof of \Cref{lemma: kernel_moment_approx}]
Define first
$L_{\kappa_0}:= \vert K'\vert_{\infty}$, 
$L_{\kappa_1}:= \vert K\vert_{\infty} + \vert K'\vert_{\infty}$, 
and $L_{\kappa_1}:= 2 \vert K\vert_{\infty} + \vert K'\vert_{\infty}$.
Consider $S_0$. From Assumption~\ref{ass: kernel}, $\vert K'\vert_{\infty} <\infty$ and hence $K$ is Lipschitz with constant $L_{\kappa_0}$.
Define $x_i:= i/(Tb)$ for $i\in[Tb]$. 
Then for each interval $[x_{i-1}, x_i]$, the discretization error of the Riemann sum is
\begin{align*}
    \left| \int_{x_{i-1}}^{x_i} K(x) \, \mathrm{d} x - K(x_i) \frac{1}{Tb} \right|
    \leq
    \int_{x_{i-1}}^{x_i} | K(x) - K(x_i)| \, \mathrm{d} x
    \leq
    L_{\kappa_0} \int_{x_{i-1}}^{x_i} (x_i - x) \, \mathrm{d} x
    = 
    \frac{L_{\kappa_0}}{2 (Tb)^2} .
\end{align*}
By the triangle inequality and aggregating the above over $[0,1]$, we have
\begin{align*}
    \left| \int_{0}^{1} K(x) \, \mathrm{d} x - \sum_{i=1}^{Tb} K(x_i) \frac{1}{Tb} \right| 
    \leq
    \sum_{i=1}^{Tb} \left| \int_{x_{i-1}}^{x_i} K(x) \, \mathrm{d} x - K(x_i) \frac{1}{Tb} \right|
    \leq
    \sum_{i=1}^{Tb} \frac{L_{\kappa_0}}{2 (Tb)^2} 
    = 
    \frac{L_{\kappa_0}}{2 T b} .
\end{align*}
Recall $b\to 0$ according to Assumption~\ref{ass: kernel} and hence $\sum_{i=1}^{T} K(x_i)= \sum_{i=1}^{Tb} K(x_i)$ for small enough~$b$ (e.g, $b\leq 1$).
Hence by definition, we have for small enough $b$ that
\begin{align*}
    S_0 &= \sum_{i=1}^T K\Big(\frac{i}{Tb}\Big)
    = \sum_{i=1}^T K(x_i)
    = \sum_{i=1}^{Tb} K(x_i)
    = Tb \left\{ \sum_{i=1}^{Tb} K(x_i) \frac{1}{Tb} \right\} \\
    &=
    Tb \left\{ \int_0^{1} K(x) \, \mathrm{d} x + \left( \sum_{i=1}^{Tb} K(x_i) \frac{1}{Tb} - \int_{0}^{1} K(x) \, \mathrm{d} x \right) \right\} 
    =
    Tb \cdot \kappa_0 + O(1) ,
\end{align*}
thus showing \eqref{eqn: S_0_kappa} (noting that the $\cO$-term depends on the kernel $K$). 

Similarly, note that the function $f_{\kappa_1}(x) := xK(x)$ is Lipschitz on $[0,1]$ due to $f_{\kappa_1}'(x) = K(x) + xK'(x)$ which are uniformly bounded by $\vert K\vert_{\infty}, \vert K'\vert_{\infty} <\infty$ from Assumption~\ref{ass: kernel}, and the fact that $x$ is in a bounded interval. The Lipschitz constant is at most $L_{\kappa_1}$. Thus,
\begin{align*}
    &\;\quad
    \left| \int_{0}^{1} x K(x) \, \mathrm{d} x - \sum_{i=1}^{Tb} x_i K(x_i) \frac{1}{Tb} \right| 
    \leq
    \sum_{i=1}^{Tb} \left| \int_{x_{i-1}}^{x_i} x K(x) \, \mathrm{d} x - x_i K(x_i) \frac{1}{Tb} \right| \\
    &\leq
    \sum_{i=1}^{Tb} \int_{x_{i-1}}^{x_i} \big| x K(x) - x_i K(x_i) \big| \, \mathrm{d} x
    \leq
    \sum_{i=1}^{Tb} L_{\kappa_1} \int_{x_{i-1}}^{x_i} (x_i - x) \, \mathrm{d} x
    = 
    \frac{L_{\kappa_1}}{2 Tb} .
\end{align*}
Then for small enough $b$ we can also show \eqref{eqn: S_1_kappa} by
\begin{align*}
    S_1 &= - \sum_{i=1}^{Tb} \frac{i}{T} \cdot K\Big(\frac{i}{Tb}\Big)
    = - b \sum_{i=1}^{Tb} x_i  K(x_i)
    = - Tb^2 \sum_{i=1}^{Tb} x_i  K(x_i) \frac{1}{Tb} \\
    &=
    - Tb^2 \left( \int_{0}^{1} x K(x) \, \mathrm{d} x + \sum_{i=1}^{Tb} x_i  K(x_i) \frac{1}{Tb} - \int_{0}^{1} x K(x) \, \mathrm{d} x \right)
    =
    - Tb^2 \kappa_1 + \cO(b) .
\end{align*}
Lastly, \eqref{eqn: S_2_kappa} can be shown similarly, with $f_{\kappa_2}(x) := x^2 K(x)$ being Lipschitz on $[0,1]$ from $f_{\kappa_2}'(x) = 2x K(x) + x^2 K'(x)$, thus with constant $L_{\kappa_2}$.
This concludes the proof of this lemma.
\end{proof}

\begin{lemma}\label{lemma: weight_prop}
Under Assumption~\ref{ass: kernel}, there exists some constant 
$C_w \geq 1$
only depending on the kernel $K$ such that:
\begin{itemize}
    \item [(i)] $\sum_{s=1}^{bT} w_s = 1$, and
    $\sum_{s=0}^{bT} s \cdot w_s = \sum_{s=1}^{bT} s \cdot w_s = 0$.
    \item [(ii)] For any $s\in \{0,1,\dots, bT\}$, we have $|w_{s}| \leq C_w /Tb$.
    \item [(iii)] $\sum_{s= 1}^{bT} |w_{s}| s^2/T^2 \leq C_w b^2$.
\end{itemize}
\end{lemma}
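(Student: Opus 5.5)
The plan is to work directly with the definition in \eqref{eq:kernel}, $w_s = K(s/\gamma)(S_2 + S_1 s/T)/(S_2S_0 - S_1^2)$ with $\gamma = bT$. We first record that $K$ vanishes outside $[-1,1]$, so $K(i/\gamma)=0$ for $i>bT$. Hence every sum $\sum_{i=1}^T$ in the definition of $S_\ell$ truncates to $\sum_{i=1}^{bT}$, and $S_\ell = \sum_{s=1}^{bT} (-s/T)^\ell K(s/\gamma)$. In addition $w_0 = K(0) S_2/(S_2S_0-S_1^2)$ contributes nothing to $\sum_s s\, w_s$, which gives the equality of the two sums in (i).

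For (i) we compute directly. First,
\[
\sum_{s=1}^{bT} w_s = \frac{S_2 S_0 + S_1 \sum_{s} (s/T)K(s/\gamma)}{S_2S_0-S_1^2} = \frac{S_2S_0 - S_1^2}{S_2S_0-S_1^2} = 1,
\]
since $\sum_s (s/T)K(s/\gamma) = -S_1$. Next, multiplying by $s/T$,
\[
\sum_s (s/T) w_s = \frac{-S_2 S_1 + S_1 S_2}{S_2S_0-S_1^2} = 0 .
\]
This identity uses $\sum_s (s/T)^2 K(s/\gamma) = S_2$. Multiplying by $T$ then gives the claim. All that (i) requires is that the denominator is nonzero, and this follows from the next step.

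For (ii) and (iii) we invoke Lemma~\ref{lemma: kernel_moment_approx}. It gives
\[
S_2S_0 - S_1^2 = T^2b^4(\kappa_0\kappa_2 - \kappa_1^2) + \cO(Tb^3).
\]
Note $\kappa_0\kappa_2 - \kappa_1^2 > 0$: it is $\ge 0$ by Cauchy--Schwarz, and it is nonzero by Assumption~\ref{ass: kernel}. Since $bT\to\infty$, the denominator is bounded below by $c\,T^2b^4$ for $T$ large. For the numerator with $0\le s\le bT$ we have $s/T\le b$, so
\[
|S_2 + S_1 s/T| \le |S_2| + b|S_1| \lesssim Tb^3 .
\]
Combined with $|K|_\infty<\infty$, this gives $|w_s| \lesssim Tb^3/(T^2b^4) = 1/(Tb)$, which is (ii). Part (iii) then follows by summing:
\[
\sum_{s=1}^{bT} |w_s| s^2/T^2 \le bT \cdot \frac{C_w}{Tb}\cdot b^2 = C_w b^2 .
\]
We take $C_w$ to be the maximum of the two constants and $1$. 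These constants depend only on $\kappa_i$ and $|K|_\infty$, that is, only on $K$.

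The main obstacle is the lower bound on $S_2S_0-S_1^2$. Its leading term is of order $T^2b^4$, while the cross error terms from Lemma~\ref{lemma: kernel_moment_approx} are of order $Tb^3$, $Tb^4$ and $b^4$. These are of smaller order precisely because $bT\to\infty$, so the bounds hold for all sufficiently large $T$. Small $T$ can be absorbed by enlarging $C_w$ if needed.
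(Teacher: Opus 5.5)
Your proposal is correct and follows essentially the same route as the paper: truncate the sums defining $S_\ell$ using the support of $K$, verify (i) by direct algebra, use Lemma~\ref{lemma: kernel_moment_approx} to bound the denominator below by a multiple of $T^2b^4$ and the numerator above by a multiple of $Tb^3$ (giving $|w_s|\lesssim 1/(Tb)$), and sum for (iii). Your explicit Cauchy--Schwarz check that $\kappa_0\kappa_2-\kappa_1^2>0$, which the paper uses only implicitly, is a worthwhile addition; note that the error terms in $S_2S_0-S_1^2$ are $\cO(Tb^3)+\cO(b^2)$ rather than the orders listed in your last paragraph, but these are still $o(T^2b^4)$ because $Tb\to\infty$, so the conclusion is unaffected.
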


\begin{proof}[Proof of \Cref{lemma: weight_prop}]
Consider part~(i) first. By definitions, we have
\begin{align*}
    \sum_{s=1}^{bT} w_s &= 
    \sum_{s=1}^{bT} \frac{K\{s/(Tb)\} (S_2 +S_1 s/T)}{S_2 S_0 - S_1^2} 
    =
    \frac{S_2}{S_2 S_0 - S_1^2} \sum_{s=1}^{bT} K\left(\frac{s}{Tb}\right) 
    + \frac{S_1}{S_2 S_0 - S_1^2} \sum_{s=1}^{bT} K\left(\frac{s}{Tb}\right) \frac{s}{T} \\
    &=
    \frac{S_2}{S_2 S_0 - S_1^2} \sum_{s=1}^{T} K\left(\frac{s}{Tb}\right) 
    + \frac{S_1}{S_2 S_0 - S_1^2} \sum_{s=1}^{T} K\left(\frac{s}{Tb}\right) \frac{s}{T} \\
    &=
    \frac{S_2}{S_2 S_0 - S_1^2} \cdot S_0
    + \frac{S_1}{S_2 S_0 - S_1^2} \cdot (-S_1)
    = 1 ,
\end{align*}
where the third equality used the fact that $b\to 0$ from Assumption~\ref{ass: kernel} (in fact, we let $b\in(0,1)$ directly) and the kernel $K$ has support $[-1,1]$.
Similarly,
\begin{align*}
    \sum_{s=1}^{bT} w_s \cdot \frac{s}{T} &= 
    \frac{S_2}{S_2 S_0 - S_1^2} \sum_{s=1}^{T} K\left(\frac{s}{Tb}\right) \frac{s}{T} 
    + \frac{S_1}{S_2 S_0 - S_1^2} \sum_{s=1}^{T} K\left(\frac{s}{Tb}\right) \frac{s^2}{T^2} \\
    &=
    \frac{S_2}{S_2 S_0 - S_1^2} \cdot (-S_1)
    + \frac{S_1}{S_2 S_0 - S_1^2} \cdot S_2
    = 0 .
\end{align*}
This completes the proof of part~(i).

For part~(ii), note that $Tb\to \infty$ by Assumption~\ref{ass: kernel}, and hence Lemma~\ref{lemma: kernel_moment_approx} gives
\[
S_0 \asymp Tb \cdot \kappa_0 , \quad
S_1 \asymp - Tb^2 \cdot \kappa_1 , \quad
S_2 \asymp Tb^3 \cdot \kappa_2 .
\]
From Assumption~\ref{ass: kernel}, there exists some constant $C >0$ that only depends on the kernel $K$, such that $\vert K\vert_{\infty} \leq C$. 
Therefore, by definition we have
\begin{align*}
    |w_{0}| &= \left| \frac{S_2}{S_2 S_0 - S_1^2} \cdot K(0) \right|
    \leq
    \vert K\vert_{\infty} \left| \frac{S_2}{S_2 S_0 - S_1^2} \right|
    \leq
    C \left| \frac{Tb^3 \kappa_2}{T^2 b^4 \kappa_2 \kappa_0 - T^2b^4 \kappa_1^2} \right|
    = \frac{1}{Tb} \frac{C\kappa_2}{\kappa_2 \kappa_0 - \kappa_1^2} .
\end{align*}
Noting that $|S_1 s/T| \leq |S_1 b|$ for any $s\leq Tb$, by analogous arguments, the above result also holds for $|w_{s}|$ for any $s\in[Tb]$.
Set $C_w = C\kappa_2/(\kappa_2 \kappa_0 - \kappa_1^2)$ finishes this proof of part~(ii).

Lastly, by $s/T \leq b$ for any $s\leq Tb$ again, we have
\begin{align*}
    \sum_{s= 1}^{bT} |w_{s}| \cdot \frac{s^2}{T^2}
    &\leq
    \left( \max_{s\in[bT]} |w_{s}| \right)  \sum_{s= 1}^{bT} b^2 
    \leq
    \frac{C_w}{Tb} \cdot bT \cdot b^2 = C_w b^2 ,
\end{align*}
thereby completing the proof of this lemma, where note that $C_w \geq 1$ by (i) and (ii).
\end{proof}

\begin{lemma}\label{lemma: smooth_trend}
Let Assumptions~\ref{ass: trend_fun} and \ref{ass: kernel} hold.
Then we have
\begin{align*}
    & \max_{t\in[bT +1, T-bT]}
    \left\| \sum_{u= t+1}^{t+bT} w_{u-t} \bmu_{\alpha,u} - \sum_{u= t-bT}^{t-1} w_{t-u} \bmu_{\alpha,u} \right\|_{\infty} = 
    \cO\left( b^2 \right) , 
\end{align*}
where the $\cO$-term has hidden constants only depending on $\bmu_\alpha$ and $K$.
Note that if the trend $\{\bmu_{\alpha,t}\}$ is at a constant level (so that the row main effects are piecewise-constant), then the right-hand side above is exactly 0.
\end{lemma}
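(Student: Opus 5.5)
The bound is proved coordinatewise, by a second-order Taylor expansion of each trend component, using the moment identities of the local linear weights in Lemma~\ref{lemma: weight_prop}. Fix $t\in[bT+1,T-bT]$ and a coordinate $i\in[p]$, and write $g:=(\bmu_\alpha)_i$, so that $(\bmu_{\alpha,u})_i=g(u/T)$. Substituting $u=t+s$ in the first sum and $u=t-s$ in the second, the $i$th coordinate of the quantity inside the norm equals
\[
\sum_{s=1}^{bT} w_s \big\{ g((t+s)/T) - g((t-s)/T) \big\}.
\]
Both sums have the same weights, and all arguments $(t\pm s)/T$ lie in $[0,1]$ for $t$ in the stated range.

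Next, Taylor-expand around $x_t:=t/T$. Since $g\in\mathcal{C}^2[0,1]$ by Assumption~\ref{ass: trend_fun}, we have
\[
g(x_t\pm s/T)=g(x_t)\pm g'(x_t)\,s/T + r_{\pm}(s), \qquad |r_\pm(s)|\le \tfrac12 |g''|_\infty\, s^2/T^2 .
\]
The zeroth-order terms $g(x_t)$ cancel in the difference. The first-order terms combine to $2g'(x_t)\sum_{s=1}^{bT} w_s\, s/T$, which vanishes exactly by Lemma~\ref{lemma: weight_prop}~(i). So only the remainders are left, and they satisfy
\[
\Big|\sum_{s=1}^{bT} w_s\{r_+(s)-r_-(s)\}\Big| \le |g''|_\infty \sum_{s=1}^{bT}|w_s|\,\frac{s^2}{T^2}\le c\,C_w\, b^2,
\]
using Lemma~\ref{lemma: weight_prop}~(iii) together with the uniform bound $|g''|_\infty\le c$.

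The constant $cC_w$ depends neither on $i$ nor on $t$: $c$ comes from Assumption~\ref{ass: trend_fun}, which is uniform over $i\in[p]$, and $C_w$ depends only on $K$. Taking the maximum over $i$ and then over $t$ therefore gives the $\cO(b^2)$ bound in the max norm. In the constant-trend case, $g'\equiv g''\equiv 0$, so every term vanishes and the difference is exactly zero.

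The argument is routine. The only points needing care are confirming that the symmetric construction lets the constant terms cancel, and that the first-moment identity $\sum_s s\,w_s=0$ (which relies on $b<1$ and on the support of $K$, as in the proof of Lemma~\ref{lemma: weight_prop}) kills the linear term. In fact, the symmetric construction already cancels the linear term to first order without that identity: the linear terms add to $2g'(x_t)\sum_s w_s s/T$, which is only $\cO(b)$ by itself, so the identity is genuinely needed to obtain the $b^2$ rate.
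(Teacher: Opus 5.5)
Your proof is correct and follows the paper's argument. Both expand each coordinate to second order about $x_t=t/T$, let the zeroth-order terms cancel, remove the linear term $2g'(x_t)\sum_s w_s s/T$ using Lemma~\ref{lemma: weight_prop}~(i), and bound the remainder by $c\,C_w b^2$ using Lemma~\ref{lemma: weight_prop}~(iii); the paper also centres $\bmu_\alpha''$ at $x_t$ before bounding, which changes nothing. One small fix: the first clause of your last paragraph is wrong and contradicts the rest of that paragraph, since the symmetric construction does not cancel the linear term (the two first-order terms add), so it is exactly the moment identity that removes it.
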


\begin{proof}[Proof of \Cref{lemma: smooth_trend}]
Fix any $t\in[bT +1, T-bT]$, denote by $x_t:=t/T$ and we may rewrite
\begin{align}
    &\;\quad
    \sum_{u= t+1}^{t+bT} w_{u-t} \bmu_{\alpha,u} - \sum_{u= t-bT}^{t-1} w_{t-u} \bmu_{\alpha,u} =
    \sum_{u= t+1}^{t+bT} w_{u-t} \bmu_{\alpha}\left( \frac{u}{T} \right) - \sum_{u= t-bT}^{t-1} w_{t-u} \bmu_{\alpha}\left( \frac{u}{T} \right) \notag \\
    &=
    \sum_{s= 1}^{bT} w_{s} \bmu_{\alpha}\left( \frac{t+s}{T} \right) - \sum_{s= 1}^{bT} w_{s} \bmu_{\alpha}\left( \frac{t-s}{T} \right) 
    =
    \sum_{s= 1}^{bT} w_{s} \bmu_{\alpha}\left( x_t + \frac{s}{T} \right) - \sum_{s= 1}^{bT} w_{s} \bmu_{\alpha}\left( x_t - \frac{s}{T} \right) .
    \label{eqn: smooth_trend_decomp}
\end{align}

Consider $\bmu_{\alpha}(x_t + s/T)$. From Assumption~\ref{ass: trend_fun}, we may apply Taylor's theorem and expand each component to the second order:
\begin{align*}
    \bmu_{\alpha}\left( x_t + \frac{s}{T} \right) &=
    \bmu_{\alpha}(x_t) + \bmu_{\alpha}'(x_t) \frac{s}{T} + \frac{1}{2} \bmu_{\alpha}''(\xi_s^+) \frac{s^2}{T^2} \\
    &=
    \bmu_{\alpha}(x_t) + \bmu_{\alpha}'(x_t) \frac{s}{T} + \frac{1}{2} \bmu_{\alpha}''(x_t) \frac{s^2}{T^2} + \frac{s^2}{2 T^2} \left\{ \bmu_{\alpha}''(\xi_s^+) - \bmu_{\alpha}''(x_t) \right\} ,
\end{align*}
where $\xi_s^+ \in [x_t, x_t+ s/T]$. Similarly, for $\bmu_{\alpha}(x_t - s/T)$ we have
\begin{align*}
    \bmu_{\alpha}\left(x_t - \frac{s}{T} \right) &=
    \bmu_{\alpha}(x_t) - \bmu_{\alpha}'(x_t) \frac{s}{T} + \frac{1}{2} \bmu_{\alpha}''(\xi_s^-) \frac{s^2}{T^2} \\
    &=
    \bmu_{\alpha}(x_t) - \bmu_{\alpha}'(x_t) \frac{s}{T} + \frac{1}{2} \bmu_{\alpha}''(x_t) \frac{s^2}{T^2} + \frac{s^2}{2 T^2} \left\{ \bmu_{\alpha}''(\xi_s^-) - \bmu_{\alpha}''(x_t) \right\} ,
\end{align*}
where $\xi_s^- \in [x_t- s/T, x_t]$.
Altogether, plugging the above into \eqref{eqn: smooth_trend_decomp}, we have
\begin{align}
    &\;\quad
    \sum_{u= t +1}^{t+bT} w_{u-t} \bmu_{\alpha,u} - \sum_{u= t-bT}^{t-1} w_{t-u} \bmu_{\alpha,u} =
    \sum_{s= 1}^{bT} w_{s} \bmu_{\alpha}\left( x_t + \frac{s}{T} \right) - \sum_{s= 1}^{bT} w_{s} \bmu_{\alpha}\left( x_t - \frac{s}{T} \right) \notag \\
    &=
    \bmu_{\alpha}'(x_t) \sum_{s=1}^{bT} w_s \frac{s}{T}
    + \bmu_{\alpha}'(x_t) \sum_{s=1}^{bT} w_s \frac{s}{T}
    + \frac{1}{2} \bmu_{\alpha}''(x_t) \sum_{s=1}^{bT} w_s \frac{s}{T} 
    - \frac{1}{2} \bmu_{\alpha}''(x_t) \sum_{s=1}^{bT} w_s \frac{s}{T} \notag \\
    &\;\quad
    + \sum_{s= 1}^{bT} \frac{s^2 w_{s}}{2T^2}  \left\{ \bmu_{\alpha}''(\xi_s^+) - \bmu_{\alpha}''(x_t) \right\} 
    - \sum_{s= 1}^{bT} \frac{s^2 w_{s}}{2T^2} \left\{ \bmu_{\alpha}''(\xi_s^-) - \bmu_{\alpha}''(x_t) \right\} \notag \\
    &=
    \sum_{s= 1}^{bT} \frac{s^2 w_{s}}{2T^2} \left\{ \bmu_{\alpha}''(\xi_s^+) - \bmu_{\alpha}''(\xi_s^-) \right\} ,
    \label{eqn: smooth_trend_final_decomp}
\end{align}
where the last equality used Lemma~\ref{lemma: weight_prop}~(i) and some straightforward computation.

By $|(\bmu_{\alpha})_i^{''}|_\infty \leq c$, $i\in[p]$, from Assumption~\ref{ass: trend_fun}, we have
\begin{align}
    &\;\quad
    \left\| \sum_{s= 1}^{bT} \frac{s^2 w_{s}}{2T^2} \left\{ \bmu_{\alpha}''(\xi_s^+) - \bmu_{\alpha}''(\xi_s^-) \right\} \right\|_\infty \notag \\
    &\leq
    \sum_{s= 1}^{bT} \frac{s^2}{2T^2} \cdot |w_{s}| \cdot \left\|  \bmu_{\alpha}''(\xi_s^+) - \bmu_{\alpha}''(\xi_s^-) \right\|_\infty
    \leq
    \frac{1}{2} C_w b^2 \cdot 2c 
    = C_w c \cdot b^2 ,
    \label{eqn: smooth_trend_final_decomp_2}
\end{align}
where the second inequality used Lemma~\ref{lemma: weight_prop}~(iii).
We conclude the proof of this lemma by applying triangle inequality with \eqref{eqn: smooth_trend_final_decomp} and \eqref{eqn: smooth_trend_final_decomp_2}.
\end{proof}

\begin{lemma}\label{lemma: smooth_E_entry_bound}
Let Assumptions~\ref{ass: noise} and \ref{ass: kernel} hold, with $\nu=\nu_e\geq 4$ and $\eta =\eta_e > 1/2 - 1/\nu$.
For any $i\in[p]$, $t\in[bT+1, T-bT]$, 
when $a \geq a_0 \sqrt{q/(Tb)}$ for some constant $a_0>0$, we have
\begin{align*}
    \P\left(\left| \sum_{j=1}^q \left( \sum_{u= t+1}^{t+bT} w_{u-t} E_{u,ij}  - \sum_{u= t-bT}^{t-1} w_{t-u} E_{u,ij} \right) \right| > a \right)
    &\leq
    \frac{c_{e,1} (Tb)^{1-\nu} q}{a^\nu} + 8 \exp\left( 1- \frac{c_{e,2} a^2 Tb}{q} \right) ,
\end{align*}
where $a_0$ satisfies that $c_{e,2} a^2 Tb /q \geq 1$ for all $a\geq a_0$, and $c_{e,1}$ and $c_{e,2}$ are some positive constants that only depending on the kernel and the distribution of $\{\bE_t\}_{t \in \mathbb{Z}}$.
\end{lemma}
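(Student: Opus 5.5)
Fix $i\in[p]$ and $t\in[bT+1,T-bT]$. The plan is to rewrite the quantity inside the probability as a single weighted sum of the independent innovations underlying $\bE_t$, and then apply a Fuk--Nagaev-type inequality for linear combinations of independent variables with finite $\nu$th moments; see, e.g., the versions used for linear processes in \cite{ZhangWu2021}. Define the signed weight $\tilde w_u := w_{u-t}\mathbb{I}\{t<u\le t+bT\} - w_{t-u}\mathbb{I}\{t-bT\le u<t\}$. The target then becomes $S := \sum_{u}\tilde w_u \sum_{j=1}^q E_{u,ij}$. By Lemma~\ref{lemma: weight_prop}~(ii), $|\tilde w_u|\le C_w/(Tb)$ and $\tilde w_u$ is supported on at most $2bT$ indices, so $\sum_u \tilde w_u^2 \lesssim 1/(Tb)$.

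Using Assumption~\ref{ass: noise}, I would write $\sum_{j}E_{u,ij} = \bA_{e,r,i\cdot}^\trans \bF_{e,u}\bA_{e,c}^\trans\1_q + \sum_j \Sigma_{e,ij}E_{e,u,ij}$. I then expand both $\bF_{e,u}$ and $\bE_{e,u}$ via their $\cG(\nu,\eta)$ representations $\sum_{h\ge0}a_h \bX_{u-h}$. Exchanging sums gives
\[
S=\sum_{s\in\Z}\Big\{\sum_{k,l} c^{(1)}_{s,kl} X_{e,s,kl} + \sum_{j} c^{(2)}_{s,j} X_{\epsilon,s,ij}\Big\},
\]
where $c^{(1)}_{s,kl}=(\bA_{e,r})_{ik}(\bA_{e,c}^\trans\1_q)_l \sum_u \tilde w_u a_{e,u-s}$ and $c^{(2)}_{s,j}=\Sigma_{e,ij}\sum_u\tilde w_u a_{\epsilon,u-s}$. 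This is a sum of independent mean-zero variables with uniformly bounded $\nu$th moments.

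The key step is to bound the coefficients. For the variance, Young's inequality together with $\sum_h|a_h|\le c$ gives $\sum_s(\sum_u\tilde w_u a_{u-s})^2\lesssim \sum_u\tilde w_u^2\lesssim 1/(Tb)$. Combined with the bounded entries, the $\ell_1$ bound $\|\bA_{e,c}\|_1=\cO(1)$ (so $\|\bA_{e,c}^\trans\1_q\|^2\lesssim q$) and the sparsity of $\bA_{e,r,i\cdot}$, this yields a total variance proxy $\sigma^2\lesssim q/(Tb)$. For the $\nu$th-moment term I need $\sum_s\sum(\text{coef})^\nu$. Each coefficient is at most $C/(Tb)$. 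The number of effectively nonzero $s$ is $\cO(Tb)$ inside the window, and the tail $s$ far before the window is controlled through condition~(iii) of $\cG(\nu,\eta)$. There, $\sum_{m}(m^{-\eta})^\nu$-type sums converge precisely because $\eta>1/2-1/\nu$; this is the main obstacle, and I would handle it by splitting $s$ into $s\ge t-2bT$ and $s<t-2bT$ and bounding $|\sum_u\tilde w_u a_{u-s}|\le (C/Tb)\sum_{h\ge t-2bT-s}|a_h|\lesssim (Tb)^{-1}(t-2bT-s+1)^{-\eta}$. A possible subtlety is that $\eta\nu>1$ may be needed for summability. Should that fail, I would instead use the Burkholder/Nagaev inequality for functional dependence measures of \cite{ZhangWu2021}, whose dependence-adjusted norm requires exactly $\eta>1/2-1/\nu$. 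Either route gives $\sum(\text{coef})^\nu\lesssim q\,(Tb)\cdot(Tb)^{-\nu}=q(Tb)^{1-\nu}$, with the factor $q$ coming from the $j$-sum and from the bounded column sums.

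The Fuk--Nagaev inequality then gives $\P(|S|>a)\le c_{e,1}q(Tb)^{1-\nu}a^{-\nu}+2\exp(-c\,a^2/\sigma^2)$, with $\sigma^2\lesssim q/(Tb)$. The constant factor $8\exp(1-\cdot)$ absorbs the splitting into the two noise components (one union bound with $a/2$ each) and the form of the inequality in \cite{ZhangWu2021}. Finally, the restriction $a\ge a_0\sqrt{q/(Tb)}$ makes $c_{e,2}a^2Tb/q\ge1$, which is the regime in which that inequality is stated.
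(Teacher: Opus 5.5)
Your proposal is correct, but it takes a different route from the paper.

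\textbf{The paper's route.} The paper never expands into innovations. It splits $E_{u,ij}$ into the $\bF_{e}$-part and the $\bE_{e}$-part by a union bound at level $a/2$. It then treats the rescaled summands $Tb\,\tilde w_{u}\Sigma_{e,ij}E_{e,u,ij}$ as a single dependent sequence, flattening the double index $(j,u)$ into one index. It bounds their $\nu$th moments, and bounds their functional dependence measure by $|a_{\epsilon,m}|$, whose tail sum is $\cO(m^{-\eta})$. Finally it applies the Nagaev-type inequality in Theorem~2(ii) of \cite{Liuetal2013}, extended to non-stationarity. That theorem is the source of the hypothesis $\eta>1/2-1/\nu$, of the lower bound $a\ge a_0\sqrt{q/(Tb)}$, and of the factor $8\exp(1-\cdot)$.

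\textbf{Your route.} You use the linear structure to write the whole sum as one weighted sum of independent innovations, and then apply the classical Fuk--Nagaev inequality for independent summands. Your coefficient bounds are right: Young's inequality gives the variance proxy $\lesssim q/(Tb)$, and each coefficient satisfies $|\sum_u\tilde w_u a_{u-s}|\le C_wc/(Tb)$.

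\textbf{What each buys.} Your argument is more elementary. It avoids the awkward flattening of $(j,u)$ and the non-stationary extension of \cite{Liuetal2013}. It gives the bound for every $a>0$, since $2\exp(-y)\le 8\exp(1-y)$, and it needs neither condition~(iii) of $\cG(\nu,\eta)$ nor $\eta>1/2-1/\nu$. The paper's argument uses only a dependence-measure bound, so it would extend to nonlinear dependent noise. It also matches the Wu-type machinery used in the paper's other lemmas.

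Your bounded $\ell_2$ norm for $\bA_{e,r,i\cdot}$ is the same implicit step the paper takes for the $\bF_e$-part. It holds, for example, when $\bA_{e,r}$ has a fixed number of columns.

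\textbf{Hedges you can drop.} Your ``main obstacle'' is not an obstacle. Young's inequality $\|\tilde w * a\|_\nu\le\|\tilde w\|_\nu\|a\|_1$ handles the tail in $s$ directly. So does the simpler bound
\[
\sum_{s}\Big|\sum_u\tilde w_u a_{u-s}\Big|^\nu\le\Big(\max_s\Big|\sum_u\tilde w_u a_{u-s}\Big|\Big)^{\nu-1}\sum_s\sum_u|\tilde w_u||a_{u-s}|\le\Big(\frac{C_wc}{Tb}\Big)^{\nu-1}\cdot 2C_w c\lesssim (Tb)^{1-\nu}.
\]
If you keep your splitting instead, the worry about $\eta\nu>1$ is unfounded: under the hypotheses, $\eta\nu>\nu/2-1\ge 1$. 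The summability never fails, so the fallback to \cite{ZhangWu2021} is unnecessary.

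The inequality you actually need is the classical independent-summand Fuk--Nagaev bound, not a linear-process version. It does not require $a\ge a_0\sqrt{q/(Tb)}$; that restriction is an artefact of the inequality in \cite{Liuetal2013}.

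The only remaining technicality is that the sum over $s\in\Z$ is infinite. Apply the inequality to finite truncations, whose bounds are uniform in the truncation level, and pass to the limit using $L^2$ convergence.
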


\begin{proof}[Proof of \Cref{lemma: smooth_E_entry_bound}]
Fix any $i\in[p]$, $t\in[bT+1, T-bT]$, and $a>0$. Define $\wt{w}_{i} := w_i$ for $i> 0$, $\wt{w}_{i} := -w_{-i}$ for $i< 0$, and $\wt{w}_{0} := 0$.
Then Assumption~\ref{ass: noise} gives
\begin{align}
    &\;\quad
    \P\left(\left| \sum_{j=1}^q \left( \sum_{u= t+1}^{t+bT} w_{u-t} E_{u,ij}  - \sum_{u= t-bT}^{t-1} w_{t-u} E_{u,ij} \right) \right| > a \right) 
    =
    \P\left( \left| \sum_{j=1}^q \sum_{u= t-bT}^{t+bT} \wt{w}_{u-t} E_{u,ij} \right| > a \right) \notag \\
    &=
    \P\left(\left| \sum_{j=1}^q \sum_{u= t-bT}^{t+bT} \wt{w}_{u-t} \left( \bA_{e,r, i\cdot}^\trans \bF_{e,u} \bA_{e,c, j\cdot} + \Sigma_{e, ij} E_{e,u, ij} \right) \right| > a \right) \notag \\
    &\leq
    \P\left( \left| \sum_{j=1}^q \sum_{u= t-bT}^{t+bT} \wt{w}_{u-t} \bA_{e,r, i\cdot}^\trans \bF_{e,u} \bA_{e,c, j\cdot} \right| > a/2 \right)
    + \P\left( \left| \sum_{j=1}^q \sum_{u= t-bT}^{t+bT} \wt{w}_{u-t} \Sigma_{e, ij} E_{e,u, ij} \right| > a/2 \right) \notag \\
    &=:
    \cI_{E,it} + \cII_{E,it} .
    \label{eqn: smooth_E_entry_decomp}
\end{align}


Consider $\cII_{E,it}$. From Assumption~\ref{ass: noise}, $E_{e,u, ij} = \sum_{h\geq 0} a_{\epsilon,h} X_{\epsilon,u-h, ij}$ with $X_{\epsilon,t, ij}$ i.i.d.\ over $t,i,j$.
This allows us to apply a version of Theorem~2~(ii) in \cite{Liuetal2013} extended to non-stationarity as discussed in their Section~4. 
To this end, define the process $\{\wt{E}_z\}_{z\in[q(2bT+1)]}$ such that $\wt{E}_z := Tb \cdot \wt{w}_{u-t} \Sigma_{e, ij} \sum_{h\geq 0} a_{\epsilon,h} X_{\epsilon,u-h, ij}$,
with $z = (j-1)q + u - (t - bT) + 1$ for each multi-index $(j,u) \in [q] \times [t-bT, t+bT]$.
Then we have
\[
\sum_{z=1}^{q(2bT+1)} \wt{E}_z = Tb \left( \sum_{j=1}^q \sum_{u= t-bT}^{t+bT} \wt{w}_{u-t} \Sigma_{e, ij} E_{e,u, ij} \right) .
\]
First, with $\nu$ from Assumption~\ref{ass: noise}~(i), we verify that for any $z\in[q(2bT+1)]$ (let $(j,u)$ be the multi-index corresponding to this index $z$):
\begin{align*}
    \E(|\wt{E}_z|^\nu) &=
    (Tb)^\nu \cdot \E\left( \left| \wt{w}_{u-t} \Sigma_{e,ij} \sum_{h\geq 0} a_{\epsilon,h} X_{\epsilon,u-h,ij} \right|^\nu \right)
    \leq
    (Tb)^\nu \cdot \left( \frac{C_w}{Tb} \right)^\nu C_{\Sigma,e}^\nu c^\nu C_e < \infty ,
\end{align*}
where $C_w$ is from Lemma~\ref{lemma: weight_prop}~(ii), $C_e >0$ denotes the uniform upper bound of the $\nu$th moment of $X_{e,u, ij}$ due to Assumption~\ref{ass: noise}~(i), $C_{\Sigma,e}>0$ is the uniform upper bound of elements in $\bSigma_{e}$, and $c$ is some absolute constant due to Assumption~\ref{ass: noise}.
Similarly, for any $m\geq 0$, noting that $\wt{E}_{z_1}$ and $\wt{E}_{z_2}$ would be independent for any $|z_1-z_2|\geq 2bT+1$, we compute the uniform functional dependence measure of $\{\wt{E}_z\}_{z\in[q(2bT+1)]}$ as (using notations in \cite{Liuetal2013}, let $p$ be our $\nu$ in Assumption~\ref{ass: noise}):
\begin{align*}
    &\;\quad
    \theta_{m,\nu}^* = \mathbbm{1}\{m<2bT+1\} \cdot \max_{u\in[t-bT, t+bT], j\in[q]} \Big\| Tb \cdot \wt{w}_{u-t} \Sigma_{e, ij} a_{\epsilon,m} X_{\epsilon,u-m, ij} \Big\|_\nu \\
    &\leq
    \mathbbm{1}\{m<2bT+1\} \cdot Tb \cdot \frac{C_w}{Tb} \cdot C_{\Sigma,e} C_e^{1/\nu} |a_{\epsilon,m}| 
    =
    \mathbbm{1}\{m<2bT+1\} \cdot C_w C_{\Sigma,e} C_e^{1/\nu} |a_{\epsilon,m}| ,
\end{align*}
and thus the corresponding tail sum $\Theta_{m,\nu}^* \leq C_w C_{\Sigma,e} C_e^{1/\nu} \sum_{\ell=m}^\infty |a_{\epsilon,\ell}|$.
By Assumption~\ref{ass: noise}, we have $\Theta_{m,\nu}^* = \cO(m^{-\eta})$. Given that $\eta > 1/2 - 1/\nu$ from the condition of $\eta$ in the statement of this lemma, we may apply Theorem~2~(ii) of \cite{Liuetal2013} with $n=q(2bT+1)$ therein.
Namely, with $x=Tb(a/2)$ from \eqref{eqn: smooth_E_entry_decomp}, there exists some positive constants $C_1$ and $C_2$ such that under $x \geq \sqrt{q Tb}\, \Theta_{0,\nu}^* /C_2$ 
(which can be satisfied given $a \geq a_0 \sqrt{q/Tb}$ for some constant $a_0>0$), 
we have
\begin{align}
    &\;\quad
    \P\left( \left| \sum_{z=1}^{q(2bT+1)} \wt{E}_{z} \right| > x \right)
    \leq
    \frac{C_1 c^\nu qTb}{x^\nu} + 4 \exp\left( 1- \frac{C_2^2 x^2}{c^2 qTb} \right) \cdot \sum_{\ell=1}^\infty \exp\left( -\ell^{1-2/\nu} \right) \notag \\
    &\leq
    \frac{C_1 c^\nu qTb}{x^\nu} + 4 \exp\left( 1- \frac{C_2^2 x^2}{c^2 qTb} \right) \cdot
    \int_{0}^{\infty} \exp\left( -\ell^{1-2/\nu} \right) \mathrm{d}\ell \notag \\
    &=
    \frac{C_1 c^\nu qTb}{x^\nu} + 4 \exp\left( 1- \frac{C_2^2 x^2}{c^2 qTb} \right) \cdot
    \frac{\nu}{\nu-2} \Gamma\left( \frac{\nu}{\nu-2} \right)
    \leq
    \frac{C_1 c^\nu qTb}{x^\nu} + 8 \exp\left( 1- \frac{C_2^2 x^2}{c^2 qTb} \right) ,
    \label{eqn: sum_tilde_Ez_concen}
\end{align}
where $\Gamma(\cdot)$ is the Gamma function, and the last inequality used the fact that $\nu/(\nu-2) \in [1,2]$ for $\nu\geq 4$ and hence also $\Gamma(\nu/(\nu-2)) \leq 1$.
The term $\cI_{E,it}$ in \eqref{eqn: smooth_E_entry_decomp} can be bounded analogously.
This concludes the proof of this lemma.
\end{proof}

\begin{lemma}\label{lemma: smooth_E_vec1_bound}
Let all assumptions in Lemma~\ref{lemma: smooth_E_entry_bound} hold.
Then we have
\begin{align*}
    &\;\quad
    \max_{t\in[bT+1, T-bT]} \left\| \frac{1}{q} \left(\sum_{u= t +1}^{t+bT} w_{u-t} \bE_u  - \sum_{u= t-bT}^{t-1} w_{t-u} \bE_u \right) \1_q \right\|_\infty \\
    &=
    \cO_P\left( (pT)^{1/\nu} (Tb)^{(1-\nu)/\nu} q^{(1-\nu)/\nu} + \sqrt{\frac{\log(pT)}{qTb}} \right) .
\end{align*}
\end{lemma}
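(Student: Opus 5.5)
The plan is a union bound over all $(i,t)\in[p]\times[bT+1,T-bT]$, with each entry controlled by the tail bound of Lemma~\ref{lemma: smooth_E_entry_bound}. For fixed $i$ and $t$, the $i$th entry of the vector in the statement is $q^{-1}\sum_{j=1}^q\big(\sum_{u=t+1}^{t+bT} w_{u-t}E_{u,ij}-\sum_{u=t-bT}^{t-1}w_{t-u}E_{u,ij}\big)$. The event that this exceeds $x$ in absolute value is the event in Lemma~\ref{lemma: smooth_E_entry_bound} with $a=qx$. Provided $qx\ge a_0\sqrt{q/(Tb)}$, i.e.\ $x\ge a_0/\sqrt{qTb}$, the lemma applies. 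Summing over at most $pT$ pairs then gives
\begin{align*}
\P\Big(\max_{i,t}|\cdot|>x\Big)\le pT\Big\{\frac{c_{e,1}(Tb)^{1-\nu}q}{(qx)^\nu}+8\exp\Big(1-\frac{c_{e,2}q x^2 Tb}{1}\Big)\Big\}.
\end{align*}
In the exponent, $(qx)^2Tb/q=qx^2Tb$.

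Next, choose $x=M\{x_1+x_2\}$ with $x_1=(pT)^{1/\nu}(Tb)^{(1-\nu)/\nu}q^{(1-\nu)/\nu}$ and $x_2=\sqrt{\log(pT)/(qTb)}$, where $M$ is a large constant.

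\emph{Polynomial term.} Since $x\ge Mx_1$, we have $(qx)^\nu\ge M^\nu q^\nu\, pT\,(Tb)^{1-\nu}q^{1-\nu}=M^\nu\, pT\,(Tb)^{1-\nu}q$. Hence $pT\,c_{e,1}(Tb)^{1-\nu}q/(qx)^\nu\le c_{e,1}M^{-\nu}$.

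\emph{Exponential term.} Since $x\ge Mx_2$, we have $qx^2Tb\ge M^2\log(pT)$. Hence $8e\,pT\exp(-c_{e,2}M^2\log(pT))\le 8e\,(pT)^{1-c_{e,2}M^2}$, which tends to zero once $c_{e,2}M^2>1$.

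\emph{Admissibility.} The side condition $x\ge a_0/\sqrt{qTb}$ holds automatically for large $pT$, because $x_2\sqrt{qTb}=\sqrt{\log(pT)}\to\infty$. This also covers the requirement in Lemma~\ref{lemma: smooth_E_entry_bound} that $c_{e,2}a^2Tb/q\ge1$.

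Letting $M\to\infty$ makes the whole probability arbitrarily small uniformly in the dimensions, which is exactly the $\cO_P$ statement.

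The only real care needed is bookkeeping. First, the constants $c_{e,1},c_{e,2}$ must be uniform in $i$ and $t$; this holds because the bounds in the proof of Lemma~\ref{lemma: smooth_E_entry_bound} depend only on $C_w$, the entrywise bounds on $\bSigma_e$, $\bA_{e,r}$ and $\bA_{e,c}$, and the moment constants. Second, the lemma's threshold must be rescaled correctly from $a$ to $qx$ when matching the $1/q$ normalisation. I expect no genuine obstacle beyond verifying these uniformity and scaling points.
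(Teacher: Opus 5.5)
Your proposal is correct and is essentially the paper's argument: a union bound over the at most $pT$ pairs $(i,t)$, Lemma~\ref{lemma: smooth_E_entry_bound} applied with $a=qx$, and then calibrating $x$ so that the polynomial term is a constant and the exponential term is absorbed by the $\log(pT)$ factor. Your explicit choice $x=M(x_1+x_2)$, the check of the side condition via $\sqrt{\log(pT)}\to\infty$, and the point that the constants must be uniform over $(i,t)$ (rather than reducing to a single representative pair, as the paper does) make the paper's terse final step precise.
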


\begin{proof}[Proof of Lemma~\ref{lemma: smooth_E_vec1_bound}]
With Lemma~\ref{lemma: smooth_E_entry_bound} and the notations therein, by the union bound we have
\begin{align*}
    &\;\quad
    \P\left[ \max_{t\in[bT+1, T-bT]} \left\| \frac{1}{q} \left(\sum_{u= t +1}^{t+bT} w_{u-t} \bE_u  - \sum_{u= t-bT}^{t-1} w_{t-u} \bE_u \right) \1_q \right\|_\infty > z \right] \\
    &=
    \P\left[ \max_{i\in[p], t\in[bT+1, T-bT]}
    \left| \frac{1}{q} \sum_{j=1}^q \left( \sum_{u= t +1}^{t+bT} w_{u-t} E_{u,ij}  - \sum_{u= t-bT}^{t-1} w_{t-u} E_{u,ij} \right) \right| > z \right] \\
    &\leq
    p T \cdot  \P\left[
    \left| \sum_{j=1}^q \left( \sum_{u= bT+2}^{2bT+1} w_{u-bT-1} E_{u,1j}  - \sum_{u= 1}^{bT} w_{bT+1-u} E_{u,1j} \right) \right| > qz \right] \\
    &\leq
    p T \left[ \frac{c_{e,1} (Tb)^{1-\nu} q}{(qz)^\nu} + 8 \exp\left( 1- \frac{c_{e,2} (qz)^2 Tb}{q} \right) \right] \\
    &=
    c_{e,1} pT (Tb)^{1-\nu} q^{1-\nu} z^{-\nu}
    + 8 \exp\left( 1 + \log(pT) - c_{e,2} q Tb z^2 \right) ,
\end{align*}
for any $qz \geq a_0 \sqrt{q/Tb}$ (so as to apply Lemma~\ref{lemma: smooth_E_entry_bound}) such that $1 - c_{e,2} q Tb z^2 \leq 0$. By setting $\log(pT) - c_{e,2} q Tb z^2$ as 0, and the first term above as some constant, we conclude the desired $\cO_P$-term and hence completed the proof.
\end{proof}

\begin{lemma}\label{lemma: smooth_E_entry_bound_local}
Let Assumptions~\ref{ass: noise} and \ref{ass: kernel} hold, with $\nu=\nu_e\geq 4$ and $\eta =\eta_e > 1/2 - 1/\nu$.
For any $i\in[p]$, 
$\ell \in [Q_\alpha]$,
$t$ such that $\theta_{\alpha,\ell} - Tb \leq t \leq \theta_{\alpha,\ell} -1$,
when $a \geq a_0 \sqrt{q}/(Tb)$ for some constant $a_0>0$, we have
\begin{align*}
    \P\left(\left|
    \frac{1}{\sqrt{\theta_{\alpha,\ell} -t}}
    \sum_{j=1}^q \sum_{u=t-bT}^{\theta_{\alpha,\ell} -bT -1} w_{t-u} E_{u,ij}  \right| > a \right)
    &\leq
    \frac{c_{e,1} (Tb)^{-\nu} q (\theta_{\alpha,\ell} -t)^{1 - \nu/2} }{a^\nu} \\
    &\;\quad
    + 8 \exp\left( 1- \frac{c_{e,2} a^2 T^2b^2}{q} \right) ,
\end{align*}
where $a_0$ satisfies that $c_{e,2} a^2 T^2b^2 /q \geq 1$ for all $a\geq a_0$, and $c_{e,1}$ and $c_{e,2}$ are some positive constants that only depending on the kernel and the distribution of components in the noise. 
\end{lemma}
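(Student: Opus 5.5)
This is proved the same way as Lemma~\ref{lemma: smooth_E_entry_bound}; the only changes are in the normalisation and in the number of summands. Fix $i$, $\ell$ and $t$ with $\theta_{\alpha,\ell}-Tb\le t\le\theta_{\alpha,\ell}-1$, and write $n_0:=\theta_{\alpha,\ell}-t\in[1,Tb]$. The inner index $u$ runs from $t-bT$ to $\theta_{\alpha,\ell}-bT-1$, which is exactly $n_0$ consecutive time points.

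\textbf{Step 1: splitting the noise.} Using Assumption~\ref{ass: noise}, I write $E_{u,ij}=\bA_{e,r,i\cdot}^\trans\bF_{e,u}\bA_{e,c,j\cdot}+\Sigma_{e,ij}E_{e,u,ij}$. A union bound with threshold $a/2$ on each part, as in \eqref{eqn: smooth_E_entry_decomp}, gives two probabilities $\cI$ and $\cII$. I treat the idiosyncratic part $\cII$; the factor part $\cI$ is handled the same way, because $\|\bA_{e,r}\|_1,\|\bA_{e,c}\|_1=\cO(1)$ and the entries are bounded.

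\textbf{Step 2: relabelling into one sequence.} I stack the index pairs $(j,u)\in[q]\times[t-bT,\theta_{\alpha,\ell}-bT-1]$ into a single index $z\in[qn_0]$ and define
\[
\wt E_z := Tb\, w_{t-u}\,\Sigma_{e,ij}\sum_{h\ge0}a_{\epsilon,h}X_{\epsilon,u-h,ij}.
\]
By Lemma~\ref{lemma: weight_prop}~(ii), $|Tb\,w_{t-u}|\le C_w$. Hence $\E|\wt E_z|^\nu$ is uniformly bounded. Repeating the computation in the proof of Lemma~\ref{lemma: smooth_E_entry_bound}, the functional dependence measure satisfies $\theta^*_{m,\nu}\lesssim|a_{\epsilon,m}|$, so its tail sum is $\Theta^*_{m,\nu}=\cO(m^{-\eta})$ with $\eta>1/2-1/\nu$. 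Variables belonging to different $j$ are independent, and the blocks have length $n_0\le 2bT+1$, so the same argument goes through.

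\textbf{Step 3: applying the tail inequality.} The event inside the probability is
\[
\Bigl|\sum_z\wt E_z\Bigr|>x, \qquad x:=Tb\,\sqrt{n_0}\,a/2.
\]
I apply Theorem~2~(ii) of \cite{Liuetal2013} in its non-stationary form, with sample size $n=qn_0$. This gives
\[
\P\Bigl(\Bigl|\sum_z\wt E_z\Bigr|>x\Bigr)\lesssim\frac{qn_0}{x^\nu}+8\exp\!\Bigl(1-\frac{C x^2}{qn_0}\Bigr).
\]
The inequality requires $x\gtrsim\sqrt{qn_0}$, which is equivalent to $a\gtrsim\sqrt q/(Tb)$ and is exactly the stated lower bound $a\ge a_0\sqrt q/(Tb)$. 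Substituting $x$, the polynomial term becomes
\[
\frac{qn_0}{(Tb)^\nu n_0^{\nu/2}a^\nu}=\frac{(Tb)^{-\nu}\,q\,n_0^{1-\nu/2}}{a^\nu},
\]
and the exponent becomes
\[
\frac{x^2}{qn_0}\asymp\frac{a^2T^2b^2}{q},
\]
because the factor $n_0$ cancels. After absorbing constants into $c_{e,1}$ and $c_{e,2}$, this is the claimed bound. As before, the series $\sum_\ell e^{-\ell^{1-2/\nu}}$ is bounded using the Gamma function, which produces the factor $8$.

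\textbf{Main obstacle.} The only delicate point is the bookkeeping of the normalisation. The $1/\sqrt{n_0}$ scaling must cancel the $n_0$ in the sub-Gaussian exponent while leaving the factor $n_0^{1-\nu/2}$ in the polynomial term. At the same time, every constant must stay uniform in $n_0$, $t$ and $\ell$, including the applicability threshold $a_0$, so that later union bounds over $t$ are legitimate. I would check carefully that the constants in \cite{Liuetal2013} depend only on $\nu$, $\eta$ and $\Theta^*_{0,\nu}$, and not on the length $qn_0$ of the sequence.
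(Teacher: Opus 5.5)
Your proposal is correct and follows the paper's proof. It uses the same split of $E_{u,ij}$, the same stacking of $(j,u)$ into a sequence of length $q(\theta_{\alpha,\ell}-t)$, and the same application of Theorem~2~(ii) of Liu et al.\ with $x = Tb\sqrt{\theta_{\alpha,\ell}-t}\,a/2$; the factor $\theta_{\alpha,\ell}-t$ then cancels in the exponent and leaves $(\theta_{\alpha,\ell}-t)^{1-\nu/2}$ in the polynomial term. Your bookkeeping keeps the summands $\wt E_z$ unscaled and moves the $1/\sqrt{\theta_{\alpha,\ell}-t}$ normalisation entirely into the threshold $x$; this is the consistent reading of the paper's terse remark about ``using $\sqrt{\theta_{\alpha,\ell}-t}\,\wt E_z$'', and it is what keeps all constants, including $a_0$, uniform in $t$ and $\ell$.
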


\begin{proof}[Proof of \Cref{lemma: smooth_E_entry_bound_local}]
The proof follows analogous arguments as in the proof of Lemma~\ref{lemma: smooth_E_entry_bound}. In particular, we use $\sqrt{\theta_{\alpha,\ell} -t} \wt{E}_z$ in place of $\wt{E}_z$ therein, and we set $n=q(\theta_{\alpha,\ell} -t)$ when applying Theorem~2~(ii) of \cite{Liuetal2013}.
Then using notations in \eqref{eqn: sum_tilde_Ez_concen}, it holds that
with $x= \sqrt{\theta_{\alpha,\ell} -t} \cdot Tb(a/2)$
there exists some positive constants $C_1$ and $C_2$ such that under $x \geq \sqrt{q(\theta_{\alpha,\ell} -t)}\, \Theta_{0,\nu}^* /C_2$ 
(which can be satisfied given $a \geq a_0 \sqrt{q}/(Tb)$ for some constant $a_0>0$), we have
\begin{align}
    &\;\quad
    \P\left( \left| \sum_{z=1}^{q(\theta_{\alpha,\ell} -t)} 
    \sqrt{\theta_{\alpha,\ell} -t}
    \wt{E}_{z} \right| > x \right)
    \leq
    \frac{C_1 c^\nu q (\theta_{\alpha,\ell} -t) }{x^\nu} + 8 \exp\left( 1- \frac{C_2^2 x^2}{c^2 q (\theta_{\alpha,\ell} -t) } \right) ,
    \label{eqn: sum_tilde_Ez_concen_local}
\end{align}
which gives the desired result after plugging in $x$. This concludes the proof of this lemma.
\end{proof}

\begin{lemma}\label{lemma: smooth_E_vec1_bound_local}
Let all assumptions in Lemma~\ref{lemma: smooth_E_entry_bound_local} hold.
Then we have
\begin{align*}
    &\;\quad
    \max_{\ell \in[Q_\alpha], t\in [\theta_{\alpha,\ell} -Tb, \theta_{\alpha,\ell} -1]} \left\| \frac{1}{q} \frac{1}{|\theta_{\alpha,\ell} -t|^{1/2}} \left( \sum_{u=t-bT}^{\theta_{\alpha,\ell} -bT -1} w_{t-u} \bE_u \right) \1_q \right\|_\infty  \\
    &=
    \cO_P\left( p^{1/\nu} (Tb)^{(1-\nu)/\nu} q^{(1-\nu)/\nu} 
    + \sqrt{\frac{\log(pTb)}{q T^2b^2}} \right) .
\end{align*}
\end{lemma}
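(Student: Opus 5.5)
The plan mirrors the proof of Lemma~\ref{lemma: smooth_E_vec1_bound}: a union bound over all index triples, fed by the single-entry tail inequality of Lemma~\ref{lemma: smooth_E_entry_bound_local}. The maximum runs over $i\in[p]$, $\ell\in[Q_\alpha]$ and $t\in[\theta_{\alpha,\ell}-Tb,\theta_{\alpha,\ell}-1]$. Since $Q_\alpha$ is fixed under Assumption~\ref{assum:me:change:spacing}, there are at most $Q_\alpha pTb \lesssim pTb$ such triples. This count is why the logarithm becomes $\log(pTb)$ rather than $\log(pT)$.

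For a threshold $z>0$, the $(i,\ell,t)$ entry exceeds $z$ exactly when
\[
\Big|(\theta_{\alpha,\ell}-t)^{-1/2}\sum_{j=1}^q\sum_{u}w_{t-u}E_{u,ij}\Big| > qz .
\]
Applying Lemma~\ref{lemma: smooth_E_entry_bound_local} with $a=qz$ (admissible once $qz\ge a_0\sqrt q/(Tb)$) bounds this probability by
\[
\frac{c_{e,1}(Tb)^{-\nu}q\,(\theta_{\alpha,\ell}-t)^{1-\nu/2}}{(qz)^\nu} + 8\exp\!\big(1-c_{e,2}qz^2T^2b^2\big).
\]

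The two terms are summed over the triples separately.

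\emph{Exponential term.} Summing gives $8\exp\{1+\log(Q_\alpha pTb)-c_{e,2}qT^2b^2z^2\}$. This is small once $z\gtrsim\sqrt{\log(pTb)/(qT^2b^2)}$, which is the second term of the claimed rate.

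\emph{Polynomial term.} Here I exploit $\nu\ge4$, so that $1-\nu/2\le-1$. Then
\[
\sum_{t}(\theta_{\alpha,\ell}-t)^{1-\nu/2}\le\sum_{s\ge1}s^{1-\nu/2}<\infty
\]
uniformly in $\ell$, so summing over $t$ costs only a constant and no factor $Tb$. Summing over $i$ and $\ell$ then gives a bound of order $p\,(Tb)^{-\nu}q^{1-\nu}z^{-\nu}$. Setting this equal to a small constant yields
\[
z\asymp p^{1/\nu}(Tb)^{-1}q^{(1-\nu)/\nu},
\]
which is dominated by the stated $p^{1/\nu}(Tb)^{(1-\nu)/\nu}q^{(1-\nu)/\nu}$. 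The stated rate is therefore valid, and in fact somewhat conservative.

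Finally, I take $z$ to be a large constant multiple of the sum of the two rates. I also check the admissibility condition $qz\ge a_0\sqrt q/(Tb)$: the Gaussian-type rate already gives $z\ge\sqrt{\log(pTb)}/(\sqrt q\,Tb)$, which suffices for large $T$. The $\cO_P$ statement then follows.

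The main obstacle is the polynomial term: the summability in $s=\theta_{\alpha,\ell}-t$ must be handled carefully, and the normalisation by $(\theta_{\alpha,\ell}-t)^{1/2}$ must be genuinely consistent with the form of Lemma~\ref{lemma: smooth_E_entry_bound_local}. If that summation were awkward, I would fall back on the crude bound $(\theta_{\alpha,\ell}-t)^{1-\nu/2}\le1$ together with the $Tb$ count. This gives the factor $p\,Tb\,(Tb)^{-\nu}$, which exactly reproduces the stated rate $p^{1/\nu}(Tb)^{(1-\nu)/\nu}q^{(1-\nu)/\nu}$ and so is the safe route.
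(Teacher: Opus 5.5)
Your proposal is correct and follows the paper's proof: a union bound over the at most $Q_\alpha pTb$ triples $(i,\ell,t)$, then Lemma~\ref{lemma: smooth_E_entry_bound_local} with $a=qz$ and the same admissibility check, and your fallback bound $(\theta_{\alpha,\ell}-t)^{1-\nu/2}\le 1$ times the count $Tb$ is exactly what the paper uses. One correction to your sharper route: at $\nu=4$ the exponent is exactly $-1$, so $\sum_{s\ge1}s^{1-\nu/2}$ is the divergent harmonic series and the $t$-sum costs a factor $\log(Tb)$ rather than a constant; the resulting rate $p^{1/4}(\log Tb)^{1/4}(Tb)^{-1}q^{-3/4}$ is still dominated by the stated one, so the conclusion is unaffected.
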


\begin{proof}[Proof of Lemma~\ref{lemma: smooth_E_vec1_bound_local}]
With Lemma~\ref{lemma: smooth_E_entry_bound_local} and the notations therein, by the union bound and linearly spaced change points in $\Theta_\alpha$, we have
\begin{align*}
    &\;\quad
    \P\left[ \max_{\ell \in[Q_\alpha], t\in [\theta_{\alpha,\ell} -Tb, \theta_{\alpha,\ell} -1]} \left\| \frac{1}{q} \frac{1}{|\theta_{\alpha,\ell} -t|^{1/2}} \left( \sum_{u=t-bT}^{\theta_{\alpha,\ell} -bT -1} w_{t-u} \bE_u \right) \1_q \right\|_\infty  > z \right] \\
    &=
    \P\left[ \max_{i\in[p], \ell \in[Q_\alpha], t\in [\theta_{\alpha,\ell} -Tb, \theta_{\alpha,\ell} -1]} \left| \frac{1}{q} \frac{1}{|\theta_{\alpha,\ell} -t|^{1/2}} \sum_{j=1}^{q} \sum_{u=t-bT}^{\theta_{\alpha,\ell} -bT -1} w_{t-u} E_{u,ij} \right| > z \right] \\
    &\leq
    Q_\alpha \cdot p Tb \cdot \P\left[  \left| \frac{1}{|\theta_{\alpha,\ell} -t|^{1/2}} \sum_{j=1}^{q} \sum_{u=t-bT}^{\theta_{\alpha,\ell} -bT -1} w_{t-u} E_{u,ij} \right| > q z \right] 
    \tag{for any $i\in[p], \ell \in[Q_\alpha], t\in [\theta_{\alpha,\ell} -Tb, \theta_{\alpha,\ell} -1]$}
    \\
    &\leq
    Q_\alpha \cdot p Tb 
    \max_{t\in [\theta_{\alpha,\ell} -Tb, \theta_{\alpha,\ell} -1]}
    \left[ \frac{c_{e,1} (Tb)^{-\nu} q (\theta_{\alpha,\ell} -t)^{1 - \nu/2} }{q^\nu z^\nu} 
    + 8 \exp\left( 1- \frac{c_{e,2} q^2 T^2b^2 z^2}{q} \right) \right] \\
    &=
    Q_\alpha \left\{
    c_{e,1} p (Tb)^{1-\nu} q^{1-\nu} 
    z^{-\nu}
    + 8 \exp\left( 1 + \log(pTb) - c_{e,2} q T^2b^2 z^2 \right) \right\} ,
\end{align*}
for any $qz \geq a_0 \sqrt{q}/(Tb)$ (so as to apply Lemma~\ref{lemma: smooth_E_entry_bound_local}) such that $1 - c_{e,2} q T^2b^2 z^2 \leq 0$. By setting $\log(pTb) - c_{e,2} q T^2b^2 z^2$ as 0, and the first term above as some constant, we conclude the desired $\cO_P$-term and hence completed the proof.
\end{proof}

\begin{lemma}\label{lemma: smooth_E_entry_bound_local_general}
Let Assumptions~\ref{ass: noise} and \ref{ass: kernel} hold, with $\nu=\nu_e\geq 4$ and $\eta =\eta_e > 1/2 - 1/\nu$.
Let $m \in [1, bT]$ be any given integer, 
and also $\xi \in [1,bT]$ be any given integer.
Then we have
\begin{align*}
    &\;\quad
    \P\left[ \max_{\ell \in[Q_\alpha], t\in [\theta_{\alpha,\ell} -Tb, \theta_{\alpha,\ell} -\xi]} \left\| \frac{1}{q} \frac{1}{|\theta_{\alpha,\ell} -t|^{1/2}} \left( \sum_{u=t-bT}^{t -bT +m -1} w_{t-u} \bE_u \right) \1_q \right\|_\infty  > z \right] \\
    &\leq
    \max_{\ell \in[Q_\alpha], 
    t\in [\theta_{\alpha,\ell} -Tb, \theta_{\alpha,\ell} -\xi]}
    C_1 \Bigg[ \frac{C_2 p (Tb)^{-\nu} q^{1-\nu} (Tb -\xi) (\theta_{\alpha,\ell} -t)^{- \nu/2} m }{z^\nu}  \\
    &\;\quad
    + 8 \exp\left( 1 + \log(p(Tb -\xi)) - \frac{C_3 z^2 q (\theta_{\alpha,\ell} -t) T^2b^2}{m} \right) \Bigg] ,
\end{align*}
for any $z$ with $c_{e,2} q (\theta_{\alpha,\ell} -t) T^2b^2 z^2 \geq m$,
where $C_1$, $C_2$ and $C_3$ are some positive constants that only depending on the kernel and the distribution of components in the noise.
\end{lemma}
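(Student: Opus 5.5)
The plan is to follow the proofs of Lemmas~\ref{lemma: smooth_E_entry_bound}--\ref{lemma: smooth_E_vec1_bound_local}: first derive a single-entry tail bound, then take a union bound over $i\in[p]$, $\ell\in[Q_\alpha]$ and the admissible $t$. Fix $i\in[p]$, $\ell$, and $t\in[\theta_{\alpha,\ell}-Tb,\theta_{\alpha,\ell}-\xi]$. The event concerns
\[
\Big|\sum_{j=1}^q\sum_{u=t-bT}^{t-bT+m-1} w_{t-u}E_{u,ij}\Big| > qz\,(\theta_{\alpha,\ell}-t)^{1/2}.
\]
As in \eqref{eqn: smooth_E_entry_decomp}, we split $E_{u,ij}$ via Assumption~\ref{ass: noise} into the factor part $\bA_{e,r,i\cdot}^\trans\bF_{e,u}\bA_{e,c,j\cdot}$ and the idiosyncratic part $\Sigma_{e,ij}E_{e,u,ij}$, each with threshold halved. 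We treat the idiosyncratic part in detail; the factor part is handled identically, since $\|\bA_{e,c}\|_1=\cO(1)$ and the entries are bounded, so its functional dependence measure has the same order.

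For the idiosyncratic part, index the summands $\wt E_z := Tb\cdot w_{t-u}\Sigma_{e,ij}E_{e,u,ij}$ by $z\in[qm]$. This is now a sum of $n=qm$ terms rather than $q(2bT+1)$. By Lemma~\ref{lemma: weight_prop}~(ii), $\|\wt E_z\|_\nu$ is bounded uniformly. The uniform functional dependence tail $\Theta^*_{m',\nu}\lesssim\sum_{s\ge m'}|a_{\epsilon,s}|=\cO(m'^{-\eta})$ is unchanged, with $\eta>1/2-1/\nu$. Applying Theorem~2~(ii) of \cite{Liuetal2013} with $x=Tb\,qz(\theta_{\alpha,\ell}-t)^{1/2}/2$ gives
\[
\frac{C\,qm}{(Tb)^\nu q^\nu z^\nu(\theta_{\alpha,\ell}-t)^{\nu/2}} + 8\exp\Big(1-\frac{C' (Tb)^2q^2z^2(\theta_{\alpha,\ell}-t)}{qm}\Big).
\]
This requires $x\gtrsim\sqrt{qm}$, which is exactly the side condition $c_{e,2}q(\theta_{\alpha,\ell}-t)T^2b^2z^2\ge m$. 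The bound matches $C_2(Tb)^{-\nu}q^{1-\nu}(\theta_{\alpha,\ell}-t)^{-\nu/2}m/z^\nu$ and the exponent $C_3z^2q(\theta_{\alpha,\ell}-t)T^2b^2/m$ in the statement.

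Finally, apply the union bound over $i\in[p]$ and the at most $Tb-\xi+1\lesssim Tb-\xi$ values of $t$ for each $\ell$. Bound each summand by the maximum over $t$, and absorb $Q_\alpha$, which is fixed by Assumption~\ref{assum:me:change:spacing}, into $C_1$. This produces the prefactor $p(Tb-\xi)$ on the polynomial term, and $\log(p(Tb-\xi))$ inside the exponential, as in Lemma~\ref{lemma: smooth_E_vec1_bound_local}.

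The main obstacle is applying \cite{Liuetal2013} correctly when the $(j,u)$ array is flattened into a single sequence. Terms in the same time block are dependent across $u$ through the linear filter, but independent across $j$ for the idiosyncratic part. The ordering must keep the dependence measure summable, for example by ordering $u$ within $j$ blocks, so that $\Theta^*$ is controlled as in Lemma~\ref{lemma: smooth_E_entry_bound}. For the factor part, the $q$ summands share $\bF_{e,u}$. There we would first sum over $j$, which gives a scalar linear process in $u$ with coefficient $\sum_j A_{e,c,jk}=\cO(1)$ and $m$ terms. This yields an even better bound, which can be dominated by the stated one.
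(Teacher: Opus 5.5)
Your proposal is correct and follows the paper's proof: a single-entry tail bound from Theorem~2~(ii) of \cite{Liuetal2013} with $n=qm$ summands and threshold $qz(\theta_{\alpha,\ell}-t)^{1/2}$, followed by a union bound over $i\in[p]$, $\ell\in[Q_\alpha]$ and $t$ that produces the factor $p(Tb-\xi)$ and the $\log(p(Tb-\xi))$ term. You treat the factor part $\bA_{e,r}\bF_{e,u}\bA_{e,c}^\trans$ separately, summing over $j$ first and using $\|\bA_{e,c}\|_1=\cO(1)$; this is more careful than the paper's ``analogously'', since those summands are not independent across $j$, and the only blemish (shared with the paper) is that there are $Tb-\xi+1$ admissible values of $t$, so the prefactor $Tb-\xi$ degenerates at $\xi=bT$.
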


\begin{proof}[Proof of Lemma~\ref{lemma: smooth_E_entry_bound_local_general}]
We first show that for any $i\in[p]$,
$\ell \in [Q_\alpha]$,
$t$ such that $\theta_{\alpha,\ell} - Tb \leq t \leq \theta_{\alpha,\ell} -1$, 
when $a \geq a_0 \sqrt{qm/ (\theta_{\alpha,\ell} -t)} /(Tb)$ for some constant $a_0>0$, we have
\begin{align}
    \P\left(\left|
    \frac{1}{\sqrt{\theta_{\alpha,\ell} -t}}
    \sum_{j=1}^q \sum_{u=t-bT}^{t- bT + m-1} w_{t-u} E_{u,ij}  \right| > a \right) \notag
    &\leq 
    \frac{c_{e,1} (Tb)^{-\nu} q (\theta_{\alpha,\ell} -t)^{- \nu/2} m }{a^\nu}  \notag \\
    &\;\quad
    + 8 \exp\left( 1- \frac{c_{e,2} a^2 (\theta_{\alpha,\ell} -t) T^2b^2}{q m} \right) ,
    \label{eqn: smooth_E_entry_bound_local_general_step}
\end{align}
where $a_0$ satisfies that $c_{e,2} a^2 (\theta_{\alpha,\ell} -t) T^2b^2 /(qm) \geq 1$ for all $a\geq a_0$, and $c_{e,1}$ and $c_{e,2}$ are some positive constants that only depending on the kernel and the distribution of components in the noise.
The proof for \eqref{eqn: smooth_E_entry_bound_local_general_step} follows repetitive arguments in the proof of Lemma~\ref{lemma: smooth_E_entry_bound_local} with the same notations therein, except that we apply Theorem~2~(ii) of \cite{Liuetal2013} with sample size $n = qm$ and hence \eqref{eqn: sum_tilde_Ez_concen_local} is replaced by
\begin{align*}
    &\;\quad
    \P\left( \left| \sum_{z=1}^{qm} 
    \sqrt{\theta_{\alpha,\ell} -t}
    \wt{E}_{z} \right| > x \right)
    \leq
    \frac{C_1 c^\nu q m }{x^\nu} + 8 \exp\left( 1- \frac{C_2^2 x^2}{c^2 q m } \right) ,
\end{align*}
thus concluding \eqref{eqn: smooth_E_entry_bound_local_general_step}.
The remainder follows by modifying the proof of Lemma~\ref{lemma: smooth_E_vec1_bound_local}. Namely,
\begin{align*}
    &\;\quad
    \P\left[ \max_{\ell \in[Q_\alpha], t\in [\theta_{\alpha,\ell} -Tb, \theta_{\alpha,\ell} -\xi]} \left\| \frac{1}{q} \frac{1}{|\theta_{\alpha,\ell} -t|^{1/2}} \left( \sum_{u=t-bT}^{t -bT +m -1} w_{t-u} \bE_u \right) \1_q \right\|_\infty  > z \right] \\
    &=
    \P\left[ \max_{i\in[p], \ell \in[Q_\alpha], t\in [\theta_{\alpha,\ell} -Tb, \theta_{\alpha,\ell} -\xi]} \left| \frac{1}{q} \frac{1}{|\theta_{\alpha,\ell} -t|^{1/2}} \sum_{j=1}^{q} \sum_{u=t-bT}^{t -bT +m -1} w_{t-u} E_{u,ij} \right| > z \right] \\
    &\leq
    Q_\alpha \cdot p (Tb -\xi) \cdot \P\left[  \left| \frac{1}{|\theta_{\alpha,\ell} -t|^{1/2}} \sum_{j=1}^{q} \sum_{u=t-bT}^{t -bT +m -1} w_{t-u} E_{u,ij} \right| > q z \right] 
    \tag{for any $i\in[p], \ell \in[Q_\alpha], t\in [\theta_{\alpha,\ell} -Tb, \theta_{\alpha,\ell} -\xi]$}
    \\
    &\leq
    Q_\alpha \cdot p (Tb -\xi) 
    \max_{\ell \in[Q_\alpha],
    t\in [\theta_{\alpha,\ell} -Tb, \theta_{\alpha,\ell} -\xi]}
    \Bigg[ \frac{c_{e,1} (Tb)^{-\nu} q (\theta_{\alpha,\ell} -t)^{- \nu/2} m }{q^\nu z^\nu}  \\
    &\;\quad
    + 8 \exp\left( 1- \frac{c_{e,2} z^2 q (\theta_{\alpha,\ell} -t) T^2b^2}{m} \right) \Bigg] \\
    &=
    Q_\alpha \cdot \max_{\ell \in[Q_\alpha],
    t\in [\theta_{\alpha,\ell} -Tb, \theta_{\alpha,\ell} -\xi]}
    \Bigg[ \frac{c_{e,1} p (Tb)^{-\nu} q^{1-\nu} (Tb -\xi) (\theta_{\alpha,\ell} -t)^{- \nu/2} m }{z^\nu}  \\
    &\;\quad
    + 8 \exp\left( 1 + \log(p(Tb -\xi)) - \frac{c_{e,2} z^2 q (\theta_{\alpha,\ell} -t) T^2b^2}{m} \right) \Bigg] ,
\end{align*}
for any $qz \geq a_0 \sqrt{qm/ (\theta_{\alpha,\ell} -t)} /(Tb)$ (so as to apply \eqref{eqn: smooth_E_entry_bound_local_general_step} in the last inequality) such that $c_{e,2} q (\theta_{\alpha,\ell} -t) T^2b^2 z^2 \geq m$. 
This concludes the proof of this lemma.
\end{proof}

\begin{lemma}\label{lemma: smooth_E_entry_bound_local_ww}
Let Assumptions~\ref{ass: noise} and \ref{ass: kernel} hold, with $\nu=\nu_e\geq 4$ and $\eta =\eta_e > 1/2 - 1/\nu$.
Then we have
\begin{align*}
    &\;\quad
    \max_{\ell \in[Q_\alpha], t\in [\theta_{\alpha,\ell} -Tb +1, \theta_{\alpha,\ell} -1]} \left\| \frac{1}{q} \frac{1}{|\theta_{\alpha,\ell} -t|^{1/2}} \left( \sum_{u=\theta_{\alpha,\ell} +1 }^{t +bT} (w_{u-t} - w_{u- \theta_{\alpha,\ell}}) \bE_u \right) \1_q \right\|_\infty \\
    &=
    \cO_P\left( p^{1/\nu} (Tb)^{2(1-\nu)/\nu} q^{(1-\nu)/\nu} 
    + \sqrt{\frac{\log(pTb)}{q T^3b^3}} \right) .
\end{align*}
\end{lemma}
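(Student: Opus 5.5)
The plan is to run the same chain as in Lemmas~\ref{lemma: smooth_E_entry_bound_local}--\ref{lemma: smooth_E_vec1_bound_local}. The only new ingredient is that the kernel weights now appear as a difference, $w_{u-t} - w_{u-\theta_{\alpha,\ell}}$. This difference is small because the two lags differ by the shift $\theta_{\alpha,\ell} - t$.

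The first step is a deterministic bound on the differenced weights. Write $d := \theta_{\alpha,\ell} - t \in [1, Tb-1]$; for $u \in [\theta_{\alpha,\ell}+1, t+bT]$ both lags $u-t$ and $u-\theta_{\alpha,\ell}$ lie in $[1,bT]$. From the definition in~\eqref{eq:kernel}, we have $w_s = K(s/(Tb))(S_2 + S_1 s/T)/(S_2S_0 - S_1^2)$. We then apply the mean value theorem to $K$, using $|K'|_\infty<\infty$, and use Lemma~\ref{lemma: kernel_moment_approx}, which gives $S_2 \asymp Tb^3$, $S_1 \asymp Tb^2$ and $S_2S_0 - S_1^2 \asymp T^2b^4$. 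Together these yield
\[
|w_{u-t} - w_{u-\theta_{\alpha,\ell}}| \le \frac{C_w' \, d}{(Tb)^2},
\]
for a constant $C_w'$ depending only on $K$. This is the analogue of Lemma~\ref{lemma: weight_prop}~(ii), with an extra factor $d/(Tb)$. The linear part $S_1 s/T$ also contributes a term of order $|S_1| d/T$, which is of the same order.

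The second step is the entrywise tail bound. Fix $i$, $\ell$ and $t$, and decompose $\bE_u$ via Assumption~\ref{ass: noise} into the factor part and the $\bSigma_e\circ\bE_{e,u}$ part, as in~\eqref{eqn: smooth_E_entry_decomp}. For each part, the sum $\sum_{j\in[q]}\sum_{u} (w_{u-t}-w_{u-\theta_{\alpha,\ell}}) E_{u,ij}$ has $n = q(Tb - d) \le qTb$ summands. We rescale each summand by $(Tb)^2/d$ so that it has bounded $\nu$th moment and a functional dependence measure $\Theta^*_{m,\nu} = \cO(m^{-\eta})$. Theorem~2~(ii) of \cite{Liuetal2013} then applies exactly as in~\eqref{eqn: sum_tilde_Ez_concen}.

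The level to plug in is $x = a\sqrt{d}\,(Tb)^2/d = a (Tb)^2/\sqrt d$. This gives a tail of order
\[
\frac{q\,Tb\, d^{\nu/2}}{a^\nu (Tb)^{2\nu}} + 8\exp\Big(1 - \frac{c\, a^2 (Tb)^4}{d\, qTb}\Big).
\]
Since $d \le Tb$, this is at most
\[
\frac{q (Tb)^{1-3\nu/2}}{a^{\nu}} + 8\exp\big(1 - c a^2 T^2b^2/q\big).
\]
The main obstacle is getting the exponents exactly right. I would first try the crude bound $d\le Tb$ uniformly. If that loses a factor relative to the stated rate $(Tb)^{2(1-\nu)/\nu}$ and $(qT^3b^3)^{-1/2}$, I would instead keep $d$ explicit and exploit the $d^{-1/2}$ normalisation together with $n = q(Tb-d)$. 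Only the weights are nonzero on $Tb-d$ terms, so the polynomial term scales as $(Tb-d)\,d^{\nu/2}$. That term would then be maximised over $d$ inside the union bound, as done in Lemma~\ref{lemma: smooth_E_entry_bound_local_general}.

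The final step is a union bound over $i\in[p]$, $\ell\in[Q_\alpha]$ (where $Q_\alpha$ is finite by Assumption~\ref{assum:me:change:spacing}) and the $Tb$ values of $t$, carried out as in the proof of Lemma~\ref{lemma: smooth_E_vec1_bound_local}, with the threshold set at level $qz$. Two choices of $z$ then finish the argument. Setting the polynomial term equal to a constant gives the $p^{1/\nu}(Tb)^{2(1-\nu)/\nu}q^{(1-\nu)/\nu}$ term. Setting the exponent $\log(pTb) - c\,q z^2 T^3b^3$ to zero gives the $\sqrt{\log(pTb)/(qT^3b^3)}$ term. Together these conclude the stated $\cO_P$ rate.
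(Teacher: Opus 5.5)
\textbf{The final step does not follow from your own Step~2.} Steps~1--2 follow the paper's route: the Lipschitz bound $|w_{u-t}-w_{u-\theta_{\alpha,\ell}}|\lesssim d/(Tb)^2$ (the paper quotes Lemma~1 of \cite{Chenetal2022} rather than deriving it), rescaling the summands by $(Tb)^2/d$, and Theorem~2~(ii) of \cite{Liuetal2013} with $n=q(Tb-d)$ and $x\asymp a(Tb)^2/\sqrt d$. The entrywise tail bound you derive, including the simplification via $d\le Tb$, is correct. Substituting $a=qz$ into it and taking the union bound over the $Q_\alpha pTb$ indices gives
\[
Q_\alpha\Big\{c\,p\,q^{1-\nu}(Tb)^{2-3\nu/2}z^{-\nu}+8\exp\big(1+\log(pTb)-c\,qz^2T^2b^2\big)\Big\}.
\]
The rate this yields is $p^{1/\nu}(Tb)^{2/\nu-3/2}q^{(1-\nu)/\nu}+\sqrt{\log(pTb)/(qT^2b^2)}$. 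That is a factor $\sqrt{Tb}$ worse in each term than claimed; the exponent $c\,qz^2T^3b^3$ you write down is not what Step~2 delivers.

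Your fallback of keeping $d$ explicit cannot close this, because the crude bound is already sharp. Both $\max_{1\le d<Tb}(Tb-d)d^{\nu/2}\asymp(Tb)^{1+\nu/2}$ and $\max_d(Tb-d)d\asymp(Tb)^2$ are attained at $d\asymp Tb$. In fact the stated rate fails in general. Take $d=\lfloor Tb/2\rfloor$, i.i.d.\ noise (permitted by Assumption~\ref{ass: noise}) and a generic kernel such as the Epanechnikov one. Then the $i$th entry has variance $\asymp q^{-1}d^{-1}\sum_u(w_{u-t}-w_{u-\theta_{\alpha,\ell}})^2\asymp(qT^2b^2)^{-1}$. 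This is of larger order than the square of the claimed rate unless $p$ is very large relative to $Tb$.

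\textbf{The paper's proof has the same slip.} Its last inequality replaces $[Tb-(\theta_{\alpha,\ell}-t)](\theta_{\alpha,\ell}-t)^{\nu/2}$ and $[Tb-(\theta_{\alpha,\ell}-t)](\theta_{\alpha,\ell}-t)$ by $Tb$. That is valid only when $\theta_{\alpha,\ell}-t$ is bounded, whereas the maximum ranges over $\theta_{\alpha,\ell}-t$ up to $Tb-1$. So what either argument actually proves is the weaker bound above.

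That weaker bound is still enough where the lemma is used. In the proof of Theorem~\ref{thm: consistency_change_ME}, the terms $\cIV_{i,t,2}$ and $\cIV_{i,t,4}$ only need to be $\cO_P$ of the rate in Lemma~\ref{lemma: smooth_E_vec1_bound_local}. The corrected first term is smaller than that rate's first term by a factor $(Tb)^{1/\nu-1/2}\to0$, and the second terms coincide. You should state and prove this corrected rate rather than assert the original one.
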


\begin{proof}[Proof of Lemma~\ref{lemma: smooth_E_entry_bound_local_ww}]
We first show that for any $i\in[p]$,
$\ell \in [Q_\alpha]$,
$t$ such that $\theta_{\alpha,\ell} - Tb +1 \leq t \leq \theta_{\alpha,\ell} -1$, 
when $a \geq a_0 \sqrt{q[Tb - (\theta_{\alpha,\ell} -t)] (\theta_{\alpha,\ell} -t)} /(T^2b^2)$ for some constant $a_0>0$, we have
\begin{align}
    &\;\quad
    \P\left(\left|
    \frac{1}{\sqrt{\theta_{\alpha,\ell} -t}}
    \sum_{j=1}^q \sum_{u=\theta_{\alpha,\ell} +1 }^{t +bT} (w_{u-t} - w_{u- \theta_{\alpha,\ell}}) E_{u,ij}  \right| > a \right) \notag \\
    &= 
    \frac{c_{e,1} (\theta_{\alpha,\ell} -t)^{\nu/2} (Tb)^{-2\nu} q  [Tb - (\theta_{\alpha,\ell} -t)] }{a^\nu} 
    + 8 \exp\left( 1- \frac{c_{e,2} a^2  T^4b^4}{q [Tb - (\theta_{\alpha,\ell} -t)] (\theta_{\alpha,\ell} -t)} \right) ,
    \label{eqn: smooth_E_entry_bound_local_step_ww}
\end{align}
where $a_0$ satisfies that $c_{e,2} a^2 (\theta_{\alpha,\ell} -t) T^2b^2 /(qm) \geq 1$ for all $a\geq a_0$, and $c_{e,1}$ and $c_{e,2}$ are some positive constants that only depending on the kernel and the distribution of components in the noise.
To this end, firstly, from Lemma~1 of \cite{Chenetal2022}, under Assumption~\ref{ass: kernel}, for any integer $m\in[0,Tb]$, the kernel weights $\{w_1,\dots, w_{Tb}\}$ satisfy that $\max_{|\iota -\ell| \leq m} |w_\iota - w_\ell| \leq c m/(Tb)^2$ for some strictly positive constant $c$ that only depends on the kernel $K$.
We then use analogous arguments as in the proof of Lemma~\ref{lemma: smooth_E_entry_bound}, except for 
$\wt{E}_z$ therein
is replaced by
(due to the scaling by $w_{u-t} - w_{u-\theta_{\alpha,\ell}}$):
\begin{align*}
    Tb (\theta_{\alpha,\ell} -t)^{-1/2} \cdot
    Tb \cdot (w_{u-t} - w_{u- \theta_{\alpha,\ell}}) \Sigma_{e, ij} \sum_{h\geq 0} a_{\epsilon,h} X_{\epsilon,u-h, ij} ,
\end{align*}
and we set $n=q [Tb - (\theta_{\alpha,\ell} -t)]$ when applying Theorem~2~(ii) of \cite{Liuetal2013}.
Then using notations in \eqref{eqn: sum_tilde_Ez_concen}, it holds that
with $x= (\theta_{\alpha,\ell} -t)^{-1/2} \cdot T^2b^2 (a/2)$,
there exist some positive constants $C_1$ and $C_2$ such that under $x \geq \sqrt{q[Tb - (\theta_{\alpha,\ell} -t)]}\, \Theta_{0,\nu}^* /C_2$ 
(which can be satisfied given $a \geq a_0 \sqrt{q[Tb - (\theta_{\alpha,\ell} -t)] (\theta_{\alpha,\ell} -t)} /(T^2b^2)$ for some constant $a_0>0$), we have
\begin{align*}
    &\;\quad
    \P\left( \left| \sum_{z=1}^{q[Tb - (\theta_{\alpha,\ell} -t)]} 
    Tb (\theta_{\alpha,\ell} -t)^{-1/2}
    \wt{E}_{z} \right| > x \right) \\
    &\leq
    \frac{C_1 c^\nu q [Tb - (\theta_{\alpha,\ell} -t)] }{x^\nu} + 8 \exp\left( 1- \frac{C_2^2 x^2}{c^2 q [Tb - (\theta_{\alpha,\ell} -t)] } \right) ,
\end{align*}
which implies \eqref{eqn: smooth_E_entry_bound_local_step_ww}.
The remainder follows by modifying the proof of Lemma~\ref{lemma: smooth_E_vec1_bound_local}. Namely,
\begin{align*}
    &\;\quad
    \P\left[ \max_{\ell \in[Q_\alpha], t\in [\theta_{\alpha,\ell} -Tb +1, \theta_{\alpha,\ell} -1]} \left\| \frac{1}{q} \frac{1}{|\theta_{\alpha,\ell} -t|^{1/2}} \left( \sum_{u=\theta_{\alpha,\ell} +1 }^{t +bT} (w_{u-t} - w_{u- \theta_{\alpha,\ell}}) \bE_u \right) \1_q \right\|_\infty  > z \right] \\
    &=
    \P\left[ \max_{i\in[p], \ell \in[Q_\alpha], t\in [\theta_{\alpha,\ell} -Tb +1, \theta_{\alpha,\ell} -1]} \left| \frac{1}{q} \frac{1}{|\theta_{\alpha,\ell} -t|^{1/2}} \sum_{j=1}^{q} \sum_{u=\theta_{\alpha,\ell} +1 }^{t +bT} (w_{u-t} - w_{u- \theta_{\alpha,\ell}}) E_{u,ij} \right| > z \right] \\
    &\lesssim
    p Tb \cdot \P\left[  \left| \frac{1}{|\theta_{\alpha,\ell} -t|^{1/2}} \sum_{j=1}^{q} \sum_{u=\theta_{\alpha,\ell} +1 }^{t +bT} (w_{u-t} - w_{u- \theta_{\alpha,\ell}}) E_{u,ij} \right| > q z \right] 
    \tag{for any $i\in[p], \ell \in[Q_\alpha], t\in [\theta_{\alpha,\ell} -Tb, \theta_{\alpha,\ell} -1]$}
    \\
    &\leq
    p Tb
    \max_{\ell \in[Q_\alpha],
    t\in [\theta_{\alpha,\ell} -Tb +1, \theta_{\alpha,\ell} -1]}
    \Bigg[ \frac{c_{e,1} (\theta_{\alpha,\ell} -t)^{\nu/2} (Tb)^{-2\nu} q  [Tb - (\theta_{\alpha,\ell} -t)] }{(qz)^\nu} \\
    &\;\quad
    + 8 \exp\left( 1- \frac{c_{e,2} (qz)^2  T^4b^4}{q [Tb - (\theta_{\alpha,\ell} -t)] (\theta_{\alpha,\ell} -t)} \right) \Bigg] \\
    &=
    \max_{\ell \in[Q_\alpha],
    t\in [\theta_{\alpha,\ell} -Tb +1, \theta_{\alpha,\ell} -1]}
    \Bigg[ \frac{c_{e,1} p (Tb)^{1-2\nu} q^{1-\nu}  [Tb - (\theta_{\alpha,\ell} -t)] (\theta_{\alpha,\ell} -t)^{\nu/2} }{z^\nu} \\
    &\;\quad
    + 8 \exp\left( 1 +\log(pTb) - \frac{c_{e,2} z^2 q T^4b^4}{[Tb - (\theta_{\alpha,\ell} -t)] (\theta_{\alpha,\ell} -t)} \right) \Bigg] \\
    &\leq
    \frac{c_{e,1} p (Tb)^{2-2\nu} q^{1-\nu} }{z^\nu} 
    + 8 \exp\left( 1 +\log(pTb) - c_{e,2} z^2 q T^3b^3 \right)  .
\end{align*}
This concludes the proof of this lemma.
\end{proof}

\subsection{Auxiliary results for the proof of Theorem~\ref{thm: consistency_change_fac}}

To simplify notations in this subsection, define $\bd_t := \vec(\bC_t + \bE_t)$ for any $t\in[T]$.

\begin{lemma}\label{lemma: max_ZhangWu}
Under Assumptions~\ref{ass: loadings}, \ref{ass: core_factor}, \ref{ass: noise}, and \ref{ass: change_loc}, for any $i,j\in[pq]$ and $a>0$, we have
\begin{align*}
    \P\Bigg( \Bigg| \sum_{t = 1}^{\gamma} \Big\{ \bd_t \bd_t^\trans - \E[\bd_t \bd_t^\trans] \Big\}_{i,j} \Bigg| > a \Bigg) 
    &\leq
    C_{\nu,\eta} a^{-\nu/2} (3 c_\eta)^{\nu} 
    \gamma
    + C \exp\left( -\frac{a^2}{81 C_\eta c_\eta^4 \gamma } \right) ,
\end{align*}
where 
$\nu=\min\{\nu_f, \nu_e\}$, $\eta=\min\{\eta_f, \eta_e\}$
are from Assumptions~\ref{ass: core_factor} and \ref{ass: noise}, and satisfy $\nu>4$ and 
$\eta> 1$, respectively,
while $C$, $C_{\nu,\eta}$, and $C_\eta$ are constants only depending on $(\nu,\eta)$.
\end{lemma}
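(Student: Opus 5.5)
The plan is to write each entry $d_{t,i}$ as a linear process in independent innovations and then apply the Fuk--Nagaev-type inequality for quadratic forms of linear processes (Hanson--Wright-type) of \cite{ZhangWu2021}. This is the inequality the paper mentions after Theorem~\ref{thm: consistency_change_fac}.

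First, fix $i,j\in[pq]$. By \eqref{eqn: cov_change} and Assumption~\ref{ass: noise},
\[
d_{t,i} = \sum_{h\ge 0}\Big\{ a_{f,h}\, \bx_{i,t}^{(f)\trans}\vec(\bX_{f,t-h}) + a_{e,h}\, \bx_{i}^{(e)\trans}\vec(\bX_{e,t-h}) + a_{\epsilon,h}\,\Sigma_{e,i} X_{\epsilon,t-h,i}\Big\}.
\]
Here the coefficient vectors are as follows. $\bx_{i,t}^{(f)}$ is the $i$th row of $\bB_w\otimes\bA_w$ on the relevant segment. $\bx_{i}^{(e)}$ is the $i$th row of $\bA_{e,c}\otimes\bA_{e,r}$. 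Both have uniformly bounded $\ell_2$ norm. For $\bx_{i,t}^{(f)}$ this uses $k_r,k_c$ fixed, $\|\bQ\|_{\max},\|\bU\|_{\max}\le c$ and Assumption~\ref{ass: change_loc}~(iii). For $\bx_{i}^{(e)}$ it uses bounded elements and $\|\bA_{e,r}\|_1,\|\bA_{e,c}\|_1=\cO(1)$.

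Stacking the three independent innovation families into one vector $\bz_s$ of independent, mean-zero, unit-variance entries with bounded $\nu$th moments, we obtain $d_{t,i}=\sum_{h\ge0}\mathbf{b}_{i,t,h}^\trans\bz_{t-h}$. The coefficients satisfy $\sum_{h\ge m}\|\mathbf{b}_{i,t,h}\|\lesssim \sum_{h\ge m}(|a_{f,h}|+|a_{e,h}|+|a_{\epsilon,h}|)\le 3c_\eta (m+1)^{-\eta}$. This is the source of the factor $3c_\eta$.

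Next, observe that $\sum_{t=1}^\gamma d_{t,i}d_{t,j}$ is a quadratic form $\sum_{s,s'}\bz_s^\trans \bM_{s,s'}\bz_{s'}$ in the innovations. Its centered version can be handled in one of two ways. One option is to view $\{d_{t,i}d_{t,j}\}_t$ as a (piecewise) stationary causal process and verify the functional dependence measure. For the product, the measure is bounded by Hölder: $\delta_{\nu/2}(m)\le \|d_{t,i}\|_\nu\,\delta_\nu^{(j)}(m)+\|d_{t,j}\|_\nu\,\delta_\nu^{(i)}(m)$. The tail sums are then $\lesssim c_\eta^2 (m+1)^{-\eta}$ with $\nu/2>2$ moments. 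With $\eta>1$, which is greater than $1/2-2/\nu$, the Nagaev inequality for dependent sums (Theorem~2 of \cite{Liuetal2013} or its quadratic version in \cite{ZhangWu2021}, allowing non-stationarity across the finitely many segments) gives
\[
\P\Big(\big|\sum_{t\le\gamma}\{d_{t,i}d_{t,j}-\E\}\big|>a\Big)\le C_{\nu,\eta}\frac{\gamma\,(3c_\eta)^{\nu}}{a^{\nu/2}}+C\exp\Big(-\frac{a^2}{C_\eta \gamma\,(\Theta^*_{0,2})^2}\Big),
\]
where $\Theta^*_{0,2}\lesssim 9c_\eta^2$. Squaring yields the constant $81c_\eta^4$. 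The $\nu/2$ exponent comes from the moments of products, and $(3c_\eta)^\nu=((3c_\eta)^2)^{\nu/2}$.

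The main obstacle is making the dependence-measure bound uniform in $i,j,t$ despite the change points. Each segment has its own coefficient vectors, and the process is not stationary across them. I would handle this through the non-stationary extension (a uniform functional dependence measure, taking suprema over $t$), as done in the proof of Lemma~\ref{lemma: smooth_E_entry_bound}. I would also check carefully that cross terms between the factor and noise innovations contribute nothing extra beyond the constants, which follows from their independence in Assumption~\ref{ass: noise}.
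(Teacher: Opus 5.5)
Your proposal is correct and follows essentially the paper's route. Both arguments write each entry of $\bd_t$ as a linear process in the three independent innovation families and bound its uniform dependence-adjusted norm by $3c_\eta$ via $\delta_{\nu}(m)\lesssim |a_{f,m}|+|a_{e,m}|+|a_{\epsilon,m}|$; the paper then cites Proposition~6.5 of \cite{ZhangWu2021} with $B=0$ and $T=\gamma$, whereas your H\"older bound on the dependence measure of $d_{t,i}d_{t,j}$ followed by the Nagaev inequality of \cite{Liuetal2013} with $\nu/2$ moments simply unpacks what that proposition does and gives the same constants $(3c_\eta)^{\nu}$ and $81c_\eta^4$.
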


\begin{proof}[Proof of \Cref{lemma: max_ZhangWu}]

For any $i\in[p], j\in[q]$, Assumption~\ref{ass: noise} allows us to write the $(i,j)$th element of $\bE_t$ as
\[
E_{t,ij} = \bA_{e,r, i\cdot}^\trans \bF_{e,t} \bA_{e,c, j\cdot} + \Sigma_{e, ij} E_{e,t, ij} ,
\]
in which case the corresponding entry in $\vec(\bE_t)$, denoted by $\vec(\bE_t)_{w}$, $w\in[pq]$, can be read as
\[
\vec(\bE_t)_{w} = (\bA_{e,c, j\cdot} \otimes \bA_{e,r, i\cdot})^\trans \vec(\bF_{e,t}) + \Sigma_{e, ij} E_{e,t, ij} .
\]

Similarly to \eqref{eqn: CC_limit}, we may also write the entry $\vec(\bC_t)_w$, $w\in[pq]$ as
\begin{align*}
    \vec(\bC_t)_w &= \vec\left( \sum_{k=1}^{Q_f +1} \bQ \bR_{k-1} \bF_t \bV_{k-1}^\trans \bU^\trans \cdot \mathbb{I}\{\theta_{f,k-1} <t\le \theta_{f,k}\} \right)_w \\
    &=
    (\bU \otimes \bQ)_{w\cdot}^\trans \sum_{k=1}^{Q_f +1} (\bV_{k-1}^\trans \otimes \bR_{k-1}) \vec(\bF_t) \cdot \mathbb{I}\{\theta_{f,k-1} <t\le \theta_{f,k}\} .
\end{align*}

We will apply Proposition~6.5 of \cite{ZhangWu2021} with $B=0$ and $T=\gamma$ therein.
By Assumption~\ref{ass: noise}, we can write the process $\{\bd_t\}$ (defined at the beginning of this subsection) in the form using notations in (1.1) of \cite{ZhangWu2021}, such that
\[
\bd_t \equiv \vec(\bC_t + \bE_t) = G(t/T,\cF_t) := \left( g_1(t/T, \cF_t), \dots, g_{pq}(t/T, \cF_t) \right) ,
\]
where $\mathcal{F}_t = \{((X_{f,t, ij})_{i\in[p], j\in[q]}, (X_{e,t, ij})_{i\in[p], j\in[q]}, (X_{\epsilon,t, ij})_{i\in[p], j\in[q]} )\}$ and any $g_w(t/T, \cF_t)$ (under the relationship $\bd_w = \vec(\bC_t + \bE_t)_w= C_{t,ij} + E_{t,ij}$) is
\begin{align}
    &\;\quad
    g_w(t/T, \cF_t) := \bd_w = \vec(\bC_t + \bE_t)_w  \notag \\
    &=
    (\bU_{j\cdot} \otimes \bQ_{i\cdot})^\trans \sum_{k=1}^{Q_f +1} (\bV_{k-1}^\trans \otimes \bR_{k-1}) \vec(\bF_t) \cdot \mathbb{I}\{\theta_{f,k-1} <t\le \theta_{f,k}\} \notag \\
    &\;\quad
    + (\bA_{e,c, j\cdot} \otimes \bA_{e,r, i\cdot})^\trans \vec(\bF_{e,t}) + \Sigma_{e, ij} E_{e,t, ij} \notag \\
    &=
    (\bU_{j\cdot} \otimes \bQ_{i\cdot})^\trans \sum_{k=1}^{Q_f +1} (\bV_{k-1}^\trans \otimes \bR_{k-1}) \sum_{s\geq 0} a_{f,s} \vec(\bX_{f,t-s}) \cdot \mathbb{I}\{\theta_{f,k-1} <t\le \theta_{f,k}\} \notag \\
    &\;\quad
    \sum_{s\geq 0} a_{e,s} (\bA_{e,c, j\cdot} \otimes \bA_{e,r, i\cdot})^\trans \vec(\bX_{e,t-s}) + \Sigma_{e, ij} \cdot \sum_{s\geq 0} a_{\epsilon,s} X_{\epsilon, t -s, ij} .
    \label{eqn: ZhangWu_g_Ft}
\end{align}
To verify the condition of uniform stochastic Lipschitz continuity (cf.\ Equation~(2.1) in \cite{ZhangWu2021}), note that it is automatically satisfied for the part $\{\vec(\bE_t)\}$ in our considered process since $\{\bE_t\}$ is strictly stationary, and it is also straightforward for the part $\{\vec(\bC_t)\}$ which is piecewise strictly stationary (with non-stationarity attributed solely to the heteroscedasticity) by Assumptions~\ref{ass: loadings}~(i), \ref{ass: core_factor} and \ref{ass: change_loc}.

Next, we compute the element-wise functional dependence measure (cf.\ (2.1) in \cite{ZhangWu2021}) of the process in \eqref{eqn: ZhangWu_g_Ft}. By Minkowski's inequality, for any $1 \leq q^* \leq \nu$ and $w\in[pq]$ (with the relationship $\bd_w = \vec(\bC_t + \bE_t)_w= C_{t,ij} + E_{t,ij}$), we have
\begin{align*}
    \delta_{t,q*,w} &:= \Bigg\{ \E \Bigg( \Bigg| 
    (\bU_{j\cdot} \otimes \bQ_{i\cdot})^\trans \sum_{k=1}^{Q_f +1} (\bV_{k-1}^\trans \otimes \bR_{k-1}) a_{f,t} \vec(\bX_{f,0} - \bX_{f,0}') \cdot \mathbb{I}\{\theta_{f,k-1} <t\le \theta_{f,k}\} \notag \\
    &\;\quad
    + a_{e,t} (\bA_{e,c, j\cdot} \otimes \bA_{e,r, i\cdot})^\trans \vec(\bX_{e,0} - \bX_{e,0}') + \Sigma_{e, ij} a_{\epsilon,t} (X_{\epsilon, 0, ij} - X_{\epsilon, 0, ij}') \Bigg|^{q^*} \Bigg) \Bigg\}^{1/q^*} \notag \\
    &\lesssim
    a_{f,t} + a_{e,t} + a_{\epsilon,t} ,
\end{align*}
which then gives a dependence adjusted norm (d.a.n.), for any $1 \leq \alpha^* \leq \eta$, as
\begin{align*}
    \Big\| \vec((\bC + \bE)_{\cdot w}) \Big\|_{q^*, \alpha^*} := \sup_{m\geq 0} (m+1)^{\alpha^*} \sum_{t=m}^\infty \delta_{t,q*,w} \leq 3 c_\eta ,
\end{align*}
uniformly over $w\in[pq]$. The uniform d.a.n.\ (cf.\ (2.5) in \cite{ZhangWu2021}) is thus
\[
\Phi_{q^*, \alpha^*} := \max_{1\leq w \leq pq} \Big\| \vec((\bC + \bE)_{\cdot w}) \Big\|_{q^*, \alpha^*}
\leq 3 c_\eta .
\]

Hereafter, we directly take $(q^*, \alpha^*) = (\nu, \eta)$. Note that for any $\nu\geq 0$, we may read $D_{T,B}$ in Proposition~6.5 of \cite{ZhangWu2021} as 
$D_{T,B} = \gamma$ due to $\eta>1$.
Then combining all the above quantities and Proposition~6.5 of \cite{ZhangWu2021} with $B=0$ and $T=\gamma$, we have for any $w,z\in[pq]$ and $a>0$,
\begin{align*}
    &\;\quad
    \P\left( \left| \sum_{t = 1}^{\gamma} \Big\{ \bd_t \bd_t^\trans - \E[\bd_t \bd_t^\trans] \Big\}_{w,z} \right| > a \right) 
    =
    \P\left( \left| \sum_{t = 1}^{\gamma} \Big\{ (\bd_t)_w (\bd_t)_z - \E[(\bd_t)_w (\bd_t)_z] \Big\} \right| > a \right) \\
    &\leq
    C_{\nu,\eta} a^{-\nu/2} (3 c_\eta)^{\nu} 
    \gamma
    + C \exp\left( -\frac{a^2}{81 C_\eta c_\eta^4 \gamma } \right) ,
\end{align*}
for some constants $C_{\nu,\eta}$, $C$, and $C_\eta$.
This completes the proof of this lemma.
\end{proof}

\begin{lemma}\label{lemma: rate_vech_GG}
Let all the assumptions in Lemma~\ref{lemma: max_ZhangWu} hold. Recalling that $\bd_t = \vec(\bC_t + \bE_t)$ for any $t\in[T]$, under the condition on $\gamma$ in Theorem~\ref{thm: consistency_change_fac}, 
we have
\begin{align*}
    & \max_{0\leq \tau\leq T-\gamma} \frac{1}{\sqrt{\gamma}} \Bigg\| \sum_{t=\tau+1}^{\tau+\gamma} (qp)^{-2} \L_{k_rk_c} \left\{ (\wh\bU \otimes \wh\bQ) \otimes (\wh\bU \otimes \wh\bQ) \right\}^\trans \left\{ (\bM_p \otimes \bM_q) \otimes (\bM_p \otimes \bM_q) \right\} \notag \\
    &\quad\quad\quad\quad\quad\quad\quad\quad\quad\quad\quad\quad\quad\quad\quad\quad\quad\quad\quad
    \cdot \vec\Big\{ \bd_t \bd_t^\trans - \E[\bd_t \bd_t^\trans] \Big\} \Bigg\|
    = \cO_P\left( \sqrt{\log(T)} \right) .
\end{align*}
\end{lemma}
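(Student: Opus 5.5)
We rewrite the summand as a fixed low-dimensional linear map applied to a centred vector of cross-products, and then bound its maximum over windows coordinatewise. Write $\bPsi := (qp)^{-2}\L_{k_rk_c}\{(\wh\bU\otimes\wh\bQ)\otimes(\wh\bU\otimes\wh\bQ)\}^\trans\{(\bM_p\otimes\bM_q)\otimes(\bM_p\otimes\bM_q)\}$. Using $\vec(\bd_t\bd_t^\trans)=\bd_t\otimes\bd_t$ and the mixed-product rule, each of the $k_rk_c(k_rk_c+1)/2$ rows of $\bPsi\,\vec\{\bd_t\bd_t^\trans-\E[\bd_t\bd_t^\trans]\}$ has the form
\[
\bg_{a}^\trans\{\bd_t\bd_t^\trans-\E[\bd_t\bd_t^\trans]\}\bg_{b},
\qquad
\bg_a := (pq)^{-1}(\bM_p\otimes\bM_q)(\wh\bU\otimes\wh\bQ)_{\cdot a}.
\]
Here $\bg_a$ depends on the data only through $\wh\bQ,\wh\bU$. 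The dimension $k_rk_c$ is fixed, so it suffices to bound each coordinate. We would first replace $\wh\bQ,\wh\bU$ by $\bQ\wh\bH_r,\bU\wh\bH_c$. By Theorem~\ref{thm: consistency_proj}, $\wh\bH_r,\wh\bH_c$ are $\cO_P(1)$, and they can be pulled out of the sum as fixed $k\times k$ factors. The remainder $(\wh\bQ-\bQ\wh\bH_r)$ is small in Frobenius norm.

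\textbf{Leading term.} With deterministic weights $\bu_a:=(pq)^{-1}(\bM_p\otimes\bM_q)(\bU\otimes\bQ)_{\cdot a}$, we have $\|\bu_a\|_1\lesssim 1$ by Assumption~\ref{ass: loadings}(i) and $\|\bM_m\|_{1}\le 2$. The scalar process $y_t:=\bu_a^\trans\bd_t$ is a linear process in the same innovations as $\bd_t$, so it inherits the dependence-adjusted norm bound $\lesssim c_\eta$ computed in the proof of Lemma~\ref{lemma: max_ZhangWu}. The same holds for the product $y_t z_t$ with $z_t:=\bu_b^\trans\bd_t$. We therefore repeat the argument of Lemma~\ref{lemma: max_ZhangWu}, via Proposition~6.5 of \cite{ZhangWu2021}, for $\sum_{t=\tau+1}^{\tau+\gamma}\{y_tz_t-\E y_tz_t\}$. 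This gives a tail bound of the form $C a^{-\nu/2}\gamma + C\exp(-a^2/(C\gamma))$.

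\textbf{Union bound over windows.} Taking $a=C\sqrt{\gamma\log T}$ and a union bound over $T$ windows and the finitely many coordinates gives total probability
\[
\lesssim T\gamma(\gamma\log T)^{-\nu/4} + T^{1-C'},
\]
which is $o(1)$ (or small after inflating $C$) precisely by the condition $T/(\log(T)^{\nu/4}\gamma^{\nu/4-1})=\cO(1)$ in \eqref{cond:thm: consistency_change_fac:two}. Dividing by $\sqrt\gamma$ yields $\cO_P(\sqrt{\log T})$ for the leading term.

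\textbf{Remainder (main obstacle).} The remainder from $\wh\bQ-\bQ\wh\bH_r$ (and similarly for $\wh\bU$) is the hard part, because $\wh\bQ$ depends on all the data, so the concentration argument cannot be applied to it directly. We would handle it by writing the coordinate as a bilinear form $\bx^\trans\bS_\tau\by$, where $\bS_\tau:=\sum_{t=\tau+1}^{\tau+\gamma}\{\bd_t\bd_t^\trans-\E\bd_t\bd_t^\trans\}$ and $\|\bx\|\le (pq)^{-1}\|\wh\bQ-\bQ\wh\bH_r\|_F\|\wh\bU\|_F$. This gives the bound
\[
|\bx^\trans\bS_\tau\by|\le\|\bx\|\,\|\bS_\tau\|\,\|\by\|,
\]
with $\|\bx\|=\cO_P\bigl((pq)^{-1/2}r_{pq}\bigr)$, where $r_{pq}$ is the square root of the rate in Theorem~\ref{thm: consistency_proj}. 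For the middle factor we would use a crude bound such as $\|\bS_\tau\|\le\|\bS_\tau\|_F$, together with a maximal moment bound over $\tau$ obtained from Lemma~\ref{lemma: max_ZhangWu} entrywise. This gives roughly $\|\bS_\tau\|_F\lesssim pq\sqrt{\gamma\log T}$, times a $\cO_P(1)$ factor.

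The product is then $\sqrt{\gamma\log T}\cdot r_{pq}=o_P(\sqrt{\gamma\log T})$, since $r_{pq}\to0$ as $\min\{T,p,q\}\to\infty$. Dividing by $\sqrt\gamma$ and combining with the leading term gives the claim. If the crude Frobenius bound turns out too lossy in the dimensions, the fallback is to exploit the factor structure: split $\bd_t$ into its $\bC_t$ and $\bE_t$ parts and use Lemmas~\ref{lemma: inequality_sandwich} and \ref{lemma: rate_ELEK_general} segment-wise, as in Lemma~\ref{lemma: consistency_proj_decomp_rate}. This is the step I expect to require the most care.
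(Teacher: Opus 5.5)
Your argument is correct, but it is longer than necessary. The paper's proof is essentially your ``remainder'' step applied to the whole quantity.

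Write $\mathbf{S}_\tau:=\sum_{t=\tau+1}^{\tau+\gamma}\{\bd_t\bd_t^\trans-\E[\bd_t\bd_t^\trans]\}$. The paper first shows $\E[\max_\tau\gamma^{-1}\|\mathbf{S}_\tau\|_F^2]\le\sum_{i,j}\E[\max_\tau\gamma^{-1}(\mathbf{S}_\tau)_{ij}^2]\lesssim(pq)^2\log T$. This comes from integrating tails, a union bound over windows, and Lemma~\ref{lemma: max_ZhangWu}; it is where the condition on $\gamma$ enters. It then uses that the linear map in front has operator norm $\lesssim(pq)^{-1}$, because $\|p^{-1/2}\wh\bQ\|=\|q^{-1/2}\wh\bU\|=1$, $\|\bM_p\|=\|\bM_q\|=1$ and $\|\L_{k_rk_c}\|=\cO(1)$.

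In your notation: $\wh\bQ^\trans\wh\bQ=p\bI_{k_r}$ and $\wh\bU^\trans\wh\bU=q\bI_{k_c}$ hold by construction. Hence $\|\bg_a\|\le(pq)^{-1/2}$ pathwise, and $|\bg_a^\trans\mathbf{S}_\tau\bg_b|\le(pq)^{-1}\|\mathbf{S}_\tau\|_F=\cO_P(\sqrt{\gamma\log T})$ uniformly in $\tau$, with no reference to $\bQ$ or $\wh\bH_r$. So the data-dependence of $\wh\bQ,\wh\bU$ is not an obstacle. The following can all be dropped:
\begin{itemize}
\item the split $\wh\bQ=\bQ\wh\bH_r+(\wh\bQ-\bQ\wh\bH_r)$;
\item the second Zhang--Wu argument for $y_tz_t$;
\item the appeal to Theorem~\ref{thm: consistency_proj}, which also needs Assumption~\ref{ass: identification}, beyond the lemma's stated hypotheses.
\end{itemize}
Indeed, your remainder estimate $r_{pq}\sqrt{\gamma\log T}$ only needs $r_{pq}=\cO_P(1)$, not $r_{pq}\to0$, to reach the claimed order. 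That is another sign the split is redundant.

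What your leading-term step does buy is a dimension-free concentration bound through the $\ell_1$-bounded deterministic weights $\bu_a$. It avoids summing second moments over all $(pq)^2$ entries, and would be the right tool if the Frobenius bound on $\mathbf{S}_\tau$ were too lossy. But your remainder still relies on exactly that Frobenius bound, so the combined argument is no sharper than the paper's, only longer.
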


\begin{proof}[Proof of \Cref{lemma: rate_vech_GG}]
First, for any $\delta>0$, we have
\begin{align*}
    &\;\quad
    \E\Bigg[ \bigg( \max_{0\leq \tau\leq T-\gamma} \frac{1}{\sqrt{\gamma}} \bigg\| \sum_{t=\tau+1}^{\tau+\gamma} \Big\{ \bd_t \bd_t^\trans - \E[\bd_t \bd_t^\trans] \Big\} \bigg\|_F \bigg)^2 \Bigg] \\
    &=
    \sum_{i,j=1}^{pq} \E\Bigg( \max_{0\leq \tau\leq T-\gamma} 
    \left| \frac{1}{\sqrt{\gamma}} \sum_{t=\tau+1}^{\tau+\gamma} \Big\{ \bd_t \bd_t^\trans - \E[\bd_t \bd_t^\trans] \Big\}_{i,j} \right|^2 \Bigg) \\
    &=
    \sum_{i,j=1}^{pq} \int_{0}^\infty \P\Bigg( \max_{0\leq \tau\leq T-\gamma}
    \Bigg| \frac{1}{\sqrt{\gamma}} \sum_{t=\tau+1}^{\tau+\gamma} \Big\{ \bd_t \bd_t^\trans - \E[\bd_t \bd_t^\trans] \Big\}_{i,j} \Bigg|^2 > z \Bigg) \mathrm{d}z \\
    &=
    \sum_{i,j=1}^{pq} \int_{0}^\delta \P\Bigg( \max_{0\leq \tau\leq T-\gamma} 
    \Bigg| \frac{1}{\sqrt{\gamma}} \sum_{t=\tau+1}^{\tau+\gamma} \Big\{ \bd_t \bd_t^\trans - \E[\bd_t \bd_t^\trans] \Big\}_{i,j} \Bigg|^2 > z \Bigg) \mathrm{d}z \\
    &\;\quad
    + \sum_{i,j=1}^{pq} \int_{\delta}^\infty \P\Bigg( \max_{0\leq \tau\leq T-\gamma} 
    \Bigg| \frac{1}{\sqrt{\gamma}} \sum_{t=\tau+1}^{\tau+\gamma} \Big\{ \bd_t \bd_t^\trans - \E[\bd_t \bd_t^\trans] \Big\}_{i,j} \Bigg|^2 > z \Bigg) \mathrm{d}z \\
    &\leq
    (pq)^2 \delta 
    + \sum_{i,j=1}^{pq} \int_{\delta}^\infty
    \P\Bigg( \max_{0\leq \tau\leq T-\gamma} 
    \Bigg| \frac{1}{\sqrt{\gamma}} \sum_{t=\tau+1}^{\tau+\gamma} \Big\{ \bd_t \bd_t^\trans - \E[\bd_t \bd_t^\trans] \Big\}_{i,j} \Bigg|^2 > z \Bigg)
    \mathrm{d}z \\
    &\lesssim
    (pq)^2 \delta 
    + \sum_{i,j=1}^{pq} \int_{\delta}^\infty
    (T - \gamma) 
    \cdot \P\Bigg( \Bigg| \sum_{t = 1}^{\gamma} \Big\{ \bd_t \bd_t^\trans - \E[\bd_t \bd_t^\trans] \Big\}_{i,j} \Bigg| > \sqrt{z \gamma } \Bigg)
    \mathrm{d}z \\
    &\lesssim
    (pq)^2 \delta 
    + \sum_{i,j=1}^{pq} \int_{\delta}^\infty
    (T - \gamma) \cdot (z \gamma)^{-\nu/4} 
    \gamma
    \mathrm{d}z \\
    &\;\quad
    + \sum_{i,j=1}^{pq} \int_{\delta}^\infty
    (T - \gamma) \cdot \exp\left( -\frac{z \gamma}{81 C_\eta c_\eta^4 \gamma } \right)
    \mathrm{d}z \\
    &\lesssim
    (pq)^2 \left\{ \delta
    + \int_{\delta}^\infty T z^{-\nu/4} 
    \gamma^{1- \nu/4}
    \mathrm{d}z
    + \int_{\delta}^\infty T \exp\left( -\frac{z}{81 C_\eta c_\eta^4 } \right) \mathrm{d}z \right\} \\
    &\asymp
    (pq)^2 \left\{ \delta
    + T 
    \gamma^{1- \nu/4}
    \delta^{1 -\nu/4}
    + T \exp\left( -\frac{\delta}{81 C_\eta c_\eta^4 } \right) \right\} ,
\end{align*}
where the third and fourth last lines used Lemma~\ref{lemma: max_ZhangWu}. Since the above holds for any $\delta\geq 0$, by taking $\delta = 81 C_\eta c_\eta^4 \log(T/\log(T))$, we have
\begin{align*}
    &\;\quad
    \E\Bigg[ \bigg( \max_{0\leq \tau\leq T-\gamma} \frac{1}{\sqrt{\gamma}} \bigg\| \sum_{t=\tau+1}^{\tau+\gamma} \Big\{ \bd_t \bd_t^\trans - \E[\bd_t \bd_t^\trans] \Big\} \bigg\|_F \bigg)^2 \Bigg] \\
    &\lesssim
    (pq)^2 \left( \log(T)
    + \frac{T}{\log(T)^{\nu/4} \gamma^{\nu/4 - 1}} \log(T)
    + \log(T) \right)
    \lesssim (pq)^2 \log(T) ,    
\end{align*}
by the condition on $\gamma$ in~\eqref{cond:thm: consistency_change_fac:one} of Theorem~\ref{thm: consistency_change_fac}.
Then together with the fact that we have
\[
\|\L_{k_rk_c}\|, \big\| p^{-1/2} \wh\bQ \big\|, \big\| q^{-1/2} \wh\bU \big\|, \|\bM_p\|, \|\bM_q\| = \cO_P(1) ,
\]
by the definitions and \eqref{eqn: M_norm}, the proof of this lemma is concluded.
\end{proof}

\begin{lemma}\label{lemma: U_kron_Q_converge_I}
Let all the assumptions in Theorem~\ref{thm: consistency_proj} hold. We have
\[
\left\| \frac{\bU^\trans \wh\bU \wh\bU^\trans \bU}{q^2} \otimes \frac{\bQ^\trans \wh\bQ \wh\bQ^\trans \bQ}{p^2} - \bI_{k_rk_c} \right\| = o_P(1) .
\]
\end{lemma}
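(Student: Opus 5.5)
The claim reduces to showing that each of the two factors converges to the identity, $q^{-2}\bU^\trans \wh\bU \wh\bU^\trans \bU = \bI_{k_c} + o_P(1)$ and $p^{-2}\bQ^\trans \wh\bQ \wh\bQ^\trans \bQ = \bI_{k_r} + o_P(1)$. Writing $\bX := q^{-2}\bU^\trans \wh\bU \wh\bU^\trans \bU$ and $\bY := p^{-2}\bQ^\trans \wh\bQ \wh\bQ^\trans \bQ$, the Kronecker product then follows from the identity
\[
\bX\otimes\bY - \bI_{k_c}\otimes\bI_{k_r} = (\bX-\bI_{k_c})\otimes\bY + \bI_{k_c}\otimes(\bY-\bI_{k_r}).
\]
We combine this with $\|\bA\otimes\bB\| = \|\bA\|\,\|\bB\|$ and the boundedness of $\|\bY\|$. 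Since $k_r$ and $k_c$ are fixed, no dimension-dependent constants enter. By symmetry it suffices to treat the $\bQ$ factor.

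For the $\bQ$ factor, decompose $\wh\bQ = \bQ\wh\bH_r + (\wh\bQ - \bQ\wh\bH_r)$, so that
\[
p^{-1}\bQ^\trans\wh\bQ = (p^{-1}\bQ^\trans\bQ)\wh\bH_r + p^{-1}\bQ^\trans(\wh\bQ - \bQ\wh\bH_r).
\]
The first term equals $\wh\bH_r + o(1)\cdot\wh\bH_r$ by Assumption~\ref{ass: loadings}~(ii). The second is bounded in spectral norm by
\[
\|p^{-1/2}\bQ\| \cdot p^{-1/2}\|\wh\bQ - \bQ\wh\bH_r\|_F,
\]
which is $\cO(1)\cdot o_P(1)$ by Theorem~\ref{thm: consistency_proj}, since every rate there vanishes as $\min\{T,p,q\}\to\infty$. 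We also need $\wh\bH_r = \cO_P(1)$, which follows from $\wh\bH_r\wh\bH_r^\trans = \bI_{k_r}+o_P(1)$. Hence $p^{-1}\bQ^\trans\wh\bQ = \wh\bH_r + o_P(1)$, and therefore
\[
p^{-2}\bQ^\trans\wh\bQ\wh\bQ^\trans\bQ = \wh\bH_r\wh\bH_r^\trans + o_P(1) = \bI_{k_r} + o_P(1),
\]
again using the last statement of Theorem~\ref{thm: consistency_proj}. The same argument with $\wh\bU$, $\wh\bH_c$ and $q$ gives the other factor.

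The only potential subtlety is the order of the matrix product: we need $\wh\bH_r\wh\bH_r^\trans \to \bI$ rather than $\wh\bH_r^\trans\wh\bH_r$, and this is exactly what Theorem~\ref{thm: consistency_proj} provides, so no extra step is required. Should one prefer the other ordering, it also holds, because for a square asymptotically invertible matrix $\wh\bH_r\wh\bH_r^\trans = \bI + o_P(1)$ implies $\wh\bH_r^\trans\wh\bH_r = \bI + o_P(1)$ (both share eigenvalues, and the polar decomposition shows $\wh\bH_r$ is close to an orthogonal matrix). Apart from this, the argument is a routine perturbation bound.
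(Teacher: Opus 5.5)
Your proof is correct and is essentially the paper's argument: both derive $p^{-1}\bQ^\trans\wh\bQ = \wh\bH_r + o_P(1)$ (and its $\bU$ analogue) from Theorem~\ref{thm: consistency_proj} and Assumption~\ref{ass: loadings}~(ii), then combine this with $\wh\bH_r\wh\bH_r^\trans = \bI_{k_r} + o_P(1)$ through a telescoping decomposition of the Kronecker product. The only difference is organisational: you first show each factor is $\bI + o_P(1)$ and then use a two-term split of the Kronecker difference, whereas the paper expands the Kronecker product into all the perturbation terms at once.
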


\begin{proof}[Proof of \Cref{lemma: U_kron_Q_converge_I}]

First, we may write
\[
\frac{1}{q} \wh\bU^\trans \bU = \frac{(\wh\bU - \bU \wh\bH_c)^\trans}{\sqrt{q}} \frac{\bU}{\sqrt{q}} + \wh\bH_c^\trans \left( \frac{\bU^\trans \bU}{q} - \bI_{k_c} \right) + \wh\bH_c^\trans ,
\]
which, by Theorem~\ref{thm: consistency_proj} and Assumption~\ref{ass: loadings}~(ii), gives
\begin{equation}
\label{eqn: hat_U_U_converge}
\left\| \frac{1}{q} \wh\bU^\trans \bU - \wh\bH_c^\trans \right\| = o_P(1) ,
\text{\ and similarly \ }
\left\| \frac{1}{p} \wh\bQ^\trans \bQ - \wh\bH_r^\trans \right\| = o_P(1) .
\end{equation}

Note that we also have
\begin{align*}
    &\;\quad
    \frac{\bU^\trans \wh\bU \wh\bU^\trans \bU}{q^2} \otimes \frac{\bQ^\trans \wh\bQ \wh\bQ^\trans \bQ}{p^2} \\
    &=
    \Bigg\{ \frac{\bU^\trans \wh\bU}{q} \Bigg( \frac{1}{q} \wh\bU^\trans \bU - \wh\bH_c^\trans \Bigg) \Bigg\} \otimes \frac{\bQ^\trans \wh\bQ \wh\bQ^\trans \bQ}{p^2}
    + \Bigg\{ \Bigg( \frac{\bU^\trans \wh\bU}{q} - \wh\bH_c \Bigg) \wh\bH_c^\trans \Bigg\} \otimes \frac{\bQ^\trans \wh\bQ \wh\bQ^\trans \bQ}{p^2} \\
    &\;\quad
    + (\wh\bH_c \wh\bH_c^\trans - \bI_{k_c}) \otimes \frac{\bQ^\trans \wh\bQ \wh\bQ^\trans \bQ}{p^2}
    + \bI_{k_c} \otimes \Bigg\{ \frac{\bQ^\trans \wh\bQ}{p} \Bigg( \frac{\wh\bQ^\trans \bQ}{p} - \wh\bH_r^\trans \Bigg) \Bigg\} \\
    &\;\quad
    + \bI_{k_c} \otimes \Bigg\{ \Bigg( \frac{\bQ^\trans \wh\bQ}{p} - \wh\bH_r \Bigg) \wh\bH_r^\trans \Bigg\} 
    + \bI_{k_c} \otimes (\wh\bH_r \wh\bH_r^\trans - \bI_{k_r}) 
    + \bI_{k_r k_c} ,
\end{align*}
which concludes the proof of this lemma by combining the triangle inequality, \eqref{eqn: hat_U_U_converge}, Theorem~\ref{thm: consistency_proj}, and Assumption~\ref{ass: loadings}~(ii).
\end{proof}

\begin{lemma}\label{lemma: ZZ_2+epsilon}
Let Assumptions~\ref{ass: loadings}, \ref{ass: core_factor}, \ref{ass: noise}, and \ref{ass: change_loc} hold, with $\nu = \min\{\nu_f, \nu_e\} >4$, $\eta = \min\{\eta_f, \eta_e\} > 1$.
Then for the constant $\epsilon$ satisfying $\epsilon \in (0, \nu/2 - 2)$, we have for any $i\in[p^2 q^2]$ and any $1\leq s<e \leq T$:
\begin{equation*}
\E\left( \left\vert \sum_{t = s + 1}^e \left[ \left\{ (\bM_p \otimes \bM_q) \otimes (\bM_p \otimes \bM_q) \right\} \vec\Big\{ \bd_t \bd_t^\trans - \E[\bd_t \bd_t^\trans] \Big\} \right]_i \right\vert^{2 + \epsilon} \right) \le c_1 (e - s)^{1 + \epsilon/2} .
\end{equation*}
where $c_1 >0 $ is some constant that only depends on $\nu$.
\end{lemma}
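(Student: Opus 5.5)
The plan is to reduce the statement to a moment inequality for partial sums of a scalar, centred, nonstationary process with summable functional dependence, and then invoke a Rosenthal/Burkholder-type bound for such processes (e.g.\ the moment inequality underlying Proposition~6.5 of \cite{ZhangWu2021}, or Theorem~1 of Liu, Xiao and Wu). Write $\bM := \bM_p \otimes \bM_q$. Since $\vec(\bd_t\bd_t^\trans) = \bd_t \otimes \bd_t$, we have $(\bM\otimes\bM)\vec\{\bd_t\bd_t^\trans\} = \vec\{(\bM\bd_t)(\bM\bd_t)^\trans\}$. Its $i$th entry, for $i \leftrightarrow (w,z)$, is therefore $Y_{t} := \tilde d_{t,w}\tilde d_{t,z} - \E[\tilde d_{t,w}\tilde d_{t,z}]$ with $\tilde\bd_t := \bM\bd_t$. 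Thus it suffices to show $\E|\sum_{t=s+1}^e Y_t|^{2+\epsilon} \le c_1 (e-s)^{1+\epsilon/2}$, uniformly in $(w,z)$ and $(s,e)$.

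First I would verify that $\tilde\bd_t$ has the same structure as $\bd_t$ in the proof of Lemma~\ref{lemma: max_ZhangWu}. Note that $\bM\vec(\bC_t) = \vec(\bM_q \bC_t \bM_p)$ in the appropriate ordering, and $\bM_p\bA_w = \bA_w$ by Assumption~\ref{ass: identification}, so the common component is unchanged. For the noise, $\bM_p \bE_t \bM_q = \bM_p\bA_{e,r}\bF_{e,t}\bA_{e,c}^\trans\bM_q + \bM_p(\bSigma_e\circ\bE_{e,t})\bM_q$. The first term keeps the loading form, with rows of $\bM_p\bA_{e,r}$ bounded because $\|\bA_{e,r}\|_1=\cO(1)$ and its entries are bounded. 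The second term's $(i,j)$ entry is the idiosyncratic entry plus averages $-p^{-1}\sum_{i'}(\cdot) - q^{-1}\sum_{j'}(\cdot) + (pq)^{-1}\sum(\cdot)$ of independent linear processes. Each such entry is a linear process in the innovations $\{\bX_{f,t},\bX_{e,t},\bX_{\epsilon,t}\}$ whose coefficients in lag $h$ are bounded by $c(|a_{f,h}|+|a_{e,h}|+|a_{\epsilon,h}|)$, with the $\ell_2$ norm of the cross-sectional weights uniformly bounded. By Minkowski/Rosenthal applied to the independent innovations, the functional dependence measure satisfies $\delta_{t,\nu}(\tilde d_w) \lesssim |a_{f,t}|+|a_{e,t}|+|a_{\epsilon,t}|$ uniformly in $w$. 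Hence the d.a.n.\ bound $\Phi_{\nu,\eta}\le C c_\eta$ holds exactly as in the proof of Lemma~\ref{lemma: max_ZhangWu}, and $\sup_{t,w}\|\tilde d_{t,w}\|_\nu<\infty$. The piecewise nonstationarity coming from the segments of $\bC_t$ is harmless, because the dependence measures are bounded uniformly in $t$.

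Next I would pass to the product process. By Hölder's inequality, $\delta_{t,\nu/2}(Y) \le \|\tilde d_w\|_\nu\,\delta_{t,\nu}(\tilde d_z) + \|\tilde d_z\|_\nu\,\delta_{t,\nu}(\tilde d_w)$. So $Y_t$ is centred with uniformly bounded $(\nu/2)$th moments, and has dependence-adjusted norm $\|Y\|_{\nu/2,\eta}\lesssim 1$ with $\eta>1>1/2-2/\nu$. Since $2+\epsilon < \nu/2$, the Burkholder inequality for martingale-difference projections $\mathcal{P}_{t-h}Y_t$ gives the bound. Concretely, $\|\sum_t Y_t\|_{2+\epsilon} \le \sum_{h\ge0}\|\sum_t \mathcal{P}_{t-h}Y_t\|_{2+\epsilon} \lesssim \sum_h \big(\sum_{t}\delta_{h,2+\epsilon}(Y)^2\big)^{1/2} \lesssim (e-s)^{1/2}\sum_h\delta_{h,\nu/2}(Y)$. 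Raising this to the power $2+\epsilon$ yields $c_1(e-s)^{1+\epsilon/2}$, with $c_1$ depending only on $\nu$ (and the model constants). Summability of $\sum_h\delta_h$ follows from $\eta>0$.

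The main obstacle is the first step. One must confirm that projecting by $\bM_p\otimes\bM_q$ does not destroy the uniform (in $w$) control of the dependence measures. This can fail only through the cross-sectional averages of the idiosyncratic part, and there independence of the innovations combined with weights of $\ell_2$-norm $\cO(1)$ resolves it.
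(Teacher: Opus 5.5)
Your proposal is correct, but it follows a different route from the paper's.

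\textbf{The paper's route.} It first proves the $(2+\epsilon)$-moment bound for the raw centred products $(\bd_t)_i(\bd_t)_j-\E[(\bd_t)_i(\bd_t)_j]$, citing Proposition~4 of \cite{Berkesetal2011} as in Lemma~B.22 of \cite{ZhangCenCho2026}. Only then does it bring in the projection. Each row of $\bM_a$ has $\ell_1$-norm $2(a-1)/a\le 2$, so each row of $(\bM_p\otimes\bM_q)\otimes(\bM_p\otimes\bM_q)$ has $\ell_1$-norm at most $16$. Minkowski's inequality in $L^{2+\epsilon}$, applied to the already-summed entries, then transfers the bound at the cost of a factor $16^{2+\epsilon}$.

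\textbf{Your route.} You move the projection inside via $(\bM\otimes\bM)(\bd_t\otimes\bd_t)=(\bM\bd_t)\otimes(\bM\bd_t)$ and re-check the dependence structure of $\tilde{\bd}_t=\bM\bd_t$. You then derive the moment inequality yourself, using H\"older for products and Burkholder for the martingale differences $\mathcal{P}_{t-h}Y_t$.

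\textbf{Comparison.} Your argument is self-contained. It also makes explicit that only $2+\epsilon\le\nu/2$ and $\sum_{h}\sup_t\delta_{t,h,\nu/2}(Y)<\infty$ are used, so the piecewise non-stationarity of $\bC_t$ and the condition $\eta>1$ play no role. The paper's argument is shorter because it outsources the moment inequality.

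\textbf{The step you flag as the main obstacle is avoidable.} Let $M_{wk}$ denote the entries of $\bM=\bM_p\otimes\bM_q$. By Minkowski, $\delta_{t,h,\nu}(\tilde d_w)\le\sum_k|M_{wk}|\,\delta_{t,h,\nu}(d_k)\le 4\sup_k\delta_{t,h,\nu}(d_k)$, and likewise $\sup_{t,w}\|\tilde d_{t,w}\|_\nu\le 4\sup_{t,k}\|d_{t,k}\|_\nu$. The bounds on $d_k$ are exactly those already computed in the proof of Lemma~\ref{lemma: max_ZhangWu}. So you need no Rosenthal bound, no independence of innovations, and no appeal to $\bM_p\bA_w=\bA_w$. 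The last point matters slightly, because Assumption~\ref{ass: identification} is not among the lemma's hypotheses. This version is also indifferent to whether the Kronecker ordering is $\bM_p\otimes\bM_q$ or $\bM_q\otimes\bM_p$, which you had to hedge on.
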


\begin{proof}[Proof of Lemma~\ref{lemma: ZZ_2+epsilon}]
We first show that for any $i,j\in[pq]$ and any $1\leq s<e \leq T$:
\begin{equation}
\E\left( \left\vert \sum_{t = s + 1}^e \left[ (\bd_t)_i (\bd_t)_j - \E\{ (\bd_t)_i (\bd_t)_j \} \right] \right\vert^{2 + \epsilon} \right) \le c_1' (e - s)^{1 + \epsilon/2} ,
\label{eqn: ZZ_2+epsilon_part}
\end{equation}
where $\epsilon$ is the same constant from the statement of this lemma, and $c_1'$ is some constant that only depends on $\nu$.
In view of \eqref{eqn: Gt_define}, Assumptions~\ref{ass: loadings}~(i), \ref{ass: core_factor} (with $\nu>4$, $\eta> 1$), \ref{ass: noise} (with $\nu>4$, $\eta> 1$), and \ref{ass: change_loc},
the above result can be shown analogously (and hence omitted) by applying Proposition~4 of \cite{Berkesetal2011} as in the proof of Lemma~B.22 in \cite{ZhangCenCho2026}.
Next, note that for any positive integer $a$, the matrix $\bM_a$ satisfies
\begin{equation}
\label{eqn: M_norm_infty}
\|\bM_a\|_1 = \|\bM_a\|_{\infty} = 2(a-1)/a \leq 2 .
\end{equation}
Hence with Lemma~B.15~(i) of \cite{ZhangCenCho2026}, the arguments for \eqref{eqn: ZZ_2+epsilon_part} hold similarly.
\end{proof}

\begin{lemma}\label{lemma: max_ZZ_2+epsilon}
Let all assumptions in Lemma~\ref{lemma: ZZ_2+epsilon} hold.
Then for the same constant $\epsilon$ in \Cref{lemma: ZZ_2+epsilon}, for any $\kappa_T\to\infty$ arbitrarily slowly, define the event
\begin{align*}
    \cM_T^{(z)} &:= \Bigg\{ \max_{1 \leq w \leq Q_f} \max_{\omega_{f,w}^{-2} \kappa_T^2 \leq \ell \leq \theta_{f,w} - \theta_{f,w - 1}} \frac{\sqrt{\omega_{f,w}^{-2}\kappa_T^2}}{\ell}
    \Bigg\Vert  \sum_{t = \theta_{f,w} + z\gamma - \ell + 1}^{\theta_{f,w} + z\gamma} \frac{\L_{k_rk_c}}{p^2 q^2} \left\{ (\wh\bU \otimes \wh\bQ) \otimes (\wh\bU \otimes \wh\bQ) \right\}^\trans \\
    &\quad\quad\quad\quad\quad\quad
    \quad\quad\quad\quad\quad\quad
    \cdot \Big\{ (\bM_p \otimes \bM_q) \otimes (\bM_p \otimes \bM_q) \Big\} \vec\Big\{ \bd_t \bd_t^\trans - \E[\bd_t \bd_t^\trans] \Big\} \Bigg\Vert \geq \kappa_T \Bigg\} ,
\end{align*}
for $z\in\{-1,0,1\}$. Then as $\min\{T, p, q\} \to \infty$, it holds that
\begin{equation*}
\begin{split}
    &
    \sum_{z\in\{-1,0,1\}} \P\left( \cM_T^{(z)} \right) \to 0 .
\end{split}
\end{equation*}
\end{lemma}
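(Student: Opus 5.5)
The plan is a standard reduction of the weighted maximal inequality to moment bounds on partial sums, via a dyadic peeling over $\ell$. It proceeds in three steps.

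\textbf{Step 1: strip the projection matrices.} Write $\bxi_t := \{(\bM_p \otimes \bM_q) \otimes (\bM_p \otimes \bM_q)\} \vec\{\bd_t \bd_t^\trans - \E[\bd_t \bd_t^\trans]\} \in \R^{p^2q^2}$. The operator in front of the sum is $(p^2q^2)^{-1}\L_{k_rk_c}\{(\wh\bU \otimes \wh\bQ) \otimes (\wh\bU \otimes \wh\bQ)\}^\trans$. Its spectral norm is at most $\|\L_{k_rk_c}\| \cdot \|p^{-1/2}\wh\bQ\|^2\|q^{-1/2}\wh\bU\|^2 /(pq)$, and the factors other than $1/(pq)$ are $\cO_P(1)$ by construction (the columns of $\wh\bQ/\sqrt p$ and $\wh\bU/\sqrt q$ are orthonormal).

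Bounding this operator by its spectral norm throws away too much. Instead, I would exploit that the operator is a contraction with $k_r^2k_c^2$ fixed rows. Each row is $(pq)^{-2}$ times a vector whose entries are products of bounded loadings, so its $\ell_1$ norm is $\cO_P(1)$. This uses that the entries of $\wh\bQ$ and $\wh\bU$ are bounded in probability in max norm. Failing that, I would replace $\wh\bQ$ by $\bQ\wh\bH_r$ via Theorem~\ref{thm: consistency_proj}, because $\bQ$ is bounded by Assumption~\ref{ass: loadings}(i), and control the remainder by Cauchy--Schwarz.

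Consequently each coordinate of the projected sum is a weighted combination $\sum_i c_i \sum_t (\bxi_t)_i$ with $\sum_i |c_i| = \cO_P(1)$. Conditioning is avoided because the maximal bound will hold uniformly in $i$ after the next steps.

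\textbf{Step 2: moment bound.} Lemma~\ref{lemma: ZZ_2+epsilon} gives $\E|\sum_{t=s+1}^e (\bxi_t)_i|^{2+\epsilon} \le c_1 (e-s)^{1+\epsilon/2}$ uniformly in $i$. By Minkowski's inequality, the same bound, up to constants, holds for any fixed $\ell_1$-bounded combination $\sum_i c_i(\bxi_t)_i$.

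\textbf{Step 3: maximal inequality over $\ell$.} Here I apply a H\'ajek--R\'enyi-type argument. A moment bound of the form $(e-s)^{1+\epsilon/2}$ with exponent $2+\epsilon>2$ yields, via a M\'oricz-type maximal inequality, the following. For a fixed $w$ and $z$, and anchoring the backward partial sums at $\theta_{f,w}+z\gamma$,
\[
\E\max_{\ell\le L}\Big|\sum_{t=\theta_{f,w}+z\gamma-\ell+1}^{\theta_{f,w}+z\gamma}(\cdot)\Big|^{2+\epsilon}\lesssim L^{1+\epsilon/2}.
\]
Peeling $\ell$ into dyadic blocks $[2^k m_0, 2^{k+1} m_0)$ with $m_0 = \omega_{f,w}^{-2}\kappa_T^2$ and applying Markov's inequality to each block gives
\[
\P\Big(\max_{m_0\le \ell}\frac{\sqrt{m_0}}{\ell}\Big|\sum(\cdot)\Big|\ge \kappa_T\Big) \lesssim \sum_k \frac{(2^{k+1}m_0)^{1+\epsilon/2}}{(\kappa_T 2^k m_0/\sqrt{m_0})^{2+\epsilon}} \lesssim \kappa_T^{-(2+\epsilon)}\sum_k 2^{-k\epsilon/2} \to 0 .
\]
The bound is then summed over the finitely many $w \in [Q_f]$, the $k_r^2k_c^2$ coordinates, and $z\in\{-1,0,1\}$.

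\textbf{Main obstacle.} The delicate point is Step 1. The rows of the random projection involve estimated loadings that depend on all the data, and hence on the $\bd_t$ themselves, so one cannot simply condition on $\wh\bQ$ and $\wh\bU$. I would handle this by bounding the operator's row $\ell_1$ norms by an $\cO_P(1)$ random constant $C$ and applying the maximal inequality to every coordinate $i$ separately. With $p^2q^2$ coordinates, the union bound over $i$ must not blow up. To see that it does not, note that each $\ell_1$-combination is bounded by $\max_i$ only if one takes the maximum, which costs a factor $p^2q^2$; the exact $(pq)^{-2}$ scaling of the weights instead gives an average. I would therefore use Minkowski with the weights inside the moment, writing $\E|\sum_i c_i X_i|^{2+\epsilon} \le (\sum_i |c_i| \|X_i\|_{2+\epsilon})^{2+\epsilon}$ for the deterministic surrogate $\bQ\wh\bH_r$. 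The estimation error $\wh\bQ - \bQ\wh\bH_r$ would be treated separately by a crude Frobenius bound, relying on the rates in Theorem~\ref{thm: consistency_proj} to make it negligible.
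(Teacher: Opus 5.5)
Your route differs from the paper's at exactly the point you call the main obstacle.

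\textbf{What the paper does.} It also works through the row $\ell_1$ norms of $\wh\bXi := (p^2q^2)^{-1}\L_{k_rk_c}\{(\wh\bU\otimes\wh\bQ)\otimes(\wh\bU\otimes\wh\bQ)\}^\trans$. It gets $\sum_j|(\wh\bXi)_{ij}|\lesssim 1$ deterministically from Cauchy--Schwarz and $\wh\bQ^\trans\wh\bQ=p\bI_{k_r}$, $\wh\bU^\trans\wh\bU=q\bI_{k_c}$, so no entrywise control of $\wh\bQ,\wh\bU$ is needed. It then pulls out $\max_{j\in[p^2q^2]}|\sum_t(\wt\bd_t)_j|$ and bounds its $(2+\epsilon)$-th moment by Lemma~\ref{lemma: ZZ_2+epsilon} and Kirch's H\'ajek--R\'enyi inequality, as if only one coordinate were involved. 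The price of a maximum over $p^2q^2$ coordinates under polynomial moments is never paid there, so your concern is well founded.

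\textbf{Your remedy.} Putting deterministic weights inside the moment via Minkowski is legitimate, but two steps need repair.
\begin{enumerate}
\item $\bQ\wh\bH_r$ is not deterministic. You must pull the fixed-size, $\cO_P(1)$ matrix $\wh\bH_c\otimes\wh\bH_r$ out of $\wh\bU\otimes\wh\bQ$, so that the weights come from $\bQ$ and $\bU$ alone.
\item Your ``crude Frobenius bound'' on the remainder involving $\wh\bQ-\bQ\wh\bH_r$ still needs a bound, uniform over $\ell\ge m_0:=\omega_{f,w}^{-2}\kappa_T^2$, on $\frac{\sqrt{m_0}}{\ell}\|\sum_t\wt\bd_t\|/(pq)$. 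You have not said how to get it.
\end{enumerate}

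\textbf{The simpler route you discarded.} That uniform bound is where your dismissal of the spectral norm goes wrong: the spectral norm loses nothing. Since $\wh\bQ/\sqrt p$ and $\wh\bU/\sqrt q$ have orthonormal columns, $\|\wh\bXi\|\le\|\L_{k_rk_c}\|/(pq)$ deterministically. Apply Minkowski's inequality in $L^{1+\epsilon/2}$ to $\|\bx\|^2=\sum_j x_j^2$ and combine with Lemma~\ref{lemma: ZZ_2+epsilon}; this gives
\[
\E\Bigl\|\sum_{t=s+1}^{e}\wt\bd_t\Bigr\|^{2+\epsilon}\le c_1\,(pq)^{2+\epsilon}(e-s)^{1+\epsilon/2},
\]
with no union bound over coordinates. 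Your Step~3 then applies verbatim to $\|\sum_t\wt\bd_t\|/(pq)$: M\'oricz's maximal inequality (its proof uses only the triangle inequality, so it covers norm-valued partial sums) followed by dyadic peeling. This yields $\P(\cM_T^{(z)})\lesssim Q_f\kappa_T^{-(2+\epsilon)}$ directly, in the same spirit as the Frobenius-norm argument in the proof of Lemma~\ref{lemma: rate_vech_GG}.

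The reason it works: $\ell_2$ aggregation over coordinates exactly matches the $(pq)^{-1}$ operator norm. The row-$\ell_1$ route instead forces an $\ell_\infty$ aggregation over $p^2q^2$ coordinates, which polynomial moments cannot absorb. With the spectral route, the surrogate, Theorem~\ref{thm: consistency_proj}, entrywise bounds on $\wh\bQ$, and the worry that $\wh\bQ$ depends on the data all become unnecessary.
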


\begin{proof}[Proof of Lemma~\ref{lemma: max_ZZ_2+epsilon}]
Consider the event $\cM_T^{(0)}$.
To ease notations, for any $t\in[T]$, define
\begin{align*}
    \wh\bXi &:= \frac{\L_{k_rk_c}}{p^2 q^2} \left\{ (\wh\bU \otimes \wh\bQ) \otimes (\wh\bU \otimes \wh\bQ) \right\}^\trans , \\
    \wt\bd_t &:= \Big\{ (\bM_p \otimes \bM_q) \otimes (\bM_p \otimes \bM_q) \Big\} \vec\Big\{ \bd_t \bd_t^\trans - \E[\bd_t \bd_t^\trans] \Big\} .
\end{align*}
Immediately, by Cauchy--Schwarz inequality and the definitions of $\wh\bU$ and $\wh\bQ$, we have 
\begin{align*}
    \|\wh\bXi\|_\infty^2 &= \max_{i\in[k_r^2 k_c^2]} \left( \sum_{j=1}^{p^2 q^2} |(\wh\bXi)_{ij}| \right)^2
    \leq
    p^2 q^2 \max_{i\in[k_r^2 k_c^2]} \sum_{j=1}^{p^2 q^2} (\wh\bXi)_{ij}^2 \lesssim 1 .
\end{align*}
Then by Minkowski's inequality, we have for any $w\in[Q_f]$ and $i\in [k_rk_c (k_rk_c +1)/2]$,
\begin{align}
    &\;\quad
    \E\left[ \left( \max_{\omega_{f,w}^{-2} \kappa_T^2 \leq \ell \leq \theta_{f,w} - \theta_{f,w - 1}} \frac{\sqrt{\omega_{f,w}^{-2}\kappa_T^2}}{\ell} \left\vert  \sum_{t = \theta_{f,w} - \ell + 1}^{\theta_{f,w}} (\wh\bXi \wt\bd_t)_i \right\vert \right)^{2 + \epsilon} \right] \notag \\
    &=
    \E\left[ \left( \max_{\omega_{f,w}^{-2} \kappa_T^2 \leq \ell \leq \theta_{f,w} - \theta_{f,w - 1}} \frac{\sqrt{\omega_{f,w}^{-2}\kappa_T^2}}{\ell} \left\vert  \sum_{j=1}^{p^2 q^2} (\wh\bXi)_{i,j} \sum_{t = \theta_{f,w} - \ell + 1}^{\theta_{f,w}} (\wt\bd_t)_j \right\vert \right)^{2 + \epsilon} \right] \notag \\
    &\leq
    \E\left[ \left( \max_{\omega_{f,w}^{-2} \kappa_T^2 \leq \ell \leq \theta_{f,w} - \theta_{f,w - 1}} \frac{\sqrt{\omega_{f,w}^{-2}\kappa_T^2}}{\ell} 
    \left( \sum_{j=1}^{p^2 q^2} |(\wh\bXi)_{i,j}| \right)
    \max_{j\in [p^2 q^2]}
    \left\vert  \sum_{t = \theta_{f,w} - \ell + 1}^{\theta_{f,w}} (\wt\bd_t)_j \right\vert \right)^{2 + \epsilon} \right] \notag \\
    &\lesssim
    \E\left[ \left( \max_{j\in [p^2 q^2]} \max_{\omega_{f,w}^{-2} \kappa_T^2 \leq \ell \leq \theta_{f,w} - \theta_{f,w - 1}} \frac{\sqrt{\omega_{f,w}^{-2}\kappa_T^2}}{\ell} 
    \left\vert  \sum_{t = \theta_{f,w} - \ell + 1}^{\theta_{f,w}} (\wt\bd_t)_j \right\vert \right)^{2 + \epsilon} \right] \notag \\
    &\leq
    \E\left[ \left( \max_{j\in [p^2 q^2]} \max_{\omega_{f,w}^{-2} \kappa_T^2 \leq \ell \leq \theta_{f,w} - \theta_{f,w - 1}} \frac{\sqrt{\omega_{f,w}^{-2} \kappa_T^2}}{\ell} \left\vert  \sum_{t = \theta_{f,w} - \ell + 1}^{\theta_{f,w} - \omega_{f,w}^{-2}\kappa_T^2 + 1} (\wt\bd_t)_j \right\vert  \right)^{2 + \epsilon} \right] \notag \\
    &\;\quad
    + \E\left[ \left( \max_{j\in [p^2 q^2]} \max_{\omega_{f,w}^{-2} \kappa_T^2 \leq \ell \leq \theta_{f,w} - \theta_{f,w - 1}} \frac{\sqrt{\omega_{f,w}^{-2} \kappa_T^2}}{\ell} \left\vert  \sum_{t = \theta_{f,w} - \omega_{f,w}^{-2} \kappa_T^2 + 2}^{\theta_{f,w}} (\wt\bd_t)_j \right\vert  \right)^{2 + \epsilon} \right] \notag \\
    &\lesssim
    \left( \omega_{f,w}^{-2} \kappa_T^2 \right)^{1+\epsilon/2} \sum_{\ell= \omega_{f,w}^{-2} \kappa_T^2}^{\theta_{f,w} - \theta_{f,w - 1}} \frac{1}{\ell^{2+\epsilon}} \cdot \ell^{\epsilon/2}
    + \left\{ \left( \omega_{f,w}^{-2} \kappa_T^2 \right)^{-1} \left( \omega_{f,w}^{-2} \kappa_T^2 -1 \right) \right\}^{1+\epsilon/2}
    = \cO(1) ,
    \label{eqn: max_ZZ_2+epsilon_entry}
\end{align}
where we used Lemma~\ref{lemma: ZZ_2+epsilon} and applied Theorem~B.3 in \cite{Kirch2006} to obtain the last line from the second last; the first term in the last line can be upper bounded by an integral inequality; and the second term is trivially bounded, with $\cO(1)$ that does not depend on $i$ nor $w$. 

Then by Jensen's inequality, we have for any $w\in[Q_f]$:
\begin{align*}
    &\;\quad
    \E\left[ \left( \max_{\omega_{f,w}^{-2} \kappa_T^2 \leq \ell \leq \theta_{f,w} - \theta_{f,w - 1}} \frac{\sqrt{\omega_{f,w}^{-2}\kappa_T^2}}{\ell} \left\Vert \sum_{t = \theta_{f,w} - \ell + 1}^{\theta_{f,w}} \wh\bXi \wt\bd_t \right\Vert  \right)^{2 + \epsilon} \right] \\
    &\lesssim
    \E\left[ \left( \max_{\omega_{f,w}^{-2} \kappa_T^2 \leq \ell \leq \theta_{f,w} - \theta_{f,w - 1}} \frac{1}{k_rk_c (k_rk_c +1)/2} \sum_{i=1}^{k_rk_c (k_rk_c +1)/2} \left\vert \frac{\sqrt{\omega_{f,w}^{-2}\kappa_T^2}}{\ell} \sum_{t = \theta_{f,w} - \ell + 1}^{\theta_{f,w}} (\wh\bXi \wt\bd_t)_i \right\vert^2 \right)^{1 + \epsilon/2} \right] \\
    &\leq
    \frac{1}{k_rk_c (k_rk_c +1)/2} \sum_{i=1}^{k_rk_c (k_rk_c +1)/2}
    \E\left[ \max_{\omega_{f,w}^{-2} \kappa_T^2 \leq \ell \leq \theta_{f,w} - \theta_{f,w - 1}} \left\vert \frac{\sqrt{\omega_{f,w}^{-2}\kappa_T^2}}{\ell} \sum_{t = \theta_{f,w} - \ell + 1}^{\theta_{f,w}} (\wh\bXi \wt\bd_t)_i \right\vert^{2 + \epsilon} \right] \\
    &= \cO(1) ,
\end{align*}
where the last equality used~\eqref{eqn: max_ZZ_2+epsilon_entry}. Together with the Markov's inequality, we have
\begin{align*}
    &\;\quad
    \P\big( \cM_T^{(0)} \big) =
    \P\Bigg\{ \max_{1 \leq w \leq Q_f} \max_{\omega_{f,w}^{-2} \kappa_T^2 \leq \ell \leq \theta_{f,w} - \theta_{f,w - 1}} \frac{\sqrt{\omega_{f,w}^{-2}\kappa_T^2}}{\ell}
    \Bigg\Vert  \sum_{t = \theta_{f,w} - \ell + 1}^{\theta_{f,w}} \wh\bXi \wt\bd_t \Bigg\Vert \geq \kappa_T \Bigg\}  \\
    &\leq
    \frac{Q_f}{(\kappa_T)^{2+\epsilon}} \max_{1 \leq w \leq Q_f} 
    \E\Bigg[ \Bigg( \max_{\omega_{f,w}^{-2} \kappa_T^2 \leq \ell \leq \theta_{f,w} - \theta_{f,w - 1}} \frac{\sqrt{\omega_{f,w}^{-2}\kappa_T^2}}{\ell} \Bigg\Vert \sum_{t = \theta_{f,w} - \ell + 1}^{\theta_{f,w}} \wh\bXi \wt\bd_t \Bigg\Vert  \Bigg)^{2 + \epsilon} \Bigg] 
    \lesssim
    \frac{Q_f}{(\kappa_T)^{2+\epsilon}} = o(1),
\end{align*}
where the last equality holds as $Q_f$ is fixed due to linearly spaced change points and $\kappa_T$ goes to infinity (arbitrarily slowly).
By repetitive arguments, we may also conclude $\P\big( \cM_T^{(-1)} \big) = o(1)$ and $\P\big( \cM_T^{(1)} \big) = o(1)$, thus completing the proof of this lemma.
\end{proof}

\newpage
\subsection{Proof of theorems}

\begin{proof}[Proof of \Cref{thm: consistency_proj}]
Let us define
\begin{align*}
    \wh\bD_r &:= \frac{1}{Tp^2q^2} \wh\bQ^\trans \sum_{t=1}^T \wh\bL_t \wt\bU \wt\bU^\trans \wh\bL_t^\trans \wh\bQ , \quad
    \wh\bH_r := \frac{1}{Tpq^2} \sum_{t=1}^T \bG_t \bU^\trans \wt\bU \wt\bU^\trans \bU \bG_t^\trans \bQ^\trans \wh\bQ \wh\bD_r^{-1}, \\
    \wh\bD_c &:= \frac{1}{Tp^2q^2} \wh\bU^\trans \sum_{t=1}^T \wh\bL_t^\trans \wt\bQ \wt\bQ^\trans \wh\bL_t \wh\bU , \quad
    \wh\bH_c := \frac{1}{Tp^2q} \sum_{t=1}^T \bG_t^\trans \bQ^\trans \wt\bQ \wt\bQ^\trans \bQ \bG_t \bU^\trans \wh\bU \wh\bD_c^{-1} .
\end{align*}

We only show the results for $\wh\bQ$ and $\wh\bH_r$, and those for $\wh\bU$ and $\wh\bH_c$ follow analogously.
First, note that by the definition, $\wh\bD_r$ is the diagonal matrix consisting of the leading $k_r$ eigenvalues of $(Tpq^2)^{-1} \sum_{t=1}^T \wh\bL_t \wt\bU \wt\bU^\trans \wh\bL_t^\trans$. We then prove the statement:
\begin{itemize}
    \item [(D1)] When $\min\{T,p,q\} \to \infty$, the eigenvalues of $\wh\bD_r$ satisfy $\lambda_i(\wh\bD_r) = \lambda_i(\bSigma_{R}) + o_P(1)$ for any $i\leq k_r$, where $\bSigma_{R}$ is defined in Lemma~\ref{lemma: eigenvalue_PCA_consistency}~(i). 
\end{itemize}
To see this, similar to \eqref{eqn: sample_covariance_row}, consider
\begin{align}
    &\;\quad
    \wh\bL_t \wt\bU \wt\bU^\trans \wh\bL_t^\trans = \bM_p \bX_t \bM_q \wt\bU \wt\bU^\trans \bM_q \bX_t^\trans \bM_p
    = \bM_p (\bC_t + \bE_t) \bM_q \wt\bU \wt\bU^\trans \bM_q (\bC_t^\trans + \bE_t^\trans) \bM_p \notag \\
    &=
    \bM_p \bC_t \bM_q \wt\bU \wt\bU^\trans \bM_q \bC_t^\trans \bM_p
    + \bM_p \bE_t \bM_q \wt\bU \wt\bU^\trans \bM_q \bC_t^\trans \bM_p 
    + \bM_p \bC_t \bM_q \wt\bU \wt\bU^\trans \bM_q \bE_t^\trans \bM_p \notag \\
    &\;\quad
    + \bM_p \bE_t \bM_q \wt\bU \wt\bU^\trans \bM_q \bE_t^\trans \bM_p
    =: \cI_{Q,t} + \cII_{Q,t} + \cIII_{Q,t} + \cIV_{Q,t} .
    \label{eqn: LUUL_decomp}
\end{align}
Further simplify $\cI_{Q,t}$ as
\begin{align*}
    \cI_{Q,t} &= (\bI_p - p^{-1} \1_p \1_p^\trans) \bC_t \bM_q \wt\bU \wt\bU^\trans \bM_q \bC_t^\trans (\bI_p - p^{-1} \1_p \1_p^\trans)
    = \bC_t \bM_q \wt\bU \wt\bU^\trans \bM_q \bC_t^\trans \\
    &=
    \bC_t (\bI_q - q^{-1} \1_q \1_q^\trans) \wt\bU \wt\bU^\trans (\bI_q - q^{-1} \1_q \1_q^\trans) \bC_t^\trans
    = \bC_t \wt\bU \wt\bU^\trans \bC_t^\trans ,
\end{align*}
where we used throughout Assumption~\ref{ass: identification} that $\bC_t \1_q=\0$ and $\bC_t^\trans \1_p=\0$. Note that similarly,
\[
\cII_{Q,t} = \bM_p \bE_t \bM_q \wt\bU \wt\bU^\trans \bC_t^\trans
\text{ \ and \ }
\cIII_{Q,t} = \bC_t \wt\bU \wt\bU^\trans \bM_q \bE_t^\trans \bM_p .
\]

Then analogous to \eqref{eqn: CC_limit}, by Assumptions~\ref{ass: loadings}, \ref{ass: core_factor}, \ref{ass: change_loc}~(iii), Lemma~\ref{lemma: FMF_expectation}, and the fact that $\|\wt\bU\|_F^2=q$ by definition, we have
\begin{align}
    &\;\quad
    \frac{1}{Tpq^2} \sum_{t=1}^T \cI_{Q,t}
    =
    \frac{1}{Tp} \sum_{w=1}^{Q_f +1} \sum_{t=\theta_{f,w-1} +1}^{\theta_{f,w}} \bQ \bR_{w-1} \bF_t \bV_{w-1}^\trans \left( \frac{\bU^\trans \wt\bU}{q} \right) \left( \frac{\wt\bU^\trans \bU}{q} \right) \bV_{w-1} \bF_t^\trans \bR_{w-1}^\trans \bQ^\trans \notag \\
    &\xrightarrow{P}
    \frac{1}{p} \bQ \left\{ \frac{1}{T} \sum_{w=1}^{Q_f +1} \|\bV_{w-1} \|_F^2 \cdot (\theta_{f,w} - \theta_{f,w-1}) \cdot \bR_{w-1} \bR_{w-1}^\trans \right\} \bQ^\trans
    \to
    \frac{1}{p} \bQ \bSigma_{R} \bQ^\trans ,
    \label{eqn: CUUC_limit}
\end{align}
which gives rise to statement~(D1) using Assumption~\ref{ass: loadings} again.
Then similarly to the proof of \Cref{lemma: consistency_PCA}, by \eqref{eqn: LUUL_decomp} and triangle inequality, we may arrive at
\begin{align*}
    &\;\quad
    \frac{1}{p} \left\| \wh\bQ - \bQ \wh\bH_r \right\|_F^2 \lesssim_P \frac{1}{p} \left\| \frac{1}{Tpq^2} \sum_{t=1}^T (\cII_{Q,t} + \cIII_{Q,t} + \cIV_{Q,t}) \bQ \right\|_F^2 \\
    &\leq
    \frac{1}{T^2p^3q^4} \left\| \sum_{t=1}^T \cII_{Q,t} \bQ \right\|_F^2
    + \frac{1}{T^2p^3q^4} \left\| \sum_{t=1}^T \cIII_{Q,t} \bQ \right\|_F^2
    + \frac{1}{T^2p^3q^4} \left\| \sum_{t=1}^T \cIV_{Q,t} \bQ \right\|_F^2 \\
    &=
    \frac{1}{T^2p^3q^4} \cdot \cO_P\left( Tp^3q^3 + p^2 q^4 + Tpq^2 + p^2 q^3 + T p q^4 + T^2 p q^2 + T p^2 q^2 \right) \\
    &=
    \cO_P\left( \frac{1}{Tq} + \frac{1}{T^2p} + \frac{1}{Tp^2} + \frac{1}{p^2q^2} \right) .
\end{align*}
where the first equality used Lemma~\ref{lemma: consistency_proj_decomp_rate}. Finally, we have
\begin{align*}
    &\;\quad
    \bI_{k_r} = \frac{1}{p} \wh\bQ^\trans \wh\bQ = \frac{1}{p} \wh\bQ^\trans (\wh\bQ - \bQ \wh\bH_r) + \frac{1}{p} \wh\bQ^\trans \bQ \wh\bH_r \\
    &=
    \frac{1}{p} \wh\bQ^\trans (\wh\bQ - \bQ \wh\bH_r) + \frac{1}{p} (\wh\bQ - \bQ \wh\bH_r)^\trans \bQ \wh\bH_r + \frac{1}{p} \wh\bH_r^\trans \bQ^\trans \bQ \wh\bH_r 
    =
    \wh\bH_r \wh\bH_r^\trans + o_P(1),
\end{align*}
where the last equality used Assumption~\ref{ass: loadings}~(ii), Lemma~\ref{lemma: inequality_sandwich}~(i), and the rate of $\|\wh\bQ - \bQ \wh\bH_r\|_F$.
This ends this proof.
\end{proof}


\begin{proof}[Proof of Theorem~\ref{thm: consistency_change_ME}]
It suffices to consider the row main effects, as the proof for the column main effects is analogous. 

\paragraph{Detection consistency.}
Without loss of generality, we set $\bW_{\alpha, \tau} = \bI_p$ throughout this proof.
First, note that for any positive integer $a$, the matrix $\bM_a$ is symmetric and idempotent with rank $a-1$. Thus, $\bM_a$ has eigenvalues 1 repeated $a-1$ times and eigenvalue 0, thus indicating
\begin{equation}
\label{eqn: M_norm}
    \|\bM_a\| = 1.
\end{equation}

Using \eqref{eqn: mu_hat} and the representation $\1_p^\trans \bX_t \1_q = pq\mu_t + q\1_p^\trans \be_{\alpha,t} + p\1_q^\trans \be_{\beta,t} + \1_p^\trans \bE_t \1_q$,
\begin{align*}
    \wh\mu_t -\mu_t = \frac{1}{pq} \1_p^\trans \bX_t \1_q - \mu_t = \frac{1}{pq} \Big( q\1_p^\trans \be_{\alpha,t} + p\1_q^\trans \be_{\beta,t} + \1_p^\trans \bE_t \1_q \Big) .
\end{align*}
Together with \eqref{eqn: mu_alpha_hat}, we have
\begin{align*}
    &\;\quad
    \wh\balpha_t - \balpha_t =
    (\mu_t - \wh\mu_t) \1_p + q^{-1}\bE_t\1_q \\
    &=
    \frac{1}{q} \bE_t\1_q - \frac{1}{p} \1_p \1_p^\trans \be_{\alpha,t} - \frac{1}{q} \1_p \1_q^\trans \be_{\beta,t} - \frac{1}{pq} \1_p \1_p^\trans \bE_t \1_q
    = \frac{1}{q} \bM_p \bE_t\1_q - \1_p \Big(\1_p^\trans \be_{\alpha,t} /p + \1_q^\trans \be_{\beta,t} /q \Big) ,
\end{align*}
and hence $\bM_p(\wh\balpha_t - \balpha_t) = \frac{1}{q} \bM_p \bE_t\1_q$.
Note that Assumption~\ref{ass: identification} implies $\bM_p \bmu_{\alpha,t} = \bmu_{\alpha,t}$ and $\bM_p \bgamma_{\alpha,i} = \bgamma_{\alpha,i}$
for any $t\in[T]$ and $i\in[Q_\alpha]$.
Then by \eqref{eqn: mean_change} and \eqref{eqn: def_hat_alpha_pm}, for any $t\in[bT+1, T-bT]$, we have
\begin{align}
    &\;\quad
    \bM_p \, \wh\bj_{\alpha}(t) = \sum_{u= t +1}^{t+bT} w_{u-t} \bM_p \wh\balpha_u - \sum_{u= t-bT}^{t-1} w_{t-u} \bM_p \wh\balpha_u \notag \\
    &=
    \sum_{u= t +1}^{t+bT} w_{u-t} \bM_p \balpha_u - \sum_{u= t-bT}^{t-1} w_{t-u} \bM_p \balpha_u
    + \sum_{u= t +1}^{t+bT} w_{u-t} \bM_p (\wh\balpha_u - \balpha_u) - \sum_{u= t-bT}^{t-1} w_{t-u} \bM_p (\wh\balpha_u - \balpha_u) \notag \\
    &=
    \left( \sum_{u= t +1}^{t+bT} w_{u-t} \bmu_{\alpha,u} - \sum_{u= t-bT}^{t-1} w_{t-u} \bmu_{\alpha,u} \right) \notag \\
    &\;\quad
    + \left( \sum_{u= t +1}^{t+bT} w_{u-t} \sum_{i=1}^{Q_\alpha} \bgamma_{\alpha,i} \cdot \mathbb{I}\{u \geq \theta_{\alpha,i} \} - \sum_{u= t-bT}^{t-1} w_{t-u} \sum_{i=1}^{Q_\alpha} \bgamma_{\alpha,i} \cdot \mathbb{I}\{u \geq \theta_{\alpha,i} \} \right) \notag \\
    &\;\quad
    + \bM_p \left( \sum_{u= t +1}^{t+bT} w_{u-t} \be_{\alpha,u}  - \sum_{u= t-bT}^{t-1} w_{t-u} \be_{\alpha,u} \right) \notag \\
    &\;\quad
    + \frac{1}{q} \bM_p \left( \sum_{u= t +1}^{t+bT} w_{u-t} \bE_u  - \sum_{u= t-bT}^{t-1} w_{t-u} \bE_u \right) \1_q
    =:
    \cI_{\alpha,t} + \cII_{\alpha,t} + \cIII_{\alpha,t} + \cIV_{\alpha,t} .
    \label{eqn: Mj_hat_alpha_decomp}
\end{align}
Then by the definition in \eqref{eqn: mosum_main_effect} and triangle inequality we have
\begin{align}
    \cT^{(\alpha)}_{b, t} &= \left\| \bM_p\, \wh\bj_{\alpha}(t) \right\|_\infty
    = \Big\| \cI_{\alpha,t} + \cII_{\alpha,t} + \cIII_{\alpha,t} + \cIV_{\alpha,t} \Big\|_\infty \notag \\
    &\leq
    \big\| \cI_{\alpha,t} \big\|_\infty 
    + \big\| \cII_{\alpha,t} \big\|_\infty 
    + \big\| \cIII_{\alpha,t} \big\|_\infty
    + \big\| \cIV_{\alpha,t} \big\|_\infty , 
    \label{eqn: cT_alpha_upper} \\
    \cT^{(\alpha)}_{b, t} &\geq
    \big\| \cII_{\alpha,t} \big\|_\infty 
    - \left( \big\| \cI_{\alpha,t} \big\|_\infty 
    + \big\| \cIII_{\alpha,t} \big\|_\infty
    + \big\| \cIV_{\alpha,t} \big\|_\infty \right) .
    \label{eqn: cT_alpha_lower}
\end{align}

For $\cI_{\alpha,t}$, Lemma~\ref{lemma: smooth_trend} gives
\begin{align}
    \max_{t\in[bT +1, T-bT]} \|\cI_{\alpha,t}\|_\infty = \cO\left( b^2 \right) . 
    \label{eqn: cI_alpha_max}
\end{align}
Consider $\cIV_{\alpha,t}$. From the sub-multiplicativity of the max norm, we have
\begin{align}
    &\;\quad
    \max_{t\in[bT +1, T-bT]} \|\cIV_{\alpha,t}\|_\infty \notag \\
    &\leq
    \|\bM_p\|_\infty \cdot \max_{t\in[bT +1, T-bT]} \left\| \frac{1}{q} \left( \sum_{u= t +1}^{t+bT} w_{u-t} \bE_u  - \sum_{u= t-bT}^{t-1} w_{t-u} \bE_u \right) \1_q \right\|_\infty  \notag \\
    &=
    \cO_P\left( (pT)^{1/\nu} (Tb)^{(1-\nu)/\nu} q^{(1-\nu)/\nu} + \sqrt{\frac{\log(pT)}{qTb}} \right) ,
    \label{eqn: cIV_alpha_max}
\end{align}
where the last equality used Lemma~\ref{lemma: smooth_E_vec1_bound} and $\|\bM_p\|_\infty \leq 2$ from \eqref{eqn: M_norm_infty}.
Similarly, for $\cIII_{\alpha,t}$,
by Assumption~\ref{ass: ME_noise}, Lemma~\ref{lemma: smooth_E_vec1_bound} holds with $q=1$ therein, and thus
\begin{align}
    \max_{t\in[bT +1, T-bT]} \|\cIII_{\alpha,t}\|_\infty = 
    \cO_P\left( (pT)^{1/\nu} (Tb)^{(1-\nu)/\nu} + \sqrt{\frac{\log(pT)}{Tb}} \right) .
    \label{eqn: cIII_alpha_max}
\end{align}

For $\cII_{\alpha,t}$, note that each scan of the MOSUM window contains at most one true change point in $\wh\Theta_\alpha$, since $2bT/T = 2b \to 0$ from Assumption~\ref{ass: kernel} while the change is linearly spaced (from the statement of Theorem~\ref{thm: consistency_change_ME}). 
If $\min_{i\in[Q_f]} | t- \theta_{\alpha,i} | > Tb$, then $\cII_{\alpha,t} = 0$.
Now suppose $\theta_{\alpha,i-1} \leq t-bT < \theta_{\alpha,i} \leq t+bT < \theta_{\alpha,i+1}$ for some $i\in[Q_\alpha]$.
Thus we may write $\theta_{\alpha,i} -bT \leq t < \theta_{\alpha,i} + bT$, for which we have
\begin{align}
    \cII_{\alpha,t} &= 
    \sum_{u= t +1}^{t+bT} w_{u-t} \bgamma_{\alpha,i} \cdot \mathbb{I}\{u \geq \theta_{\alpha,i} \} - \sum_{u= t-bT}^{t-1} w_{t-u} \bgamma_{\alpha,i} \cdot \mathbb{I}\{u \geq \theta_{\alpha,i} \} \notag \\
    &=
    \left( \sum_{u= t + \max(|\theta_{\alpha,i}-t|, |\theta_{\alpha,i}-1 -t|)}^{t+bT} w_{u-t} \right) \bgamma_{\alpha,i}
    = \left( 1- \sum_{u= 1}^{\min(|\theta_{\alpha,i}-t|, |\theta_{\alpha,i}-1 -t|)} w_{u} \right) \bgamma_{\alpha,i} ,
    \label{eqn: cII_alpha_rewrite}
\end{align}
where the last equality used Lemma~\ref{lemma: weight_prop}~(i), and the summation in the final expression is 0 by convention when $\min(|\theta_{\alpha,i}-t|, |\theta_{\alpha,i}-1 -t|) <1$.
Combining \eqref{eqn: cT_alpha_upper}, \eqref{eqn: cI_alpha_max}, \eqref{eqn: cIV_alpha_max}, and \eqref{eqn: cIII_alpha_max} (note that $\cII_{\alpha,t}=0$ in the following case), we have
\begin{align*}
    &\;\quad
    \max_{\substack{t: |t - \theta_{\alpha,i}| > Tb \\ i\in [Q_\alpha] }} \cT^{(\alpha)}_{b, t} 
    \leq 
    \max_{t\in[bT +1, T-bT]} \|\cI_{\alpha,t}\|_\infty
    + \max_{t\in[bT +1, T-bT]} \|\cIII_{\alpha,t}\|_\infty
    + \max_{t\in[bT +1, T-bT]} \|\cIV_{\alpha,t}\|_\infty \\
    &=
    \cO_P\left( b^2
    + (pT)^{1/\nu} (Tb)^{(1-\nu)/\nu} q^{(1-\nu)/\nu} + \sqrt{\frac{\log(pT)}{qTb}}
    + (pT)^{1/\nu} (Tb)^{(1-\nu)/\nu} + \sqrt{\frac{\log(pT)}{Tb}}
    \right) \\
    &=
    \cO_P\left( b^2
    + (pT)^{1/\nu} (Tb)^{(1-\nu)/\nu} + \sqrt{\frac{\log(pT)}{Tb}}
    \right) ,
\end{align*}
which, together with $c_{T,p,q}^{(\alpha)} \to \infty$, implies that as $\min\{T, p, q\} \to \infty$:
\begin{align}
    \P\left( \max_{\substack{t: |t - \theta_{\alpha,i}| > Tb \\ i\in [Q_\alpha] }} \cT^{(\alpha)}_{b, t} > c_{T,p,q}^{(\alpha)} 
    \left[ b^2 + (pT)^{1/\nu} (Tb)^{(1-\nu)/\nu} + \sqrt{\frac{\log(pT)}{Tb}} \right]
    \right) 
    \to 0 .
    \label{eqn: outside_gamma_below_thres_ME}
\end{align}
Hence by the definition of the threshold in the statement of Theorem~\ref{thm: consistency_change_ME}, we conclude that that no estimator in $\wh\Theta_\alpha$ would be detected outside of the radius of $Tb$ from each change point in $\Theta_\alpha$, with probability approaching one.

For any $\xi\in(0,1]$, $i\in[Q_\alpha]$, define the set
\[
\cS_{T,p,q,i}^{(\alpha)} := \left\{ \cT^{(\alpha)}_{b, \theta_{\alpha,i}} \geq \max\left( \max_{t: |t - \theta_{\alpha,i}| > \xi Tb} \cT^{(\alpha)}_{b, t}, c_{T,p,q}^{(\alpha)} 
\left[ b^2 + (pT)^{1/\nu} (Tb)^{(1-\nu)/\nu} + \sqrt{\frac{\log(pT)}{Tb}} \right]
\right) \right\} ,
\]
and the set $\cS_{T,p,q}^{(\alpha)} := \bigcap_{i\in[Q_\alpha]} \cS_{T,p,q,i}^{(\alpha)}$.
Note that by \eqref{eqn: cII_alpha_rewrite}, for any $i\in[Q_\alpha]$, we have $\cII_{\alpha,\theta_{\alpha,i}} = \bgamma_{\alpha,i}$.
Also, uniformly over $i\in[Q_\alpha]$, for any $t$ such that $|t - \theta_{\alpha,i}| > \xi Tb$, \eqref{eqn: cII_alpha_rewrite} implies
\begin{align*}
    \big\| \cII_{\alpha,t} \big\|_\infty
    &\leq
    \left| 1- \sum_{u= 1}^{\min(|\theta_{\alpha,i}-t|, |\theta_{\alpha,i}-1 -t|)} w_{u} \right| \cdot \big\| \bgamma_{\alpha,i} \big\|_\infty  \\
    &\leq
    \left| 1- c' \frac{\xi Tb}{Tb} \right| \cdot \big\| \bgamma_{\alpha,i} \big\|_\infty 
    = (1- c' \xi) \big\| \bgamma_{\alpha,i} \big\|_\infty ,
\end{align*}
where the second inequality used the last result of Lemma~1 in \cite{Chenetal2022}, $c'$ is a strictly positive constant that only depends on the kernel $K$, and the last equality used the scenario $\xi \leq 1/c'$ and hence $c' \xi \leq 1$ (as $\xi\in(0,1]$, either $1/c' \geq 1$ which does not affect our argument, or we have $1/c' < 1$ which implies a better result anyway).
Hence combining the above with \eqref{eqn: cT_alpha_upper} and \eqref{eqn: cT_alpha_lower},
uniformly over $i\in[Q_\alpha]$:
\begin{align}
    & \cT^{(\alpha)}_{b, \theta_{\alpha,i}} \geq
    \big\| \bgamma_{\alpha,i} \big\|_\infty + \cO_P\left( 
    \left[ b^2 + (pT)^{1/\nu} (Tb)^{(1-\nu)/\nu} + \sqrt{\frac{\log(pT)}{Tb}} \right]
    \right) =
    \big\| \bgamma_{\alpha,i} \big\|_\infty (1 + o_P(1)) ; 
    \label{eqn: T_alpha_theta_tb_lower_bound} \\
    & \max_{\substack{t: |t - \theta_{\alpha,j}| > \xi Tb \\ j\in [Q_\alpha]}} \cT^{(\alpha)}_{b, t}
    \leq
    (1- c' \xi) \big\| \bgamma_{\alpha,i} \big\|_\infty
    + \cO_P\left( 
    \left[ b^2 + (pT)^{1/\nu} (Tb)^{(1-\nu)/\nu} + \sqrt{\frac{\log(pT)}{Tb}} \right]
    \right) \notag \\
    &\quad\quad\quad\quad\quad\quad\quad\quad
    < \big\| \bgamma_{\alpha,i} \big\|_\infty (1 + o_P(1)) ;
    \label{eqn: T_alpha_tb_upper_bound} \\
    & c_{T,p,q}^{(\alpha)}
    \left[ b^2 + (pT)^{1/\nu} (Tb)^{(1-\nu)/\nu} + \sqrt{\frac{\log(pT)}{Tb}} \right]
    \leq
    \frac{1}{2} \big\| \bgamma_{\alpha,i} \big\|_\infty
    < \big\| \bgamma_{\alpha,i} \big\|_\infty ; \notag
\end{align}
by the condition~\eqref{cond:thm: consistency_change_ME}.
Hence $\P(\cS_{T,p,q}^{(\alpha)}) \to 1$ as $\min\{T, p, q\} \to \infty$, indicating that at least some change would be detected by our detection rule, within the radius of $\xi Tb$ from each change point in $\Theta_\alpha$.
Furthermore, we can show the test statistic valued on farther timestamps is smaller than on those closer to each true change point (within the radius of $Tb$). By similar arguments for $\cS_{T,p,q}^{(\alpha)}$, we can show $\P(\wt\cS_{T,p,q}^{(\alpha)}) \to 1$ as $\min\{T, p, q\} \to \infty$, where 
\begin{align*}
    \wt\cS_{T,p,q}^{(\alpha)} &:= \bigcap_{i\in [Q_\alpha]} \wt\cS_{T,p,q,i}^{(\alpha)}, \\
    \wt\cS_{T,p,q,i}^{(\alpha)} &:= \bigcap_{0\leq j \leq \lfloor 2/\xi \rfloor -2} \Bigg[ \left\{ \cT^{(\alpha)}_{b, \theta_{\alpha,i} + j \xi Tb /2} > \max_{\theta_{\alpha,i} + (j+1) \xi Tb /2 \leq t \leq \theta_{\alpha,i} + Tb} \cT^{(\alpha)}_{b, t} \right\} \\
    &\;\quad
    \cap \left\{ \cT^{(\alpha)}_{b, \theta_{\alpha,i} - j \xi Tb /2} > \max_{\theta_{\alpha,i} - Tb \leq t \leq \theta_{\alpha,i} - (j+1) \xi Tb /2} \cT^{(\alpha)}_{b, t} \right\} \Bigg] .
\end{align*}
Altogether, we have now shown
\begin{equation}
\label{eqn: loc_row_ME_consistency_coarse}
\P\left( \wh{Q}_{\alpha} = Q_\alpha; \max_{i\in[\wh{Q}_\alpha]} |\wh\theta_{\alpha,i} - \theta_{\alpha,i}| \leq \xi Tb \right) \to 1 ,
\end{equation}
for any $\xi \in (0,1]$, as $\min\{T, p, q\} \to\infty$. We refine such result in the remainders of the proof. 

\paragraph{Rate of estimation.}
Fix any $i\in [Q_\alpha]$. On the set $\cS_{T,p,q}^{(\alpha)} \cap \wt\cS_{T,p,q}^{(\alpha)}$, \eqref{eqn: loc_row_ME_consistency_coarse} indicates $|\wh\theta_{\alpha,i} - \theta_{\alpha,i}| \leq Tb$. Without loss of generality, we only consider the case $\wh\theta_{\alpha,i} \leq \theta_{\alpha,i}$ throughout the rest of the proof; the proof for $\wh\theta_{\alpha,i} \geq \theta_{\alpha,i}$ is similar and omitted.
From \eqref{eqn: loc_row_ME_consistency_coarse} and our detection scheme, the $i$th detected change point for the row main effects can be represented as
\[
\wh\theta_{\alpha,i} 
= \argmax_{t: \theta_{\alpha,i} -Tb < t \leq \theta_{\alpha,i} } \cT^{(\alpha)}_{b, t}
= \argmax_{t: \theta_{\alpha,i} -Tb < t \leq \theta_{\alpha,i} } \big\| \bM_p \, \wh\bj_{\alpha}(t) \big\|_\infty.
\]
Define the (row) index $j_{\theta_{\alpha,i}} := \argmax_{j\in[p]} \big| \big( \bM_p \, \wh\bj_{\alpha}(\wh\theta_{\alpha,i}) \big)_j \big|$. We next show that with probability approaching one, we have
\begin{align}
    \min_{i\in[Q_\alpha]} 
    \frac{\big| (\bgamma_{\alpha,i} )_{j_{\theta_{\alpha,i}}} \big|}{ \big\| \bgamma_{\alpha,i} \big\|_\infty } 
    \geq
    \frac{1}{2} .
    \label{eqn: detect_alpha_refine_step1}
\end{align}
If the above failed to hold, i.e.,\ for some $i\in[Q_\alpha]$, $ \big| (\bgamma_{\alpha,i} )_{j_{\theta_{\alpha,i}}} \big| <  \big\| \bgamma_{\alpha,i} \big\|_\infty /2$,
then 
\begin{align*}
    \big\| \bM_p \, \wh\bj_{\alpha}(\wh\theta_{\alpha,i}) \big\|_\infty
    \leq
    \big| (\bgamma_{\alpha,i} )_{j_{\theta_{\alpha,i}}} \big| \cdot (1 + o_P(1))
    <
    \frac{1}{2} \big\| \bgamma_{\alpha,i} \big\|_\infty \cdot (1 + o_P(1))
    \leq
    \frac{1}{2} \cT^{(\alpha)}_{b, \theta_{\alpha,i}} ,
\end{align*}
where the first inequality used similar arguments as in \eqref{eqn: T_alpha_tb_upper_bound}, and the last used \eqref{eqn: T_alpha_theta_tb_lower_bound}. This implies $\cT^{(\alpha)}_{b, \theta_{\alpha,i}} > \cT^{(\alpha)}_{b, \wh\theta_{\alpha,i}}$ which is a contradiction to the definition of $\wh\theta_{\alpha,i}$. Hence \eqref{eqn: detect_alpha_refine_step1} is true.

Recalling the notations in~\eqref{eqn: Mj_hat_alpha_decomp}, we next show that
\begin{align}
    \max_{i\in[Q_\alpha]} \max_{t\in [\theta_{\alpha,i} - Tb +1, \theta_{\alpha,i} -\xi_0]}
    \frac{\big\| \cIV_{\alpha,\theta_{\alpha,i}} - \cIV_{\alpha,t} \big\|_\infty}{ |\theta_{\alpha,i}-t|^{1/2} } 
    &=
    \cO_P\left( p^{1/\nu} (Tb)^{(1-\nu)/\nu} q^{(1-\nu)/\nu}
    + \sqrt{\frac{\log(pTb)}{q T^2b^2}} \right) .
    \label{eqn: detect_alpha_refine_step2-IV}
\end{align}
Fix any $i\in[Q_\alpha]$ and $t$ such that $\theta_{\alpha,i} - Tb \leq t \leq \theta_{\alpha,i}$.
By direct calculations (cf.\ proof of Theorem~2 in \cite{Chenetal2022}), we can decompose
\begin{align*}
    &\;\quad
    \cIV_{\alpha,t} - \cIV_{\alpha,\theta_{\alpha,i}} \\
    &=
    \frac{\bM_p}{q} \Bigg\{ \Bigg( \sum_{u= t +1}^{t+bT} w_{u-t} \bE_u  - \sum_{u= t-bT}^{t-1} w_{t-u} \bE_u \Bigg) 
    - \Bigg( \sum_{u= \theta_{\alpha,i} +1}^{\theta_{\alpha,i}+bT} w_{u-\theta_{\alpha,i}} \bE_u  - \sum_{u= \theta_{\alpha,i}-bT}^{\theta_{\alpha,i}-1} w_{\theta_{\alpha,i}-u} \bE_u \Bigg) \Bigg\} \1_q \\
    &=
    \frac{\bM_p}{q} \Bigg\{
    \sum_{u=t-bT}^{\theta_{\alpha,i} -bT -1} w_{t-u} \bE_u
    + \sum_{u=\theta_{\alpha,i} -bT}^{t -1} (w_{t-u} - w_{\theta_{\alpha,i}-1}) \bE_u
    - \sum_{u=t+1}^{\theta_{\alpha,i} -1} (w_{u-t} - w_{\theta_{\alpha,i}-u}) \bE_u \\
    &\;\quad
    - \sum_{u=\theta_{\alpha,i} +1 }^{t +bT} (w_{u-t} - w_{u- \theta_{\alpha,i}}) \bE_u
    - \sum_{u=t+bT+1}^{\theta_{\alpha,i} +bT} w_{u -\theta_{\alpha,i} } \bE_u
    + w_{\theta_{\alpha,i} -t} \bE_t
    - w_{\theta_{\alpha,i} -t} \bE_{\theta_{\alpha,i}}
    \Bigg\} \1_q \\
    &=:
    \frac{\bM_p}{q} \Bigg\{ 
    \cIV_{i,t,1} + \cIV_{i,t,2} + \cIV_{i,t,3} + \cIV_{i,t,4} + \cIV_{i,t,5} + \cIV_{i,t,6} + \cIV_{i,t,7}
    \Bigg\} \1_q .
\end{align*}
By sub-multiplicativity of the max norm and \eqref{eqn: M_norm_infty}, we have
\begin{align*}
    &\;\quad
    \max_{i\in[Q_\alpha]} \max_{t\in [\theta_{\alpha,i} -Tb +1, \theta_{\alpha,i}]}
    \left\| \frac{\bM_p}{q} \frac{\cIV_{i,t,1}}{|\theta_{\alpha,i}-t|^{1/2}} \1_q \right\|_\infty
    \leq
    2 \max_{i\in[Q_\alpha]} \max_{t\in [\theta_{\alpha,i} -Tb +1, \theta_{\alpha,i}]} \left\| \frac{1}{q} \frac{\cIV_{i,t,1}}{|\theta_{\alpha,i}-t|^{1/2}} \1_q \right\|_\infty \\
    &=
    \cO_P\left( p^{1/\nu} (Tb)^{(1-\nu)/\nu} q^{(1-\nu)/\nu}
    + \sqrt{\frac{\log(pTb)}{q T^2b^2}} \right) ,
\end{align*}
where the equality used Lemma~\ref{lemma: smooth_E_vec1_bound_local}.
By the similar arguments, we also have
\begin{align*}
    &\;\quad
    \max_{i\in[Q_\alpha]} \max_{t\in [\theta_{\alpha,i} -Tb +1, \theta_{\alpha,i}]}
    \left\| \frac{\bM_p}{q} \frac{\cIV_{i,t,3}}{|\theta_{\alpha,i}-t|^{1/2}} \1_q \right\|_\infty ,
    \quad
    \max_{i\in[Q_\alpha]} \max_{t\in [\theta_{\alpha,i} -Tb +1, \theta_{\alpha,i}]}
    \left\| \frac{\bM_p}{q} \frac{\cIV_{i,t,5}}{|\theta_{\alpha,i}-t|^{1/2}} \1_q \right\|_\infty \\
    &=
    \cO_P\left( p^{1/\nu} (Tb)^{(1-\nu)/\nu} q^{(1-\nu)/\nu}
    + \sqrt{\frac{\log(pTb)}{q T^2b^2}} \right) .
\end{align*}

On the other hand, applying Lemma~\ref{lemma: smooth_E_entry_bound_local_general} with $m= 1$ and $\xi=1$ therein, we have
\begin{align*}
    &\;\quad
    \max_{i\in[Q_\alpha]} \max_{t\in [\theta_{\alpha,i} -Tb +1, \theta_{\alpha,i} -1]}
    \left\| \frac{\bM_p}{q} \frac{\cIV_{i,t,6}}{|\theta_{\alpha,i}-t|^{1/2}} \1_q \right\|_\infty ,
    \quad
    \max_{i\in[Q_\alpha]} \max_{t\in [\theta_{\alpha,i} -Tb +1, \theta_{\alpha,i} -1]}
    \left\| \frac{\bM_p}{q} \frac{\cIV_{i,t,7}}{|\theta_{\alpha,i}-t|^{1/2}} \1_q \right\|_\infty \\
    &=
    \cO_P\left( p^{1/\nu} (Tb)^{(1-\nu)/\nu} q^{(1-\nu)/\nu}
    + \sqrt{\frac{\log(pTb)}{q T^2b^2}} \right) .
\end{align*}
Finally, by Lemma~\ref{lemma: smooth_E_entry_bound_local_ww}, we have
\begin{align*}
    \max_{i\in[Q_\alpha]} \max_{t\in [\theta_{\alpha,i} -Tb +1, \theta_{\alpha,i} -1]}
    \left\| \frac{\bM_p}{q} \frac{\cIV_{i,t,4}}{|\theta_{\alpha,i}-t|^{1/2}} \1_q \right\|_\infty
    &=
    \cO_P\left( p^{1/\nu} (Tb)^{2(1-\nu)/\nu} q^{(1-\nu)/\nu} 
    + \sqrt{\frac{\log(pTb)}{q T^3b^3}} \right) ,
\end{align*}
and the same result holds similarly with $\cIV_{i,t,4}$ replaced by $\cIV_{i,t,2}$ above.
This concludes \eqref{eqn: detect_alpha_refine_step2-IV}.

Note that by Assumption~\ref{ass: ME_noise}, $\cIII_{\alpha,t}$ has the same structure as $\cIV_{\alpha,t}$ with $q=1$. Then \eqref{eqn: detect_alpha_refine_step2-IV} directly gives
\begin{align}
    \max_{i\in[Q_\alpha]} \max_{t\in [\theta_{\alpha,i} - Tb +1, \theta_{\alpha,i} -1]}
    \frac{\big\| \cIII_{\alpha,\theta_{\alpha,i}} - \cIII_{\alpha,t} \big\|_\infty}{ |\theta_{\alpha,i}-t|^{1/2} }  
    &=
    \cO_P\left( p^{1/\nu} (Tb)^{(1-\nu)/\nu}
    + \sqrt{\frac{\log(pTb)}{T^2b^2}} \right) .
    \label{eqn: detect_alpha_refine_step2-III}
\end{align}

Next, assume $(\bgamma_{\alpha,i} )_{j_{\theta_{\alpha,i}}} > 0$ without loss of generality (all the following arguments hold analogously otherwise).
With $C_w$ from Lemma~\ref{lemma: weight_prop}~(ii), set $\xi = 1/(2C_w)$ in \eqref{eqn: loc_row_ME_consistency_coarse} (note that we are guaranteed that $\xi<1$ since $C_w \geq 1$), by which we have
\[
\sum_{u= 1}^{\min(|\theta_{\alpha,i}-\wh\theta_{\alpha,i}|, |\theta_{\alpha,i}-1 -\wh\theta_{\alpha,i}|)} w_{u}
\leq
\sum_{u= 1}^{Tb/(2C_w)} |w_{u}|
\leq \frac{1}{2} ,
\]
with probability approaching one.
Thus using \eqref{eqn: Mj_hat_alpha_decomp}, \eqref{eqn: cII_alpha_rewrite}, and $(\bgamma_{\alpha,i} )_{j_{\theta_{\alpha,i}}} > 0$, we have
\begin{align*}
    &\;\quad
    \big( \bM_p \, \wh\bj_{\alpha}(\wh\theta_{\alpha,i}) \big)_{j_{\theta_{\alpha,i}}} \\
    &=
    \Bigg( 1- \sum_{u= 1}^{\min(|\theta_{\alpha,i}-\wh\theta_{\alpha,i}|, |\theta_{\alpha,i}-1 -\wh\theta_{\alpha,i}|)} w_{u} \Bigg) \big( \bgamma_{\alpha,i} \big)_{j_{\theta_{\alpha,i}}}
    + \big( \cI_{\alpha,\wh\theta_{\alpha,i}} \big)_{j_{\theta_{\alpha,i}}}
    + \big( \cIII_{\alpha,\wh\theta_{\alpha,i}} \big)_{j_{\theta_{\alpha,i}}}
    + \big( \cIV_{\alpha,\wh\theta_{\alpha,i}} \big)_{j_{\theta_{\alpha,i}}} \\
    &\geq
    \frac{1}{2} \big( \bgamma_{\alpha,i} \big)_{j_{\theta_{\alpha,i}}}
    - \left( 
    \big\| \cI_{\alpha,\wh\theta_{\alpha,i}} \big\|_\infty
    + \big\| \cIII_{\alpha,\wh\theta_{\alpha,i}} \big\|_\infty
    + \big\| \cIV_{\alpha,\wh\theta_{\alpha,i}} \big\|_\infty
    \right) ,
\end{align*}
with probability approaching one, which thus indicates that $\big( \bM_p \, \wh\bj_{\alpha}(\wh\theta_{\alpha,i}) \big)_{j_{\theta_{\alpha,i}}} > 0$ by \eqref{eqn: cI_alpha_max}, \eqref{eqn: cIV_alpha_max}, \eqref{eqn: cIII_alpha_max},
and \eqref{eqn: detect_alpha_refine_step1}.
Next, by the definition of $\cI_{\alpha,t}$ in \eqref{eqn: Mj_hat_alpha_decomp}, we have
\begin{align}
    &\;\quad
    \big\| \cI_{\alpha,\theta_{\alpha,i}} - \cI_{\alpha,\wh\theta_{\alpha,i}} \big\|_\infty \notag \\
    &\leq
    \Bigg\| \sum_{u= \theta_{\alpha,i} +1}^{\theta_{\alpha,i}+bT} w_{u-\theta_{\alpha,i}} \bmu_{\alpha,u} - \sum_{u= \wh\theta_{\alpha,i} +1}^{\wh\theta_{\alpha,i}+bT} w_{u-\wh\theta_{\alpha,i}} \bmu_{\alpha,u} \Bigg\|_\infty 
    \notag \\
    &\;\quad
    + \Bigg\| \sum_{u= \wh\theta_{\alpha,i}-bT}^{\wh\theta_{\alpha,i}-1} w_{\wh\theta_{\alpha,i}-u} \bmu_{\alpha,u} - \sum_{u= \theta_{\alpha,i}-bT}^{\theta_{\alpha,i}-1} w_{\theta_{\alpha,i}-u} \bmu_{\alpha,u} \Bigg\|_\infty 
    \notag \\
    &=
    \Bigg\| w_1 \big( \bmu_{\alpha,\theta_{\alpha,i} +1} - \bmu_{\alpha,\wh\theta_{\alpha,i} +1} \big) + \dots + 
    w_{bT} \big( \bmu_{\alpha,\theta_{\alpha,i} +bT} - \bmu_{\alpha,\wh\theta_{\alpha,i} +bT} \big)
    \Bigg\|_\infty 
    \notag \\
    &\;\quad
    + \Bigg\| \sum_{u= \wh\theta_{\alpha,i}-bT}^{\wh\theta_{\alpha,i}-1} w_{\wh\theta_{\alpha,i}-u} \bmu_{\alpha,u} - \sum_{u= \theta_{\alpha,i}-bT}^{\theta_{\alpha,i}-1} w_{\theta_{\alpha,i}-u} \bmu_{\alpha,u} \Bigg\|_\infty 
    \notag \\
    &=
    \cO\big( |\wh\theta_{\alpha,i} - \theta_{\alpha,i}|/T \big) 
    + \Bigg\| \sum_{u= \wh\theta_{\alpha,i}-bT}^{\wh\theta_{\alpha,i}-1} w_{\wh\theta_{\alpha,i}-u} \bmu_{\alpha,u} - \sum_{u= \theta_{\alpha,i}-bT}^{\theta_{\alpha,i}-1} w_{\theta_{\alpha,i}-u} \bmu_{\alpha,u} \Bigg\|_\infty 
    \notag \\
    &=
    \cO\big( |\wh\theta_{\alpha,i} - \theta_{\alpha,i}|/T \big) ,
    \label{eqn: cI_alpha_theta_diff_rate}
\end{align}
where the second last equality used the definition $\bmu_{\alpha,t}= \bmu_{\alpha}(t/T)$, Assumption~\ref{ass: trend_fun} and hence the Taylor's expansion of $\bmu_{\alpha}((\wh\theta_{\alpha,i}+1) /T)$ on $(\theta_{\alpha,i}+1) /T$, etc.,
combined with \eqref{eqn: loc_row_ME_consistency_coarse} and Lemma~\ref{lemma: weight_prop}~(i);
and the last used similar arguments.

Therefore, with probability approaching one, by \eqref{eqn: Mj_hat_alpha_decomp} and \eqref{eqn: cII_alpha_rewrite},
uniformly over $\theta_{\alpha,i} \in \Theta_\alpha$ and any $\wh\theta_{\alpha,i}$ within the $Tb$ radius around each $\theta_{\alpha,i}$, we have 
\begin{align}
    0 &\geq 
    \cT^{(\alpha)}_{b, \theta_{\alpha,i}} - \cT^{(\alpha)}_{b, \wh\theta_{\alpha,i}} \notag \\
    &=
    \big\| \bM_p \, \wh\bj_{\alpha}(\theta_{\alpha,i}) \big\|_\infty - \big\| \bM_p \, \wh\bj_{\alpha}(\wh\theta_{\alpha,i}) \big\|_\infty 
    \geq
    \big( \bM_p \, \wh\bj_{\alpha}(\theta_{\alpha,i}) \big)_{j_{\theta_{\alpha,i}}} - \big( \bM_p \, \wh\bj_{\alpha}(\wh\theta_{\alpha,i}) \big)_{j_{\theta_{\alpha,i}}} \notag \\
    &\geq
    \left( \sum_{u= 1}^{|\theta_{\alpha,i}-\wh\theta_{\alpha,i}|} w_{u} \right) \big( \bgamma_{\alpha,i} \big)_{j_{\theta_{\alpha,i}}}
    - \cO\Bigg( \frac{|\wh\theta_{\alpha,i} - \theta_{\alpha,i}|}{T} \Bigg) \notag \\
    &\;\quad
    - \big\| (\cIII_{\alpha,\theta_{\alpha,i}} - \cIII_{\alpha,\wh\theta_{\alpha,i}}) + (\cIV_{\alpha,\theta_{\alpha,i}} - \cIV_{\alpha,\wh\theta_{\alpha,i}}) \big\|_\infty \notag \\
    &\geq
    \frac{c'}{2} \cdot \frac{|\wh\theta_{\alpha,i} - \theta_{\alpha,i}|}{Tb}
    \cdot \big\| \bgamma_{\alpha,i} \big\|_\infty
    - \cO\Bigg( \frac{|\wh\theta_{\alpha,i} - \theta_{\alpha,i}|}{T} \Bigg) \notag \\
    &\;\quad
    - \big\| (\cIII_{\alpha,\theta_{\alpha,i}} - \cIII_{\alpha,\wh\theta_{\alpha,i}}) + (\cIV_{\alpha,\theta_{\alpha,i}} - \cIV_{\alpha,\wh\theta_{\alpha,i}}) \big\|_\infty ,
    \label{eqn: detect_alpha_refine_step3}
\end{align}
where the first inequality used the definition of $\wh\theta_{\alpha,i}$;
the second inequality used the fact that $\big\| \bM_p \, \wh\bj_{\alpha}(\theta_{\alpha,i}) \big\|_\infty \geq \big( \bM_p \, \wh\bj_{\alpha}(\theta_{\alpha,i}) \big)_{j_{\theta_{\alpha,i}}}$, 
and $\big\| \bM_p \, \wh\bj_{\alpha}(\wh\theta_{\alpha,i}) \big\|_\infty = \big( \bM_p \, \wh\bj_{\alpha}(\wh\theta_{\alpha,i}) \big)_{j_{\theta_{\alpha,i}}}$ by the definition of $j_{\theta_{\alpha,i}}$ and $\big( \bM_p \, \wh\bj_{\alpha}(\wh\theta_{\alpha,i}) \big)_{j_{\theta_{\alpha,i}}} > 0$ with probability approaching one;
the third used \eqref{eqn: cI_alpha_theta_diff_rate};
and the last used \eqref{eqn: detect_alpha_refine_step1} and the last result of Lemma~1 in \cite{Chenetal2022}, with $c'$ being a strictly positive constant that only depends on the kernel $K(\cdot)$.

Rearranging \eqref{eqn: detect_alpha_refine_step3},
combining \eqref{eqn: detect_alpha_refine_step2-IV}, \eqref{eqn: detect_alpha_refine_step2-III},
and the condition $b=o(\min_{i\in[Q_\alpha]} \big\| \bgamma_{\alpha,i} \big\|_\infty)$ from~\eqref{cond:thm: consistency_change_ME},
we finally conclude that uniformly over $i\in[Q_\alpha]$, we have
\begin{align*}
    \big\| \bgamma_{\alpha,i} \big\|_\infty |\wh\theta_{\alpha,i} - \theta_{\alpha,i}|^{1/2}
    &\lesssim_P
    Tb \cdot \big\| (\cIII_{\alpha,\theta_{\alpha,i}} - \cIII_{\alpha,\wh\theta_{\alpha,i}}) + (\cIV_{\alpha,\theta_{\alpha,i}} - \cIV_{\alpha,\wh\theta_{\alpha,i}}) \big\|_\infty  \\
    &\lesssim_P
    p^{1/\nu} (Tb)^{1/\nu}
    + \sqrt{\log(pTb)}  ,
\end{align*}
which further gives
\begin{align*}
    |\wh\theta_{\alpha,i} - \theta_{\alpha,i}| \lesssim_P
    \frac{(pTb)^{2/\nu}  + \log(pTb)}{\big\| \bgamma_{\alpha,i} \big\|_\infty^2} .
\end{align*}
This completes the proof of the theorem.
\end{proof}


\begin{proof}[Proof of Theorem~\ref{thm: consistency_change_fac}]

As the eigenvalues of $\bW_W$ are all bounded away from zero and infinity, we set $\bW_W = \bI_{k_rk_c(k_rk_c+1)/2}$ throughout this proof without loss of generality.
Then with the construction in \eqref{eqn: test_stat_aggreg}, consider
\begin{align*}
    \frac{1}{\sqrt{2\gamma}} \Vech\big( \wh{\bGamma}_{G ,\gamma ,\tau} \big) 
    = \frac{1}{\sqrt{2\gamma}} \left\{ \sum_{t=\tau+1}^{\tau+\gamma} \Vech\left( \vec(\wh\bG_t) \vec(\wh\bG_t)^\trans \right)
    - \sum_{t=\tau-\gamma+1}^{\tau} \Vech\left( \vec(\wh\bG_t) \vec(\wh\bG_t)^\trans \right) \right\} .
\end{align*}

\paragraph{Detection consistency.}
Recall from \eqref{eqn: Gt_hat} that $\wh\bG_t = \wh\bQ^\trans \wh\bL_t \wh\bU /(pq)$, similar to \eqref{eqn: LUUL_decomp} and the argument for $\cI_{Q,t}$ therein, together with \eqref{eqn: Gt_define}, we have
\begin{align}
    &\;\quad
    \Vech\left\{ \vec(\wh\bG_t) \vec(\wh\bG_t)^\trans \right\}
    = \L_{k_rk_c} \vec\left\{ \vec(\wh\bG_t) \vec(\wh\bG_t)^\trans \right\}  \notag \\
    &=
    (qp)^{-2} \L_{k_rk_c} \vec\left\{ (\wh\bU \otimes \wh\bQ)^\trans \vec(\wh\bL_t) \vec(\wh\bL_t)^\trans  (\wh\bU \otimes \wh\bQ) \right\} \notag
    \\
    &=
    (qp)^{-2} \L_{k_rk_c} \left\{ (\wh\bU \otimes \wh\bQ) \otimes (\wh\bU \otimes \wh\bQ) \right\}^\trans \vec\left\{ \vec(\wh\bL_t) \vec(\wh\bL_t)^\trans \right\} \notag \\
    &=
    (qp)^{-2} \L_{k_rk_c} \left\{ (\wh\bU \otimes \wh\bQ) \otimes (\wh\bU \otimes \wh\bQ) \right\}^\trans \vec\left\{ \vec[\bM_p (\bC_t + \bE_t) \bM_q] \vec[\bM_p (\bC_t + \bE_t) \bM_q]^\trans \right\} \notag \\
    &=
    (qp)^{-2} \L_{k_rk_c} \left\{ (\wh\bU \otimes \wh\bQ) \otimes (\wh\bU \otimes \wh\bQ) \right\}^\trans \notag \\
    &\;\quad
    \cdot \left\{ (\bM_p \otimes \bM_q) \otimes (\bM_p \otimes \bM_q) \right\} \vec\left\{ \vec(\bC_t + \bE_t) \vec(\bC_t + \bE_t)^\trans \right\} \notag \\
    &=
    (qp)^{-2} \L_{k_rk_c} \left\{ (\wh\bU \otimes \wh\bQ) \otimes (\wh\bU \otimes \wh\bQ) \right\}^\trans \left\{ (\bM_p \otimes \bM_q) \otimes (\bM_p \otimes \bM_q) \right\} \notag \\
    &\;\quad
    \cdot \vec\Big\{ \vec(\bC_t + \bE_t) \vec(\bC_t + \bE_t)^\trans - \E[\vec(\bC_t + \bE_t) \vec(\bC_t + \bE_t)^\trans] \Big\} \notag \\
    &\;\quad
    + (qp)^{-2} \L_{k_rk_c} \left\{ (\wh\bU \otimes \wh\bQ) \otimes (\wh\bU \otimes \wh\bQ) \right\}^\trans \vec\Big\{ \E[ \vec(\bC_t) \vec(\bC_t)^\trans ] \Big\} \notag \\
    &\;\quad
    + (qp)^{-2} \L_{k_rk_c} \left\{ (\wh\bU \otimes \wh\bQ) \otimes (\wh\bU \otimes \wh\bQ) \right\}^\trans \left\{ (\bM_p \otimes \bM_q) \otimes (\bM_p \otimes \bM_q) \right\} \vec\Big\{ \E[\vec(\bE_t) \vec(\bE_t)^\trans] \Big\} \notag \\
    &=:
    \cI_{G,t} + \cII_{G,t} + \cIII_{G,t} .
    \label{eqn: vech_GG_decomp} 
\end{align}

Note that $\cIII_{G,t}$ is invariant over $t\in[T]$, and hence
\begin{align*}
    & \sum_{t=\tau+1}^{\tau+\gamma} \cIII_{G,t} - \sum_{t=\tau-\gamma+1}^{\tau} \cIII_{G,t} = \0 .
\end{align*}
Together with \eqref{eqn: test_stat_aggreg} and \eqref{eqn: vech_GG_decomp}, we may write
\begin{align}
    &\;\quad
    \frac{1}{\sqrt{2\gamma}} \left\| \sum_{t=\tau+1}^{\tau+\gamma} \cII_{G,t}
    - \sum_{t=\tau-\gamma+1}^{\tau} \cII_{G,t} \right\|
    - \frac{1}{\sqrt{2\gamma}} \left\| \sum_{t=\tau+1}^{\tau+\gamma} \cI_{G,t}
    - \sum_{t=\tau-\gamma+1}^{\tau} \cI_{G,t} \right\| \notag \\
    &\leq 
    \cT^{(f)}_{\gamma, \tau} \leq
    \frac{1}{\sqrt{2\gamma}} \left\| \sum_{t=\tau+1}^{\tau+\gamma} \cII_{G,t}
    - \sum_{t=\tau-\gamma+1}^{\tau} \cII_{G,t} \right\|
    + \frac{1}{\sqrt{2\gamma}} \left\| \sum_{t=\tau+1}^{\tau+\gamma} \cI_{G,t}
    - \sum_{t=\tau-\gamma+1}^{\tau} \cI_{G,t} \right\| .
    \label{eqn: T_tau_gamma_bound}
\end{align}

First, by Lemma~\ref{lemma: rate_vech_GG}, we have
\begin{align}
    &
    \max_{\gamma \leq \tau\leq T-\gamma} \frac{1}{\sqrt{2\gamma}} \left\| \sum_{t=\tau+1}^{\tau+\gamma} \cI_{G,t}
    - \sum_{t=\tau-\gamma+1}^{\tau} \cI_{G,t} \right\| 
    = \cO_P\left( \sqrt{\log(T)} \right) .
    \label{eqn: rate_cI_Gt}
\end{align}
In addition, note that by \eqref{eqn: Gt_define}, we may further decompose
\begin{align*}
    &\;\quad
    \cII_{G,t} = (qp)^{-2} \L_{k_rk_c} \left\{ (\wh\bU^\trans \bU \otimes \wh\bQ^\trans \bQ) \otimes (\wh\bU^\trans \bU \otimes \wh\bQ^\trans \bQ) \right\}^\trans \vec\Big\{ \E[ \vec(\bG_t) \vec(\bG_t)^\trans ] \Big\} ,
\end{align*}
which, together with \eqref{eqn: def_jump_factor}, gives
\begin{align}
    &\;\quad
    \frac{1}{\sqrt{2\gamma}} \Bigg\| \sum_{t=\tau+1}^{\tau+\gamma} \cII_{G,t}
    - \sum_{t=\tau-\gamma+1}^{\tau} \cII_{G,t} \Bigg\| \notag \\
    &=
    \frac{1}{\sqrt{2\gamma}} \Bigg\| (qp)^{-2} \L_{k_rk_c} \left\{ (\wh\bU^\trans \bU \otimes \wh\bQ^\trans \bQ) \otimes (\wh\bU^\trans \bU \otimes \wh\bQ^\trans \bQ) \right\}^\trans \notag \\
    &\;\quad
    \cdot \vec\Bigg( \sum_{t=\tau+1}^{\tau+\gamma} \E[ \vec(\bG_t) \vec(\bG_t)^\trans ]
    - \sum_{t=\tau-\gamma+1}^{\tau} \E[ \vec(\bG_t) \vec(\bG_t)^\trans ] \Bigg) \Bigg\| \notag \\
    &=
    \frac{1}{\sqrt{2\gamma}} \Bigg\| \Vech\Bigg\{ \Bigg( \frac{\wh\bU^\trans \bU}{q} \otimes \frac{\wh\bQ^\trans \bQ}{p} \Bigg)^\trans  \notag \\
    &\;\quad
    \cdot \Bigg( \sum_{t=\tau+1}^{\tau+\gamma} \E[ \vec(\bG_t) \vec(\bG_t)^\trans ]
    - \sum_{t=\tau-\gamma+1}^{\tau} \E[ \vec(\bG_t) \vec(\bG_t)^\trans ] \Bigg) \Bigg( \frac{\wh\bU^\trans \bU}{q} \otimes \frac{\wh\bQ^\trans \bQ}{p} \Bigg) \Bigg\} \Bigg\| \notag \\
    &=
    \begin{cases}
        \frac{\gamma - |\tau - \theta_{f,w}|}{\sqrt{2\gamma}} \Big\| \Vech\Big\{ \Big( \frac{\wh\bU^\trans \bU}{q} \otimes \frac{\wh\bQ^\trans \bQ}{p} \Big)^\trans \Delta_{G,w} \Big( \frac{\wh\bU^\trans \bU}{q} \otimes \frac{\wh\bQ^\trans \bQ}{p} \Big) \Big\} \Big\|  
        & \text{if }
        \tau \in [\theta_{f,w} - \gamma + 1, \theta_{f,w} + \gamma - 1]
        , \\
        0  & \text{if }
        \min_{w\in [Q_f]} |\tau - \theta_{f,w}| \geq \gamma
        .
    \end{cases}
    \label{eqn: cII_G_tau}
\end{align}
Note that in the last equality, we also used $2\gamma \leq \min_{w\in[Q_f +1]}\{\theta_{f,w} - \theta_{f,w-1}\}$ (from \eqref{cond:thm: consistency_change_fac:one} in the statement of the theorem)
to guarantee the scanning window contains at most one true change each time.
For any $w\in[Q_f]$, define
\begin{equation}
\label{eqn: def_tilde_Delta_Gw}
\wt\Delta_{G,w} := \left\| \Vech\left\{ \left( \frac{\wh\bU^\trans \bU}{q} \otimes \frac{\wh\bQ^\trans \bQ}{p} \right)^\trans \Delta_{G,w} \left( \frac{\wh\bU^\trans \bU}{q} \otimes \frac{\wh\bQ^\trans \bQ}{p} \right) \right\} \right\| .
\end{equation}

Then combining Weyl's inequality and the fact that $\|\Vech(\bA)\| \geq \|\bA\|_F/\sqrt{2} \geq \|\bA\|/\sqrt{2}$ for any symmetric matrix $\bA$, by Lemma~\ref{lemma: U_kron_Q_converge_I}, there exists some event $\cH_{T,p,q}$ satisfying $\P(\cH_{T,p,q}) \to 1$ as $\min\{T, p, q\} \to \infty$ such that on $\cH_{T,p,q}$:
\begin{align}
    \wt\Delta_{G,w} &\geq
    \frac{1}{\sqrt{2}} \Bigg\| \Bigg( \frac{\wh\bU^\trans \bU}{q} \otimes \frac{\wh\bQ^\trans \bQ}{p} \Bigg)^\trans \Delta_{G,w} \Bigg( \frac{\wh\bU^\trans \bU}{q} \otimes \frac{\wh\bQ^\trans \bQ}{p} \Bigg) \Bigg\| \notag \\
    &\geq
    \frac{1}{\sqrt{2}} \omega_{f,w} \cdot \sigma_{\text{min}}\Bigg\{ \Bigg( \frac{\wh\bU^\trans \bU}{q} \otimes \frac{\wh\bQ^\trans \bQ}{p} \Bigg)^\trans \Bigg( \frac{\wh\bU^\trans \bU}{q} \otimes \frac{\wh\bQ^\trans \bQ}{p} \Bigg) \Bigg\} \notag \\
    &\geq
    \frac{1}{\sqrt{2}} \omega_{f,w} \Bigg\{ \sigma_{\text{min}}(\bI_{k_rk_c}) - \Bigg\| \bI_{k_rk_c} - \frac{\bU^\trans \wh\bU \wh\bU^\trans \bU}{q^2} \otimes \frac{\bQ^\trans \wh\bQ \wh\bQ^\trans \bQ}{p^2} \Bigg\| \Bigg\}
    \geq 
    \frac{1}{2} \omega_{f,w} ,
    \label{eqn: Vech_UQUQ_lower_bound}
\end{align}
where $\sigma_{\text{min}}(\cdot)$ denotes the minimum singular of a matrix, and only the last inequality involves the set $\cH_{T,p,q}$.
By the fact that $\|\Vech(\bA)\| \leq \sqrt{n} \|\bA\|$ for any $n\times n$ symmetric matrix $\bA$, we may similarly obtain on the same set $\cH_{T,p,q}$:
\begin{align}
    \wt\Delta_{G,w} \leq
    \sqrt{k_r k_c} (2-1/\sqrt{2}) \omega_{f,w} .
    \label{eqn: Vech_UQUQ_upper_bound}
\end{align}

Combining \eqref{eqn: T_tau_gamma_bound}, \eqref{eqn: rate_cI_Gt}, and \eqref{eqn: cII_G_tau}, we have
\begin{align}
    \P\left( \max_{\substack{\tau: |\tau - \theta_{f,w}| \geq \gamma \\ w\in [Q_f] }} \cT^{(f)}_{\gamma, \tau} > c_{T,p,q}^{(f)} \sqrt{\log(T)} \right) 
    \to 0 \text{ \ as \ } \min\{T, p, q\} \to \infty ,
    \label{eqn: outside_gamma_below_thres}
\end{align}
since $c_{T,p,q}^{(f)} \to\infty$ arbitrarily slowly, thus indicating that no estimator in $\wh\Theta_f$ would be detected outside of the radius of $\gamma$ from each change point in $\Theta_f$.

With the set $\cH_{T,p,q}$ from \eqref{eqn: Vech_UQUQ_lower_bound} and \eqref{eqn: Vech_UQUQ_upper_bound}, for any $\xi\in(0,1]$, $w\in[Q_f]$, define the set
\[
\cS_{T,p,q,w} := \left\{ \cT^{(f)}_{\gamma, \theta_{f,w}} \geq \max\left( \max_{\tau: |\tau - \theta_{f,w}| > \xi \gamma} \cT^{(f)}_{\gamma, \tau}, c_{T,p,q}^{(f)} \sqrt{\log(T)} \right) \right\} \cap \cH_{T,p,q} .
\]
Further define $\cS_{T,p,q} := \bigcap_{w\in[Q_f]} \cS_{T,p,q,w}$.
Combining \eqref{eqn: T_tau_gamma_bound}, \eqref{eqn: rate_cI_Gt}, \eqref{eqn: cII_G_tau}, \eqref{eqn: Vech_UQUQ_lower_bound}, and \eqref{eqn: Vech_UQUQ_upper_bound}, we conclude that on the set $\cH_{T,p,q}$ from \eqref{eqn: Vech_UQUQ_lower_bound} and \eqref{eqn: Vech_UQUQ_upper_bound}, uniformly over $w\in[Q_f]$:
\begin{align*}
    \cT^{(f)}_{\gamma, \theta_{f,w}} & \geq
    \sqrt{\frac{\gamma}{2}} \cdot \wt\Delta_{G,w} + \cO_P\left( \sqrt{\log(T)} \right) =
    \sqrt{\frac{\gamma}{2}} \cdot \wt\Delta_{G,w}  (1 + o_P(1)) ; \\
    \max_{\substack{\tau: |\tau - \theta_{f,j}| > \xi \gamma \\ j\in [Q_f]}} \cT^{(f)}_{\gamma, \tau} 
    & \leq
    (1-\xi) \sqrt{\frac{\gamma}{2}} \cdot \wt\Delta_{G,w}
    + \cO_P\left( \sqrt{\log(T)} \right)
    < \sqrt{\frac{\gamma}{2}} \cdot \wt\Delta_{G,w} (1 + o_P(1)) ; \\
    c_{T,p,q}^{(f)} \sqrt{\log(T)}
    & \leq
    \frac{\sqrt{\gamma} \, \omega_{f,w}}{4} 
    <
    \sqrt{\frac{\gamma}{2}} \cdot \frac{\omega_{f,w}}{2}
    \leq 
    \sqrt{\frac{\gamma}{2}} \cdot \wt\Delta_{G,w} ;
\end{align*}
by the conditions on $\gamma$ in \eqref{cond:thm: consistency_change_fac:two} in the statement of this theorem.
The above implies $\P(\cS_{T,p,q}) \to 1$ as $\min\{T, p, q\} \to \infty$, which by our detection rule indicates that at least some change would be detected within the radius of $\xi \gamma$ from each change point in $\Theta_f$.
Next, we show that with probability approaching one, there would not be more than one detected change point within the radius of $\xi \gamma$ from each change point in $\Theta_f$. To this end, we will argue that the test statistics evaluated at farther time stamps are smaller than on those evaluated closer to each true change point (within the radius of $\gamma$). For each $w\in[Q_f]$, consider the set
\begin{align*}
    \wt\cS_{T,p,q,w} &:= \bigcap_{0\leq j \leq \lfloor 2/\xi \rfloor -2} \Bigg[ \left\{ \cT^{(f)}_{\gamma, \theta_{f,w} + j \xi \gamma /2} > \max_{\theta_{f,w} + (j+1) \xi \gamma /2 \leq \tau \leq \theta_{f,w} + \gamma} \cT^{(f)}_{\gamma, \tau} \right\} \\
    &\;\quad
    \cap \left\{ \cT^{(f)}_{\gamma, \theta_{f,w} - j \xi \gamma /2} > \max_{\theta_{f,w} - \gamma \leq \tau \leq \theta_{f,w} - (j+1) \xi \gamma /2} \cT^{(f)}_{\gamma, \tau} \right\} \Bigg] \cap \cH_{T,p,q} .
\end{align*}
Then with analogous arguments for $\cS_{T,p,q}$, we may conclude that $\P(\wt\cS_{T,p,q}) \to 1$ as $\min\{T, p, q\} \to \infty$, where $\wt\cS_{T,p,q} := \bigcap_{w\in [Q_f]} \wt\cS_{T,p,q,w}$, thereby 
showing that
\begin{equation}
\label{eqn: loc_fac_consistency_coarse}
\P\left( \wh{Q}_f = Q_f; \max_{w\in[\wh{Q}_f]} |\wh\theta_{f,w} - \theta_{f,w}| \leq \xi \gamma \right) \to 1 ,
\end{equation}
for any $\xi \in (0,1]$, as $\min\{T, p, q\} \to\infty$.

\paragraph{Rate of estimation.}
We refine such result in the remainder of this proof. 

Fix any $w\in [Q_f]$. On the set $\cS_{T,p,q} \cap \wt\cS_{T,p,q}$, we have $|\wh\theta_{f,w} - \theta_{f,w}| \leq \gamma$. Without loss of generality, we only show the case for $\wh\theta_{f,w} \leq \theta_{f,w}$; the proof for $\wh\theta_{f,w} \geq \theta_{f,w}$ is similar and omitted.
Define
$\wt\cT_{\gamma, \tau,w} := \big(\cT^{(f)}_{\gamma, \tau}\big)^2 - \big(\cT^{(f)}_{\gamma, \theta_{f,w}}\big)^2$ (note that $\cT^{(f)}_{\gamma, \tau} \geq 0$ for any $\tau$, yet squaring the test statistics aids the proof arguments later).
Further note that $\argmax_{\tau \in \cI} \cT^{(f)}_{\gamma, \tau} = \argmax_{\tau \in \cI} \wt\cT_{\gamma, \tau,w}$ for any interval $\cI$, thus $\argmax_{\tau \in \cI} \big(\cT^{(f)}_{\gamma, \tau}\big)^2 = \argmax_{\tau \in \cI} \wt\cT_{\gamma, \tau,w}^2$.
Hence with $\kappa_T$ from Lemma~\ref{lemma: max_ZZ_2+epsilon}, for some fixed constant $c_0\in (0, \infty)$, we have
\begin{align}
    &\;\quad
    \left\{ c_0 \, \omega_{f,w}^2 (\wh\theta_{f,w} - \theta_{f,w}) \leq - \kappa_T^2 \right\} \notag \\
    &\subseteq
    \left\{ \max_{\theta_{f,w} - \gamma + 1 \leq \tau \leq \theta_{f,w} - c_0^{-1} \, \omega_{f,w}^{-2} \kappa_T^2} \cT^{(f)}_{\gamma, \tau} \geq \max_{\theta_{f,w} - c_0^{-1} \, \omega_{f,w}^{-2} \kappa_T^2 +1 \leq \tau \leq \theta_{f,w} + \gamma} \cT^{(f)}_{\gamma, \tau} \right\} \notag \\
    &=
    \left\{ \max_{\theta_{f,w} - \gamma + 1 \leq \tau \leq \theta_{f,w} - c_0^{-1} \, \omega_{f,w}^{-2} \kappa_T^2} \wt\cT_{\gamma, \tau,w} \geq \max_{\theta_{f,w} - c_0^{-1} \, \omega_{f,w}^{-2} \kappa_T^2 +1 \leq \tau \leq \theta_{f,w} + \gamma} \wt\cT_{\gamma, \tau,w} \right\} \notag \\
    &\subseteq
    \left\{ \max_{\theta_{f,w} - \gamma + 1 \leq \tau \leq \theta_{f,w} - c_0^{-1} \, \omega_{f,w}^{-2} \kappa_T^2} \wt\cT_{\gamma, \tau,w} \geq 0 \right\}  
    , \label{eqn: fac_cp_subset_implication}
\end{align}
where the last line used the fact that the interval $\big[ \theta_{f,w} - c_0^{-1} \, \omega_{f,w}^{-2} \kappa_T^2 +1 , \theta_{f,w} + \gamma \big]$ covers $\theta_{f,w}$ at which $\wt\cT_{\gamma,\theta_{f,w},w} =0$.

Next, for any $t\in[\gamma, T-\gamma]$, define
\begin{align*}
    \wt\cI_{G,t} &:= \frac{1}{\sqrt{2\gamma}} \left( \sum_{t=\tau+1}^{\tau+\gamma} \cI_{G,t}
    - \sum_{t=\tau-\gamma+1}^{\tau} \cI_{G,t} \right) , \quad
    \wt\cII_{G,t} := \frac{1}{\sqrt{2\gamma}} \left( \sum_{t=\tau+1}^{\tau+\gamma} \cII_{G,t}
    - \sum_{t=\tau-\gamma+1}^{\tau} \cII_{G,t} \right) .
\end{align*}
where we have suppressed the dependency on $\gamma$ for the ease of notation.
Then using \eqref{eqn: vech_GG_decomp} and the expansion that $\|\bu\|^2 - \|\bv\|^2 = (\bu+\bv)^\trans (\bu-\bv)$ for any vectors $\bu, \bv$, consider the decomposition
\begin{align*}
    \wt\cT_{\gamma, \tau,w} &=
    (\cT^{(f)}_{\gamma, \tau} + \cT^{(f)}_{\gamma, \theta_{f,w}}) (\cT^{(f)}_{\gamma, \tau} - \cT^{(f)}_{\gamma, \theta_{f,w}}) \\
    &=
    ( \wt\cI_{G,\tau} + \wt\cI_{G,\theta_{f,w}} + \wt\cII_{G,\tau} + \wt\cII_{G,\theta_{f,w}} )^\trans
    ( \wt\cI_{G,\tau} - \wt\cI_{G,\theta_{f,w}} + \wt\cII_{G,\tau} - \wt\cII_{G,\theta_{f,w}} ) \\
    &=
    (\wt\cII_{G,\tau} + \wt\cII_{G,\theta_{f,w}})^\trans (\wt\cII_{G,\tau} - \wt\cII_{G,\theta_{f,w}}) 
    + (\wt\cI_{G,\tau} - \wt\cI_{G,\theta_{f,w}})^\trans (\wt\cII_{G,\tau} + \wt\cII_{G,\theta_{f,w}}) \\
    &\;\quad
    +(\wt\cII_{G,\tau} - \wt\cII_{G,\theta_{f,w}})^\trans
    (\wt\cI_{G,\tau} + \wt\cI_{G,\theta_{f,w}})
    + (\wt\cI_{G,\tau} - \wt\cI_{G,\theta_{f,w}})^\trans (\wt\cI_{G,\tau} + \wt\cI_{G,\theta_{f,w}}) \\
    &=: 
    \wt\cT_{\gamma, \tau,w,1} + \wt\cT_{\gamma, \tau,w,2} + \wt\cT_{\gamma, \tau,w,3} + \wt\cT_{\gamma, \tau,w,4} .
\end{align*}
Given $\tau \in [\theta_{f,w} - \gamma + 1, \theta_{f,w} - c_0^{-1} \, \omega_{f,w}^{-2} \kappa_T^2]$, note that $\wt\cT_{\gamma, \tau,w,1} = \|\wt\cII_{G,\tau}\|^2 - \|\wt\cII_{G,\theta_{f,w}}\|^2 < 0$ by \eqref{eqn: cII_G_tau}. 
With $\wt\Delta_{G,w}$ defined in \eqref{eqn: def_tilde_Delta_Gw}, we further have
\begin{align*}
    & |\wt\cT_{\gamma, \tau,w,1}| =
    \frac{2\gamma - |\tau - \theta_{f,w}|}{\sqrt{2\gamma}}
    \cdot \frac{|\tau - \theta_{f,w}|}{\sqrt{2\gamma}}
    \cdot \wt\Delta_{G,w} ^2 
    \geq \frac{|\tau - \theta_{f,w}|}{2} \cdot \wt\Delta_{G,w}^2 ; \\
    & \| \wt\cII_{G,\tau} - \wt\cII_{G,\theta_{f,w}} \| =
    \frac{|\tau - \theta_{f,w}|}{\sqrt{2\gamma}} \cdot \wt\Delta_{G,w} ; \\
    &\| \wt\cII_{G,\tau} + \wt\cII_{G,\theta_{f,w}} \| \leq
    \sqrt{2\gamma} \cdot \wt\Delta_{G,w} . 
\end{align*}
Then by similar arguments in the proof of Theorem~3.2 in \cite{EichingerKirch2018}, using Cauchy--Schwarz inequality, we have
\begin{align}
    &\;\quad
    \P\left( \max_{\theta_{f,w} - \gamma + 1 \leq \tau \leq \theta_{f,w} - c_0^{-1} \, \omega_{f,w}^{-2} \kappa_T^2} \wt\cT_{\gamma, \tau,w} \geq 0, \cS_{T,p,q} \cap \wt\cS_{T,p,q} \right) \notag \\
    &=
    \P\left( \max_{\theta_{f,w} - \gamma + 1 \leq \tau \leq \theta_{f,w} - c_0^{-1} \, \omega_{f,w}^{-2} \kappa_T^2} 
    \left| \frac{\wt\cT_{\gamma, \tau,w,2}}{\wt\cT_{\gamma, \tau,w,1}} + \frac{\wt\cT_{\gamma, \tau,w,3}}{\wt\cT_{\gamma, \tau,w,1}} + \frac{\wt\cT_{\gamma, \tau,w,4}}{\wt\cT_{\gamma, \tau,w,1}} \right|
    \geq 1, \cS_{T,p,q} \cap \wt\cS_{T,p,q} \right) \notag \\
    &\leq
    \P\left( \max_{\theta_{f,w} - \gamma + 1 \leq \tau \leq \theta_{f,w} - c_0^{-1} \, \omega_{f,w}^{-2} \kappa_T^2} 
    \frac{|\wt\cT_{\gamma, \tau,w,2}|}{|\wt\cT_{\gamma, \tau,w,1}|}
    \geq \frac{1}{3}, \cS_{T,p,q} \cap \wt\cS_{T,p,q} \right) \notag \\
    &\;\quad
    + \P\left( \max_{\theta_{f,w} - \gamma + 1 \leq \tau \leq \theta_{f,w} - c_0^{-1} \, \omega_{f,w}^{-2} \kappa_T^2} 
    \frac{|\wt\cT_{\gamma, \tau,w,3}|}{|\wt\cT_{\gamma, \tau,w,1}|}
    \geq \frac{1}{3}, \cS_{T,p,q} \cap \wt\cS_{T,p,q} \right) \notag \\
    &\;\quad
    + \P\left( \max_{\theta_{f,w} - \gamma + 1 \leq \tau \leq \theta_{f,w} - c_0^{-1} \, \omega_{f,w}^{-2} \kappa_T^2} 
    \frac{|\wt\cT_{\gamma, \tau,w,4}|}{|\wt\cT_{\gamma, \tau,w,1}|}
    \geq \frac{1}{3}, \cS_{T,p,q} \cap \wt\cS_{T,p,q} \right) \notag \\
    &\leq
    \P\left( \max_{\theta_{f,w} - \gamma + 1 \leq \tau \leq \theta_{f,w} - c_0^{-1} \, \omega_{f,w}^{-2} \kappa_T^2} 
    \frac{\sqrt{2\gamma} \|\wt\cI_{G,\tau} - \wt\cI_{G,\theta_{f,w}}\|}{|\tau - \theta_{f,w}| \wt\Delta_{G,w}}
    \geq \frac{1}{6}, \cS_{T,p,q} \cap \wt\cS_{T,p,q} \right) \notag \\
    &\;\quad
    + \P\left( \max_{\theta_{f,w} - \gamma + 1 \leq \tau \leq \theta_{f,w} - c_0^{-1} \, \omega_{f,w}^{-2} \kappa_T^2} 
    \frac{\|\wt\cI_{G,\tau} + \wt\cI_{G,\theta_{f,w}}\|}{\wt\Delta_{G,w} \sqrt{2\gamma}}
    \geq \frac{1}{6}, \cS_{T,p,q} \cap \wt\cS_{T,p,q} \right) \notag \\
    &\;\quad
    + \P\left( \max_{\theta_{f,w} - \gamma + 1 \leq \tau \leq \theta_{f,w} - c_0^{-1} \, \omega_{f,w}^{-2} \kappa_T^2} 
    \frac{|(\wt\cI_{G,\tau} - \wt\cI_{G,\theta_{f,w}})^\trans (\wt\cI_{G,\tau} + \wt\cI_{G,\theta_{f,w}})|}{|\tau - \theta_{f,w}| \wt\Delta_{G,w}^2}
    \geq \frac{1}{6}, \cS_{T,p,q} \cap \wt\cS_{T,p,q} \right) \notag \\
    &\leq
    2 \P\left( \max_{\theta_{f,w} - \gamma + 1 \leq \tau \leq \theta_{f,w} - c_0^{-1} \, \omega_{f,w}^{-2} \kappa_T^2} 
    \frac{\sqrt{2\gamma} \|\wt\cI_{G,\tau} - \wt\cI_{G,\theta_{f,w}}\|}{|\tau - \theta_{f,w}| \wt\Delta_{G,w}}
    \geq \frac{1}{6}, \cS_{T,p,q} \cap \wt\cS_{T,p,q} \right) \notag \\
    &\;\quad
    + 2 \P\left( \max_{\theta_{f,w} - \gamma + 1 \leq \tau \leq \theta_{f,w} - c_0^{-1} \, \omega_{f,w}^{-2} \kappa_T^2} 
    \frac{\|\wt\cI_{G,\tau} + \wt\cI_{G,\theta_{f,w}}\|}{\wt\Delta_{G,w} \sqrt{2\gamma}}
    \geq \frac{1}{6}, \cS_{T,p,q} \cap \wt\cS_{T,p,q} \right) 
    , \label{eqn: refine_tilde_T_step}
\end{align}
where the last inequality used $\P(ab>cd) \leq \P(\{a>c\} \cup \{b>d\})$ for any non-negative $a,b,c,d$.

By direct computations (cf.\ Equation (5.6) in \cite{EichingerKirch2018}), for any $w\in[Q_f]$, $0< c < \gamma$, $\theta_{f,w} - \gamma \leq \tau \leq \theta_{f,w} - c$, we have
\begin{equation}
\label{eqn: refine_noise_mosum_diff}
\wt\cI_{G,\tau} - \wt\cI_{G,\theta_{f,w}}
= \frac{1}{\sqrt{2\gamma}} \left( \sum_{t = \tau + \gamma +1}^{\theta_{f,w} + \gamma} \cI_{G,t} + \sum_{t = \tau - \gamma +1}^{\theta_{f,w} - \gamma} \cI_{G,t} - 2 \sum_{t = \tau +1}^{\theta_{f,w}} \cI_{G,t}  \right) .
\end{equation}

From \eqref{eqn: rate_cI_Gt} and \eqref{eqn: Vech_UQUQ_lower_bound}, on the set $\cS_{T,p,q} \cap \wt\cS_{T,p,q}$ we have
\begin{align*}
    \max_{\theta_{f,w} - \gamma + 1 \leq \tau \leq \theta_{f,w} - c_0^{-1} \, \omega_{f,w}^{-2} \kappa_T^2} \frac{\|\wt\cI_{G,\tau} + \wt\cI_{G,\theta_{f,w}}\|}{\wt\Delta_{G,w} \sqrt{2\gamma}}
    \leq 
    \max_{\gamma \leq \tau \leq T - \gamma} 
    \frac{2\|\wt\cI_{G,\tau}\|}{\wt\Delta_{G,w} \sqrt{2\gamma}}
    =
    \cO_P\left( \frac{2\sqrt{\log(T)}}{\omega_{f,w} \sqrt{2\gamma}} \right)
    = o_P(1) ,
\end{align*}
where the last equality used the rate condition in \eqref{cond:thm: consistency_change_fac:two} from the statement of Theorem~\ref{thm: consistency_change_fac}.
In addition, from \eqref{eqn: Vech_UQUQ_lower_bound}, \eqref{eqn: refine_noise_mosum_diff}, and Lemma~\ref{lemma: max_ZZ_2+epsilon}, we have as $\min\{T, p, q\} \to \infty$:
\begin{align*}
    &\;\quad
    \P\left( \max_{\theta_{f,w} - \gamma + 1 \leq \tau \leq \theta_{f,w} - c_0^{-1} \, \omega_{f,w}^{-2} \kappa_T^2} 
    \frac{\sqrt{2\gamma} \|\wt\cI_{G,\tau} - \wt\cI_{G,\theta_{f,w}}\|}{|\tau - \theta_{f,w}| \wt\Delta_{G,w}}
    \geq \frac{1}{6}, \cS_{T,p,q} \cap \wt\cS_{T,p,q} \right) \\
    &\leq
    \P\left( \max_{\theta_{f,w} - \gamma + 1 \leq \tau \leq \theta_{f,w} - c_0^{-1} \, \omega_{f,w}^{-2} \kappa_T^2} 
    \frac{\sqrt{2\gamma} \|\wt\cI_{G,\tau} - \wt\cI_{G,\theta_{f,w}}\|}{|\tau - \theta_{f,w}| \cdot \omega_{f,w}}
    \geq \frac{1}{12} \right) \\
    &\leq
    \P\left( \max_{\theta_{f,w} - \gamma + 1 \leq \tau \leq \theta_{f,w} - c_0^{-1} \, \omega_{f,w}^{-2} \kappa_T^2} 
    \frac{1}{|\tau - \theta_{f,w}| \cdot \omega_{f,w}} \left\| \sum_{t = \tau + \gamma +1}^{\theta_{f,w} + \gamma} \cI_{G,t} \right\|
    \geq \frac{1}{36} \right) \\
    &\;\quad
    + \P\left( \max_{\theta_{f,w} - \gamma + 1 \leq \tau \leq \theta_{f,w} - c_0^{-1} \, \omega_{f,w}^{-2} \kappa_T^2} 
    \frac{1}{|\tau - \theta_{f,w}| \cdot \omega_{f,w}} \left\| \sum_{t = \tau - \gamma +1}^{\theta_{f,w} - \gamma} \cI_{G,t} \right\|
    \geq \frac{1}{36} \right) \\
    &\;\quad
    + \P\left( \max_{\theta_{f,w} - \gamma + 1 \leq \tau \leq \theta_{f,w} - c_0^{-1} \, \omega_{f,w}^{-2} \kappa_T^2} 
    \frac{1}{|\tau - \theta_{f,w}| \cdot \omega_{f,w}} \left\| \sum_{t = \tau +1}^{\theta_{f,w}} \cI_{G,t} \right\|
    \geq \frac{1}{72} \right) \\
    &=
    \P\left( \max_{c_0^{-1}\, \omega_{f,w}^{-2} \kappa_T^2 \leq \ell \leq \theta_{f,w} - \theta_{f,w - 1}} 
    \frac{\sqrt{c_0^{-1}\, \omega_{f,w}^{-2} \kappa_T^2}}{\ell} \left\| \sum_{t = \theta_{f,w} - \ell + \gamma +1}^{\theta_{f,w} + \gamma} \cI_{G,t} \right\|
    \geq \frac{c_0^{-1/2}\,\kappa_T}{36\, c_0} \right) \\
    &\;\quad
    + \P\left( \max_{c_0^{-1}\, \omega_{f,w}^{-2} \kappa_T^2 \leq \ell \leq \theta_{f,w} - \theta_{f,w - 1}} 
    \frac{\sqrt{c_0^{-1}\, \omega_{f,w}^{-2} \kappa_T^2}}{\ell} \left\| \sum_{t = \theta_{f,w} - \ell - \gamma +1}^{\theta_{f,w} - \gamma} \cI_{G,t} \right\|
    \geq \frac{c_0^{-1/2}\,\kappa_T}{36\, c_0} \right) \\
    &\;\quad
    + \P\left( \max_{c_0^{-1}\, \omega_{f,w}^{-2} \kappa_T^2 \leq \ell \leq \theta_{f,w} - \theta_{f,w - 1}} 
    \frac{\sqrt{c_0^{-1}\, \omega_{f,w}^{-2} \kappa_T^2}}{\ell} \left\| \sum_{t = \theta_{f,w} - \ell +1}^{\theta_{f,w}} \cI_{G,t} \right\|
    \geq \frac{c_0^{-1/2}\,\kappa_T}{72\, c_0} \right)
    \to 0 ,
\end{align*}
where the last step holds for small enough $c_0$ (e.g.,\ $c_0 = 1/72$). 
Altogether, we have shown the right-hand-side of \eqref{eqn: refine_tilde_T_step} is bounded as $o(1)$. Further by \eqref{eqn: fac_cp_subset_implication}, we may conclude the proof of this theorem by setting $\kappa_{T,p,q}^{(f)} = c_0^{-1} \kappa_T^2$.
\end{proof}

\clearpage

\section{Additional discussion on the model}
\label{app: model}

We make additional comments on the specific structure in \eqref{eqn: mean_change}--\eqref{eqn: cov_change} here.
Suppose $\{\bX_t\}$ admits a conventional matrix factor model representation such that for each $t\in[T]$, $\bX_t = \Acute\bC_t + \Acute\bE_t$, where $\Acute\bC_t = \Acute\bA \Acute\bF_t \Acute\bB^\trans$, $\Acute\bF_t$ is the core factor, $\Acute\bA$ and $\Acute\bB$ are the row and column loading matrices with no restrictions on their spanning spaces. We may also assume $\Acute\bF_t$ can be decomposed as a stationary part $\bE_{F,t}$ and a non-stationary part with trend function $\bmu_{F,t}$ and $Q_F$ changes at $\{\theta_{F,i}\}_{i\in[Q_F]}$, \text{viz.}, $\Acute\bF_t = \bmu_{F,t} + \sum_{i=1}^{Q_F} \bgamma_{F,i} \cdot \mathbb{I}\{t \geq \theta_{F,i} \} + \bE_{F,t}$. With the matrix $\bM_m$ introduced at the end of Section~\ref{sec: introduction}, we may rewrite
\begin{align*}
    \bX_t &= \{(pq)^{-1} \1_p^\trans \Acute\bC_t \1_q\} \1_p \1_q^\trans + (q^{-1} \bM_p \Acute\bC_t \1_q) \1_q^\trans + \1_p (p^{-1} \bM_q \Acute\bC_t^\trans \1_p )^\trans + (\bM_p \Acute\bA) \Acute\bF_t (\bM_q \Acute\bB)^\trans + \Acute\bE_t ,
\end{align*}
thereby reading $\bX_t$ in model \eqref{eqn: MEFM_structure} and the structure in \eqref{eqn: mean_change}--\eqref{eqn: cov_change} with $Q_f=0$, $Q_\alpha = Q_\beta = Q_F$, $\theta_{\alpha,i} = \theta_{\beta,i} = \theta_{F,i}$ for $i\in[Q_F]$,
\begin{equation}
\label{eqn: FM_as_MEFM}
\begin{split}
    & \bmu_t = (pq)^{-1} \1_p^\trans \Acute\bC_t \1_q ,\quad
    \bF_t = \Acute\bF_t, \quad
    \bA_0 = \bM_p \Acute\bA, \quad
    \bB_0 = \bM_q \Acute\bB , \quad
    \bE_t = \Acute\bE_t , \\
    & \bmu_{\alpha,t} = q^{-1} \bM_p \Acute\bA \bmu_{F,t} \Acute\bB^\trans \1_q , \quad
    \bgamma_{\alpha,i} = q^{-1} \bM_p \Acute\bA \bgamma_{F,i} \Acute\bB^\trans \1_q , \quad
    \be_{\alpha,t} = q^{-1} \bM_p \Acute\bA \bE_{F,t} \Acute\bB^\trans \1_q , \\
    & \bmu_{\beta,t} = p^{-1} \bM_q \Acute\bB \bmu_{F,t}^\trans \Acute\bA^\trans \1_p , \quad
    \bgamma_{\beta,i} = p^{-1} \bM_q \Acute\bB \bgamma_{F,i}^\trans \Acute\bA^\trans \1_p , \quad
    \be_{\beta,t} = p^{-1} \bM_q \Acute\bB \bE_{F,t}^\trans \Acute\bA^\trans \1_p ,    
\end{split}
\end{equation}
except that we will consider mean-zero stationary factor $\bF_t$ throughout this article for convenience. Although it is possible to relax this at the expenses of more complicated arguments, the above shows that our setup is considerably general. For instance, any induced jump vector $\bgamma_{\alpha,i}$ above has such a particular structure which is not required in our formulation in \eqref{eqn: mean_change}.

We further elaborate our considered framework by the following example.

\begin{example}
Suppose that Assumption~\ref{ass: identification} is not necessarily met by $\bA_w$ or $\bB_w$, i.e.,\ $\1_p^\trans \bA_w \ne \0$ or $\1_q^\trans \bB_w \ne \0$.
Recalling the definition of $\bM_p$, we can write the model~\eqref{eqn: MEFM_structure} as
\begin{align*}
\bX_t &= \Acute\mu_t \1_p\1_q^\trans + \Acute\balpha_t \1_q^\trans + \1_p \Acute\bbeta_t + \sum_{w = 0}^{Q_f} \Acute\bA_w \bF_t \Acute\bB_w^\trans \cdot \mathbb{I}\{\theta_{f, w} < t \le \theta_{f, w + 1} \}, \quad \text{where}
\\
\Acute\mu_t &= \mu_t + \frac{1}{pq} \sum_{w = 0}^{Q_f} \1_p^\trans \bA_w \bF_t \bB_w^\trans \1_q \cdot \mathbb{I}\{\theta_{f, w} < t \le \theta_{f, w + 1} \}, \\
\Acute\balpha_t &= \balpha_t + \frac{1}{q}\sum_{w = 0}^{Q_f} \bM_p \bA_w \bF_t \bB_w^\trans \1_q \cdot \mathbb{I}\{\theta_{f, w} < t \le \theta_{f, w + 1} \},
\end{align*}
and $\Acute\bA_w = \bM_p \bA_w$, and $\Acute\bbeta_t$ and $\Acute\bB_w$ are analogously defined.
This implies that the part of (possibly) time-varying factor-driven component that is not orthogonal to the unit vectors, may get absorbed into the stochastic errors of the grand mean and the main effects, which justifies Assumption~\ref{ass: identification}. 
Theoretically, we can permit changes in the covariance of the main effects as we take the max norm to detect changes in the main effects, and the asymptotic separation of the signal and the noise will continue to be met. 
This is feasible at the cost of more complicated notations,
and hence we choose to assume trend stationarity and time-invarying covariance on the main effects for simplicity.
On a related note, we can similarly handle the non-zero mean of $\bF_t$ or any shifts therein, by incorporating them into the grand mean and the main effects.
\end{example}


\clearpage 

\section{Selection of the thresholds}
\label{subsec: thres_pi_numerical}

We simulate model~\eqref{eqn: MEFM_structure} under the null model with no change points, i.e.,\ $Q_\alpha=Q_\beta=Q_f =0$,
with all model parameters generated as in Section~\ref{subsec: simulation},
except that for flexibility and to account for more general scenarios in practice,
the elements in $\mu_t$, $\be_{\alpha,t}$, $\be_{\beta,t}$ and $\bF_t$ are independent standardized AR(3) with AR coefficients (0.5, 0.3, $-0.4$), and the elements in $\bF_{e,t}$ and $\bE_{e,t}$ are also independent standardized AR(3) with AR coefficients ($-0.5, -0.3, -0.4$) and (0.4, 0.4, 0.2), respectively.
All AR processes take i.i.d.\ $\cN(0,1)$ innovation.

We fix $k_r=k_c=2$ throughout;
local weights are used as described in Section~\ref{subsec: tuning};
the trend functions in both main effects are sinusoidal as in Section~\ref{sec:setting};
the dimensions are varied as
\begin{align*}
T&\in\{400, 800, 1200, 1600, 2000\},\\
(p, q)&\in\{(20, 20), (20, 80), (80, 80), (80, 160)\} ;
\end{align*}
and we generate $100$ realizations for each combination.
For each realization,
we take the maximum detector statistic over all intervals scanned through (i.e.,\ $\max_{\tau\in [\gamma + 1, T- \gamma)} \cT^*_{\tau}$ using notations in Algorithm~\ref{alg: detect}, where $\gamma$ is set in Section~\ref{subsec: tuning})
and record the $0.9$-empirical quantiles of these maxima for each setting,
all collected in the \verb+R+ object \verb+y+.

Consider the statistics for the row main effects.
As suggested by Theorem~\ref{thm: consistency_change_ME}, we regard the threshold $\pi^*$ as $\sqrt{\log(pT)/\gamma}$ multiplied by $\sqrt{\log(p)}$ 
(motivated by that the corresponding statistic in \eqref{eqn: mosum_main_effect} is the scaled max norm of a vector in $\R^p$), 
such that using the \verb+R+ function \verb+lm()+ (with \verb+p+ denoting $p$, \verb+T+ denoting $T$, \verb+gamma+ denoting $\gamma$, and \verb+-1+ denoting no intercepts), we regress
\begin{verbatim}
lm(y / sqrt(log(pT)/gamma) ~ sqrt(log(p)) -1) 
\end{verbatim}
with adjusted R-squared of the fitted model $R^2_{\text{adj}} = 0.9855$.
Similar regression is performed on the MOSUM statistics for the column main effects with $p$ above replaced by $q$, resulting in the corresponding adjusted R-squared $R^2_{\text{adj}} = 0.9846$.

For the second-order structure, we rely on Theorem~\ref{thm: consistency_change_fac} and let the threshold $\pi^*$ be $\sqrt{\log(T)}$ multiplied by the same functions of $T$ 
that are motivated by the approximation of the asymptotic null distribution in \citet{cho2026multivariate}, 
so that we regress
\begin{verbatim}
lm(y / sqrt(log(T)) ~ T + log(log(T))/sqrt(log(T)) -1) 
\end{verbatim}
and the adjusted R-squared of the fitted model is $R^2_{\text{adj}} = 0.9982$.

In all three fitted models,
the covariates (those functions of $T$, $p$, and $q$) are statistically significant with p-values less than $0.001$.
Furthermore,
Figure~\ref{fig: thres_reg} displays the Q-Q plots for these fitted models,
where the black tilted line denotes a comparison line by robust regression on the shown residuals; see details in the \verb+R+ function \verb+car::qqPlot()+. They all reflect that the regression models are good fits to the data,
and it is thus reasonable to use the fitted threshold.

Note that our theoretical results and thus the above rationale are built upon large dimensions.
In some practical scenarios, the sample size $T$ is not large (e.g.,\ $T< 500$), and the fitted threshold from our regression models may still be used after scaling by some reasonable constant 
$C_{\text{thres}}>0$
such as $C_{\text{thres}} = 0.7$.
On the other hand, we may set $C_{\text{thres}} = 1.4$ for heavy-tailed data such as financial returns.
In our experiments, the innovation process follows $t_6$ distribution and we set $C_{\text{thres}} = 1.4$; and $C_{\text{thres}} = 1$ is used for Gaussian innovation process.

\begin{figure}[h!t!b!]
\centering
\includegraphics[width=0.6\linewidth]{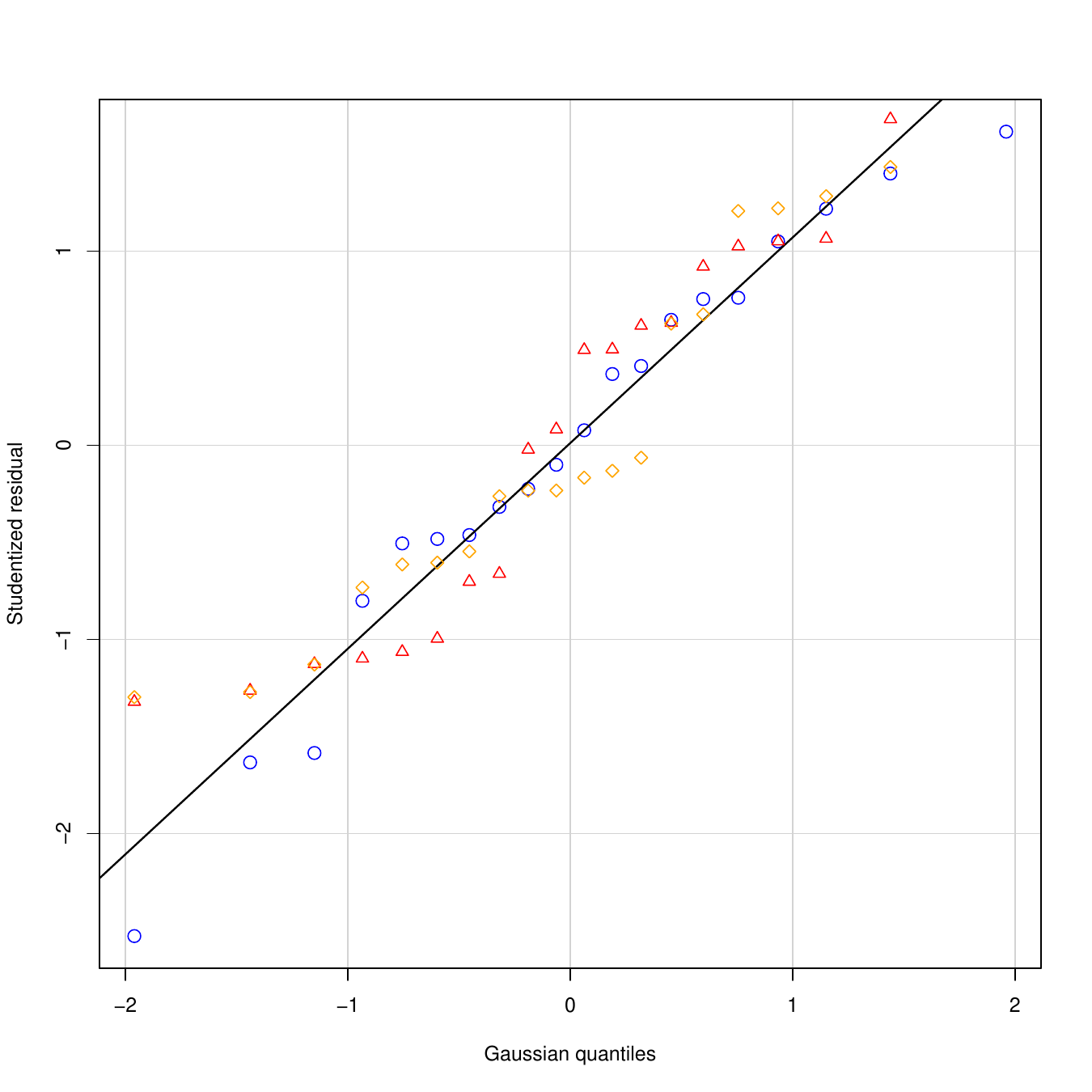}
\caption{Normal Q-Q plot for the model regressing the scaled empirical 0.9-quantiles of the maximal detector statistics on functions of $T$.
Vertical axis: standardized residuals in each fitted model; 
horizontal axis: theoretical quantiles of standard Gaussian.
Row main effects: triangle, red;
column main effects: square, orange;
factor structure: circle, blue.
}
\label{fig: thres_reg}
\end{figure}

\clearpage

\section{Additional simulation results}\label{appendix: numerical}







We report additional results obtained under~(S2) in Section~\ref{subsec:single}, 
with varying $T\in \{600,900\}$ and $(p,q) \in \{(10,10), (15,15)\}$.
The histograms are shown in Figures~\ref{fig: sim_S2_600_10}, \ref{fig: sim_S2_600_15}, \ref{fig: sim_S2_900_10}, and \ref{fig: sim_S2_900_15}, respectively.

\begin{figure}[!t]
\centering
\includegraphics[width=0.8\linewidth]{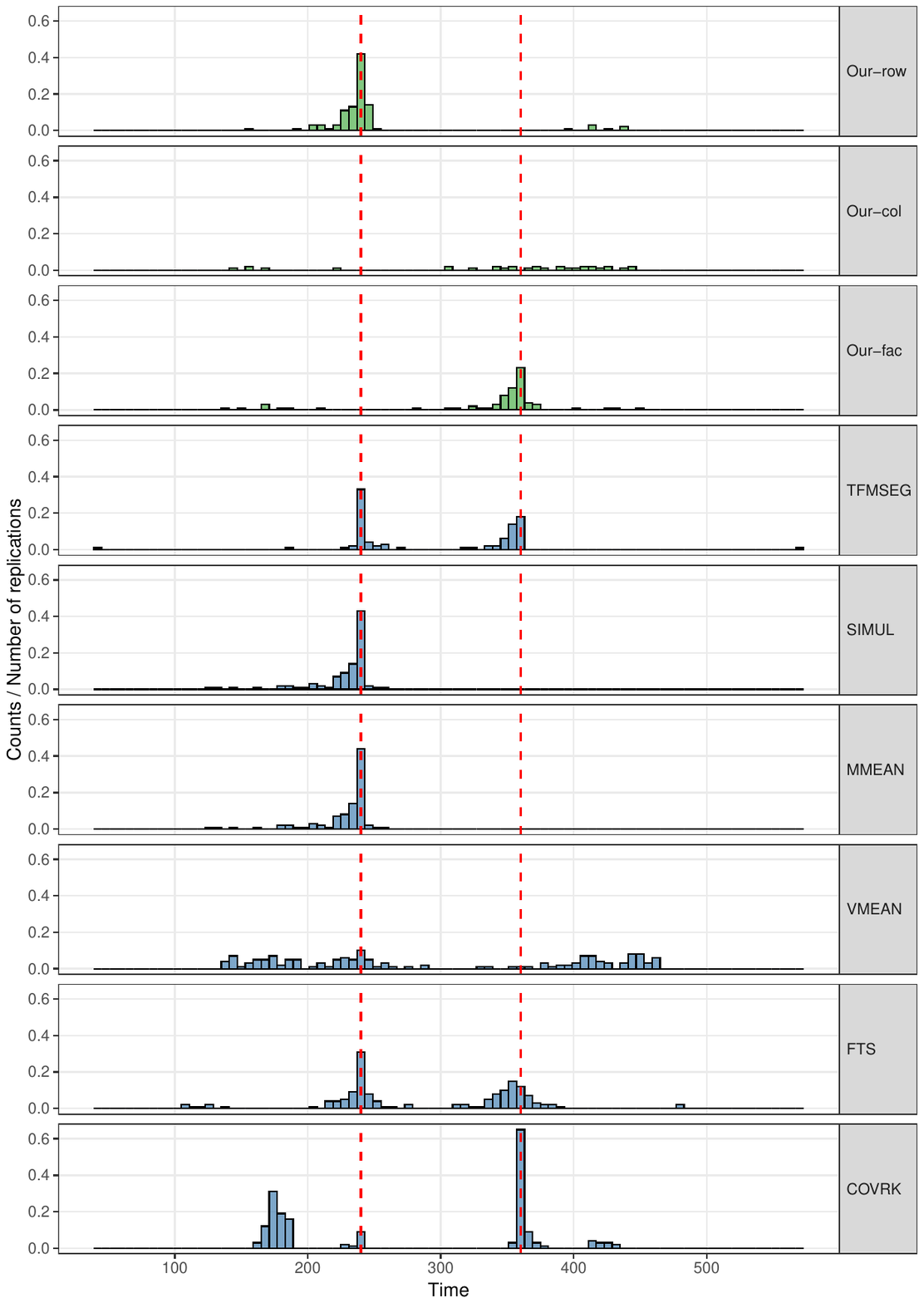}
\caption{
Histograms of detected change points for each method under Setting~(S2) but under $(T,p,q)=(600,10,10)$.
Our methods are in green;
the true change locations are denoted as dashed vertical lines.
}
\label{fig: sim_S2_600_10}
\end{figure}

\begin{figure}[!t]
\centering
\includegraphics[width=0.8\linewidth]{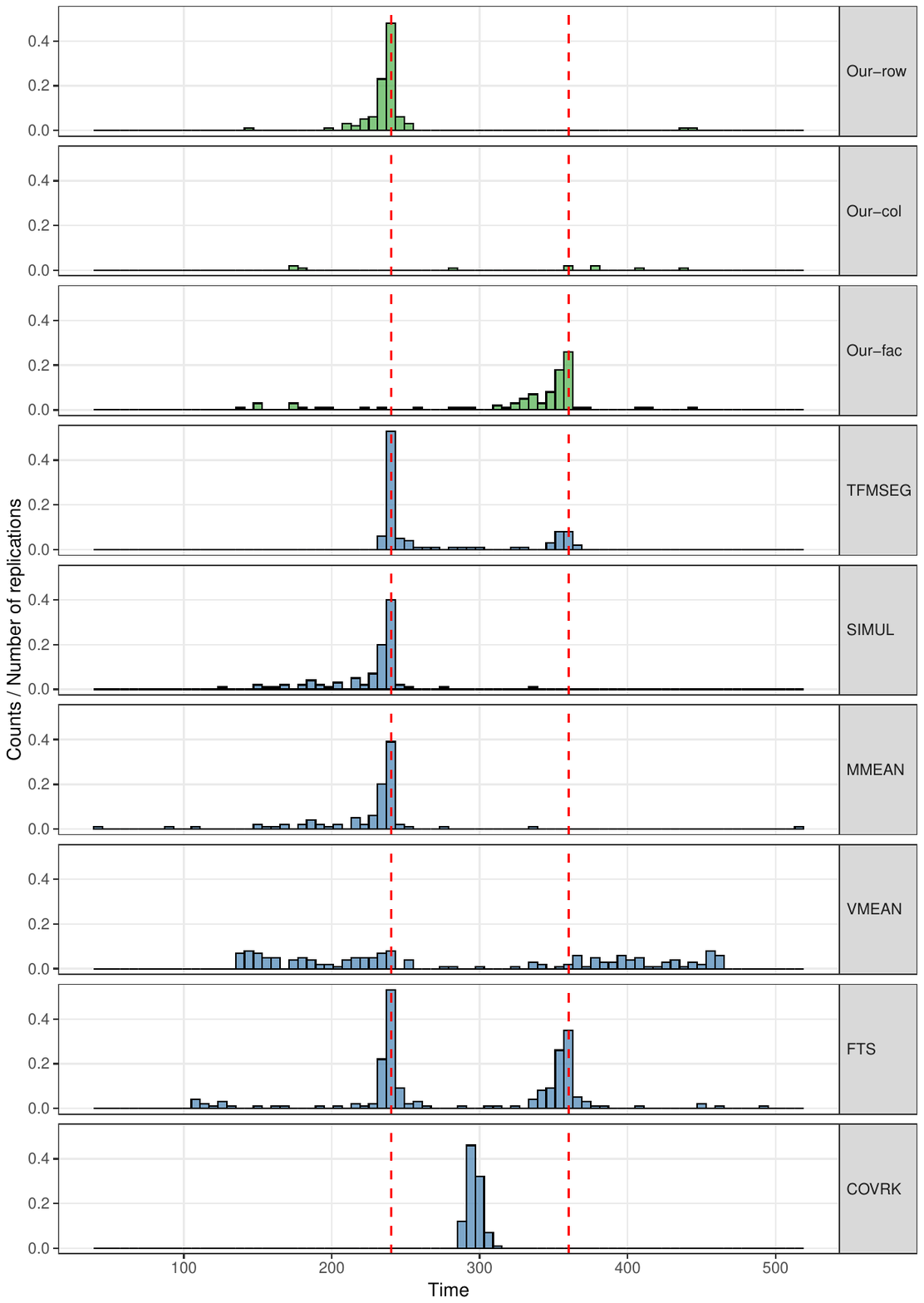}
\caption{
Histograms of detected change points for each method under Setting~(S2) but under $(T,p,q)=(600,15,15)$.
Our methods are in green;
the true change locations are denoted as dashed vertical lines.
}
\label{fig: sim_S2_600_15}
\end{figure}

\begin{figure}[!t]
\centering
\includegraphics[width=0.8\linewidth]{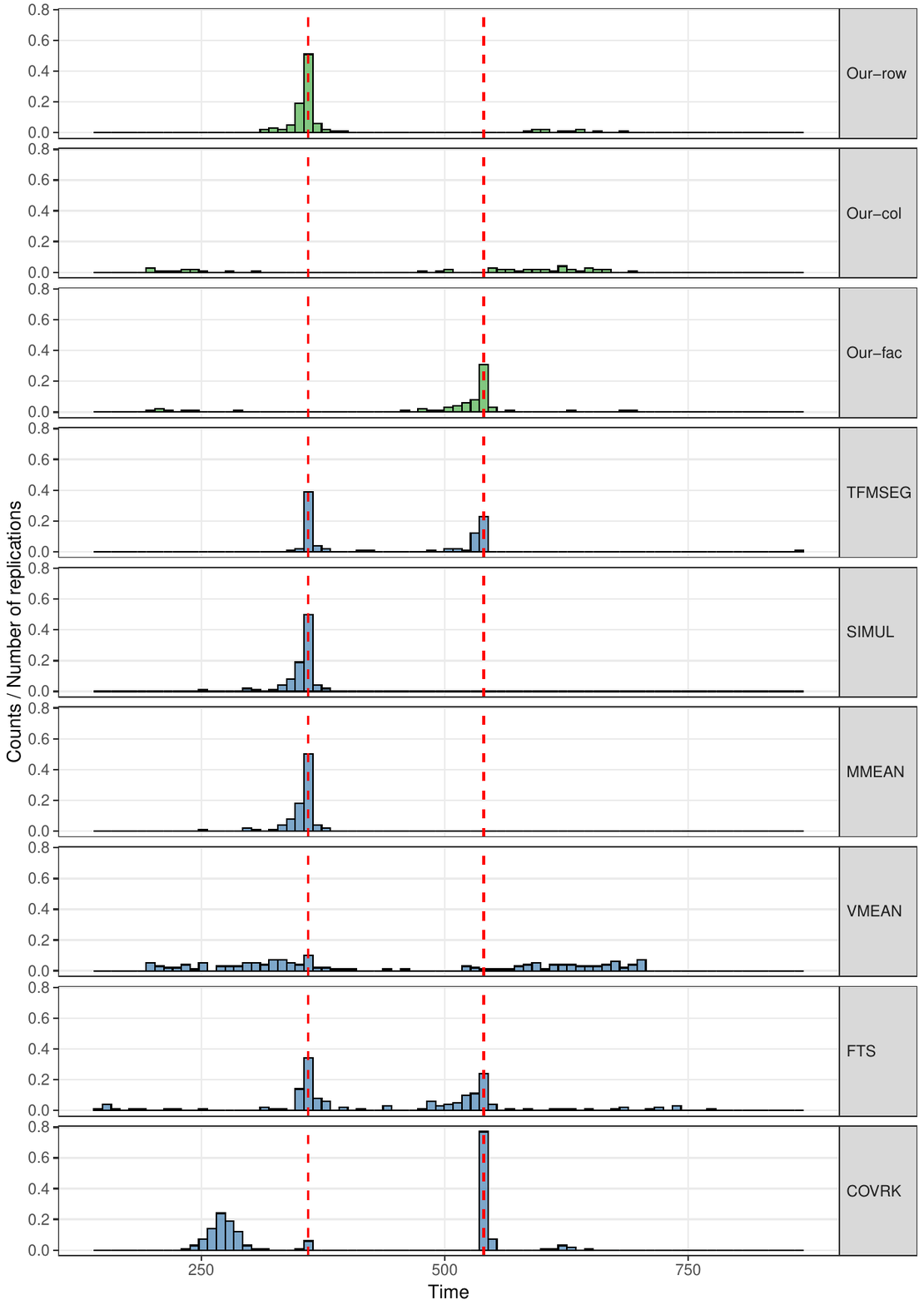}
\caption{
Histograms of detected change points for each method under Setting~(S2) but under $(T,p,q)=(900,10,10)$.
Our methods are in green;
the true change locations are denoted as dashed vertical lines.
}
\label{fig: sim_S2_900_10}
\end{figure}

\begin{figure}[!t]
\centering
\includegraphics[width=0.8\linewidth]{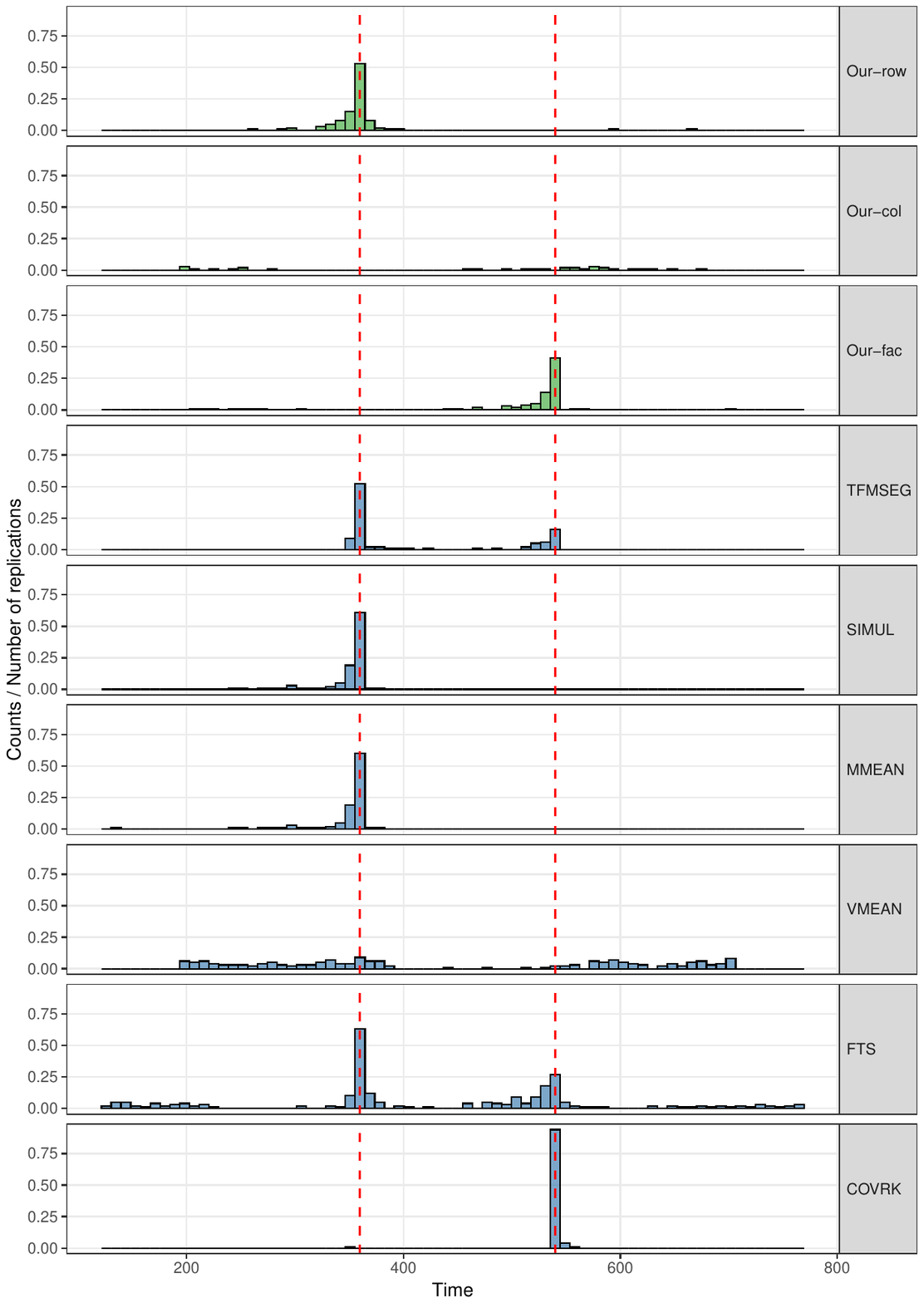}
\caption{
Histograms of detected change points for each method under Setting~(S2) but under $(T,p,q)=(900,15,15)$.
Our methods are in green;
the true change locations are denoted as dashed vertical lines.
}
\label{fig: sim_S2_900_15}
\end{figure}

\end{document}